\icmltitlerunning{Counterfactual Forecasting for Panel Data}
\newcommand{\converge}[1][]{\xrightarrow{#1}}
\newcommand{\asconverge}{\converge[\text{a.s.}]}
\newcommand{\dconverge}{\converge[d]}
\newcommand{\pconverge}{\converge[\pr]}
\newcommand{\normal}[2]{\mathcal{N}\left(#1,#2\right)}
\newcommand{\RR}{\mathbb{R}}	%
\newcommand{\NN}{\mathbb{N}}	%
\newcommand{\pr}{\mathbb{P}}	%
\newcommand{\EE}{\mathbb{E}}	%
\newcommand{\var}{\mathrm{Var}}	%
\newcommand{\cov}{\mathrm{Cov}}	%
\newcommand{\dnt}{\delta_{NT}}
\newcommand{\focus}{\textsc{Focus}\xspace}
\newcommand{\sforest}{\sigma^2_{i,T,h}}
\newcommand{\shatforest}{\hat\sigma^2_{i,T,h}}
\newcommand{\sest}{\xi^2_{i,T,h}}
\newcommand{\sfor}{\tau^2_{i,T,h}}
\newcommand{\shatest}{\hat\xi^2_{i,T,h}}
\newcommand{\shatfor}{\hat\tau^2_{i,T}}
\newcommand{\SigmaFobs}{\Sigma_F^{\mathrm{obs}}}
\newcommand{\SigmaFobshat}{\hat\Sigma_F^{\mathrm{obs}}}
\newcommand{\SigmaLobs}{\Sigma_{\Lambda,i}^{\mathrm{obs}}}
\newcommand{\SigmaLobshat}{\hat\Sigma_{\Lambda,i}^{\mathrm{obs}}}
\newcommand{\SigmaFmiss}{\Sigma_{F,T}^{\mathrm{miss}}}
\newcommand{\SigmaFmisshat}{\hat\Sigma_{F,T}^{\mathrm{miss}}}
\newcommand{\SigmaLmiss}{\Sigma_{\Lambda,i}^{\mathrm{miss}}}
\newcommand{\SigmaLmisshat}{\hat\Sigma_{\Lambda,i}^{\mathrm{miss}}}
\newcommand{\SigmaFLmisscov}{\Sigma_{F,\Lambda,T,i}^{\mathrm{miss, cov}}}
\newcommand{\SigmaFLmisscovhat}{\hat\Sigma_{F,\Lambda,T,i}^{\mathrm{miss, cov}}}
\newcommand{\Fktsq}{\overline{F}_{k:T}^2 }
\newcommand{\cQ}{\mathcal{Q}}
\newcommand{\dload}{\Delta_{\Lambda, i}}
\newcommand{\dfactor}{\Delta_{F, T}}
\newcommand{\dcoeff}{\Delta_{A,h}}
\newcommand{\vecop}[1]{\mathrm{vec}(#1)}
\newcommand{\target}[3]{\theta_{#1,#2 : #2 + #3}}
\newcommand{\forest}[3]{\hat{\theta}_{#1,#2 : #2 + #3}}
\newcommand{\thetaith}{\target{i}{T}{h}}
\newcommand{\thetahatith}{\forest{i}{T}{h}}
\newcommand{\pfend}{$\hfill\square$}
\newtheorem{theorem}{Theorem}[section]
\newcounter{assumpblock}
\newtheoremstyle{compact}
  {3pt} {3pt}
  {\itshape}
  {} {\bfseries} {.}
  {.5em} {}
\theoremstyle{compact}
\theoremstyle{compact}
\newtheorem{assumption}{Assumption}
\newcommand{\asn}{\text{Assum. }}
\newtheorem{lemma}{Lemma}[section]
\newtheorem{corollary}{Corollary}[section]
\theoremstyle{remark}
\newtheorem{remark}{Remark}[section]
\newcommand{\indicator}{\mbf 1}
\newcommand{\x}{x}
\newcommand{\axi}[1][i]{\x_{#1}}
\newcommand{\pseqxn}[1][n]{(\axi[i])_{i\geq 1}} %
\newcommand{\pseqxnn}[1][n]{(\axi[i])_{i=1}^n} %
\def\balign#1\ealign{\begin{align}#1\end{align}}
\def\baligns#1\ealigns{\begin{align*}#1\end{align*}}
\def\balignat#1\ealign{\begin{alignat}#1\end{alignat}}
\def\balignats#1\ealigns{\begin{alignat*}#1\end{alignat*}}
\def\bitemize#1\eitemize{\begin{itemize}#1\end{itemize}}
\def\benumerate#1\eenumerate{\begin{enumerate}#1\end{enumerate}}
\newenvironment{talign*}
 {\csname align*\endcsname}
 {\endalign}
\newenvironment{talign}
 {\csname align\endcsname}
 {\endalign}
\def\balignst#1\ealignst{\begin{talign*}#1\end{talign*}}
\def\balignt#1\ealignt{\begin{talign}#1\end{talign}}
\let\originalleft\left
\let\originalright\right
\renewcommand{\left}{\mathopen{}\mathclose\bgroup\originalleft}
\renewcommand{\right}{\aftergroup\egroup\originalright}
\def\tinycitep*#1{{\tiny\citep*{#1}}}
\def\tinycitealt*#1{{\tiny\citealt*{#1}}}
\def\tinycite*#1{{\tiny\cite*{#1}}}
\def\smallcitep*#1{{\scriptsize\citep*{#1}}}
\def\smallcitealt*#1{{\scriptsize\citealt*{#1}}}
\def\smallcite*#1{{\scriptsize\cite*{#1}}}
\def\mbf#1{\mathbf{#1}}
\def\mbb#1{\mathbb{#1}}
\def\<{\left\langle} %
\def\>{\right\rangle}
\def\defeq{\triangleq} %
\def\indic#1{\mbb{I}\left[{#1}\right]} %
\def\bigO#1{\mathcal{O}(#1)} %
\def\bigOP#1{\mathcal{O}_\pr\left(#1\right)} %
\def\littleOP#1{o_\pr(#1)} %
\def\indep{\perp\!\!\!\perp} %
\newcommand{\iid}{\mathrm{i.i.d.}}
\newenvironment{proof-sketch}{\noindent\textbf{Proof Sketch}
  \hspace*{1em}}{\qed\bigskip\\}
\newenvironment{proof-idea}{\noindent\textbf{Proof Idea}
  \hspace*{1em}}{\qed\bigskip\\}
\newenvironment{proof-of-lemma}[1][{}]{\noindent\textbf{Proof of Lemma {#1}}
  \hspace*{1em}}{\qed\\}
\newenvironment{proof-of-theorem}[1][{}]{\noindent\textbf{Proof of Theorem {#1}}
  \hspace*{1em}}{\qed\\}
\newenvironment{proof-attempt}{\noindent\textbf{Proof Attempt}
  \hspace*{1em}}{\qed\bigskip\\}
\begin{document}
\addtocontents{toc}{}

\twocolumn[
  \icmltitle{Counterfactual Forecasting for Panel Data}
  \icmlsetsymbol{equal}{*}

  \begin{icmlauthorlist}
    \icmlauthor{Navonil Deb}{1}
    \icmlauthor{Raaz Dwivedi}{2}
    \icmlauthor{Sumanta Basu}{1}
  \end{icmlauthorlist}

  \icmlaffiliation{1}{Department of Statistics and Data Science, Cornell University. Emails: \texttt{\{\href{mailto:nd329@cornell.edu}{nd329}, \href{mailto:sumbose@cornell.edu}{sumbose}\}@cornell.edu}}
  \icmlaffiliation{2}{Department of Operations Research and Information Engineering, Cornell Tech. Email: \href{mailto:dwivedi@cornell.edu}{\texttt{dwivedi@cornell.edu}}}

  \icmlcorrespondingauthor{Navonil Deb}{\href{mailto:nd329@cornell.edu}{\texttt{nd329@cornell.edu}}}
  \icmlkeywords{Machine Learning, ICML}

  \vskip 0.3in
]

\printAffiliationsAndNotice{}  %
\begin{abstract}
    We address the challenge of forecasting counterfactual outcomes in a panel data with missing entries and temporally dependent latent factors --- a common scenario in causal inference, where estimating unobserved potential outcomes ahead of time is essential. We propose Forecasting Counterfactuals under Stochastic Dynamics (\focus), a method that extends traditional matrix completion methods by leveraging time series dynamics of the factors, thereby enhancing the prediction accuracy of future counterfactuals. Building upon a consistent estimator of the factors, our method accommodates both stochastic and deterministic components within the factors, and provides a flexible framework for various applications. In case of stationary autoregressive factors and under standard conditions, we derive error bounds and establish asymptotic normality of our estimator. Empirical evaluations demonstrate that our method outperforms existing benchmarks when the latent factors have an autoregressive component. We illustrate \focus results on HeartSteps, a mobile health study, illustrating its effectiveness in forecasting step counts for users receiving activity prompts, thereby leveraging temporal patterns in user behavior.
\end{abstract}

\vspace{-15pt}
\textbf{Keywords:}~~ Causal inference, time series forecasting, panel data, missing data.
\section{Introduction}

Counterfactual estimation is a central challenge in panel data settings, with wide-ranging applications in personalized healthcare, economics, recommendation systems, and policy evaluation. An additional key task is forecasting counterfactual outcomes under hypothetical interventions even when the future interventions are not assigned yet -- especially relevant to disciplines that require prospective future decision-making. The difficulty particularly arises when the outcomes are noisy and temporally dependent across units. A powerful approach models outcome trajectories via low-dimensional latent factors that capture shared variation over time and units. This idea underpins prominent methods in causal inference, including synthetic controls \citep{abadie2010synthetic}, difference-in-differences \citep{arkhangelsky2021synthetic, goodman2021difference} and matrix completion methods \citep{agarwal2020synthetic, athey2021matrix, dwivedi2022doubly}, which rely on low-rank assumptions. Incorporating the temporal evolution of latent factors enables forecasting counterfactual outcomes beyond the observed panel.

To formalize the problem, we consider a potential outcome model with $N$ units and $T$ time points. For unit $i$ and time $t$, the potential outcome under treatment $w \in \{0,1\}$ is denoted by $Y_{i,t}(w)$ and modeled as
\begin{equation}\label{eq:counterfactual_model}
    Y_{i,t}(w) = \theta _{i,t}(w) + \varepsilon_{i,t},
\end{equation}
where $\theta_{i,t}(w)$ is called a \textit{common component}, and $\varepsilon_{i,t}$ is zero mean \textit{idiosyncratic noise}. We consider a linear factor model on the common components as follows:
\begin{equation}\label{eq:counterfactual_factor}
    \theta_{i,t}(w) = \Lambda_i^\top(w)F_t(w)
\end{equation}
where $F_t(w)$ are the underlying factors and $\Lambda_i(w)$ are the loadings, both having dimension $r \leq \min\{N,T\}$. Our goal is to forecast the mean potential outcome $\theta_{i,T+h}(w)$ for a given horizon $h$, under time series structure on $F_t(w)$. A special case of the factors dynamics illustrated in this work is a vector autoregressive model of order 1 (VAR(1)) \citep[Sec. 2.1.1]{lutkepohl2005new}
\begin{equation}\label{eq:counterfactual_var1}
    F_t(w) = A(w) F_{t-1}(w) + \eta_t(w),
\end{equation}
where $A_t(w)\in \RR^{r\times r}$ is the coefficient matrix of the VAR(1) process. $\eta_t(w)$ is a stationary noise process, independent across $t$, with mean $0$ and covariance matrix $\Sigma_{\eta}(w)$.

One motivation for counterfactual forecasting arises in policy-making and mobile health (mHealth) applications, where timely decision-making is critical. In the HeartSteps V1 study \citep{heartstepsv1, liao2020personalized}, users received five daily prompts encouraging walking activities. Walking behaviors of each user often exhibit a clear temporal structure across users, i.e. step counts in consecutive decision slots are negatively correlated with higher activity in one slot typically followed by lower activity in the next (see Fig. \ref{fig:jb30_time_plots}). When modeled with a linear factor model, the latent factors obtained from the steps under the intervention show strong correlations across some consecutive slots (Fig. \ref{fig:heartsteps_results}(a) and \ref{fig:hs_extra_plot}(a)-(c) in Appendix). Capturing this dependencies in the factor level enables accurate forecasting of future steps under intervention (or control), thereby facilitating prospective evaluation of the intervention’s effectiveness.

\subsection{Related works}

Recent advances address matrix completion and treatment effect estimation via low-rank factor models, however these works do not exploit temporal dynamics of the factors, and do not conduct forecasting \citep{bai2021matrix, jin2021factor, cahan2023factor, xiong2023large, agarwal2023causal, choi2024matrix}. \citet{goldin2022forecasting} propose SyNBEATS -- a neural network-based method with synthetic controls that performs post-treatment counterfactual estimation. However they do not provide theoretical guarantees, and do not leverage the latent dynamics to forecast out-of-sample. Another limitation of SyNBEATS is its reliance on a fully observed pre-treatment panel, which prevents it from addressing missing entries in control units before the intervention. \citet{pang2022bayesian} propose a Bayesian alternative to synthetic control methods and consider autoregressive factors in the model, however do not address the out-of-sample forecast performance and lacks theoretical guarantees. \citet{ben2023estimating, chen2023multi} incorporate multi-task Gaussian process to model the potential outcome dynamics, however do not conduct forecasting. \citet{agarwal2020multivariate, alomar2024samossa} address forecasting in panel data under missing observations with Multivariate singular spectrum analysis (mSSA) that addresses low rank assumption and deterministic dynamic factors with singular spectrum, but does not accommodate stochastic and serially correlated factors with non-singular spectra. Some related Neural CDE-based methods \citep{lim2018forecasting, seedat2022continuous, vanderschueren2023accounting} model counterfactual outcomes using high-capacity encoder-decoder architectures and continuous-time dynamics, which are computationally demanding, and require stronger treatment and overlap assumptions, and do not provide theoretical guarantees. Hence these methods are not well suited for small-panels and unit $\times$ horizon counterfactual forecasting settings driven by latent factors.

Several works show that smoothing the estimated factors improves forecast accuracy by exploiting their temporal structure \citep{doz2011two, poncela2021factor}. Often in context of factor-augmented VAR models \citep{bernanke2005measuring, lin2020regularized}, dynamic factor models use \textit{collapsing techniques} for capturing the latent factor dynamics \citep{jungbacker2008likelihood, brauning2014forecasting}. However, the existing forecast methods in dynamic factor models assume limited extent of missingness, i.e. the proportion of missing entries is typically finite and smaller than that in counterfactual estimation frameworks -- necessitating a counterfactual forecast method equipped to handle wide range of factor dynamics and observation patterns.

\subsection{Our contributions}
We introduce a method namely \focus (\textit{FOrecasting Counterfactuals Under Stochastic dynamics}) in Sec.\ref{sec:method}. To the best of our knowledge, this is the first work to address entry-by-entry forecasting counterfactuals out-of-sample. Equipped with a reliable factor estimation algorithm, \focus has two steps- (i) restricts the observed panel under treatment (or control) and estimates the latent factors by considering the control (or treated) observations as missing, (ii) fits an appropriate time series model on the estimated factors to construct a forecast estimator of the for unit $\times$ horizon level conditional mean counterfactual. On a general note, our method integrates time series smoothing and filtering techniques with matrix completion methods. Additionally, \focus is equipped to capture latent and stochastic temporal dependencies by accommodating serial correlation and non-singular spectra of the factors. Our algorithm is equipped to handle wide range of observation patterns, hence overcomes the limitation of restricted missingness of the forecasting methods in dynamic factor model. 

When the factors are estimated with the PCA algorithm of \citet{xiong2023large}, under a VAR(1) assumption on the true factors and standard assumptions on the loadings, the outcome noise and observation mechanism, we establish error bounds of the forecast (Thm. \ref{thm:forecast_error_bound}). Under additional regularities on the observation pattern, we establish $\min\{\sqrt{N},\sqrt{T}\}$-consistency and asymptotic normality of the forecast estimator (Thm. \ref{thm:asymptotic_normality}) and provide asymptotically valid confidence intervals (Cor. \ref{cor:ci}). Our key theoretical contribution is to extend the results of \citet{xiong2023large} from i.i.d. latent factors to temporally dependent factors with adjustments in the proof and by verifying that the dependent factor process satisfies the general CLT assumptions in Sec. \ref{sec:general_asn}. 

In several simulation settings, \focus yields more accurate forecast trajectories than our benchmarks mSSA and SyNBEATS, supporting the importance of leveraging the stochastic temporal latent dynamics. Additionally, \focus runs faster than mSSA and substantially faster than SyNBEATS under identical computational settings, with runtime growing scalably as $T$ increases, making \focus suitable for implementing in large-$T$ settings. In HeartSteps V1, we leverage the temporal association in the factors across consecutive suggestion slots. \focus, by capturing this temporal pattern, yields more accurate forecast of steps under intervention than mSSA.

\textbf{Organization.}~ We set up the problem under a simple setting and define our forecast estimand in Sec. \ref{sec:problem}. We illustrate our forecast algorithm \focus in Sec. \ref{sec:method}. The main results are stated in Sec. \ref{sec:theory}. We evaluate the performance of \focus on both simulated data and the real data set HeartSteps, as presented in Sec. \ref{sec:experiments}. The proofs, the general assumptions and the additional experiments are deferred in the supplement.

\textbf{Notation.}~ For $n\in \NN$, we denote $[n]={1,\dots,n}$. For $x\in \RR^n$, $\|x\|_2$ denotes the Eucledian $\ell_2$-norm of $x$. For any set $S$, $\sigma(S)$ denotes the smallest sigma-algebra containing $S$. For any event $E$, $\indic{E}=1$ if $E$ occurs, 0 otherwise. For a matrix $M$, $\|M\|=\sup_{\|x\|_2=1}\|Mx\|_2$; $\rho(M)$ is its spectral radius. We abbreviate almost surely to a.s., and independently and identically distributed to $\iid$. Convergence almost surely, in probability and distribution are denoted by $\xrightarrow{\mathrm{a.s.}}$, $\pconverge$ and $\dconverge$. For r.v.’s $X_n$ and sequence $a_n$, $X_n=\bigOP{a_n}$ means $X_n/a_n$ is bounded in probability, i.e. for every $\delta > 0$ there exists $c_\delta$ and $n_\delta$ such that $\pr\left( \frac{|X_n|}{a_n} > c_\delta \right) < \delta$ for all $n > n_\delta$. We use the terms \textit{unit} and \textit{time} interchangeably with \textit{row} and \textit{column} respectively.

\section{Problem set-up}\label{sec:problem}

We translate the counterfactual forecast problem into a forecast problem with missing entries in the panel. We build our forecast target based on the later.

\subsection{Modeling with dynamic latent factors}\label{subsec:linear_factor_model_missing_dynamics}

We denote the observed data by $(Y_{i,t}, W_{i,t}),~ (i,t)\in [N]\times[T]$, where $Y_{i,t}$ and $W_{i,t} \in \{0,1\}$ are the observed outcome and the binary treatment variable for unit $i$ and time $t$. When $W_{i,t} = 1$, we observe the treated outcome $Y_{i,t}(1)$, but not the control outcome $Y_{i,t}(0)$. Hence, $W_{i,t}$ indicates the observed entries in the potential outcomes $Y_{i,t}(1)$. Similarly, $1 - W_{i,t}$ is the observation indicator of $Y_{i,t}(0)$. We fix $w = 1$ and drop the notation $w$ from $Y_{i,t}(w), \theta_{i,t}(w), F_{i,t}(w)$ and $\Lambda_{i,t}(w)$. Restricting to the treatment panel with $W_{i,t}=1$, we rewrite \eqref{eq:counterfactual_model} with simplified notation as follows:
\begin{equation}\label{eq:factor_model_1}
    Y_{i,t} = \begin{cases}
        \theta_{i,t} + \varepsilon_{i,t} = \Lambda_i^\top F_t + \varepsilon_{i,t} & \text{ if } W_{i,t} = 1,\\
        \hspace{40pt}\star & \text{ if } W_{i,t} = 0.
    \end{cases}
\end{equation}
Here $\star$ denotes a missing entry. We impose a stationary vector autoregressive structure on the factors.
\vspace{5pt}
\begin{assumption}[Stationary VAR(1) factors]\label{asn:var1_factors}
    The latent factors $F_t$ follow a stationary $r$-dimensional VAR(1) model   \begin{equation}\label{eq:var1_factor}
    F_t = A F_{t-1} + \eta_t,    
    \end{equation}
    where $\rho(A)<1$ and $\eta_t$ is a noise process with mean 0 and covariance matrix $\Sigma_\eta$.
\end{assumption}
\vspace{5pt}

The VAR(1) structure is purely for the demonstration of our analysis, and can be generalized to more complex time series models. The coefficient matrix $A$ governs the structure of the autocorrelated factors. Additionally Wold's decomposition theorem \citet[Prop. 2.1]{lutkepohl2005new} guarantees that under mild regularity conditions, every stationary and purely nondeterministic process is well-approximated by a finite order VAR process. The condition $\rho(A) < 1$ ensures stationarity of the VAR(1) process \eqref{eq:var1_factor}.

\subsection{Forecast estimand}

The outcome mean at forecast horizon $h$ is $\theta_{i,T+h}$ that involves future realizations of the dynamic factor $F_{T+h}$. We define the forecast estimand as the mean future outcome implied by the latent factor dynamics. Although the factors are unobserved, all serial dependence in the panel operates through the factor process under independent outcome noise, while future factor noise $\eta_{T+h}$ and idiosyncratic noise $\varepsilon_{i,T+h}$ are mean-zero and unpredictable. Under the factor model \eqref{eq:factor_model_1}, the conditional mean of the future outcome given the latent factor history and loadings are
\begin{equation}\label{eq:forecast_target}
    \thetaith \defeq \Lambda_i^\top A^h F_T,
\end{equation}
which coincides with the minimum mean squared error linear predictor of $Y_{i,T+h}$ conditioned on the latent variables until time $T$. We refer to $\thetaith$ as our forecast estimand. In App. \ref{sec:identifiability}, we derive the expression in \eqref{eq:forecast_target} as the conditional mean given information until time $T$, and discuss more on the identifiability of $\thetaith$.

\section{Forecast method}\label{sec:method}

\focus addresses the task of estimating $\thetaith$ entry-by-entry by leveraging the dynamics of $F_t$. We illustrate the method under the simplified factor model \eqref{eq:factor_model_1} and Assum. \ref{asn:var1_factors}.

The inputs to \focus are the observed panel of outcome variables $Y \defeq (Y_{i,t})_{(i,t)\in [N]\times [T]}$ and the matrix of observation indicators $W \defeq (W_{i,t})_{(i,t)\in [N]\times [T]}$. For estimating the factors and the loadings, we use the PCA method of \citet{xiong2023large} interchanging the roles of unit and time. This method can be replaced by any algorithm that consistently estimates the factors and loadings under general observation pattern $W$. 

The steps of our proposed algorithm are the following.

\focus($Y, W$):
\vspace{-10pt}

\begin{enumerate}
    \item \label{item:step1} \textbf{Estimating the $\{F_t\}_{t\in [T]}$ and $\{\Lambda_i\}_{i \in [N]}$}:
    
    For each pair of columns $s, t\in [T]$, define the set
    \begin{equation}\label{eq:common_obs}
        \cQ_{s,t} \defeq \{i \in [N]: W_{i,s} = 1 \text{ and } W_{i,t} = 1\}.
    \end{equation}
    In words, $\cQ_{s,t}$ contains the rows for which both columns $s$ and $t$ are observed. We calculate the sample covariance matrix $\hat\Sigma$ for PCA with entries
    \begin{equation}\label{eq:sigma_hat}
    \hat{\Sigma}_{s,t} \defeq\frac{1}{|\cQ_{s,t}|} \sum_{i \in \cQ_{s,t}} Y_{i,t} Y_{i,s}, ~~~ |\cQ_{s,t}| > 0.
    \end{equation}
    Estimated factors are $\hat F \defeq \big[\hat F_1 :\ldots: \hat F_T\big]^\top$ where 
    \begin{equation}\label{eq:estimated_factors}
        \hat F = \sqrt{T} \times \text{First $r$ eigenvectors of } \hat\Sigma/T,
    \end{equation}
    Estimated loadings are $\hat \Lambda \defeq \big[\hat \Lambda_1 :\ldots: \hat \Lambda_N\big]^\top$ where
    \begin{equation}\label{eq:estimated_loadings}
        \hat \Lambda_i = \left(\sum_{t = 1}^T W_{i,t} \hat F_t \hat F_t^\top\right)^{-1}\left( \sum_{t = 1}^T W_{i,t} \hat F_t Y_{i,t} \right).
    \end{equation}
    \item \textbf{Forecasting with $\hat F$}:
    
    The ordinary least squares (OLS) estimator $\hat A$ of the VAR(1) coefficient matrix $A$ is calculated with $\{\hat F_t\}_{t\in [T]}$ as
    \begin{equation}\label{eq:A_hat}
        \hat A \defeq \left(\sum_{t=1}^{T-1} \hat F_{t+1}\hat F_t^\top\right) \left(\sum_{t=1}^{T-1} \hat F_t \hat F_t^\top\right)^{-1}.
    \end{equation}
    Then the plug-in estimator of $\theta_{i,T:T+h}$ is
    \begin{equation}\label{eq:forecast_target_est}
        \thetahatith = \hat \Lambda_i^\top \hat A^h \hat F_T.
    \end{equation}
\end{enumerate}

If the true factors $\{F_t\}_{t\in [T]}$ are observed, learning the autocorrelation of $F_t$'s provide an estimator of the best linear predictor of the future $F_{T+h}$ conditioned on their true past. \focus replaces $F_t$'s with their PCA estimators $\hat F_t$'s and the autocorrelation are learned from the estimates. The PCA algorithm of \citet{xiong2023large} provides consistent estimators $\hat \Lambda_i$'s and $\hat F_t$'s under a wide range of observation mechanism. We leverage their estimation approach in our forecasting method as a convenient baseline, and this modular step can be substituted by other consistent factor estimators without affecting the modeling of factor dynamics. When the latent process exhibits nonstationarity or a state-space structure, adapting a joint state-space posterior model \citep{doz2011two} is more appropriate.

In step \ref{item:step1} of \focus, we assume that $r$ i.e. the rank of the latent process is known to the algorithm. If $r$ is unknown, several works address the rank estimation methods \citep{bai2002determining, bai2019rank, choi2017selecting, wei2020determining} that can be applied here. Additionally $\hat \Sigma_{s,t}$ in \eqref{eq:sigma_hat} requires existence of at least one row with both columns $s$ and $t$ observed (see Rem. \ref{rem:observation} for more details). If $\cQ_{s,t}$ is empty, we can set $\hat \Sigma_{s,t} = 0$ and consistency of the estimated factors \citep{johnstone2009consistency} can still be ensured.

The algorithm \focus can be viewed as operating on a panel that has been denoised with respect to any observed covariates $X_{it}$. Covariates can be handled through methods such as interactive fixed effects \citep{bai2021matrix, choi2024matrix} and inverse propensity weighting of the entries of $\hat \Sigma$ \citep[Sec.~2]{xiong2023large}. Extensions incorporating covariates are left for future work.

\textbf{Forecasting control outcomes and treatment effects.}
We have used \focus to forecast future conditional mean outcomes from panels with partially observed entries. When both the treatment and control potential outcomes admit a latent factor structure as in \eqref{eq:counterfactual_model} and \eqref{eq:counterfactual_factor}, the two panels can be modeled separately under their respective latent dynamics. Conditioning on latent variables up to time $T$, we estimate the factor loadings, latent factors, and dynamic parameters $(\Lambda(w), F(w), A(w))$ for each treatment arm $w \in \{0,1\}$.

As discussed in Sec. \ref{subsec:linear_factor_model_missing_dynamics}, letting $W$ denote the treatment indicator, \focus estimates $\mathbb{E}[Y_{i,T+h}(w) \mid \Lambda(w), F(w)]$ by imposing factor dynamics on the panel $Y(w)$ and using $Ww + (1-W)(1-w)$ as the matrix of observation indicators. This allows us to forecast future treated and control outcomes under a factor model for each panel, and consequently to estimate future individual treatment effects with additional structure on the outcome model, e.g. within an interactive fixed effects structure or factor assumption on the treatments. The details are deferred to App. \ref{sec:ite_estimation}.

\section{Main results}\label{sec:theory}

Under regularity conditions on the factor model and the observation mechanism, we establish error bounds and asymptotic normality of the forecast estimator $\thetahatith$. The matrix of idiosyncratic noises is denoted by $\varepsilon \defeq (\varepsilon_{i,t})_{(i,t)\in [N]\times [T]}$. The covariance matrix of stationary $F_t$ is denoted by $\Sigma_F \defeq \EE[F_t F_t^\top]$.

\subsection{Forecast error bound}\label{subsec:forecast_error_rates}

We start with some sufficient regularity conditions on the factor model \eqref{eq:factor_model_1} for stating our main results.

\begin{assumption}[Factor model]\label{asn:factor}
    The factor noise $\eta_t$ in \eqref{eq:var1_factor} satisfies $\EE[\|\eta_t\|_2^4]<\infty$, with some positive definite $\Sigma_\eta$. The loadings $\Lambda_i$ are i.i.d. with mean 0 and positive definite covariance matrix $\Sigma_\Lambda$, with $\EE[\|\Lambda_i\|_2^4] < \infty$. Furthermore, the eigenvalues of $\Sigma_F \Sigma_\Lambda$ are distinct.
\end{assumption}

The i.i.d. loadings in \asn \ref{asn:factor} also appears in \citet{xiong2023large}. The distinct eigenvalues of $\Sigma_F \Sigma_\Lambda$ is a standard condition in the factor model literature \citep{xiong2023large, bai2003inferential, bai2021matrix}.

\begin{assumption}[Idiosyncratic noise]\label{asn:idiosyncratic}
    $\varepsilon_{i,t}$ are i.i.d. with mean $0$, variance $\sigma_\varepsilon^2 > 0$ and $\EE[|\varepsilon_{it}|^8] < \infty$.
\end{assumption}

\begin{assumption}[Mutual independence]\label{asn:independence}
    $\Lambda_i$, $\eta_t$ and $\varepsilon_{i,t}$ are mutually independent for all $i$ and $t$.
\end{assumption}

\asn\ref{asn:idiosyncratic} and \ref{asn:independence} are imposed to establish our results and hold under broader conditions (Sec. \ref{sec:general_asn}). Next we impose regularity conditions on the observation mechanism.

\begin{assumption}[Observation pattern]\label{asn:observation}
    The observation matrix $W$ is independent of the factor matrix $F$ and noise process $\varepsilon$, and there exists $\underline{q} \in (0,1] $ such that for $s,t\in [T]$,
    \begin{equation}\label{eq:q_lower_bound}
        |\cQ_{s,t}| \ge N\underline{q}~~ \text{almost surely}.
    \end{equation}
    
    Furthermore for $s,t,s',t'\in [T]$, there exist $\alpha_{s,t}, \beta_{s,t,s',t'}\in (0,1]$ such that for $N \to \infty$,
    \begin{equation}\label{eq:prop0}
        \frac{1}{N}|\mathcal{Q}_{s,t}| \xrightarrow{\mathrm{a.s}} \alpha_{s,t},~ \text{and}~~ \frac{1}{N}\left|\mathcal{Q}_{s,t} \cap \mathcal{Q}_{s',t'}\right| \xrightarrow{\mathrm{a.s}} \beta_{s,t,s',t'}.
    \end{equation}
\end{assumption}

\asn\ref{asn:observation} is similar to \citet[\asn S1]{xiong2023large}. The condition that $W$ is independent of $\varepsilon$ parallels the \textit{unconfoundedness} assumption in treatment effect identification \citep{rosenbaum1983central}. Requiring independence of $W$ from $F$ and $\varepsilon$ does not confine $W$ to static treatment assignments; $W$ can also capture time-varying treatment policies. We leave more exploration of sequential policies as future work.

\begin{remark}[Validity of \asn \ref{asn:observation}]\label{rem:observation}
The condition $|\cQ_{s,t}|>0$ requires presence of at least one treated unit at any two time points. Under many observation patterns, including \textit{missing completely at random} (MCAR) and \textit{staggered adoption design} (see Sec. \ref{subsec:one-factor_mcar} for details), this condition holds almost surely for large $N$. Similar arguments can be developed for other observation patterns, such as sequential \textit{missing at random} (MAR) design \citep[\asn 1]{dwivedi2022counterfactual}. The condition $|\cQ_{s,t}|>N\underline{q}$ and proportionality conditions \eqref{eq:prop0} can be relaxed to $|\cQ_{s,t}|>f(N)\underline{q}$ with $f(N)=o(N)$, at the cost of slower error bounds for $\thetahatith$ \citep[Sec. 9]{xiong2023large}.
\end{remark}

We denote $\dnt \defeq \min\left\{\sqrt{N}, \sqrt{T}\right\}$ that appears in the error bound and the asymptotic distributions. Equipped with the regularity conditions, we state our first result.

\vspace{5pt}
\begin{theorem}[Error bound for $\thetahatith$]\label{thm:forecast_error_bound}
    \hspace{5pt} Consider a factor model \eqref{eq:factor_model_1} with $N$ units and $T$ time points satisfying Assum. \ref{asn:var1_factors} to \ref{asn:observation}. Then for the \focus estimator $\thetahatith$ in \eqref{eq:forecast_target_est}, any fixed unit $i \in [N]$ and forecast horizon $h\ge 1$, the absolute error associated with $\thetahatith$ is bounded as 
    \begin{align}
        \left|\thetahatith -  \thetaith\right| = & \bigOP{\dnt^{-1}} \nonumber \\
        &\hspace{-50pt} + \bigOP{h\|A\|^{h-1} (N^{-1} + T^{-1/2})}. \label{eq:forecast_error_bound}
    \end{align}
\end{theorem}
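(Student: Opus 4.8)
The plan is to exploit the rotational indeterminacy of the PCA estimates so that the rotations cancel in the population target, and then reduce $\thetahatith-\thetaith$ to a sum of three estimation errors---one for the loading $\Lambda_i$, one for the VAR coefficient $A$, and one for the terminal factor $F_T$---each controlled by the consistency guarantees of \citet{xiong2023large}. Concretely, their consistency theorem (applicable under \asn\ref{asn:var1_factors}--\ref{asn:observation}) yields an invertible $r\times r$ rotation $H$, converging to a fixed limit by the distinct-eigenvalue condition in \asn\ref{asn:factor}, with individual rates $\|\hat F_t - H F_t\|_2 = \bigOP{\dnt^{-1}}$ for fixed $t$ and $\|\hat\Lambda_i - H^{-\top}\Lambda_i\|_2 = \bigOP{\dnt^{-1}}$ for fixed $i$, the average rate $\frac1T\sum_t\|\hat F_t - H F_t\|_2^2 = \bigOP{\dnt^{-2}}$, and the boundedness $\|\hat F_T\|_2,\|\hat\Lambda_i\|_2 = \bigOP{1}$. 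Writing $\tilde A\defeq HAH^{-1}$, the similarity $HA^hH^{-1}=\tilde A^h$ gives the crucial cancellation $\thetaith=\Lambda_i^\top A^h F_T=(\Lambda_i^\top H^{-1})\,\tilde A^h\,(H F_T)$, so the target is invariant to $H$.

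Next I would establish the rate $\|\hat A - \tilde A\| = \bigOP{N^{-1} + T^{-1/2}}$ for the OLS estimator \eqref{eq:A_hat}. Substituting $\hat F_t = HF_t + (\hat F_t - HF_t)$ splits $\hat A$ into the infeasible rotated OLS estimator built from the true $F_t$ plus correction terms. The infeasible estimator equals $\tilde A$ up to the standard stable-VAR error $\bigOP{T^{-1/2}}$, arising from the martingale-difference average $\frac1T\sum_t \eta_{t+1}F_t^\top$ (using the finite fourth moments of $F_t$ from Lem.~\ref{lem:imply_1} and stability of $A$). The corrections are time averages of the factor estimation error; crucially, averaging over $t$ upgrades their contribution from the pointwise $\dnt^{-1}$ to the generated-regressor rate $N^{-1}+(NT)^{-1/2}$, which is dominated by $N^{-1}+T^{-1/2}$. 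The same convergence, combined with $\rho(\tilde A)=\rho(A)<1$, shows $\hat A$ is stable with high probability, so $\sup_j\|\hat A^j\| = \bigOP{1}$.

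With these rates in hand, I would telescope
\begin{align}
\thetahatith - \thetaith
&= (\hat\Lambda_i - H^{-\top}\Lambda_i)^\top \hat A^h \hat F_T \nonumber\\
&\quad + \Lambda_i^\top H^{-1}\,(\hat A^h - \tilde A^h)\,\hat F_T \nonumber\\
&\quad + \Lambda_i^\top H^{-1}\,\tilde A^h\,(\hat F_T - H F_T). \nonumber
\end{align}
For the first and third terms I would use $\sup_h\|\hat A^h\|,\sup_h\|\tilde A^h\| = \bigOP{1}$ (stability, valid even when $\|A\|\ge 1$), the bounds $\|\hat F_T\|_2,\|\Lambda_i^\top H^{-1}\|_2 = \bigOP{1}$, and the $\dnt^{-1}$ rates above, so each is $\bigOP{\dnt^{-1}}$. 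For the middle term I would expand $\hat A^h - \tilde A^h = \sum_{j=0}^{h-1}\hat A^j(\hat A - \tilde A)\tilde A^{h-1-j}$, bound each of the $h$ summands by the Step~2 rate $\|\hat A-\tilde A\| = \bigOP{N^{-1}+T^{-1/2}}$ times operator norms controlled by $\|A\|^{h-1}$ (absorbing the rotation into the constant), giving $\bigOP{h\|A\|^{h-1}(N^{-1}+T^{-1/2})}$. Summing the three contributions yields \eqref{eq:forecast_error_bound}.

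The crux is the second step, and specifically obtaining the fast $N^{-1}$ term rather than $\dnt^{-1}$ in $\|\hat A-\tilde A\|$: this requires the generated-regressor cancellation analysis for the estimated factors entering the OLS numerator, tracking that the leading factor-error term is itself an $N^{-1}$-order average of idiosyncratic noise and verifying (via \asn\ref{asn:independence} and the observation-pattern limits \eqref{eq:prop0}) that the cross terms are genuinely lower order. A secondary technical point is expressing the $h$-dependence through $\|A\|^{h-1}$ in the difference term while, for the two non-difference terms, relying on the spectral-radius boundedness $\sup_h\|\tilde A^h\|<\infty$ rather than on $\|A\|^h$, since $\|A\|$ may exceed one even when $\rho(A)<1$.
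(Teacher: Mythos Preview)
Your proposal is correct and follows essentially the same route as the paper: align via the rotation $H$, use the $\bigOP{\dnt^{-1}}$ rates for $\hat F_T$ and $\hat\Lambda_i$, establish $\|\hat A-\tilde A\|=\bigOP{N^{-1}+T^{-1/2}}$ by showing the averaged factor-error correction to the sample autocovariances is $\bigOP{\dnt^{-2}}$ (the paper packages this as a separate lemma invoking \citet[Lem.~A.1--A.2]{bai2003inferential}), and then control $\hat A^h-\tilde A^h$. The only cosmetic differences are that you use an exact three-term telescope (keeping $\hat A^h\hat F_T$ in the loading-error term) whereas the paper fully expands the product into seven terms and discards the cross terms as higher order, and you invoke the exact identity $\hat A^h-\tilde A^h=\sum_{j=0}^{h-1}\hat A^j(\hat A-\tilde A)\tilde A^{h-1-j}$ whereas the paper writes this loosely as a first-order ``Taylor expansion.''
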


The proof is deferred to App. \ref{pf:forecast_error_bound}. The result addresses high probability error bound for unit $\times$ horizon level estimates. $\bigOP{\dnt^{-1}}$ is similar to the unit $\times$ time-level rate for estimating the mean outcome \citep{bai2002determining, bai2021matrix}. The second term $\bigOP{h\|A\|^{h-1} (N^{-1} + T^{-1/2})}$ captures the estimation error of the coefficient matrix $\hat A$ with the fitted factors and forecasting with it. We address the roles of $A$ and $h$ in the error bound in Rem. \ref{rem:role_of_A} and \ref{rem:role_of_h} in App. \ref{pf:forecast_error_bound}.

\subsection{Asymptotic distribution of the forecast}\label{subsec:asymp_dist}

Before stating the asymptotic distribution result for $\thetahatith$, we make additional assumptions.

\begin{assumption}[Time limits of observation pattern]\label{asn:time_limit_observation}
    For $i \in [N]$, there exists positive definite $\Sigma_{F,i}$ such that 
    \begin{equation}\label{eq:lim_FFW}
        \hat\Sigma_{F,i}\defeq \frac{1}{T}\sum_{t=1}^T W_{it} F_t F_t^\top \pconverge \Sigma_{F,i}\quad \text{as }T \to \infty.
    \end{equation}
    For $s,t\in [T]$, $ \alpha_{s,T} \pconverge \nu_s \in (0,1]$ and $\beta_{s,T, t,T} \pconverge \rho_{s,t}\in (0,1]$ as $T\to \infty$.
    Furthermore, there exist $\omega_i\in (0,1),\ i = 1,2,3$ such that the following limits exist as $T\to \infty$,
    \begin{align*}
        & T^{-2} \sum_{s,t=1}^T \frac{\beta_{s,T,t,T}}{\alpha_{s,T} \alpha_{t,T}} \pconverge \omega_1, ~~ T^{-3} \sum_{s,s',t=1}^T \frac{\beta_{s,t,s',T}}{\alpha_{s,t} \alpha_{s',T}} \pconverge \omega_2,\\ & T^{-4} \sum_{s,t, s',t'=1}^T \frac{\beta_{s,t,s',t'}}{\alpha_{s,t} \alpha_{s',t'}} \pconverge \omega_3.
    \end{align*}
\end{assumption}

\eqref{eq:lim_FFW} captures that the factors are systematic over the observed entries. The impact of missing entries on the variance is captured by $\omega_i, \ i = 1,2,3$. Similar conditions also appear in \citet[Assum.S3]{xiong2023large}.

\begin{theorem}[Asymptotic normality of $\thetahatith$]\label{thm:asymptotic_normality}
Consider the setup from Thm. \ref{thm:forecast_error_bound} and suppose Assum.\ref{asn:time_limit_observation} holds. Then for any fixed unit $i \in [N]$ and forecast horizon $h\ge 1$, $\thetahatith$ in \eqref{eq:forecast_target_est} satisfies
\begin{equation}\label{eq:asymptotic_normality}
    \frac{\dnt}{\sigma_{i,T,h}}~\left(\thetahatith - \thetaith\right) \dconverge \normal{0}{1},
\end{equation}
where the asymptotic variance is denoted by $\sforest \defeq \sest + \sfor$, with $\sest$ and $\sfor$ defined in \eqref{eq:variance_est} and \eqref{eq:variance_for} in the Appendix.
\end{theorem}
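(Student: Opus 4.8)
The plan is to linearize $\thetahatith$ around the population target up to the rotation ambiguity inherent to PCA, discard negligible remainders, and apply a central limit theorem to the surviving first-order terms. Following \citet{xiong2023large} and the classical treatment of \citet{bai2003inferential}, there exists an invertible (data-dependent) rotation $H \in \RR^{r\times r}$ for which the PCA outputs expand as $\hat F_T = H F_T + \dfactor$ and $\hat\Lambda_i = H^{-\top}\Lambda_i + \dload$; since the OLS fit \eqref{eq:A_hat} is run on the rotated factors, $\hat A^h = H A^h H^{-1} + \dcoeff$, where the telescoping identity $\hat A^h - HA^hH^{-1} = \sum_{j=0}^{h-1}(HAH^{-1})^{j}(\hat A - HAH^{-1})\hat A^{h-1-j}$ gives $\|\dcoeff\| \lesssim h\|A\|^{h-1}\|\hat A - HAH^{-1}\|$. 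Substituting into $\thetahatith = \hat\Lambda_i^\top\hat A^h\hat F_T$, the zeroth-order term is $\Lambda_i^\top H^{-1}(HA^hH^{-1})HF_T = \thetaith$, so the rotation cancels and $\thetahatith - \thetaith$ reduces to the three first-order contributions $\dload^\top H A^h F_T$, $\Lambda_i^\top H^{-1}\dcoeff\,HF_T$, and $\Lambda_i^\top A^h H^{-1}\dfactor$, plus products of two or more perturbations.

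Next I would attach rates using the estimation lemmas underlying Thm.~\ref{thm:forecast_error_bound}. The loading term and the factor term are each $\bigOP{\dnt^{-1}}$, whereas the coefficient term inherits $\bigOP{h\|A\|^{h-1}(N^{-1}+T^{-1/2})}$ from the bound on $\dcoeff$. Multiplying through by $\dnt = \min\{\sqrt N,\sqrt T\}$, the $N^{-1}$ piece of the coefficient term is $O_P(N^{-1/2}) = o_P(1)$ and is dropped, while its $T^{-1/2}$ piece survives; the cross terms, being products of perturbations, are $O_P(\dnt^{-2})$ and vanish after normalization. Thus $\dnt(\thetahatith - \thetaith)$ equals, up to $o_P(1)$, the sum of an $\varepsilon$-driven estimation contribution (loading plus factor) and an $\eta$-driven forecasting contribution, which is precisely the split $\sforest = \sest + \sfor$.

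I would then replace $\dfactor$, $\dload$, and $\dcoeff$ by their leading stochastic representations and invoke a CLT for each. The factor error at time $T$ expands as a cross-sectional average of the form $\tfrac1N\sum_i(\cdots)\varepsilon_{i,T}$ and the loading error as a time average $\tfrac1T\sum_t(\cdots)\varepsilon_{i,t}$; both are governed by the idiosyncratic noise and, via a Lyapunov CLT using the eighth-moment bound of Assum.~\ref{asn:idiosyncratic} together with the observation-pattern limits $\nu_s,\rho_{s,t},\omega_1,\omega_2,\omega_3$ of Assum.~\ref{asn:time_limit_observation}, converge to a Gaussian with variance $\sest$ (here $\Lambda_i$ and $F_T$, which are $\bigOP{1}$, enter the limiting variance and are treated as given). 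The coefficient error expands as a martingale average $\tfrac1T\sum_t \hat F_t\,\eta_{t+1}^\top$ driven by the factor innovations; a martingale CLT, using stability (Assum.~\ref{asn:var1_factors}) and the Gaussian-induced fourth moments of $F_t$ (Lem.~\ref{lem:imply_1}), yields a Gaussian with variance $\sfor$. Since Assum.~\ref{asn:independence} makes the $\varepsilon$-block and the $\eta$-block independent, their variances add, and a Cramér--Wold/Slutsky argument delivers the joint limit \eqref{eq:asymptotic_normality}.

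The step I expect to be the main obstacle is the generated-regressor issue for $\hat A$: because the OLS in \eqref{eq:A_hat} is run on estimated factors $\hat F_t$ rather than the latent $F_t$, the factor estimation error propagates into $\dcoeff$ and must be shown to enter only at the $N^{-1}$ level--so that it is annihilated by the $\dnt$ normalization--without perturbing the leading $T^{-1/2}$ martingale term. This calls for a careful second-order expansion of $\hat A - HAH^{-1}$ and uniform-in-$t$ control of $\|\hat F_t - H F_t\|$, while also managing the randomness of $H$ (handled by showing $H$ concentrates on a fixed limit under the distinct-eigenvalue condition of Assum.~\ref{asn:factor}, or by exploiting the rotation invariance of the scalar $\thetaith$ throughout). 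A secondary subtlety is verifying that the single entry $\varepsilon_{i,T}$ shared by the factor and loading terms contributes negligibly, so that the combined estimation block has the clean variance $\sest$.
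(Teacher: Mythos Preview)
Your linearization, rate bookkeeping, and handling of the generated-regressor issue for $\hat A$ are essentially the paper's route, and your identification of the $\eta$-driven forecasting block as asymptotically independent of the $\varepsilon$-driven estimation block is correct. The genuine gap is in your description of the estimation block itself.

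You write that $\dfactor$ ``expands as a cross-sectional average of the form $\tfrac1N\sum_i(\cdots)\varepsilon_{i,T}$'' and $\dload$ as a time average in $\varepsilon_{i,t}$. In the missing-data setting this is only half the story. Because the covariance $\hat\Sigma_{s,t}$ in \eqref{eq:sigma_hat} averages over the \emph{random} subset $\cQ_{s,t}$, the leading expansion of $H^{-1}\hat F_T-F_T$ carries a second term driven not by $\varepsilon$ but by the fluctuation $\tfrac{1}{|\cQ_{s,T}|}\sum_{i\in\cQ_{s,T}}\Lambda_i\Lambda_i^\top-\tfrac1N\sum_i\Lambda_i\Lambda_i^\top$, and likewise for $H^\top\hat\Lambda_i-\Lambda_i$. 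The paper writes these as $e_{F,T}^{(1)}+e_{F,T}^{(2)}$ and $e_{\Lambda,i}^{(1)}+e_{\Lambda,i}^{(2)}$ (Lem.~\ref{lem:asymp_indep}); the superscript-$(2)$ pieces are the ``miss'' contributions $\SigmaFmiss,\SigmaLmiss$ in \eqref{eq:variance_est}, and they are what the constants $\omega_1,\omega_3$ multiply. Your $\varepsilon$-only expansion would miss these terms entirely and would produce the wrong $\sest$ unless the panel is fully observed.

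More importantly, $e_{F,T}^{(2)}$ and $e_{\Lambda,i}^{(2)}$ are \emph{not} asymptotically independent: both are functionals of the same loading vectors $\{\Lambda_j\}$ summed over overlapping subsets, and their limiting covariance is exactly the cross term $-2(\omega_2-1)\SigmaFLmisscov$ in \eqref{eq:variance_est} (Lem.~\ref{lem:lim_cross_cov}). Your ``secondary subtlety'' about the shared $\varepsilon_{i,T}$ is a red herring---that overlap is indeed negligible---but the real dependence you need to track is through the $\Lambda$-randomness induced by missingness, and it does not vanish. Without this piece the variance formula is incomplete and the claimed limit \eqref{eq:asymptotic_normality} would have the wrong $\sigma_{i,T,h}^2$.
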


The proof is deferred to App. \ref{pf:asymptotic_normality}. Thm. \ref{thm:asymptotic_normality} guarantees $\delta_{NT}$-consistency and asymptotically normalilty of $\thetahatith$. Relative to \citet{xiong2023large}, the extension to temporally dependent factors follows from explicitly checking that the general CLT assumptions remain valid under autocorrelation with new Lindeberg and Martingale CLT results, as established in Lem. \ref{lem:imply_5}. In App. \ref{sec:unit_root}, we additionally outline the extension of Thm. \ref{thm:asymptotic_normality} to nonstationary autoregressive factor processes with unit roots.

We note that $\sforest$ has two components-- (a) The asymptotic variance due to estimating the factors from the panel with missing entries is captured by $\sest$, and is similar to the variance of unit $\times$ time-level mean outcome estimation under missing observations in \citet[Cor. 1]{xiong2023large}, and (b) $\sfor$ captures the uncertainty solely due to the forecasting task, and is asymptotically independent of other leading terms (App.~\ref{pf:asymptotic_normality}). When $N/T \to 0$, $\sfor \to 0$ and forecast uncertainty comes only from factor estimation.

\subsubsection{Forecast confidence interval}\label{subsubsec:ci}

If we replace $\sforest$ in Thm. \ref{thm:asymptotic_normality} with its consistent estimator, we can obtain confidence interval for $\thetaith$, and quantify uncertainty of the point forecasts. Using the consistent variance estimators described in Sec. \ref{subsec:hac_estimators}, we state the result for forecast confidence intervals without proof.

\vspace{5pt}
\begin{corollary}[Asymptotic C.I. for $\thetahatith$]\label{cor:ci} Consider the setup of Thm. \ref{thm:asymptotic_normality}. Let $\shatforest \defeq \shatest + \shatfor$ be the asymptotic variance estimator in App. \ref{subsec:hac_estimators}. Then given a level of significance $\alpha \in (0,1)$, the following holds
\begin{align*}
\lim_{N,T\rightarrow\infty} \pr\left(\thetaith \in \Big[\thetahatith ~\mp~  z_{1-\alpha/2} \cdot\frac{\hat\sigma_{i,T,h}}{\delta_{NT}}\Big] \right) & \\ & \hspace{-60pt} = 1 -\alpha,
\end{align*}
where $z_{1-\alpha/2}$ is the $(1-\alpha/2)$-quantile of standard normal distribution.
\end{corollary}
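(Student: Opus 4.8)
The plan is to reduce the corollary to Thm.~\ref{thm:asymptotic_normality} through Slutsky's theorem, once the estimated variance is shown to be ratio-consistent. Write $Z_{N,T} \defeq \frac{\dnt}{\sigma_{i,T,h}}\bigparenth{\thetahatith - \thetaith}$, so that Thm.~\ref{thm:asymptotic_normality} gives $Z_{N,T} \dconverge \normal{0}{1}$. The single analytic goal is then $\shatforest / \sforest \pconverge 1$. Granting this, the continuous mapping theorem yields $\sigma_{i,T,h}/\hat\sigma_{i,T,h} \pconverge 1$, and Slutsky's theorem gives $\frac{\dnt}{\hat\sigma_{i,T,h}}\bigparenth{\thetahatith - \thetaith} = \bigparenth{\sigma_{i,T,h}/\hat\sigma_{i,T,h}}\,Z_{N,T} \dconverge \normal{0}{1}$. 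Rearranging the event $\braces{\sabss{\tfrac{\dnt}{\hat\sigma_{i,T,h}}(\thetahatith - \thetaith)} \le z_{1-\alpha/2}}$ into $\braces{\thetaith \in [\thetahatith \mp z_{1-\alpha/2}\,\hat\sigma_{i,T,h}/\dnt]}$ and taking $N,T\to\infty$ gives probability $\Phi(z_{1-\alpha/2}) - \Phi(-z_{1-\alpha/2}) = 1-\alpha$, which is the claim.

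To establish the ratio consistency I would split along $\sforest = \sest + \sfor$ and treat the two pieces separately. For the estimation piece I would show $\shatest/\sest \pconverge 1$. Because $\sest$ coincides, up to interchanging the roles of rows and columns, with the unit~$\times$~time asymptotic variance of \citet[Cor.~1]{xiong2023large}, its plug-in HAC estimator inherits their consistency: each ingredient --- the factor second moment $\Sigma_F$, the loading covariance $\Sigma_\Lambda$, the idiosyncratic variance $\sigma_\varepsilon^2$, and the observation-pattern limits $\nu_s,\rho_{s,t},\omega_1,\omega_2,\omega_3$ of Assum.~\ref{asn:time_limit_observation} --- is consistently estimable from $\hat F_t$, $\hat\Lambda_i$ and $W$, and the weights of \citet{newey1987hypothesis} consistently capture the long-run variance induced by the serial dependence of the VAR(1) factors, provided the HAC bandwidth grows sub-$\sqrt{T}$.

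For the forecast piece I would argue $\shatfor/\sfor \pconverge 1$ by continuous mapping. By \citet[Prop.~3.2]{lutkepohl2005new}, $\sfor$ in \eqref{eq:variance_for} is a fixed smooth function of $(A,\Sigma_\eta,F_T,\Lambda_i)$ obtained from the delta-method expansion of the $h$-step VAR predictor. The analysis underlying Thm.~\ref{thm:forecast_error_bound} already supplies $\hat A \pconverge A$, the consistency of the residual covariance $\hat\Sigma_\eta$, and the consistency of $\hat F_T$ and $\hat\Lambda_i$ up to the rotation $H$ fixed in Step~\ref{item:step1}; feeding these into the smooth map gives the claim. Since both $\sest$ and $\shatest$, and likewise $\sfor$ and $\shatfor$, are expressed in the same rotation and the estimand $\thetaith$ in \eqref{eq:forecast_target} is rotation-invariant, the rotation factors cancel throughout. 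Using the asymptotic independence of the estimation and forecast contributions noted after Thm.~\ref{thm:asymptotic_normality}, the two limits combine to $\shatforest/\sforest \pconverge 1$.

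I expect the main obstacle to lie in the estimation piece $\shatest$: one must verify that the factor- and loading-estimation errors $\hat F_t - H F_t$ and $\hat\Lambda_i - H^{-\top}\Lambda_i$, which enter the HAC sums quadratically and are summed over $O(T)$ lags, are negligible uniformly across lags, so that substituting the estimated $\hat F_t,\hat\Lambda_i$ for the latent quantities does not perturb the limiting long-run variance. This uniform control, combined with the bandwidth conditions for the HAC estimator under VAR(1) dependence, is the technical crux; the remaining steps are routine invocations of Slutsky's theorem and the continuous mapping theorem.
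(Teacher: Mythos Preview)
Your proposal is correct and mirrors the paper's own argument: the paper states Cor.~\ref{cor:ci} ``without proof,'' but the preceding paragraph sketches exactly the route you take --- consistency of $\shatest$ via the plug-in/HAC estimators of \citet{bai2003inferential,xiong2023large} and of $\shatfor$ via \citet[Prop.~3.2]{lutkepohl2005new}, then substitution into Thm.~\ref{thm:asymptotic_normality} through Slutsky. Your additional care about ratio consistency, rotation invariance, and uniform control of the factor/loading errors inside the HAC sums goes beyond what the paper spells out, but is precisely the content one would need to make the sketch rigorous.
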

Hence, we provide a $100(1-\alpha)\%$ confidence interval for $\thetaith$ under standard regularity assumptions.

\subsection{Examples}\label{subsec:one-factor_mcar}

We demonstrate our results with a \textit{one-factor model} for the outcome variable. Under \asn \ref{asn:var1_factors}, the factors follow a first order autoregressive (AR(1)) process. We consider some commonly used treatment designs in causal inference as our examples for observation mechanism, with $W$ being independent of other sources of randomness.

\textbf{MCAR.}
$W$ has entries \textit{missing completely at random} (MCAR) if $W_{i,t} \overset{\iid}{\sim} \text{Bernoulli}(p), p \in (0,1]$, i.e. each entry is observed independently with probability $p$. Since in this case, $|\cQ_{s,t}|/N$ is average of $\text{Bernoulli}(p^2)$ random variables, MCAR satisfies $|\cQ_{s,t}| > N \underline{q}$ almost surely as $N \to \infty$ for $0 < \underline{q} < p^2$ (App. \ref{sec:mcar_as}). Thus the condition \eqref{eq:q_lower_bound} in \asn \ref{asn:observation} is reasonable in MCAR setting.

\textbf{Staggered adoption.} The observation matrix $W$ satisfies
$$ W_{i,t} = 1 \text{ for some } t\in [T] \implies W_{i,t'} = 1 \text{ for all } t' \ge t, $$
i.e. an unit that once becomes an adopter stays compliant to the treatment for the rest of the observation period. Staggered adoption design is used for evaluating policy interventions that are implemented at different times across units \citep{ben2021augmented, athey2022design}.

We note that staggered adoption is a special case of \textit{missing at random} (MAR) designs \citep[Sec. 2.2]{xiong2023large} where for each $t$, the cross-sectional units are equally likely to be observed. Similar to MCAR, $|\cQ_{s,t}|/N$ for staggered adoption for unit-wise independent adoption times is an average of Bernoulli random variables, implying that staggered adoption design with independent adoption times across units satisfies the condition \eqref{eq:q_lower_bound} in \asn \ref{asn:observation} for $0 < \underline{q} < \pr(W_{i,1} = 1)$, almost surely as $N \to \infty$.

\begin{corollary}[\focus under MCAR and staggered adoption]\label{cor:mcar}
Consider a stationary one-factor model with AR(1) factors with coefficient $\phi$ satisfying $|\phi| < 1$. Assume that $W \indep \Lambda, F, \varepsilon$, and the observations are subject to MCAR design with observation probability $p$, or staggered adoption design with independent adoption time satisfying $\lim_{t\to \infty} \pr(W_{i,t} = 1) = 1$. Then for every unit $i \in [N]$ and horizon $h \ge 1$, under \asn \ref{asn:factor}-\ref{asn:independence} and provided that condition \eqref{eq:q_lower_bound} holds, the forecast error bound in \eqref{eq:forecast_error_bound} holds. Furthermore, the asymptotic normality in \eqref{eq:asymptotic_normality} holds, with the asymptotic variance given by \eqref{eq:variance_est1} and \eqref{eq:variance_for1}.
\end{corollary}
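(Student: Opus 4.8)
The plan is to read Corollary~\ref{cor:mcar} as a specialization of Theorems~\ref{thm:forecast_error_bound} and~\ref{thm:asymptotic_normality}: once their hypotheses are checked for the scalar AR(1) model under each design, the conclusions~\eqref{eq:forecast_error_bound} and~\eqref{eq:asymptotic_normality} follow, and it remains only to evaluate the general variance formulas at the design-specific occupation fractions. The model-side checks are immediate. With $r=1$ the recursion~\eqref{eq:var1_factor} is the scalar AR(1) $F_t=\phi F_{t-1}+\eta_t$, so $\rho(A)=|\phi|<1$ and Assumption~\ref{asn:var1_factors} holds; Assumptions~\ref{asn:factor}--\ref{asn:independence} are inherited directly from the corollary's hypotheses. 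The substance is therefore to verify the two observation-pattern assumptions and to read off the limiting variances.

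First I would verify Assumption~\ref{asn:observation}. Since $W\indep(F,\varepsilon)$ and its entries are i.i.d.\ across units for each fixed column, every normalized count $\frac1N|\cQ_{s,t}|$ and $\frac1N|\cQ_{s,t}\cap\cQ_{s',t'}|$ is an average of i.i.d.\ indicators, so the strong law delivers the a.s.\ limits~\eqref{eq:prop0}. Under MCAR the relevant per-unit indicator is a product of $W$'s over distinct columns, hence $\mathrm{Bernoulli}(p^{k})$ with $k$ the number of distinct time indices, giving $\alpha_{s,t}=p^{\,2-\indic{s=t}}$ and $\beta_{s,t,s',t'}=p^{\,\#\{s,t,s',t'\}}$. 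Under staggered adoption the monotonicity $W_{i,t}=\indic{\tau_i\le t}$ collapses any product of these indicators to its most restrictive factor, so $\alpha_{s,t}=G_\tau(\min\{s,t\})$ and $\beta_{s,t,s',t'}=G_\tau(\min\{s,t,s',t'\})$. The a.s.\ lower bound~\eqref{eq:q_lower_bound} is supplied by hypothesis and was already argued for both designs in the discussion preceding the statement. Assumption~\ref{asn:observation} thus holds, and Theorem~\ref{thm:forecast_error_bound} gives the error bound~\eqref{eq:forecast_error_bound}.

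Next I would verify Assumption~\ref{asn:time_limit_observation}, which is the crux. In the scalar case~\eqref{eq:lim_FFW} reads $\frac1T\sum_t W_{i,t}F_t^2\pconverge\Sigma_{F,i}$; since $W$ and $F$ are independent and the Gaussian AR(1) is stationary and ergodic with $\sigma_F^2=\sigma_\eta^2/(1-\phi^2)<\infty$, the product $W_{i,t}F_t^2$ is stationary ergodic under MCAR, so $\Sigma_{F,i}=p\,\sigma_F^2>0$; under staggered adoption, conditioning on the a.s.\ finite adoption time $\tau_i$ (finite because $\lim_t G_\tau(t)=1$) reduces the average to a tail Ces\`aro mean of $F_t^2$, so $\Sigma_{F,i}=\sigma_F^2>0$. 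The limits $\nu_s,\rho_{s,t}$ follow by letting $T\to\infty$ in the $\alpha,\beta$ above. The main obstacle is the three Ces\`aro limits $\omega_1,\omega_2,\omega_3$. For MCAR each off-diagonal summand is a constant power of $p$ determined purely by index coincidences --- e.g.\ the shared column $T$ in $\omega_1$ forces $\beta/(\alpha\alpha)=p^3/p^4=1/p$, whereas $\omega_2,\omega_3$ give ratio $1$ --- so one must check that the $O(T)$ diagonal and lower-dimensional coincidence sets are negligible against the full $O(T^2),O(T^3),O(T^4)$ index ranges. For staggered adoption each ratio simplifies, via the monotonicity identities for the minima, to a reciprocal $1/G_\tau(\cdot)$ evaluated at a large index; since $G_\tau\uparrow1$ and $G_\tau\ge\underline q>0$ bounds the summands, bounded convergence over the normalized counting measure yields $\omega_1=\omega_2=\omega_3=1$. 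Controlling these boundary and diagonal corrections when $G_\tau$ varies in time is the delicate step.

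Finally, with both observation assumptions in hand, Theorem~\ref{thm:asymptotic_normality} yields the limit law~\eqref{eq:asymptotic_normality}, and substituting the explicit $\alpha_{s,t},\beta_{s,t,s',t'},\nu_s,\rho_{s,t},\omega_1,\omega_2,\omega_3$ and $\Sigma_{F,i}$ into the general variance expressions~\eqref{eq:variance_est} and~\eqref{eq:variance_for} collapses them to the closed forms~\eqref{eq:variance_est1} and~\eqref{eq:variance_for1}, completing the proof.
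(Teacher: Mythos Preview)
Your proposal is correct and follows essentially the same route as the paper's proof: reduce the corollary to verifying Assumptions~\ref{asn:observation} and~\ref{asn:time_limit_observation} for each design via the strong law on the i.i.d.\ row indicators, then plug the resulting $\alpha,\beta,\omega_1,\omega_2,\omega_3,\Sigma_{F,i}$ into the general variance formulas. The only cosmetic differences are that the paper establishes $\Sigma_{F,i}$ under MCAR by a decomposition plus Chebyshev rather than by ergodicity, and obtains the staggered-adoption Ces\`aro limits $\omega_j=1$ via the Stolz--Ces\`aro theorem rather than a bounded-convergence argument; both choices are equivalent here.
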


The proof is deferred to App. \ref{pf:mcar}. In both cases, \asn \ref{asn:observation} and \ref{asn:time_limit_observation} also translate to the asymptotic variance as in App. \ref{pf:asn56_mcar} and \ref{pf:asn56_staggered}. For staggered adoption design, the condition $\lim_{t\to \infty} \pr(W_{i,t} = 1) = 1$ implies that every unit eventually adopts the treatment under almost surely. As $p$ decreases from $1$ to $0$, $\sest$ in \eqref{eq:variance_est1} grows and the asymptotic variance of $\thetahatith$ increases with the probability of missing observations.

\begin{figure*}[!t]
    \centering
    \subfloat[]{\includegraphics[width=0.25\textwidth]{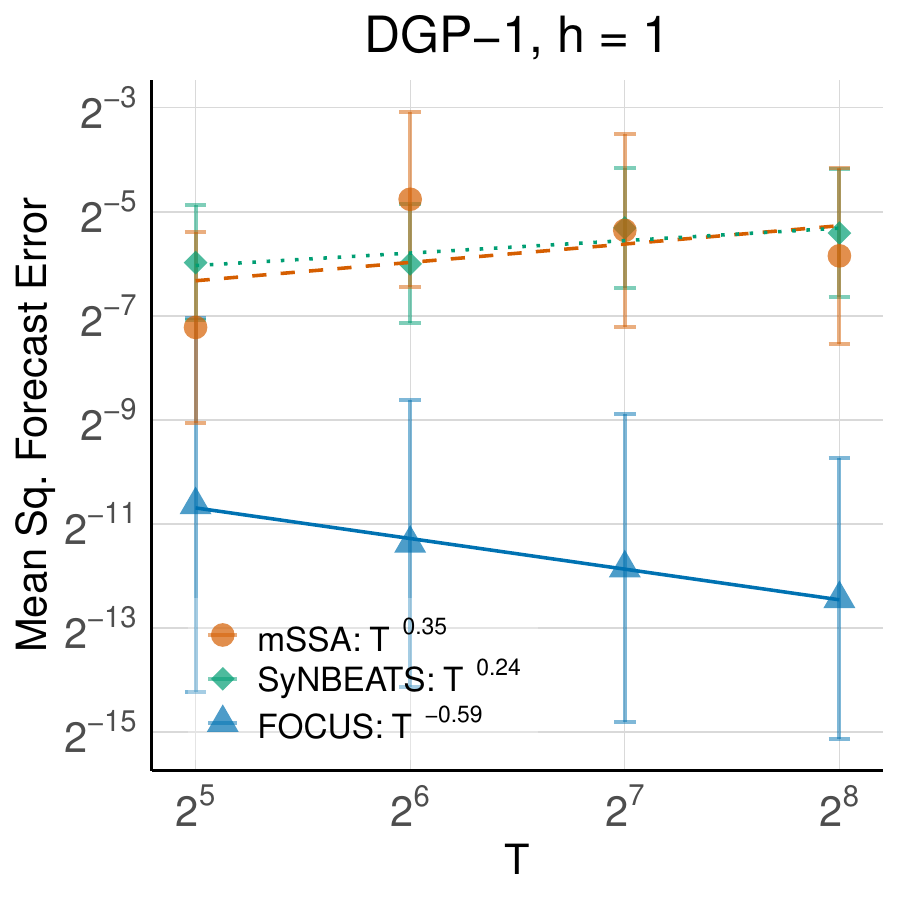}}
    \hfill
    \subfloat[]{\includegraphics[width=0.25\textwidth]{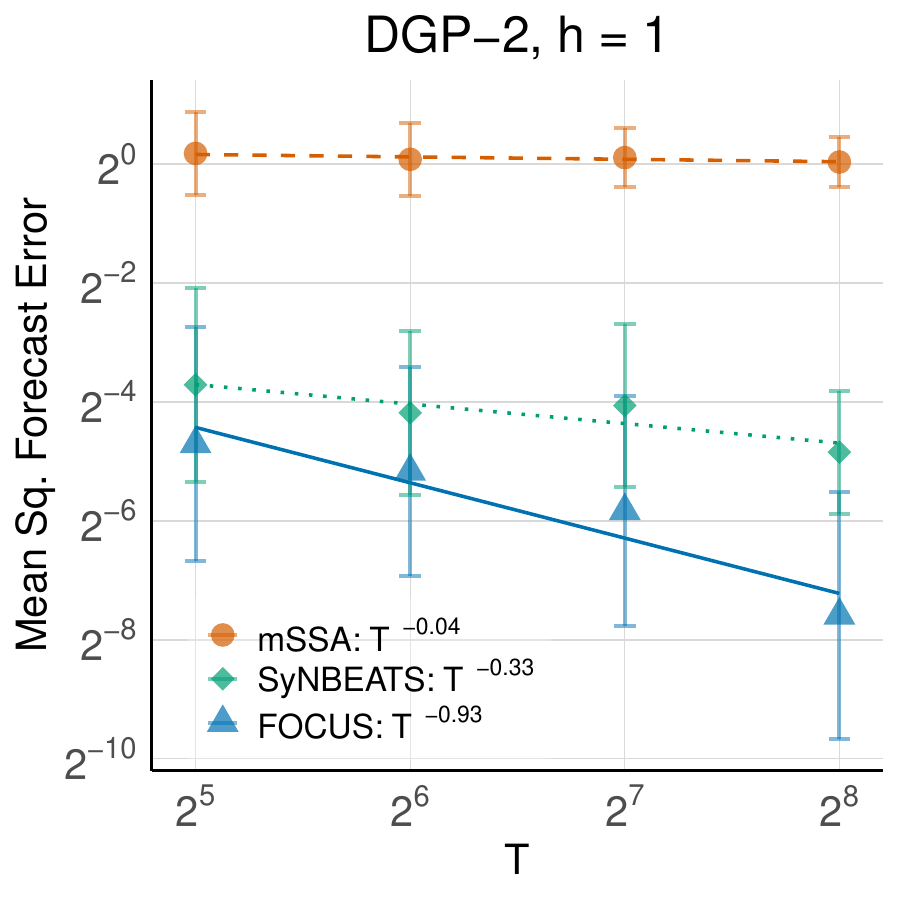}}
    \hfill
    \subfloat[]{\includegraphics[width=0.25\textwidth]{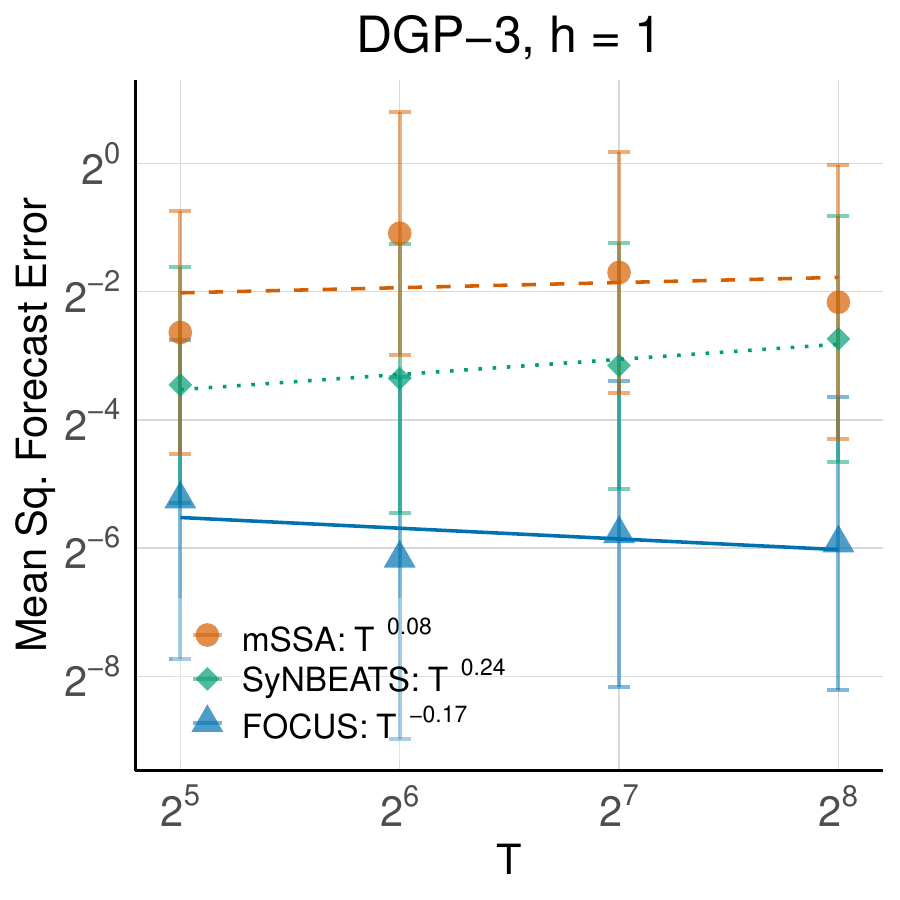}}
    \subfloat[]{\includegraphics[width=0.25\textwidth]{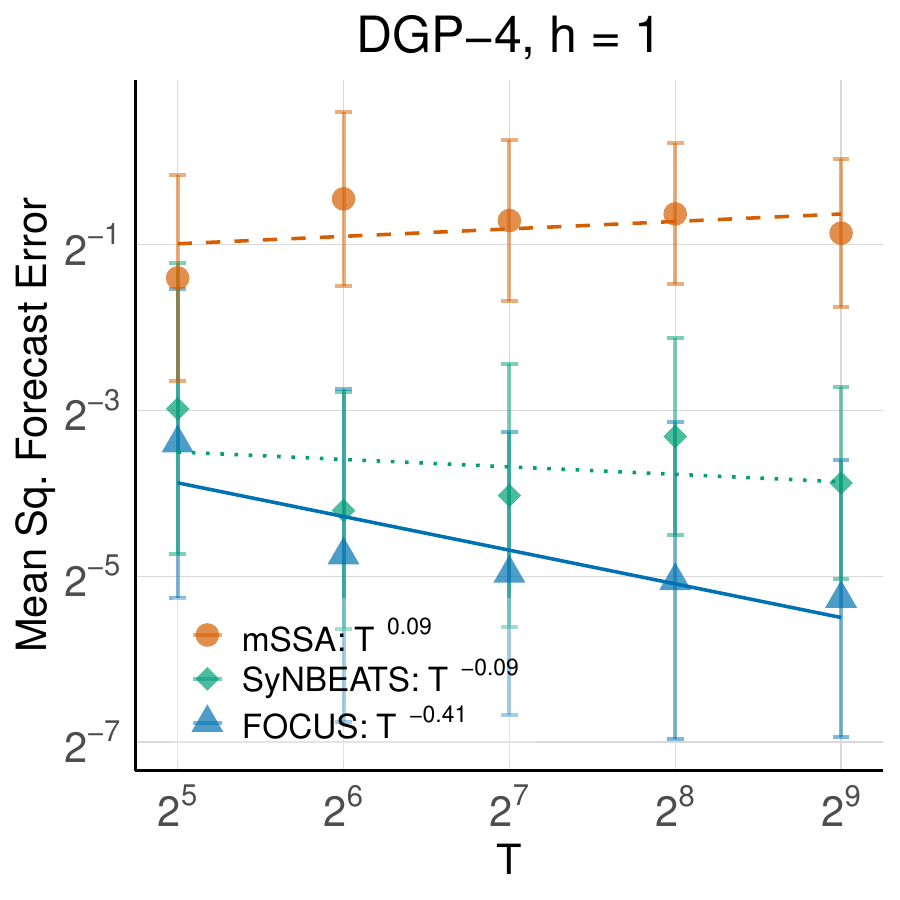}}\\
    \subfloat[]{\includegraphics[width=0.26\textwidth]{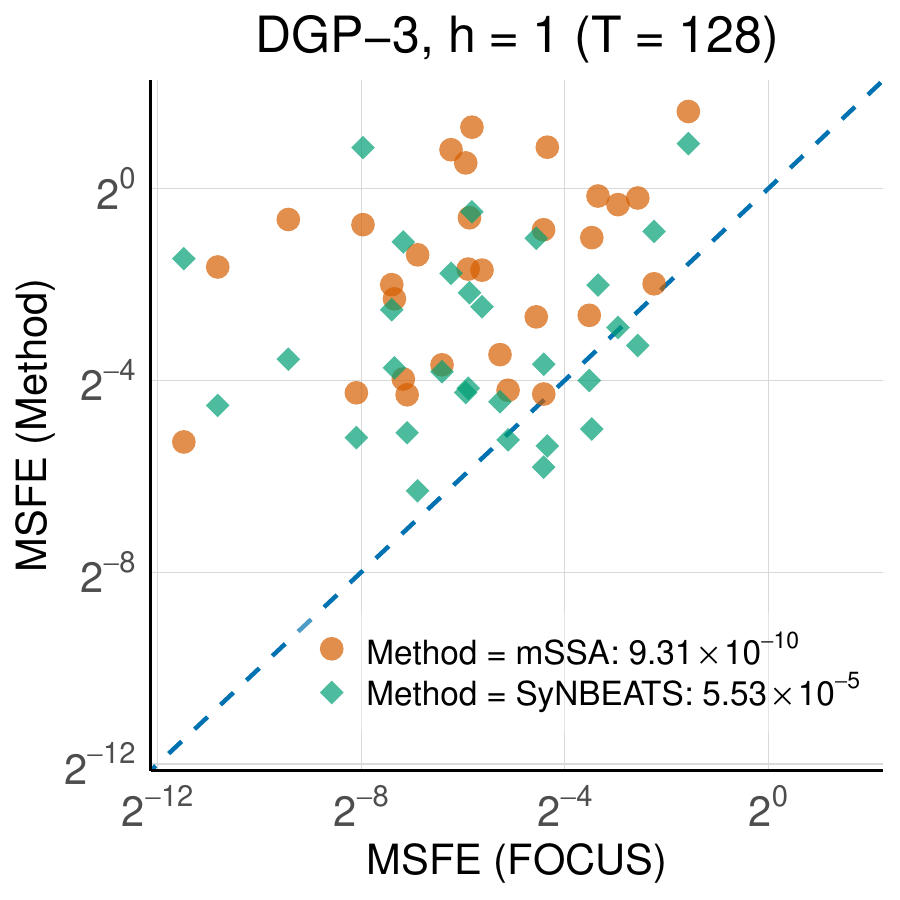}}\hfill
    \subfloat[]{\includegraphics[width=0.26\textwidth]{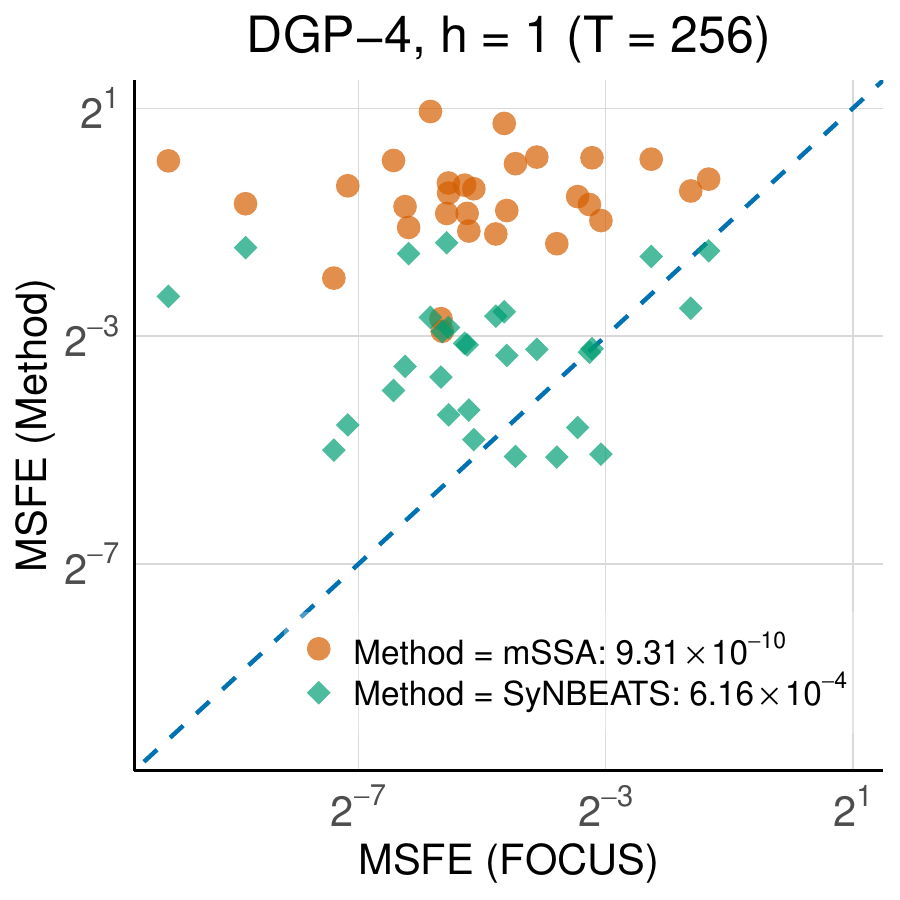}}\hfill
    \subfloat[]{\includegraphics[width=0.48\textwidth]{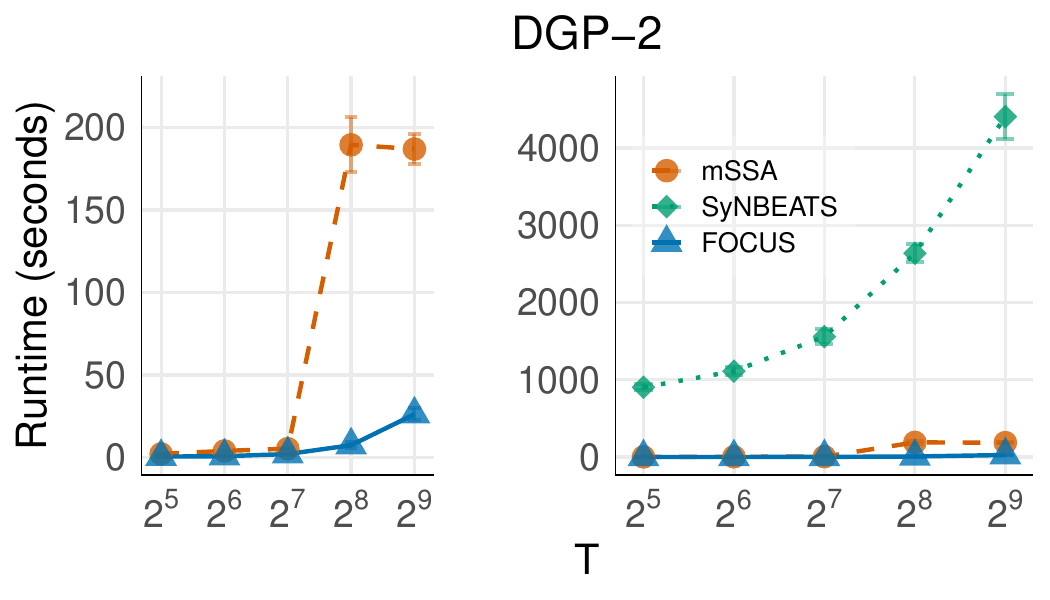}}
    \caption{\textbf{Mean Squared Forecast Error (MSFE) at $h = 1$ and runtime (median across 30 trials, in seconds) across the benchmarks for $N = 64$ and 4 generative models.} Panels (a)-(d) present the average MSFE (across 30 trials) of \focus (blue triangle), mSSA (orange circle) and SyNBEATS (green diamond) across $T \in \{2^5,\ldots, 2^8\}$, and the vertical lines mark the one standard deviation error bars. As comapared to SyNBEATS and mSSA, \focus has lower average MSFE that decreases faster with $T$ (empirical rates in the legends). Panels (e)-(f) present scatter plots of MSFE comparing \focus vs the benchmarks for DGP-3 at $T = 128$ and DGP-4 at $T = 256$. The errors of \focus are significantly lower (p-values of Wilcoxon's one-sided pairwise test in legends $<$ 0.01), resulting the scatter plots concentrated in $y>x$ region. Panel~(g) presents median (across 30 trials) runtime in DGP-2 for $N=64$ and $T\in\{2^5,\ldots,2^9\}$, with vertical bars representing the median absolute deviation of runtimes. The right plot shows all methods, and the left plot zooms in on \focus and mSSA. Across all $T$, \focus (blue triangles) is faster than mSSA (orange circles) and substantially faster than SyNBEATS (green diamonds).}
    \label{fig:err_plots_vs_T}
\end{figure*}

\begin{figure*}[t]
    \centering
    \subfloat[]{\includegraphics[width=0.25\textwidth]{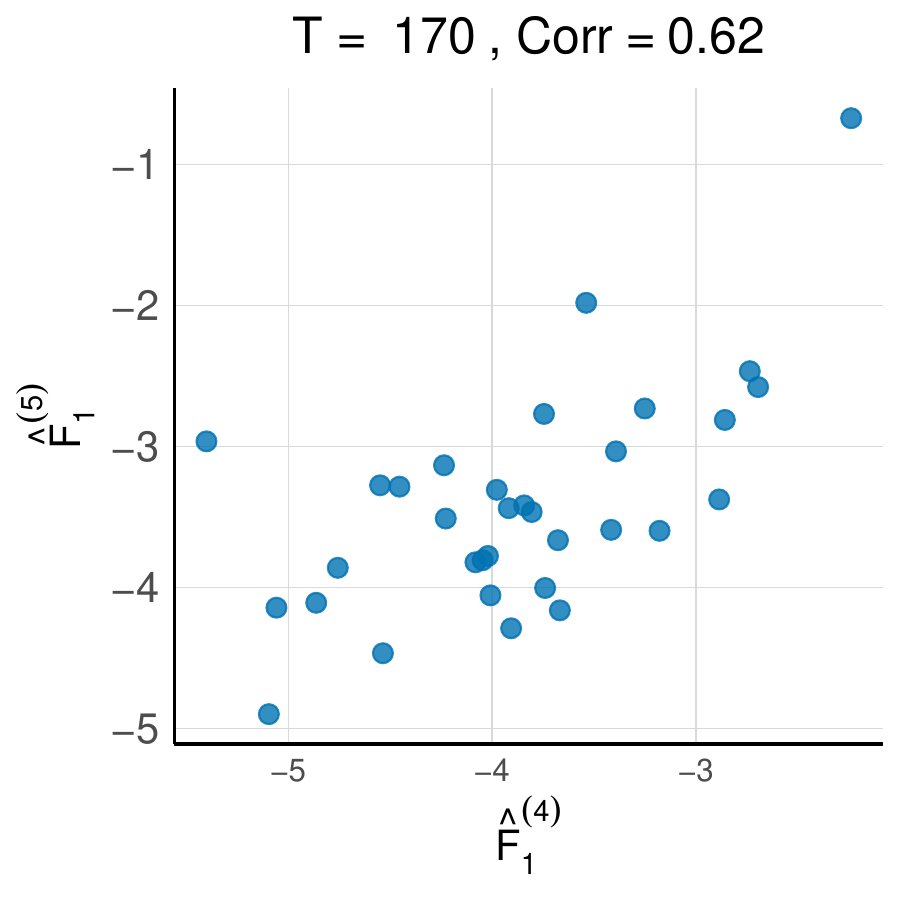}}
    \hfill
    \subfloat[]{\includegraphics[width=0.25\textwidth]{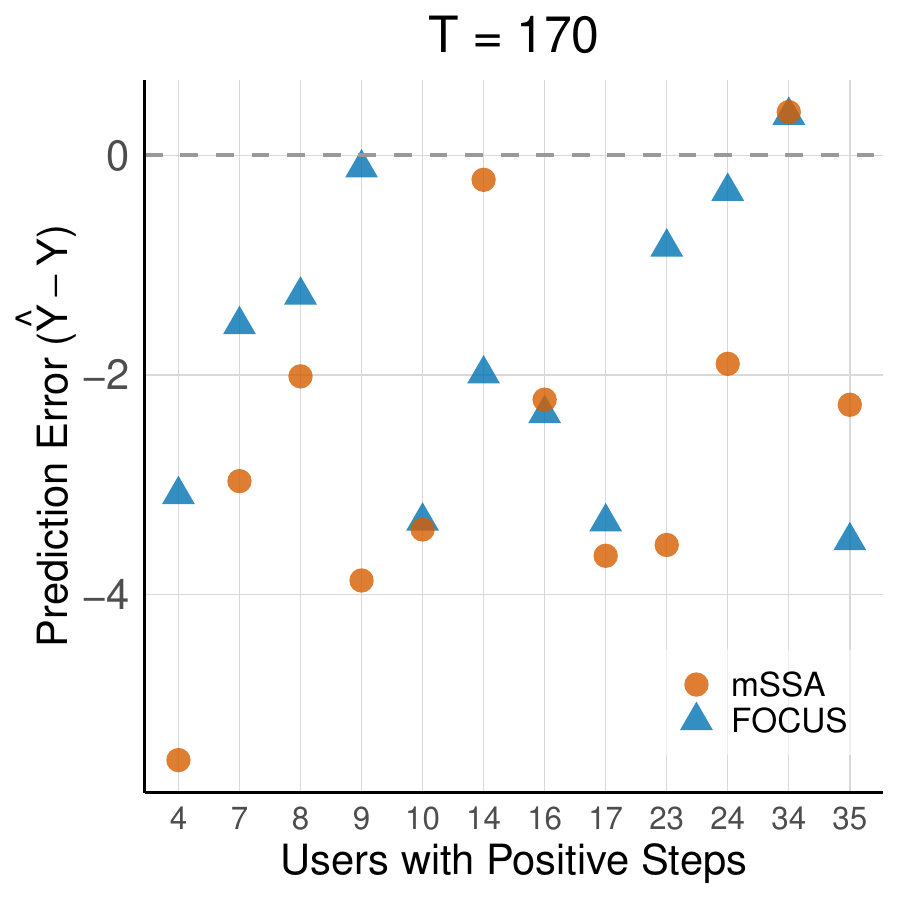}}\hfill
    \subfloat[]{\includegraphics[width=0.25\textwidth]{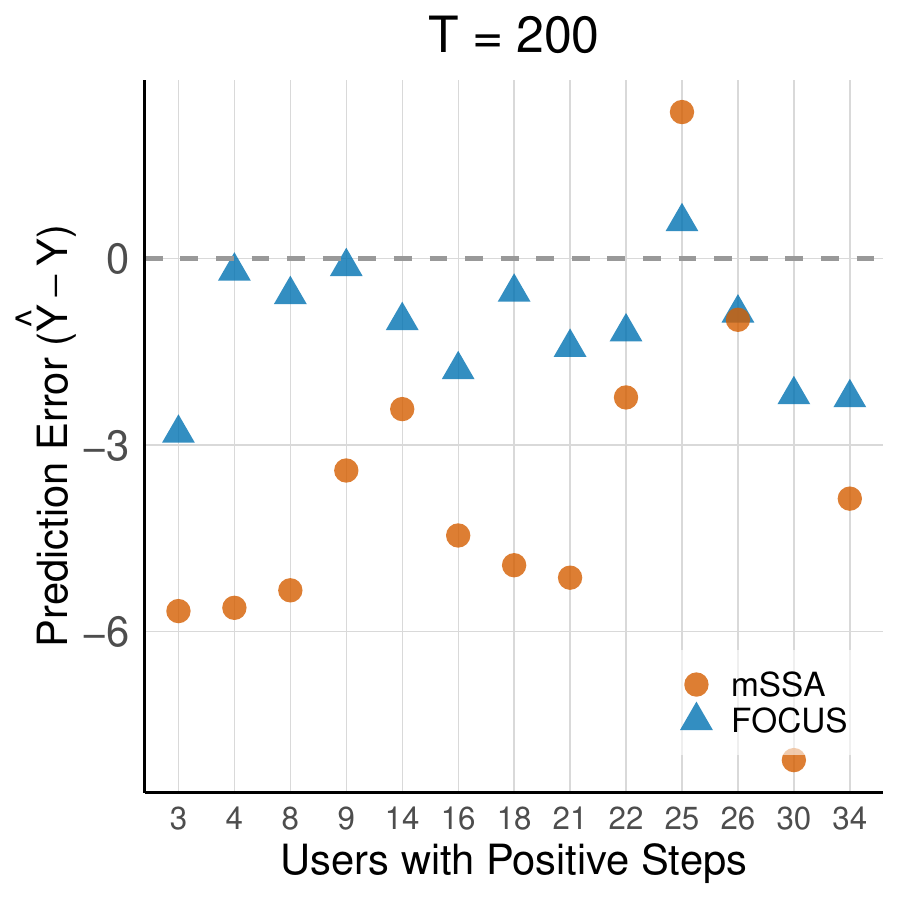}}
    \hfill
    \subfloat[]{\includegraphics[width=0.25\textwidth]{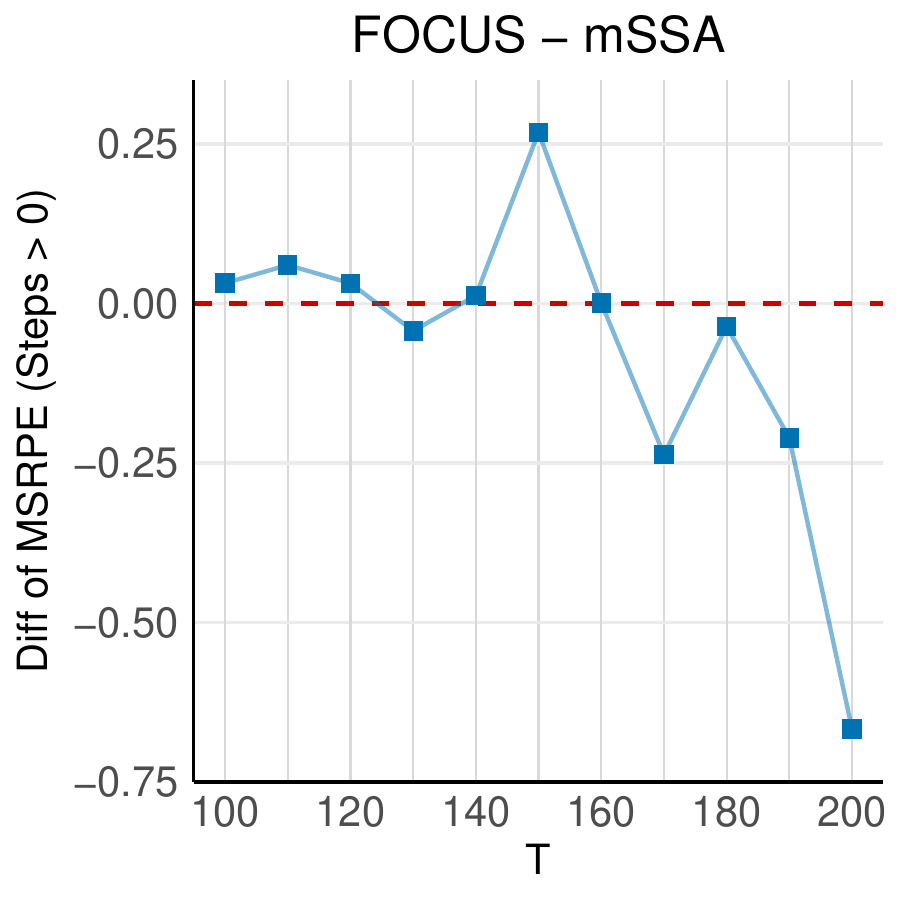}}
    \caption{\textbf{Results of \focus and mSSA for HeartSteps data.} Panel (a) presents a scatter plot of estimated factors for slot pair (4, 5) that shows strong temporal correlation 0.62 for $T = 170$, highlighting the predictive power leveraged by \focus. Panels (b) and (c) present the point-wise prediction error for mSSA and \focus at $T = 170, 200$ and for users with positive steps at $T+h$. For most users, \focus outputs (blue triangle) are closer to the horizontal line at 0 than mSSA (orange circle), yielding more accurate prediction. We note that for both methods, most users have negative prediction errors. Panel (c) presents difference of MSRPE (\focus\ - mSSA) for users with positive steps at $T+h$ and across different $T$. As $T$ grows, the difference stays under the horizontal line ar 0 and decreases with $T$ -- indicating an empirically better performance of \focus with increasing $T$.}
    \label{fig:heartsteps_results}
\end{figure*}

\section{Experiments}\label{sec:experiments}
We conduct experiments on simulated data and the HeartSteps data to compare our proposed method \focus against the benchmark methods when the factors have temporal correlation, particularly an autoregressive component. 
All experiments were conducted on a BioHPC server, with simulations run using 30 CPU cores.

\subsection{Simulation studies} \label{subsec:simulation}
We compare the Mean Squared Forecast Error (MSFE) of \focus with mSSA \citep{agarwal2020multivariate} and SyNBEATS \citep{goldin2022forecasting}. mSSA conducts forecasting by imposing a deterministic latent dynamics, and SyNBEATS requires a fully observed panel up to $T$ and control observations in post treatment period for other units. The details of the simulation setting, performance metric and additional results are deferred to App. \ref{sec:simulation_plus}\footnote{Codes available at: \href{https://github.com/navonildeb/FOCUS}{github.com/navonildeb/FOCUS}} .

\textbf{Setup.} We consider four data generating processes: (a) DGP-1 involves factors from an AR(1) process and the observation pattern $W$ is MCAR(0.7), (b) DGP-2 has factors as a sum of AR(1) process and a quadratic trend, with the observation pattern being simultaneous adoption, (c) DGP-3 has factors with ARMA(3,1) process with a deterministic quadratic trend and MCAR observation pattern with observation probability 0.7, and (d) DGP-4 has 2-dimensional factors with VARMA(1,1) process and a periodic component. We consider DGP-3 and DGP-4 to demonstrate the performance of \focus under model misspecification. The experiments are conducted for $N = 64$, $T \in \{2^5, \ldots, 2^9\}$ and across 30 trials.

\textbf{Results.} As shown in Fig. \ref{fig:err_plots_vs_T}(a)-(d), \focus consistently achieves the lowest MSFE among all benchmarks across varying $T$, DGPs and observation patterns. \focus also exhibits the fastest empirical decay rates for MSFE. Under DGP-2, the observed rate of squared errors is approximately consistent with the theoretical $\bigO{T^{-1/2}}$ in Thm. \ref{thm:forecast_error_bound}. To assess statistical significance, Fig. \ref{fig:err_plots_vs_T}(e)–(h) reports pairwise comparisons at $N=64$ across experiment trials, showing that \focus outperforms all benchmarks with Wilcoxon's test p-values $<10^{-2}$. Importantly, \focus maintains its advantage under model misspecification (DGP-3 and 4). Even when SyNBEATS is granted additional information by including pre-treatment and post-treatment outcomes, \focus still delivers significantly more accurate forecasts. 

As additionally shown in Fig \ref{fig:err_plots_vs_T}(g), \focus yields comparable or better forecast trajectories faster than mSSA, and substantially faster than SyNBEATS under identical computational environments, with the runtime of \focus growing much more scalably with $T$ than the benchmarks. This experiment highlights the efficiency and suitability of \focus for settings with autocorrelated dynamics, short horizon and large $T$, without compromising the forecast accuracy at all. We report the additional simulation results in Fig. \ref{fig:err_vs_T_extra}-\ref{fig:scatterplot_dgp4}.

\subsection{HeartSteps case study}\label{subsec:data}

We illustrate our method with the HeartSteps V1 dataset coming from a 6-week mHealth intervention study involving 37 sedentary adults \citep{heartstepsv1, liao2020personalized}. Participants received walking notifications at five daily decision points via fitness trackers (e.g., Jawbone or Google Fit) that are considered as interventions. The outcome of our interest is the log-step counts 30 minutes  after each suggestion, denoted by \texttt{jbsteps30}. In particular for a user $i$ and decision time $t$, $Y_{i,t} = \log(1 + \texttt{jbsteps30}_{i,t})$. The observation matrix $W$ has entries $W_{i,t} = 1$ if and only if a user is available and receives a nudge, and 0 otherwise. 

We evaluate forecast performance by training on the first $T$ time points and testing on periods $T+1$ to $T+5$. 
For each user $i$, we estimate $\theta_{i, T:T+5}$ i.e. 5-step-ahead mean potential steps under intervention with \focus. Performance is measured by the Mean Squared Relative Prediction Error (MSRPE), computed for users with positive steps. Since SyNBEATS require observations for post-treatment period and fully-observed panel for pre-treatment period, we only consider mSSA as our feasible benchmark. More details regarding the data preprocessing, performance metric and the implementation are deferred to App. \ref{sec:heartsteps_add}.

\textbf{Informative slot pairs.}~ We designate slot pair $(4,5)$ as \textit{informative} due to strong correlations in the estimated factors across consecutive suggestion slots (Fig. \ref{fig:heartsteps_results}). Certain factor components also show high cross-temporal correlation between slots 4 and 5 (Fig. \ref{fig:hs_extra_plot}(a)-(c) in Appendix). We exploit this association by predicting the outcome under a nudge at slot 5 using the estimated factors at slot 4. In line with autoregressive frameworks, where predictions at a given time are based on the immediately preceding time, we model the outcome at slot 5 under a nudge as depending on the outcome at slot 4. For simplicity, we assume that steps under a nudge at slot 4 on a day are independent of steps under a nudge at slot 5 on the previous day.

\textbf{Results.}~ Figure~\ref{fig:heartsteps_results} shows that \focus consistently outperforms mSSA in forecasting future steps under intervention. For $T=170,200$, \focus achieves lower MSRPE, and the performance gap widens as $T$ increases (Figure~\ref{fig:heartsteps_results}(d)). Additional results for $T=190$ (Fig.~\ref{fig:hs_extra_plot} in Appendix) confirm the same pattern. Taken together, these findings demonstrate that explicitly leveraging temporal latent dynamics is critical for accurate counterfactual forecasting, and that \focus provides a scalable advantage over the benchmark mSSA.

\section{Discussion}\label{sec:conclusion}

We propose a counterfactual forecasting method for panel data with low-rank structure and stochastic, time-varying latent factors. Under standard assumptions, we derive error bounds for the forecast estimator and construct valid confidence intervals. Empirically, our method consistently outperforms benchmarks when factors follow autoregressive dynamics, and a proof-of-concept on the HeartSteps V1 dataset highlights its practical utility in capturing temporal patterns in walking behavior.

A limitation of this work is the reliance on stationarity. Extending the framework to more flexible non-stationary settings, e.g. state space models with hidden Markov or Markov switching dynamics, is an important direction for future work. Another future direction is to build \textit{doubly robust} forecast estimators that combine the factor estimators with matching-type estimator, e.g. synthetic controls and nearest-neighbor estimators -- ensuring consistency and robust performance if either the matching assumption or the latent factor dynamics is correctly specified. Further extensions include incorporating covariates for capturing unit-wise heterogeneity, and adapting the method to dynamic treatment regimes and sequential policies.

\bibliography{causalTE}
\bibliographystyle{icml2026}
\newpage
\onecolumn
\appendix

\begin{center}
    \Large
    \textbf{Appendix}
\end{center}
\vspace{10pt}

\paragraph{Additional notation} For any matrix $M$, $\vecop{M}$ denotes the column-wise vectorization of $M$. For matrices $A \in \mathbb{R}^{m \times n}$ and $B \in \mathbb{R}^{p \times q}$, $A \otimes B \in \mathbb{R}^{mp \times nq}$ denotes the Kronecker product defined blockwise as $A \otimes B = [A_{i,j} B]_{i\in [m], j \in [n]}$. We write $X_n = \littleOP{a_n}$ if $\frac{X_n}{a_n} \pconverge 0$ as $n \to \infty$.

\section{Proof of Thm. \ref{thm:forecast_error_bound}: Error bound for $\thetahatith$} \label{pf:forecast_error_bound}

We denote $H \defeq \frac{1}{NT} \hat D^{-1} \hat F^\top \hat F \Lambda^\top \Lambda$. In words, $H$ is a non-singular matrix that aligns $\hat F_t$, $\hat\Lambda_i$ with $F_t$ and $\Lambda_i$ respectively. For $i \in [N]$ and $h\ge 1$, we also denote:
\begin{align*}
    \dload \defeq~ & \hat \Lambda_i - (H^\top)^{-1}\Lambda_i,\\
    \dfactor \defeq~ & \hat F_T - H F_T,\\
    \dcoeff \defeq~ & \hat A^h - H A^h H^{-1} = \hat A^h - (HAH^{-1})^h.
\end{align*}

The $h$-step forecast error is decomposed as 
\begin{align}
   \thetahatith - \thetaith = &~~ \hat \Lambda_i^\top \hat A^h\hat F_T - {\Lambda_i}^\top A^h F_T\nonumber\\
   = &~~ \left[\dload + (H^\top)^{-1} \Lambda_i\right]^\top \left[\dcoeff + H A^h H^{-1}\right] \left[\dfactor + H F_T\right] - {\Lambda_i}^\top A^hF_t\nonumber\\
   = &~~ \dload^\top \dcoeff \dfactor + \dload^\top \dcoeff (H F_T) + \dload (H A H^{-1})^h \dfactor\nonumber\\
   &~~ + ((H^\top)^{-1}\Lambda_i)^\top \dcoeff\dfactor + \dload^\top (HAH^{-1})^h(HF_T) \nonumber\\
   &~~ + ((H^\top)^{-1}\Lambda_i)^\top \dcoeff (HF_T) + ((H^\top)^{-1}\Lambda_i)^\top(HAH^{-1})^h\dfactor. \label{eq:sum_difference_terms}
\end{align}

By Lem. \ref{lem:assumption_lemma}, \asn \ref{asn:var1_factors} to \asn \ref{asn:observation} imply that the general assumptions \asn \ref{gasn:factor_model} to \ref{gasn:moment} hold. From \citet[Lemma A.2]{bai2003inferential}, under Assum. \ref{asn:observation} and \asn \ref{gasn:factor_model} to \ref{gasn:moment} we can show:
\begin{align*}
    \|\dfactor\|_2 = &~~ \|\hat F_T - HF_T\|_2 \\
    = & ~~\bigOP{\frac{1}{\sqrt{T}\delta_{NT}}} + \bigOP{\frac{1}{\sqrt{N}\delta_{NT}}} + \bigOP{\frac{1}{\sqrt{N}}} + \bigOP{\frac{1}{\sqrt{N}\delta_{NT}}} \\
     = & ~~\bigOP{\delta_{NT}^{-1}}.
\end{align*}

Additionally, similar to \citet[Thm. 2.2]{xiong2023large}, we can show that $\|\dload\|_2 = \bigOP{\delta_{NT}^{-1}}$. Lem. \ref{lem:coeff} implies, $\|\Delta_{A,1}\| = \bigOP{N^{-1}} + \bigOP{T^{-1/2}}$. For any horizon $h$, Taylor's expansion implies
\begin{align*}
    \left\|\hat A^h - H A^h H^{-1}\right\| & =~~ \left\|(\dcoeff + H A H^{-1})^h - (H A H^{-1})^h\right\|\\
    & \le~~ \left\|h \dcoeff H A^{h-1} H^{-1}\right\| + \littleOP{1} \\
    & =~~ \bigOP{h \|A\|^{h-1} N^{-1} } + \bigOP{ h \|A\|^{h-1} T^{-1/2}}.
\end{align*}

From \citet[Lemma A.3]{bai2003inferential}, $\|H\| = \bigOP{1}$. By Assum. \ref{gasn:factor_model} $\|\Lambda_i\|_2 = \bigOP{1}$, and $\|F_T\|_2 = \bigOP{1}$. Therefore, the $h$-step forecast error is
$$ \left|\thetahatith - \thetaith\right| = \bigOP{\delta_{NT}^{-1}} + \bigOP{\delta_{NT}^{-1}} + \bigOP{h \|A\|^{h-1} N^{-1}} + \bigOP{h \|A\|^{h-1} T^{-1/2}},$$
and we are done. \pfend

\begin{remark}[Role of $A$ in the error bound] \label{rem:role_of_A}
    As $\rho(A)$ approaches 1, $F_t$ exhibits slower noise decay in its MA representation \eqref{eq:ma_rep}, leading to longer memory and reduced stability \citep[Prop.~2.2]{basu2015regularized}. Since $\|A\| \ge\rho(A)$, the error bound in \eqref{eq:forecast_error_bound} deteriorates accordingly.
\end{remark}

\begin{remark}[Role of $h$] \label{rem:role_of_h}
In practice, $\thetahatith$ is more informative for small to moderate $h$. Lem.~\ref{lem:A_sum} shows--
\[
|\thetaith| < \|\Lambda_i\|_2 \|F_T\|_2 \left( \tfrac{1+\rho(A)}{2} \right)^h, \quad h \ge \underline{N},
\]
where $\underline{N}$ depends on $A$. Since $\|\Lambda_i\|_2$ and $\|F_T\|_2$ are $\bigOP{1}$ by Assum.~\ref{asn:factor} and Lem.~\ref{lem:imply_1} respectively, $\thetaith$ decays exponentially. Hence the second term in \eqref{eq:forecast_error_bound}, which captures forecast error in the estimated factors, vanishes with $h$. For large horizons, both $\thetahatith$ and $\thetaith$ are small, and \eqref{eq:forecast_error_bound} primarily reflects factor model estimation error.
\end{remark}

The proof invokes the following lemmata.

\begin{lemma}[Error bound on empirical autocovariance]\label{lem:imply_2mp_autocov_rate}
Consider a factor model \eqref{eq:factor_model_1} with $N$ units and $T$ time points satisfying Assum. \ref{asn:observation} and \asn \ref{gasn:factor_model} to \ref{gasn:moment}. Define the following matrices:
$$ \tilde \Gamma(\ell) \defeq \frac{1}{T-\ell} \sum_{t = \ell + 1}^{T} HF_{t} (HF_{t-\ell})^\top,\quad
\hat \Gamma(\ell) \defeq \frac{1}{T-\ell} \sum_{t = \ell + 1}^{T} \hat F_{t} \hat F_{t-\ell}^\top, ~~ \ell \in \{0,1\}. $$
Then we have $\left\|\hat \Gamma(\ell) - \tilde \Gamma(\ell)\right\| = \bigOP{\delta_{NT}^{-2}}$.
\end{lemma}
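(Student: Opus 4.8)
The plan is to write each estimated factor as $\hat F_t = H F_t + \Delta_t$, where $\Delta_t \defeq \hat F_t - H F_t$ denotes the (time-indexed) factor estimation error, and substitute this into $\hat\Gamma(\ell)$. Expanding the outer product and subtracting $\tilde\Gamma(\ell) = \frac{1}{T-\ell}\sum_t HF_t (HF_{t-\ell})^\top$ gives the exact decomposition
\begin{align*}
\hat\Gamma(\ell) - \tilde\Gamma(\ell)
= &~ \frac{1}{T-\ell}\sum_{t=\ell+1}^T H F_t\, \Delta_{t-\ell}^\top
+ \frac{1}{T-\ell}\sum_{t=\ell+1}^T \Delta_t\, (H F_{t-\ell})^\top \\
&~ + \frac{1}{T-\ell}\sum_{t=\ell+1}^T \Delta_t\, \Delta_{t-\ell}^\top ,
\end{align*}
so it suffices to bound the two cross terms and the quadratic remainder separately. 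Since $\ell \in \{0,1\}$ is fixed, the lag only shifts the summation range and leaves every order of magnitude unchanged, so I treat both lags simultaneously.

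The quadratic remainder is immediate. Bounding the operator norm of the outer-product sum termwise and applying Cauchy--Schwarz,
\[
\Big\| \frac{1}{T-\ell}\sum_{t} \Delta_t \Delta_{t-\ell}^\top \Big\|
\le \Big(\frac{1}{T}\sum_{t} \|\Delta_t\|_2^2\Big)^{1/2}\Big(\frac{1}{T}\sum_{t} \|\Delta_{t-\ell}\|_2^2\Big)^{1/2}.
\]
The required input is the average squared factor error $\frac{1}{T}\sum_t \|\hat F_t - HF_t\|_2^2 = \bigOP{\delta_{NT}^{-2}}$, which holds in the missing-data PCA setting by \citet{xiong2023large} (the analogue of the Bai--Ng bound, with the observation weights controlled by $|\cQ_{s,t}| \ge N\underline{q}$ from \asn \ref{asn:observation} and the moment conditions \asn \ref{gasn:factor_model} to \ref{gasn:moment}). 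This delivers the quadratic term at rate $\bigOP{\delta_{NT}^{-2}}$.

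The main obstacle is the two cross terms, say $\frac{1}{T}\sum_t HF_t \Delta_{t-\ell}^\top$. A naive Cauchy--Schwarz bound gives only $(\frac{1}{T}\sum_t\|HF_t\|_2^2)^{1/2}(\frac{1}{T}\sum_t\|\Delta_{t-\ell}\|_2^2)^{1/2} = \bigOP{1}\cdot\bigOP{\delta_{NT}^{-1}} = \bigOP{\delta_{NT}^{-1}}$, which is a full factor of $\delta_{NT}$ too large; recovering the sharp rate requires exploiting the near-orthogonality between the true factors and the noise-driven estimation error. To do this I would substitute the explicit asymptotic expansion of $\hat F_t - H F_t$ (the missing-data analogue of the expansion underlying $\frac{1}{T}\sum_t(\hat F_t - HF_t)F_t^\top$ in \citet{bai2003inferential, xiong2023large}), which represents $\Delta_t$ as a time-average of the estimated factors weighted by noise kernels: the expected idiosyncratic cross-covariance, its centered fluctuation, and two terms coupling the loadings and factors with the noise $\varepsilon$. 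Plugging this in yields a sum of double averages over $(s,t)$, and I would bound each at rate $\bigOP{\delta_{NT}^{-2}}$ using the moment bounds in \asn \ref{gasn:factor_model} to \ref{gasn:moment}, the mutual independence of $\Lambda$, $\eta$, and $\varepsilon$ (\asn \ref{asn:independence}), and the concentration $|\cQ_{s,t}|/N \to \alpha_{s,t}$ of the observation counts (\asn \ref{asn:observation}). The crucial cancellations arise from summing the zero-mean noise kernels against the weakly dependent factors $F_t$, which is precisely what earns the extra $\delta_{NT}^{-1}$ factor over the termwise bound. The second cross term follows by the identical argument with the roles of $t$ and $t-\ell$ interchanged, and combining the three estimates gives $\|\hat\Gamma(\ell) - \tilde\Gamma(\ell)\| = \bigOP{\delta_{NT}^{-2}}$.
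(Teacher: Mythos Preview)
Your proposal is correct and follows essentially the same route as the paper: the same three-term decomposition into a quadratic remainder and two cross terms, the same Cauchy--Schwarz bound on the quadratic piece via $\frac{1}{T}\sum_t\|\hat F_t - HF_t\|_2^2 = \bigOP{\delta_{NT}^{-2}}$, and the same recognition that the cross terms require the sharper ``near-orthogonality'' rate. The paper is simply more terse: rather than re-deriving the cross-term bound from the expansion of $\hat F_t - HF_t$, it invokes \citet[Lemma~A.1]{bai2003inferential} for the average squared error and \citet[Lemma~A.2]{bai2003inferential} for $\frac{1}{T}\sum_t(\hat F_t - HF_t)F_{t-\ell}^\top = \bigOP{\delta_{NT}^{-2}}$ directly, together with $\|H\| = \bigOP{1}$ from Lemma~A.3. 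One small point: you invoke \asn~\ref{asn:independence}, but the lemma is stated under the general assumptions \ref{gasn:factor_model}--\ref{gasn:moment}, which already contain the moment bounds you need without requiring full mutual independence.
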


\vspace{20pt}
\begin{lemma}[Error bound on $\hat A$] \label{lem:coeff}
Consider the setup of Lem. \ref{lem:coeff}. Then the following holds: 
$$\| \hat A - H A H^{-1} \| = \bigOP{N^{-1}} + \bigOP{T^{-1/2}}.$$
\end{lemma}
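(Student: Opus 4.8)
The plan is to express $\hat A$ as a ratio of the empirical autocovariance matrices introduced in Lem.~\ref{lem:imply_2mp_autocov_rate} and to compare it to $HAH^{-1}$ through a rotated oracle estimator. Writing $\hat\Gamma(1)$ and $\hat\Gamma(0)$ for the lag-one and lag-zero autocovariances of $\{\hat F_t\}$, the defining formula \eqref{eq:A_hat} gives $\hat A = \hat\Gamma(1)\hat\Gamma(0)^{-1}$ up to boundary corrections of order $\bigOP{T^{-1}}$ (the sums in \eqref{eq:A_hat} run to $T-1$ rather than $T$, contributing a single rank-one term $\frac{1}{T-1}\hat F_T\hat F_T^\top = \bigOP{T^{-1}}$ that perturbs the invertible Gram matrix negligibly). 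I would then introduce the intermediate estimator $\tilde A \defeq \tilde\Gamma(1)\tilde\Gamma(0)^{-1}$ built from the rotated true factors $HF_t$, and split
\begin{equation*}
\|\hat A - HAH^{-1}\| \le \|\hat A - \tilde A\| + \|\tilde A - HAH^{-1}\|,
\end{equation*}
so it suffices to control the two terms separately.

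To control the oracle gap, I would observe that $\tilde\Gamma(\ell) = H\,\hat\Gamma_F(\ell)\,H^\top$, where $\hat\Gamma_F(\ell) \defeq \frac{1}{T-\ell}\sum_{t=\ell+1}^T F_t F_{t-\ell}^\top$ is the empirical autocovariance of the true factors. Consequently $\tilde A = H\hat\Gamma_F(1)H^\top(H^\top)^{-1}\hat\Gamma_F(0)^{-1}H^{-1} = H\hat A_F H^{-1}$, where $\hat A_F \defeq \hat\Gamma_F(1)\hat\Gamma_F(0)^{-1}$ is exactly the OLS estimator of $A$ computed from the unrotated true VAR(1) factors. Under \asn \ref{asn:var1_factors} and the Gaussian innovations of \asn \ref{asn:factor}, these factors form a stable VAR(1) with i.i.d.\ innovations having finite fourth moments, so the standard VAR asymptotics \citep[Prop.~3.1]{lutkepohl2005new} give $\|\hat A_F - A\| = \bigOP{T^{-1/2}}$. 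Since $\|H\| = \bigOP{1}$ and $\|H^{-1}\| = \bigOP{1}$ (by \citet[Lemma A.3]{bai2003inferential} and non-singularity of $H$), this yields $\|\tilde A - HAH^{-1}\| = \|H(\hat A_F - A)H^{-1}\| = \bigOP{T^{-1/2}}$.

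For the estimation gap, I would apply the product-difference identity
\begin{equation*}
\hat A - \tilde A = (\hat\Gamma(1) - \tilde\Gamma(1))\hat\Gamma(0)^{-1} + \tilde\Gamma(1)(\hat\Gamma(0)^{-1} - \tilde\Gamma(0)^{-1}),
\end{equation*}
together with the resolvent identity $\hat\Gamma(0)^{-1} - \tilde\Gamma(0)^{-1} = \hat\Gamma(0)^{-1}(\tilde\Gamma(0) - \hat\Gamma(0))\tilde\Gamma(0)^{-1}$. Lem.~\ref{lem:imply_2mp_autocov_rate} supplies $\|\hat\Gamma(\ell) - \tilde\Gamma(\ell)\| = \bigOP{\delta_{NT}^{-2}}$ for $\ell \in \{0,1\}$, while $\tilde\Gamma(0) = H\hat\Gamma_F(0)H^\top \pconverge H\Sigma_F H^\top \succ 0$ gives $\|\tilde\Gamma(0)^{-1}\| = \bigOP{1}$ and $\|\tilde\Gamma(1)\| = \bigOP{1}$; the same $\bigOP{\delta_{NT}^{-2}}$ closeness then propagates to $\|\hat\Gamma(0)^{-1}\| = \bigOP{1}$. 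Substituting these bounds gives $\|\hat A - \tilde A\| = \bigOP{\delta_{NT}^{-2}}$.

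Combining the two bounds and using $\delta_{NT}^{-2} = \min\{N,T\}^{-1} = \bigOP{N^{-1} + T^{-1}}$, with $T^{-1}$ dominated by $T^{-1/2}$, would give $\|\hat A - HAH^{-1}\| = \bigOP{N^{-1}} + \bigOP{T^{-1/2}}$, as claimed. The main obstacle I anticipate is justifying the uniform invertibility of the empirical Gram matrix $\hat\Gamma(0)$, i.e.\ boundedness in probability of $\hat\Gamma(0)^{-1}$: this rests on $\tilde\Gamma(0)$ converging to the positive-definite limit $H\Sigma_F H^\top$ together with the $\bigOP{\delta_{NT}^{-2}}$ perturbation from Lem.~\ref{lem:imply_2mp_autocov_rate}, so that $\hat\Gamma(0)$ is eventually bounded away from singularity with high probability; once this is in hand, the remaining steps are routine norm manipulations.
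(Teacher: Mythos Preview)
Your proposal is correct and follows essentially the same route as the paper: both introduce the rotated oracle $\tilde A$ built from $\tilde\Gamma(\ell)$, bound $\|\hat A-\tilde A\|=\bigOP{\delta_{NT}^{-2}}$ via Lem.~\ref{lem:imply_2mp_autocov_rate} together with the product--resolvent identity, and then invoke \citet[Prop.~3.1]{lutkepohl2005new} for the $\bigOP{T^{-1/2}}$ gap between the true-factor OLS and $A$. If anything, you are slightly more careful than the paper in flagging the $\bigOP{T^{-1}}$ boundary discrepancy between the sums in \eqref{eq:A_hat} and the definition of $\hat\Gamma(0)$, and in spelling out why $\hat\Gamma(0)^{-1}$ is $\bigOP{1}$.
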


\begin{remark}
    Similar to Lem. \ref{lem:coeff}, $\bigOP{N^{-1} + T^{-1/2}}$ rate is obtained when $\hat A$ is estimated from fully observed panel with PCA-estimated factors under VAR(1) assumption\citep[Prop. 3]{doz2011two}.
\end{remark}

We now prove Lem. \ref{lem:imply_2mp_autocov_rate} and \ref{lem:coeff}.

\subsection{Proof of Lem. \ref{lem:imply_2mp_autocov_rate}: Error bound on empirical autocovariance}\label{pf:emp_autocov_rate}

We have the following:
\begin{align*}
    & ~~ \left\|\hat \Gamma(\ell) - \tilde\Gamma(\ell)\right\|\\
    \le &~~ \left\| \frac{1}{T-\ell} \sum_{t = \ell + 1}^{T} \hat F_{t} \hat F_{t-\ell}^\top - \frac{1}{T-\ell} \sum_{t = \ell + 1}^{T} HF_{t} (HF_{t-\ell})^\top \right\| \\ 
    \le &~~ \left\| \frac{1}{T-\ell}\sum_{t=\ell + 1}^T (\hat F_t - HF_t)(\hat F_{t-\ell} -HF_{t-\ell})^\top \right\| + 2\left\| \frac{1}{T-\ell}\sum_{t = \ell + 1}^T (\hat F_t - HF_t) F_{t-\ell}^\top \right\|\\
    \le &~~ \left[\frac{1}{T-\ell}\sum_{t=\ell + 1}^T \left\|\hat F_t - HF_t\right\|_2^2\right]^{1/2} \left[\frac{1}{T-\ell}\sum_{t=\ell + 1}^T \left\|\hat F_{t-\ell} - HF_{t-\ell}\right\|_2^2\right]^{1/2} + 2\|H\|~\left\|\frac{1}{T-\ell}\sum_{t=\ell + 1}^T(\hat F_t - HF_t) F_{t-\ell}^\top\right\|.
\end{align*}

Using \citet[Lemma A.3]{bai2003inferential}, $\|H\| = \bigOP{1}$. From \citet[Lemma A.1]{bai2003inferential},
$$ \frac{1}{T-\ell}\sum_{t=\ell + 1}^T \left\|\hat F_t - HF_t\right\|_2^2 = \bigOP{\delta_{NT}^{-2}},\quad \frac{1}{T-\ell}\sum_{t=\ell + 1}^T \left\|\hat F_{t-\ell} - HF_{t-\ell}\right\|_2^2 = \bigOP{\delta_{NT}^{-2}}, $$
and \citet[Lemma A.2]{bai2003inferential} implies $\frac{1}{T-\ell}\sum_{t=\ell + 1}^T(\hat F_t - HF_t) F_{t-\ell}^\top = \bigOP{\delta_{NT}^{-2}}$.Hence combining the three rates, we complete the proof. \pfend

\subsection{Proof of Lem. \ref{lem:coeff}: Error bound on $\hat A$}\label{pf:coeff}
We use the fact that for any two invertible matrices $A$ and $B$ of the same dimension,  
$$A^{-1} - B^{-1} = -A^{-1}(A-B)B^{-1} \quad \implies\quad \|A^{-1} - B^{-1}\| \le \|A^{-1}\|~\|A-B\|~\|B^{-1}\|. $$ 

We use the notation of Lem. \ref{lem:imply_2mp_autocov_rate}. Denote 
\begin{equation}\label{eq:A_tilde}
    \tilde A \defeq \tilde \Gamma(1) \tilde\Gamma(0)^{-1} = \left(\sum_{t=1}^{T-1} F_{t+1} F_t^\top\right) \left(\sum_{t=1}^{T-1} F_t F_t^\top\right)^{-1}.
\end{equation}
We have
\begin{align*}
\left\|\hat A - H\tilde A H^{-1} \right\| = &~ \left\| \hat \Gamma(1) \hat\Gamma(0)^{-1} - \tilde \Gamma(1) \tilde\Gamma(0)^{-1} \right\|\\
= &~ \left\| \hat \Gamma(1) \hat \Gamma(0)^{-1} -  \tilde \Gamma(1) \hat \Gamma(0)^{-1} + \tilde \Gamma(1) \hat \Gamma(0)^{-1} - \tilde \Gamma(1) \tilde\Gamma(0)^{-1} \right\|\\
\le &~ \left\|\hat \Gamma(1) - \tilde \Gamma(1)\right\| \left\|\hat \Gamma(0)^{-1}\right\| + \left\|\tilde \Gamma(1)\right\| \left\|\hat \Gamma(0)^{-1} - \tilde\Gamma(0)\right\|\\
\le &~ \left\|\hat \Gamma(1) - \tilde \Gamma(1)\| \|\hat \Gamma(0)^{-1}\| + \|\tilde \Gamma(1)\|\|\hat \Gamma(0)^{-1}\| \|\hat \Gamma(0) - \tilde \Gamma(0)\| \|\tilde \Gamma(0)^{-1}\right\|.
\end{align*}

Assum. \ref{gasn:factor_model} implies that there exists a positive definite matrix $\Sigma_F^{(\ell)}$ such that $\tilde \Gamma(\ell) = H \Sigma_F^{(\ell)} H^\top + \littleOP{1}$. In addition, using Lem. \ref{lem:imply_2mp_autocov_rate} we have
$$\hat \Gamma(\ell) = \tilde \Gamma(\ell) + \bigOP{\delta_{NT}^{-2}} = H \Sigma_F^{(h)} H^\top + \littleOP{1}.$$ 

Thus under \asn \ref{gasn:factor_model}, $\|\tilde \Gamma(h)^{-1}\| = \bigOP{1}$ and $\|\hat \Gamma(h)^{-1}\| = \bigOP{1}$, which implies 
$$\left\|\hat A - H\tilde AH^{-1}\right\| = \bigOP{\delta_{NT}^{-2}}.$$

From \citet[Prop. 3.1]{lutkepohl2005new}, $\|\tilde A - A\| = \bigOP{T^{-1/2}}$. Therefore, combining all the terms and using $H = \bigOP{1}$ \citep[Lem. A.1]{bai2003inferential}, the result holds. \pfend

\vspace{10pt}

\section{Proof of Thm. \ref{thm:asymptotic_normality}: Asymptotic normality of $\thetahatith$}\label{pf:asymptotic_normality}

Before we stat the proof, following we introduce some necessary shorthands for formulating the asymptotic variance of $\thetahatith$. For $i \in [N]$, we denote
\begin{align*}
    {\cal V}_\Lambda & \defeq \EE[\vecop{\Lambda_i \Lambda_i^\top - \Sigma_\Lambda}~ \vecop{\Lambda_i \Lambda_i^\top - \Sigma_\Lambda}^\top],\\
    \Psi_T & \defeq (F_T\otimes \Sigma_F)~ (\Sigma_F^{-1} \Sigma_{\Lambda}^{-1}),\\
    \Upsilon_i & \defeq (\Sigma_{F,i}\otimes \Sigma_{\Lambda}^{-1}\Lambda_i)\Sigma_{F,i}^{-1}.
\end{align*}

We define the following covariance matrices that will be needed to express the asymptotic variance of $\thetahatith$:
\begin{align}
    \SigmaFobs & \defeq \sigma_\varepsilon^2 \Sigma_{\Lambda}^{-1},\quad 
    \SigmaFmiss\defeq \Psi_T^\top {\cal V}_{\Lambda}\Psi_T,\quad
    \SigmaLobs \defeq \sigma_\varepsilon^2 \Sigma_{F,i}^{-1}, \label{eq:variance_shorthand_1}\\
    \SigmaLmiss & \defeq \Upsilon_i^\top {\cal V}_{\Lambda}\Upsilon_i, \quad
    \SigmaFLmisscov \defeq \Psi_T^\top {\cal V}_{\Lambda}\Upsilon_i. \label{eq:variance_shorthand_2}
\end{align}

Similar to \eqref{eq:sum_difference_terms}, we write $\delta_{NT}(\thetahatith - \thetaith)$ where each term is a product of at most 3 error terms. The product terms containing only one error term dominates the distribution and the rest of the terms are stochastically negligible. Therefore,
\begin{align*}
    \delta_{NT}(\thetahatith - \thetaith) = & \underbrace{\delta_{NT}((H^\top)^{-1}\Lambda_i)^\top (\hat A^h - (HAH^{-1})^h)(HF_T)}_{\text{(I)}} \\
    & \vspace{10pt} +  \underbrace{\delta_{NT}((H^\top)^{-1}\Lambda_i)^\top (HAH^{-1})^h (\hat F_T - HF_T)}_{\text{(II)}}\\
    &  \vspace{10pt} +\underbrace{\delta_{NT}(\hat \Lambda_i - (H^\top)^{-1}\Lambda_i)^\top (HAH^{-1})^h(HF_T)}_{\text{(III)}}~~~ + \littleOP{1}.
\end{align*}

From Lem. \ref{lem:clt_varcoef},
\begin{equation}\label{eq:clt_term_1}
    \text{(I)} = ~\delta_{NT} \Lambda_i^\top H^{-1} \left(\hat A^h - (HAH^{-1})^h\right) HF_T = \delta_{NT} \Lambda_i^\top \left(H^{-1}\dcoeff H \right) F_T + \littleOP{1}.
\end{equation}

From Lem. \ref{lem:asymp_indep},
\begin{equation}\label{eq:clt_term_2}
    \text{(II)} = ~\delta_{NT} \Lambda_i^\top A^h (H^{-1}\hat F_T - F_T) = \delta_{NT} ((A^\top)^h\Lambda_i)^\top (e_{F, T}^{(1)} + e_{F, T}^{(2)}),
\end{equation}

and
\begin{equation}\label{eq:clt_term_3}
    \text{(III)} = ~\delta_{NT}(H^\top\hat \Lambda_i - \Lambda_i)^\top (A^hF_T)
    = \delta_{NT}(e_{\Lambda,i}^{(1)} + e_{\Lambda,i}^{(2)})^\top(A^hF_T) + \littleOP{1}.
\end{equation}

Additionally by Lem. \ref{lem:asymp_indep}, every term in (I), (II) and (III) are asymptotically independent except $e_{F, T}^{(2)}$ and $e_{\Lambda, i}^{(2)}$. By Lem. \ref{lem:lim_cross_cov}, the limiting asymptotic covariance term is $-2(\omega_2-1)\SigmaFLmisscov$. Therefore, combining \ref{lem:clt_varcoef}, Lem. \ref{lem:factor_loading_clt} and an asymptotic argument similar to \citet[Eq. (C.1)]{bai2003inferential}, we obtain
$$ \delta_{NT}\Omega_{i,T,h}^{-1}(\thetahatith - \thetaith) \xrightarrow{d} \normal{0}{1},$$
where the asymptotic variance term is $\Omega_{i,T,h} = \sest + \sfor$ with
\begin{align}
\sest \defeq & \Lambda_i^\top A^h \Omega_T^{(F)} (A^h)^\top \Lambda_i + F_T^\top (A^h)^\top \Omega_i^{\Lambda} A^h F_T- 2(\omega_2-1)F_T^\top (A^h)^\top \SigmaFLmisscov (A^h)^\top \Lambda_i \nonumber\\
= & \frac{\dnt^2}{N}\bigg[~ \Lambda_i^\top A^h~ \left\{ \omega_1 ~ \SigmaFobs + (\omega_1 -1)~\SigmaFmiss\right\}~(A^h)^\top\Lambda_i - 2(\omega_2 - 1)~ F_T^\top (A^h)^\top~ \SigmaFLmisscov~ (A^h)^\top \Lambda_i \label{eq:variance_est}\\
&\hspace{10pt} + (\omega_3 - 1) F_T^\top (A^h)^\top \SigmaLmiss A^h F_T \bigg] + \frac{\dnt^2}{T}F_T^\top (A^h)^\top \SigmaLobs A^h F_T, \nonumber
\end{align}
and
\begin{equation}\label{eq:variance_for}
\sfor \defeq \Omega^{(A)}_{\Lambda_i, F_T}
= \frac{\dnt^2}{T}\sum_{k,\ell = 0}^{h-1} \left(\Lambda_i^\top(A^{h-1-k})^\top \Sigma_F^{-1} A^{h-1-\ell} \Lambda_i\right)\left(F_T^\top A^k \Sigma_\eta (A^\ell)^\top F_T\right),
\end{equation}

where $\SigmaFobs$, $\SigmaFmiss$, $\SigmaLobs$, $\SigmaLmiss$ and $\SigmaFLmisscov$ are defined in \eqref{eq:variance_shorthand_1} and \eqref{eq:variance_shorthand_2}. Hence the proof is complete. \pfend

\vspace{5pt}
Below we state some asymptotic normality results used in the proof of Thm \ref{thm:asymptotic_normality}.

\vspace{5pt}
\begin{lemma}[Asymptotic normality of $\hat A$]\label{lem:clt_varcoef}
Consider a factor model \eqref{eq:factor_model_1} with $N$ units and $T$ time points satisfying Assum. \ref{asn:var1_factors} to \ref{asn:observation}. Then for $\sqrt{T}/N \rightarrow 0$, $h \in \NN$ and $u, v\in \RR^r$,
$$ \delta_{NT} {\Omega_{u,v}^{(A)}}^{-1/2}~ u^\top (H^{-1} \hat A^h H - A^h) H v \xrightarrow{d}\normal{0}{1}, $$

where $\Omega_{u,v}^{(A)} := \frac{\delta_{NT}^2}{T}\sum_{k,\ell = 0}^{h-1} \left(v^\top [A^{h-1-k}]^\top \Sigma_F^{-1} A^{h-1-\ell} v\right)\cdot \left( u^\top A^k \Sigma_\eta [A^\ell]^\top u \right) $.
\end{lemma}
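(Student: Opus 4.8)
The plan is to reduce the quantity to a linear functional of the \emph{infeasible} OLS error $\tilde A - A$, where $\tilde A$ is the estimator built from the true factors as in \eqref{eq:A_tilde}, and then invoke the classical central limit theorem for stationary VAR estimation. Throughout I would write $\hat B \defeq H^{-1}\hat A H$, so that $H^{-1}\hat A^h H - A^h = \hat B^h - A^h$ and the target becomes $\delta_{NT}\,{\Omega_{u,v}^{(A)}}^{-1/2}\,u^\top(\hat B^h - A^h)Hv$.

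First I would linearize the matrix power via the telescoping identity $\hat B^h - A^h = \sum_{k=0}^{h-1} A^k(\hat B - A)A^{h-1-k} + R$, with remainder $\|R\| = \bigOP{\|\hat B - A\|^2}$. Then I would replace $\hat B - A$ by $\tilde A - A$: the intermediate bound established inside the proof of Lem.~\ref{lem:coeff} gives $\hat B - \tilde A = H^{-1}(\hat A - H\tilde A H^{-1})H = \bigOP{\delta_{NT}^{-2}}$, while \citet[Prop.~3.1]{lutkepohl2005new} gives $\tilde A - A = \bigOP{T^{-1/2}}$. Hence $\hat B - A = (\tilde A - A) + \bigOP{\delta_{NT}^{-2}}$ and $\|R\| = \bigOP{T^{-1} + \delta_{NT}^{-4}}$. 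The role of the hypothesis $\sqrt{T}/N \to 0$ is precisely to kill every non-leading term after normalization: since $\Omega_{u,v}^{(A)}$ is of order $\delta_{NT}^2/T$, the scale ${\Omega_{u,v}^{(A)}}^{1/2}$ is of order $\delta_{NT}/\sqrt{T}$, so multiplying the factor-estimation error $\bigOP{\delta_{NT}^{-2}}$ by $\delta_{NT}$ and dividing by this scale yields $\bigOP{\sqrt{T}/\delta_{NT}^2}$, which tends to $0$ exactly when $\sqrt{T}/N \to 0$ (in the regime $N<T$, and trivially otherwise); the quadratic remainder $R$ likewise contributes $\littleOP{1}$ on the same scale. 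This reduces the problem to the asymptotic law of $\delta_{NT}\sum_{k=0}^{h-1} u^\top A^k(\tilde A - A)A^{h-1-k}Hv$.

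Finally I would recast this leading term as a linear functional $\delta_{NT}\,c^\top\vecop{\tilde A - A}$, where $c = \sum_{k=0}^{h-1}\vecop{(A^\top)^k u\,(A^{h-1-k}Hv)^\top}$ follows from the scalar identity $a^\top M b = \vecop{ab^\top}^\top\vecop{M}$. The stationary VAR CLT \citet[Prop.~3.1]{lutkepohl2005new} gives $\sqrt{T}\,\vecop{\tilde A - A}\dconverge\normal{0}{\Sigma_F^{-1}\otimes\Sigma_\eta}$, valid here because $F_t$ is a stable, Gaussian-driven VAR, so $\eta_{t+1}F_t^\top$ is a square-integrable martingale difference sequence with the moments furnished by Lem.~\ref{lem:imply_1}. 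By Cram\'er--Wold, $\sqrt{T}\,c^\top\vecop{\tilde A - A}\dconverge\normal{0}{c^\top(\Sigma_F^{-1}\otimes\Sigma_\eta)c}$, and the Kronecker identity $\vecop{ab^\top}^\top(P\otimes Q)\vecop{a'b'^\top} = (a^\top Q a')(b^\top P b')$ evaluates the limiting variance to $\sum_{k,\ell=0}^{h-1}\big(u^\top A^k\Sigma_\eta (A^\ell)^\top u\big)\big((Hv)^\top(A^{h-1-k})^\top\Sigma_F^{-1}A^{h-1-\ell}(Hv)\big)$, which is $(T/\delta_{NT}^2)\,\Omega_{u,v}^{(A)}$. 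Rescaling by $\delta_{NT}/\sqrt{T}$ and self-normalizing by ${\Omega_{u,v}^{(A)}}^{-1/2}$ then yields the standard normal limit.

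The hard part will be the simultaneous bookkeeping of the two scaling regimes ($N\ge T$ versus $N<T$) through a single $\delta_{NT}$ normalization, together with checking that the self-normalization remains valid when $\Omega_{u,v}^{(A)}\to0$; this is exactly where the strength of $\sqrt{T}/N\to0$ is needed. A secondary point of care is justifying the infeasible VAR CLT for $\tilde A$ despite $F_t$ being latent, which is legitimate since $\tilde A$ depends only on the well-defined true factor path and is used purely as an intermediate quantity.
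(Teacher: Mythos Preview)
Your proposal is correct and follows essentially the same route as the paper: both reduce $H^{-1}\hat A H - A$ to the infeasible OLS error $\tilde A - A$ via the $\bigOP{\delta_{NT}^{-2}}$ bound from Lem.~\ref{lem:coeff}, invoke the stationary VAR CLT \citep[Prop.~3.1]{lutkepohl2005new} for $\sqrt{T}\,\vecop{\tilde A - A}$, and then linearize the $h$-th power. The only cosmetic difference is that the paper first handles $h=1$ and then applies the multivariate Delta method using the Fr\'echet derivative $D_{A^h}(A;E)=\sum_{k=0}^{h-1}A^kEA^{h-1-k}$, whereas you write the same linearization directly as a telescoping expansion with an explicit quadratic remainder; these are the same computation. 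Your explicit tracking of why $\sqrt{T}/N\to0$ is needed (to make the $\bigOP{\delta_{NT}^{-2}}$ term vanish after multiplying by $\sqrt{T}$) is a bit more transparent than the paper's one-line justification. One small point: your derived limiting variance carries $Hv$ rather than $v$, while the stated $\Omega_{u,v}^{(A)}$ uses $v$; the paper's own proof also elides this (it stops at the Kronecker-product variance and says ``the rest follows''), so your level of rigor matches the paper's here.
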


\begin{lemma}[Asymptotic independence]\label{lem:asymp_indep}
Consider a factor model \eqref{eq:factor_model_1} with $N$ units and $T$ time points satisfying Assum. \ref{gasn:factor_model} to \ref{gasn:limdist}. Then for $\sqrt{N}/T \rightarrow 0$ and $\sqrt{T}/N \to 0$, 
$$ H^{-1} F_T = e_{F, T}^{(1)} + e_{F, T}^{(2)}, \quad H^\top \hat\Lambda_i - \Lambda_i = e_{\Lambda,i}^{(1)} + e_{\Lambda, i}^{(2)}, $$

with $e_{F, T}^{(1)}$ and $e_{F, T}^{(2)}$ being asymptotically independent, $e_{\Lambda, i}^{(1)}$ and $e_{\Lambda, i}^{(2)}$ being asymptotically independent, and
$$ \dnt^2 \cov(e_{\Lambda, i}^{(2)}, e_{F,T}^{(2)}) \to \Sigma_{\Lambda}^{-1} \Sigma_F^{-1} [g_{i,T}^\mathrm{cov}(u_i)]^\top \Sigma_{F,i}^{-1}, $$
where $g_{i,T}^\mathrm{cov}(u_i)$ is defined in \eqref{eq:limdist3} in \asn \ref{gasn:limdist}. Furthermore, each of $e_{F, T}^{(1)}$, $e_{F, T}^{(2)}$, $e_{\Lambda, i}^{(1)}$ and $e_{\Lambda, i}^{(2)}$ is asymptotically independent with $\Delta_{A, h}$.
\end{lemma}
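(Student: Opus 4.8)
The plan is to construct both decompositions from the first-order asymptotic expansions of the \focus factor and loading estimators, adapting the analysis of \citet{bai2003inferential} to the missing-entry setting as in \citet{xiong2023large}. Expanding $H^{-1}\hat F_T - F_T$ and $H^\top\hat\Lambda_i - \Lambda_i$ through these representations, each error separates into exactly two leading stochastic pieces: one that is a linear functional of the idiosyncratic noise $\varepsilon$ (entering via $\sum_j W_{j,T}\Lambda_j\varepsilon_{j,T}$ for the factor and via $\sum_t W_{i,t}F_t\varepsilon_{i,t}$ for the loading), and one that is a functional of the loading fluctuations $\Lambda_j\Lambda_j^\top - \Sigma_\Lambda$ reweighted by the random observation counts $|\cQ_{s,t}|$. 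I would \emph{define} $e_{F,T}^{(1)}$ and $e_{\Lambda,i}^{(1)}$ to be the noise-driven pieces and $e_{F,T}^{(2)}$, $e_{\Lambda,i}^{(2)}$ the loading/missing-driven pieces, then verify with bounds of the type in Lem.~\ref{lem:imply_2mp_autocov_rate} that the omitted remainders are $\littleOP{\dnt^{-1}}$, so that the stated equalities hold up to asymptotically negligible error.

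Given this split, asymptotic independence within each pair follows from the mutual independence of $\Lambda$, $\eta$, and $\varepsilon$ (\asn\ref{asn:independence}, in its general form of \asn\ref{gasn:limdist}). Conditionally on $(W,F,\Lambda)$, the piece $e^{(1)}$ is a centered linear functional of $\varepsilon$ while $e^{(2)}$ does not involve $\varepsilon$ at all; a joint Lindeberg (or martingale) CLT then gives a bivariate Gaussian limit whose $\varepsilon$-block and loading-block are uncorrelated, hence asymptotically independent. This yields $e_{F,T}^{(1)}\indep e_{F,T}^{(2)}$ and $e_{\Lambda,i}^{(1)}\indep e_{\Lambda,i}^{(2)}$.

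The nontrivial term is the covariance between the two missing-data pieces $e_{\Lambda,i}^{(2)}$ and $e_{F,T}^{(2)}$, which are coupled because both are \emph{distinct} linear functionals of the \emph{same} collection of loading fluctuations $\{\Lambda_j\Lambda_j^\top - \Sigma_\Lambda\}_j$: $e_{F,T}^{(2)}$ aggregates them through the time-$T$ covariance terms (with coefficient $\PsiT$), while $e_{\Lambda,i}^{(2)}$ aggregates them through the factor-estimate contamination propagated into unit $i$'s loading fit (with coefficient $\Upsiloni{i}$). I would compute $\dnt^2\cov(e_{\Lambda,i}^{(2)}, e_{F,T}^{(2)})$ by matching the contribution of each common unit $j$, which brings in the fourth-moment matrix $\Vlam$, and then pass to the limit using the observation-pattern quantities of \asn\ref{asn:time_limit_observation}; collecting the surviving normalizations $\Sigma_\Lambda^{-1}$, $\Sigma_F^{-1}$, and $\Sigma_{F,i}^{-1}$ together with $g_{i,T}^{\mathrm{cov}}(u_i)$ from \eqref{eq:limdist3} reproduces the claimed limit (which, evaluated via the bilinear form $\PsiT^\top\Vlam\Upsiloni{i}$, equals $\SigmaFLmisscov$). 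Finally, asymptotic independence of each $e$ term from $\dcoeff$ follows because, by Lem.~\ref{lem:coeff}, the dominant ($T^{-1/2}$-rate) part of $\hat A - HAH^{-1}$ is the ordinary least squares error of a VAR(1) fit to the \emph{true} factors, a functional of the innovations $\{\eta_t\}$; since $\eta\indep(\varepsilon,\Lambda,W)$ and the factor-estimation contamination in $\hat A$ is only $\bigOP{\dnt^{-2}}$, all cross-covariances with the noise-driven $e^{(1)}$ and loading-driven $e^{(2)}$ vanish in the limit.

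The main obstacle will be the cross-covariance computation of the third step: both $e^{(2)}$ terms depend simultaneously on the loadings and on the random denominators $|\cQ_{s,t}|$, so isolating the matched per-unit contribution while showing the residual cross-terms are negligible requires careful bookkeeping of the observation counts against the fourth-moment structure $\Vlam$. The rate conditions $\sqrt{N}/T\to 0$ and $\sqrt{T}/N\to 0$ are precisely what keep the factor- and loading-estimation errors on comparable scales and push the mixed remainder terms below the $\dnt^{-1}$ threshold.
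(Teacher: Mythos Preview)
Your proposal is correct and matches the paper's approach. The paper itself does not give a self-contained proof of this lemma: it defers the construction of the $e^{(1)}/e^{(2)}$ split and the within-pair independence entirely to \citet[Thm.~1--2]{bai2003inferential} and \citet[Thm.~2, Cor.~1]{xiong2023large} (with the roles of $N$ and $T$ swapped), and for the final claim about $\dcoeff$ it gives exactly your argument---the leading $T^{-1/2}$ part of $\hat A - HAH^{-1}$ is the OLS error $\tilde A - A$ driven by the innovation average $\frac{1}{T}\sum_t F_{t-1}\eta_t^\top$, whose source of randomness is separate from the $\varepsilon$- and $\Lambda$-driven pieces $e^{(1)}$ and $e^{(2)}$. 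Your write-up is in fact more explicit than the paper's about what the two pieces are and why the cross-covariance localizes to $g_{i,T}^{\mathrm{cov}}(u_i)$, but the route is the same.
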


\begin{lemma}[Simplified form of asymptotic covariance]\label{lem:lim_cross_cov}
    Under \asn \ref{asn:var1_factors} to \ref{asn:time_limit_observation}, $$\Sigma_{\Lambda}^{-1}\Sigma_F^{-1} [g_{i,T}^\mathrm{cov}(u_i)]^\top \Sigma_{F,i}^{-1} = -2(\omega_2-1)\SigmaFLmisscov.$$
\end{lemma}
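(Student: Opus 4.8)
The plan is to start from the limiting cross-covariance identity already established in Lemma~\ref{lem:asymp_indep},
\[
\dnt^2\,\cov\big(e_{\Lambda,i}^{(2)},\, e_{F,T}^{(2)}\big)\to \Sigma_\Lambda^{-1}\Sigma_F^{-1}\big[g_{i,T}^{\mathrm{cov}}(u_i)\big]^\top \Sigma_{F,i}^{-1},
\]
and to evaluate the right-hand side in closed form. The two error terms $e_{F,T}^{(2)}$ and $e_{\Lambda,i}^{(2)}$ are the missingness-induced components of the factor and loading estimators: each arises because $\hat\Sigma_{s,t}$ in \eqref{eq:sigma_hat} averages $Y_{j,s}Y_{j,t}$ only over the observed index set $\cQ_{s,t}$ and is rescaled by $|\cQ_{s,t}|^{-1}$. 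First I would write both terms, to leading order, as inverse-probability-weighted linear combinations of the centered loading outer products $\vecop{\Lambda_j\Lambda_j^\top-\Sigma_\Lambda}$ summed over units $j$, with the weights carrying $F_T,\Sigma_F,\Sigma_{F,i}$ and the observation proportions $\alpha_{s,t}$ through the definition of $g_{i,T}^{\mathrm{cov}}$ in \eqref{eq:limdist3}.

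Next I would compute the cross-covariance over units. Under the i.i.d.\ loadings of Assum.~\ref{asn:factor} and the mutual independence of Assum.~\ref{asn:independence}, the only surviving contribution in the double sum $\sum_{j,j'}$ is the matched term $j=j'$, which produces exactly the fourth-moment matrix $\Vlam=\EE[\vecop{\Lambda_i\Lambda_i^\top-\Sigma_\Lambda}\vecop{\Lambda_i\Lambda_i^\top-\Sigma_\Lambda}^\top]$. The matrix coefficients flanking $\Vlam$ are then reassembled using the Kronecker identity $\vecop{AXB}=(B^\top\otimes A)\vecop{X}$: the factor-side coefficients collapse into $\PsiT=(F_T\otimes\Sigma_F)(\Sigma_F^{-1}\Sigma_\Lambda^{-1})$ and the loading-side coefficients into $\Upsiloni{i}=(\Sigma_{F,i}\otimes\Sigma_\Lambda^{-1}\Lambda_i)\Sigma_{F,i}^{-1}$, so that the unit-level part of the limit is $\PsiT^\top\Vlam\,\Upsiloni{i}=\SigmaFLmisscov$, up to the scalar carried by the time sums.

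That scalar comes from the residual averaging over the time indices. The inverse-probability weights and overlap proportions aggregate into
\[
\frac{1}{T^3}\sum_{s,s',t}\frac{\beta_{s,t,s',T}}{\alpha_{s,t}\,\alpha_{s',T}}\;\pconverge\;\omega_2,
\]
by Assum.~\ref{asn:time_limit_observation}. Decomposing $\beta_{s,t,s',T}=\alpha_{s,t}\alpha_{s',T}+(\beta_{s,t,s',T}-\alpha_{s,t}\alpha_{s',T})$ isolates the independent-pattern baseline (ratio $1$), which is already absorbed into the observed-data terms of the variance and must be subtracted, leaving $\omega_2-1$; the factor $2$ is the multiplicity of the symmetric cross-term in the covariance expansion, and the sign is inherited from the sign convention in $g_{i,T}^{\mathrm{cov}}$. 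Assembling these yields $\Sigma_\Lambda^{-1}\Sigma_F^{-1}[g_{i,T}^{\mathrm{cov}}(u_i)]^\top\Sigma_{F,i}^{-1}=-2(\omega_2-1)\SigmaFLmisscov$, as claimed.

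The main obstacle will be the first step: extracting the precise linear-in-$\vecop{\Lambda_j\Lambda_j^\top-\Sigma_\Lambda}$ representation of $e_{F,T}^{(2)}$ and $e_{\Lambda,i}^{(2)}$ from \eqref{eq:limdist3}, and then tracking the Kronecker-product algebra so that the left and right coefficient blocks assemble cleanly into $\PsiT^\top$ and $\Upsiloni{i}$. The combinatorial bookkeeping that reduces the unit sum to the matched index and correctly attributes the factor $-2(\omega_2-1)$---in particular justifying the subtraction of the independent-pattern baseline---is where the care is needed; the covariance reduction itself is routine once those representations are in hand.
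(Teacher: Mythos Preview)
Your plan matches the paper's route: the paper does not write out a detailed proof of this lemma but defers to \citet{bai2003inferential} and \citet{xiong2023large}, and the only concrete input it provides is the closed form for $g_{i,T}^{\mathrm{cov}}(u_i)$ derived at the end of the proof of Lemma~\ref{lem:imply_5}. Your idea of expressing $e_{F,T}^{(2)}$ and $e_{\Lambda,i}^{(2)}$ as linear functionals of $\vecop{\Lambda_j\Lambda_j^\top-\Sigma_\Lambda}$, extracting $\Vlam$ from the matched-index sum, and assembling the Kronecker blocks into $\Psi_T^\top$ and $\Upsilon_i$ is exactly the mechanism behind that computation.

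The one genuine gap is your accounting for the scalar $-2$. The object you are computing is a single cross-covariance $\cov(e_{\Lambda,i}^{(2)},e_{F,T}^{(2)})$; there is no ``symmetric cross-term multiplicity'' inside such a covariance, and appealing to a ``sign convention in $g_{i,T}^{\mathrm{cov}}$'' is not an argument. The constant does not come from the time average (which, as you say, produces $\omega_2-1$) nor from the Kronecker algebra (which yields $\Psi_T^\top\Vlam\Upsilon_i$); it has to come from the \emph{precise} relation between $e_{F,T}^{(2)},e_{\Lambda,i}^{(2)}$ and the quantities $J_T F_T$, $R_i u_i$ appearing in \eqref{eq:limdist3}. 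Those relations carry their own signs and prefactors inherited from the PCA expansion of $H^{-1}\hat F_T-F_T$ and $H^\top\hat\Lambda_i-\Lambda_i$ (cf.\ \citet[Thm.~2, Cor.~1]{xiong2023large} with rows and columns interchanged), and you never write them down. To close the argument you must state explicitly how $e_{F,T}^{(2)}$ and $e_{\Lambda,i}^{(2)}$ are built from $J_T$ and $R_i$---including the $\Sigma_\Lambda^{-1}\Sigma_F^{-1}$ and $\Sigma_{F,i}^{-1}$ prefactors and any signs---and then read off the constant from that expansion rather than guessing it. Everything else in your sketch is sound.
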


\begin{lemma}[Asymptotic normality of $\hat F_T$ and $\hat \Lambda_i$]\label{lem:factor_loading_clt}
Under the assumptions of Lem. \ref{lem:asymp_indep},
$$\delta_{NT}{\Omega_{T}^{(F)}}^{-1/2}(H^{-1}\hat  F_T - F_T) \xrightarrow{d}\normal{0}{I_r}\quad \text{and}\quad \delta_{NT}{\Omega_{i}^{(\Lambda)}}^{-1/2}(H^\top\hat\Lambda_i - \Lambda_i) \xrightarrow{d}\normal{0}{I_r}$$ 

where
$\Omega_{T}^{(F)} := \frac{\delta_{NT}^2}{N}\Sigma_{\Lambda}^{-1}\Sigma_F^{-1}[\Gamma_T^\mathrm{obs} + \Gamma_T^\mathrm{miss}]\Sigma_F^{-1}\Sigma_\Lambda^{-1}$ and $\Omega_{i}^{(\Lambda)} := \Sigma_{F,i}^{-1}\left[ \frac{\delta_{NT}^2}{T}\Phi_i^\mathrm{obs} + \frac{\delta_{NT}^2}{N} \Phi_i^\mathrm{miss} \right] \Sigma_{F,i}^{-1}$. Under Assum. \ref{asn:var1_factors} to \ref{asn:time_limit_observation}, $\Omega_{t}^{(F)}$ and $\Omega_{i}^{(\Lambda)}$ are simplified as
\begin{equation}\label{eq:cov_f_simple}
    \Omega_T^{(F)} = \frac{\delta_{NT}^2}{N}\left[\omega_1 \SigmaFobs + (\omega_1 - 1) \SigmaFmiss\right],\quad \Omega_i^{(\Lambda)} = \frac{\delta_{NT}^2}{T}\SigmaLobs + \frac{\delta_{NT}^2}{N}(\omega_3 - 1)\SigmaLmiss,
\end{equation}
where $\SigmaFobs$, $\SigmaFmiss$, $\SigmaLobs$ and $\SigmaLmiss$ are defined in \eqref{eq:variance_shorthand_1} and \eqref{eq:variance_shorthand_2}.
\end{lemma}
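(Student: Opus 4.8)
The plan is to derive the two marginal central limit theorems from the asymptotic-linear expansions already isolated in Lem.~\ref{lem:asymp_indep}, and then to collapse the general variance matrices $\Omega_T^{(F)}$ and $\Omega_i^{(\Lambda)}$ into the closed forms of \eqref{eq:cov_f_simple}. By Lem.~\ref{lem:asymp_indep}, under the accompanying rate conditions $\sqrt N/T\to0$ and $\sqrt T/N\to0$,
$$H^{-1}\hat F_T - F_T = e_{F,T}^{(1)} + e_{F,T}^{(2)} + \littleOP{\dnt^{-1}}, \qquad H^\top\hat\Lambda_i - \Lambda_i = e_{\Lambda,i}^{(1)} + e_{\Lambda,i}^{(2)} + \littleOP{\dnt^{-1}},$$
where in each line the first summand is driven by the idiosyncratic noise $\varepsilon$ (the \emph{obs} part) and the second by the sampling fluctuation of the observation indicators interacting with the loading variation $\Lambda_i\Lambda_i^\top - \Sigma_\Lambda$ (the \emph{miss} part), and the two summands are asymptotically independent. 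It therefore suffices to prove a CLT for each summand, read off its limiting variance, and combine them by independence.

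For $\hat F_T$, the scaled term $\dnt\, e_{F,T}^{(1)}$ is, to leading order, a normalized sum over the $N$ units of the mean-zero vectors $\Lambda_i\varepsilon_{i,T}$, the observation indicators entering only through the overlap weights that aggregate across time. Conditionally on $W$ (independent of $F,\Lambda,\varepsilon$ by Assum.~\ref{asn:observation}) these are independent across $i$, so a Lindeberg--Feller CLT applies, with the fourth-moment loading bound of Assum.~\ref{asn:factor} and the eighth-moment noise bound of Assum.~\ref{asn:idiosyncratic} furnishing the Lyapunov condition; the conditional variance converges to the \emph{obs} block built from $\SigmaFobs=\sigma_\varepsilon^2\Sigma_\Lambda^{-1}$. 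The term $\dnt\, e_{F,T}^{(2)}$ is a normalized sum over $i$ of the i.i.d. mean-zero quantities $\vecop{\Lambda_i\Lambda_i^\top-\Sigma_\Lambda}$ weighted by $\PsiT$, for which the same CLT yields normality with the \emph{miss} block $\SigmaFmiss=\PsiT^\top\Vlam\PsiT$. Asymptotic independence from Lem.~\ref{lem:asymp_indep} together with the Cram\'er--Wold device then gives $\dnt(H^{-1}\hat F_T-F_T)\to\normal{0}{\Omega_T^{(F)}}$.

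The loading statement is the mirror image with units and time interchanged, but here the \emph{obs} part $\dnt\, e_{\Lambda,i}^{(1)}$ is a normalized sum over the $T$ time points of $F_t\varepsilon_{i,t}$; since the factors are serially dependent through the VAR recursion, $\{F_t\varepsilon_{i,t}\}_t$ forms a martingale-difference sequence, so a martingale CLT (rather than the i.i.d. version) delivers normality with variance $\SigmaLobs=\sigma_\varepsilon^2\Sigma_{F,i}^{-1}$ at rate $\dnt^2/T$, while the \emph{miss} part is again an average over the $N$ units contributing $\SigmaLmiss$ at rate $\dnt^2/N$. It remains to simplify the general variances to \eqref{eq:cov_f_simple}: substituting the stationary VAR second moments ($\Sigma_F,\Sigma_{F,i}$) and the i.i.d. loading law, I would pass to the limit in the observation-pattern averages of Assum.~\ref{asn:time_limit_observation}, whereby the double, triple and quadruple overlap sums $\beta/(\alpha\alpha)$ converge to $\omega_1,\omega_2,\omega_3$ and collapse the \emph{obs}/\emph{miss} weighting into the coefficients $\omega_1$, $\omega_1-1$ and $\omega_3-1$; the cross term that surfaces when these CLTs are recombined in Thm.~\ref{thm:asymptotic_normality} is supplied by Lem.~\ref{lem:lim_cross_cov}.

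I expect the main obstacle to be the \emph{miss} term: proving its CLT requires controlling the dependence that shared observation indicators induce across the entries of the sample covariance $\hat\Sigma$, and verifying that its variance converges precisely to the $\Vlam$-quadratic form with the correct $\omega$-weights rather than to some inflated limit. Checking the Lindeberg/Lyapunov (and martingale) conditions simultaneously for the noise and loading-fluctuation pieces, and confirming that every remaining product term in the expansion of $\dnt(\hat F_T-HF_T)$ and its loading analogue is genuinely $\littleOP{1}$ under the two competing rates $\sqrt N$ and $\sqrt T$, is the most delicate bookkeeping; the moment conditions of Assum.~\ref{asn:factor}--\ref{asn:idiosyncratic} are exactly what make this go through.
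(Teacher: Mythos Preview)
Your proposal is correct and coincides with the paper's approach: the paper does not give an independent argument but simply states that the lemma follows from \citet[Thm.~1--2]{bai2003inferential} and \citet[Thm.~2, Cor.~1]{xiong2023large} with the roles of $N$ and $T$ interchanged, which is precisely the Lindeberg--Feller/martingale CLT program you sketch for the \emph{obs} and \emph{miss} pieces of $e_{F,T}^{(\cdot)}$ and $e_{\Lambda,i}^{(\cdot)}$. Your outline is in fact more explicit than what the paper writes, and the anticipated difficulties you flag (dependence through shared observation indicators in the \emph{miss} term, convergence of the overlap averages to the $\omega$-weights) are exactly those handled in the cited references.
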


The proofs of Lemmata \ref{lem:asymp_indep}, \ref{lem:lim_cross_cov} and \ref{lem:factor_loading_clt} follow from \citet[Thm. 1 and Thm. 2]{bai2003inferential} and \citet[Thm. 2, Corollary 1]{xiong2023large}. The proof techniques involve the roles of $T$ and $N$ exchanged in order to maintain the structure of \citet[Thm. 1 and Thm. 2]{bai2003inferential} and conduct our theoretical analysis. The final statement of Lem. \ref{lem:asymp_indep} follows since the randomness in $\dcoeff$ is generated by the time average of $\eta_t, t = 2,\ldots, T$ and hence independent from the source of randomness of the other error terms. Hence a mutual asymptotic independence of the three terms hold. We only provide a proof of Lem. \ref{lem:clt_varcoef}.

\subsection{Proof of Lemma \ref{lem:clt_varcoef}: Asymptotic normality of $\hat A$} \label{pf:clt_varcoef}

Denote the following matrices $F^{(-T)} \defeq [F_1:\ldots: F_{T-1}]^\top_{(T-1) \times r}$ and $U \defeq [\eta_2: \ldots: \eta_T]$. From \citet[Prop. 3.1]{lutkepohl2005new}, we have
\begin{align*}
    \tilde A - A = \left[ \left( \frac{1}{T-1} {F^{(-T)}}^\top F^{(-T)}\right) \otimes I_r \right] \frac{1}{\sqrt{T-1}} \left({F^{(-T)}}^\top \otimes I_r\right) \text{vec}(U)
\end{align*}
and the following holds:
\begin{equation}\label{eq:clt_A}
    \sqrt{T}\ \text{vec}(\tilde A - A) \xrightarrow{d} \normal{0}{\Sigma_F^{-1} \otimes \Sigma_\eta},
\end{equation}

where $\tilde A$, defined in \eqref{eq:A_tilde}, is the OLS estimator of $A$ with the true rotated factors $F_t$ up to a rotation with $H$.

First we fix $h = 1$. Following \ref{lem:coeff},
$$ \text{vec}(H^{-1}\hat A H -  A) = \text{vec}(H^{-1}\hat A H -  \tilde A) + \text{vec}(\tilde A - A) = \text{vec}(\tilde A - A) + \bigOP{N^{-1}} + \bigOP{T^{-1/2}}. $$

If $\sqrt{T}/N \rightarrow 0$, the second term dominates and contributes to the asymptotic distribution i.e.
$$ \sqrt{T}\ \text{vec}(H^{-1}\hat A H - A) \xrightarrow{d} \normal{0}{\Sigma_F^{-1} \otimes \Sigma_\eta}. $$

Next we consider the matrix valued operator $f(A) = \text{vec}(A^h),~~~ A\in \RR^{r\times r}$. Following \citet[Thm. 3.1]{al2009computing}, $A \mapsto A^h$ has a Fr\'{e}chet derivative in the directional matrix $E$ as  $D_{A^h}(A;E) = \sum_{k=0}^{h-1} A^k E A^{h-1-k}$, and hence the vectorized gradient operator is $\nabla f(A) = \sum_{k=0}^{h-1} \left((A^{h-1-k})^\top \otimes A^k\right)$. Therefore, applying multivariate Delta-method, the asymptotic variance term becomes
$ \left[\sum_{k=0}^{h-1} \left((A^{h-1-k})^\top \otimes A^k\right)\right] (\Sigma_F^{-1}\otimes \Sigma_\eta)\left[\sum_{k=0}^{h-1} \left((A^{h-1-k})^\top \otimes A^k\right) \right]^\top $, and the rest follows from the standard properties of Kronecker products. \pfend

\subsection{HAC estimators of $\sforest$}\label{subsec:hac_estimators}

A consistent estimator of the asymptotic forecast variance $\sforest$ is $\shatforest = \shatest + \shatfor$ where the individual components are described in the following:

\paragraph{Estimation of $\sest$} We estimate $\shatest$, a consistent estimator of $\sest$, via the plug-in approach. In \eqref{eq:variance_est}, the plug-in estimators $\hat\Lambda_i$, $\hat F_T$, and $\hat A$ are obtained from \focus($Y,W$) as described in Sec.~\ref{sec:method}. Consistent estimators of $\SigmaFobshat$ and $\SigmaLobshat$ are constructed using HAC estimators \citep{newey1987hypothesis}, while $\SigmaFmisshat$, $\SigmaLmisshat$, $\SigmaFLmisscovhat$, and $\hat\omega_i$ rely on the same plug-in principle. Details can be found in \citet[Sec. 8]{xiong2023large}

\paragraph{Estimation of $\sfor$} For estimating $\sfor$, we can again use a plug-in approach similar to \citet[Eq. 3.2.17, Prop. 3.2]{lutkepohl2005new} by replacing the true unknown parameters in \eqref{eq:variance_for} with corresponding plug-in estimators. We can use the estimators $\hat \Sigma_F = \frac{1}{T}\sum_{t=1}^T \hat F_t \hat F_t^\top$, $\hat \Sigma_\eta = \frac{1}{T-1}\sum_{t=2}^{T}(\hat F_t - \hat A \hat F_{t-1})(\hat F_t - \hat A \hat F_{t-1})^\top$, and the variance estimator can be calculated as
$$ \shatfor := \frac{\dnt^2}{T}\sum_{k,\ell = 0}^{h-1} \left(\hat \Lambda_i^\top(\hat A^{h-1-k})^\top \hat \Sigma_F^{-1} \hat A^{h-1-\ell} \hat \Lambda_i\right)\left(\hat F_T^\top \hat A^k \hat \Sigma_\eta (\hat A^\ell)^\top \hat F_T\right). $$

A detailed discussion of consistency of the estimators $\hat \Sigma_F$ and $\hat \Sigma_\eta$ and the translation of asymptotic normality with the estimated variance can be found in \citet[Prop. 3.2, Cor. 3.2.1]{lutkepohl2005new}.

\section{Proof of Cor. \ref{cor:mcar}: \focus under MCAR and staggered adoption}\label{pf:mcar}

We consider the one factor model $Y_{i,t} = \Lambda_i F_t + \varepsilon_{i,t}$ with AR(1) factors $F_t$ with AR coefficient $\phi$ satisfying $|\phi| < 1$. Thus \asn \ref{asn:var1_factors} is satisfied. We introduce the notation $\sigma_\eta^2 \defeq \var(\eta_t), \sigma_F^2 \defeq \var(F_t)$ and $\sigma_\Lambda^2 \defeq \var(\Lambda_i)$. Next we verify that \asn \ref{asn:observation} and \ref{asn:time_limit_observation} hold under MCAR and staggered adoption.

\subsection{\asn \ref{asn:observation} and \ref{asn:time_limit_observation} in MCAR}\label{pf:asn56_mcar}

For any $i\in [N]$ and $s,t\in [T]$, $W_{i,s} W_{i,t}\sim \text{Bernoulli}(p^2)$ and are independent across $i$. Therefore $\frac{1}{N}|\cQ_{s,t}| = \frac{1}{N}\sum_{i = 1}^N W_{i,s} W_{i,t}$ is an average of $N$  $\text{Bernoulli}(p^2)$ random variables. Applying strong law of large numbers, Assum.~\ref{asn:observation} is satisfied with
\begin{equation}\label{eq:mcar_q2}
    \alpha_{s,t} = \begin{cases}
    p & \text{if } s= t,\\
    p^2 & \text{otherwise}.
\end{cases}
\end{equation}
Similarly, $\frac{1}{N}|\cQ_{s,t}\cap \cQ_{s',t'}| = \frac{1}{N}\sum_{i = 1}^N W_{i,s} W_{i,t} W_{i,s'} W_{i,t'}$. Again applying strong law of large numbers,
\begin{equation}\label{eq:mcar_q4}
    \beta_{s,t,s',t'} = p^{\, \#\text{distinct}(\{s,t,s',t'\})},
\end{equation}
where for any set $S$, $\#\text{distinct}(S) \defeq$ Number of distinct elements in $S$. We now show that \asn \ref{asn:time_limit_observation} holds under MCAR. We fix any $s, t\in [T]$. Using \eqref{eq:mcar_q2} and \eqref{eq:mcar_q4}, Assum. \ref{asn:time_limit_observation} is satisfied with 
\begin{align*}
    \nu_s & = p^2,\quad
    \rho_{s,t} = \begin{cases}
        p^2 & \text{if } s = t,\\
        p^2 & \text{otherwise},
    \end{cases}\\
    \omega_1 & = 1/p, \quad
    \omega_2 = \omega_2 = 1.
\end{align*}

Furthermore since $|\phi|< 1$, for any fixed $i\in [N]$ we can write \eqref{eq:lim_FFW} as
$$ \frac{1}{T}\sum_{t=1}^T W_{i,t} F_t^2 = \underbrace{\frac{1}{T}
\sum_{t=1}^T (W_{i,t} - p)F_t^2}_{\defeq A_T} + \underbrace{\frac{p}{T}\sum_{t=1}^T F_t^2}_{\defeq B_T}. $$

Following from \citet[Thm. 11.2.2 and Eq. 11.2.6]{brockwell1991time}, $B_T \pconverge p\sigma_F^2$. Additionally, we have $\EE[A_T] = 0$. Denote $Z_{t}\defeq (W_{i,t} - p)F_t^2$. Thus, for any $s, t\in [T]$
\begin{align*}
    \cov(Z_s, Z_t) = \EE[(W_{i,s} - p)(W_{i,t} - p)F_s^2 F_t^2 ]
    = \begin{cases}
        p(1-p)\EE[F_t^4] & \text{if } s = t\\
        0 & \text{otherwise}.
    \end{cases}
\end{align*}
Therefore under \asn \ref{asn:factor},
$$ \var(A_T) = \frac{1}{T^2}\sum_{s,t=1}^T \cov(Z_s, Z_t) =  \frac{1}{T^2}\sum_{t=1}^T \var(Z_t) = \frac{1}{T}p(1-p)\EE[F_1^4] = \bigO{T^{-1}}. $$

Using \textit{Chebychev's inequality} for any $\delta > 0$, $\pr(A_T > \delta)\le \var(A_T)/\delta^2 = \bigO{T^{-1}} \to 0$ as $T\to \infty$. Therefore $A_T \pconverge 0$, and we obtain $\Sigma_{F,i} = p\sigma_F^2$ -- showing that MCAR satisfies Assum. \ref{asn:time_limit_observation} for AR(1) factors.

\subsection{\asn \ref{asn:observation} and \ref{asn:time_limit_observation} in staggered adoption}\label{pf:asn56_staggered}

In staggered adoption treatment design, we denote for each $i$,
$$ \tau_i := \begin{cases}
\min\{t \in [T]: W_{i,t} = 1\}, & \\
\quad \infty & \hspace{-50pt}\text{ if } W_{i,t} = 0 \text{ for all } t \in [T]
\end{cases}$$
as the \textit{treatment adoption time.} Since the adoption time $\tau_i$'s are independent for the units, we denote the distribution function of the adoption time as $G_\tau(t) := \pr(\tau_i \le t) = \pr(W_{i,t} = 1)$. Thus,
$$ \frac{1}{N}\sum_{i = 1}^N W_{i,s} W_{i,t} = \frac{1}{N} \sum_{i = 1}^N \indic{\tau_i \le \min\{s, t\}} \asconverge G_\tau(\min\{s, t\}), $$

and similarly $\frac{1}{N}\sum_{i = 1}^N W_{i,s} W_{i,t} W_{i,s'} W_{i,t'} \asconverge G_\tau(\min\{s, t, s', t'\})$. Therefore \asn \ref{asn:observation} is satisfied with $\alpha_{s, t} = G_\tau(\min\{s, t\})$ and $\beta_{s, t, s', t'} = G_\tau(\min\{s, t, s', t'\})$. Furthermore for $s, t \in [T]$, $\beta_{s,T,t,T} = G_\tau(\min\{s,t\})$ and $\alpha_{t,T} = G_\tau(t)$, which implies

$$ \frac{1}{T^2} \sum_{s, t = 1}^T \frac{\beta_{s,T,t,T}}{\alpha_{s,T} \alpha_{t, T}} = \frac{1}{T^2} \sum_{s,t=1}^T \frac{1}{G_\tau(\max\{s,t\})} = \frac{1}{\sum_{t=1}^T (2t-1)} \sum_{t = 1}^T \frac{2t-1}{G_\tau(t)}. $$

Applying Stolz-Cesaro theorem \citep[Theorem 1.22]{muresan2009concrete}, we obtain $\frac{1}{T^2} \sum_{s, t = 1}^T \frac{\beta_{s,T,t,T}}{\alpha_{s,T} \alpha_{t, T}} \to 1$ as $T \to \infty$. Similarly
$$ \frac{1}{T^3} \sum_{s, t, s' = 1}^T \frac{\beta_{s,t,s',T}}{\alpha_{s,t} \alpha_{s', T}} \to 1,\quad \text{and}\quad \frac{1}{T^4} \sum_{s, t, s', t' = 1}^T \frac{\beta_{s,t,s',t'}}{\alpha_{s,t} \alpha_{s', t'}} \to 1 $$

as $T \to \infty$. Next we denote $L_T := \frac{1}{T}\sum_{t = 1}^T W_{i,t} F_t^2 = \frac{1}{T}\sum_{t = \tau_i}^T F_t^2$, and $\Fktsq := \frac{1}{T-k+1}\sum_{t = k}^T F_t^2$ for $k \le T$. Therefore, $L_T = \frac{T-\tau_i + 1}{T} \overline{F}^2_{\tau_i:T}$. Since $\varepsilon_t \sim \normal{0}{\sigma_\varepsilon^2}$, $\cov(F_t^2, F_s^2) = \Gamma\phi^{2|s-t|}$ where $\Gamma := 2\left(\frac{\sigma_\varepsilon^2}{1-\phi^2}\right)^2$. Thus,
\begin{align*}
    \var(\Fktsq) =\ & \frac{1}{(T-k+1)^2} \sum_{t = 1}^T \sum_{s = 1}^T \cov(F_t^2, F_s^2) \\
    =\ & \frac{1}{(T-k+1)^2}\sum_{h = -(T-k)}^{T-k} (T - k +1 - |h|)\Gamma\phi^{2|h|}\\
    \le\ & \frac{\Gamma}{T - k + 1}.
\end{align*}

Hence for any $M \in \NN$ and $\varepsilon > 0$, we apply Chebychev's inequality on $\Fktsq$ and get
$$ \sup_{1\le k \le M} \pr(|\Fktsq - \sigma_F^2| > \varepsilon) \le \frac{\Gamma}{\varepsilon^2(T-M+1)}, $$

By triangle inequality, $|L_T - \sigma_F^2| \le | \overline{F}^2_{\tau_i:T} - \sigma_F^2| + \sigma_F^2 \frac{\tau_i - 1}{T} $. Therefore on the set $\{\tau_i < \infty\}$,
\begin{align*}
    \pr(|L_T - \sigma_F^2| > \varepsilon) \le & \sum_{k=1}^{\lfloor\sqrt{T}\rfloor} \pr(|L_T - \sigma_F^2| > \varepsilon \mid  \tau_i = k) \pr(\tau_i = k) + \pr(\tau_i > \lfloor \sqrt{T} \rfloor)\\
    \le & \frac{\Gamma}{\left( \varepsilon - \sigma_F^2 \frac{\lfloor \sqrt{T} \rfloor - 1}{T}\right)^2 (T - \lfloor \sqrt{T} \rfloor + 1)} \sum_{k = 1}^{\infty} \pr(\tau_i = k) + \pr(\tau_i > \lfloor \sqrt{T} \rfloor)\\ 
    \to\ & 0 \text{ as } T \to \infty.
\end{align*}
Hence $L_T \pconverge \sigma_F^2$, and \asn \ref{asn:time_limit_observation} is satisfied. 

\subsection{Asymptotic variance for MCAR and staggered adoption}
The asymptotic variances in \eqref{eq:variance_est} and \eqref{eq:variance_for} become
\begin{align}
    \sest = & \begin{cases}
       \frac{\dnt^2 \phi^{2h}\sigma_\varepsilon^2}{p}\left[\frac{\Lambda_i^2}{N \sigma_\Lambda^2} + \frac{F_T^2}{T \sigma_F^2}\right] + \frac{\dnt^2 \phi^{2h}\Lambda_i^2 F_T^2}{N\sigma_\Lambda^2}\left( \frac{1}{p} - 1 \right) \EE\left[\left(\frac{\Lambda_i^2}{\sigma_\Lambda^2} - 1 \right)^2 \right] & \text{ for MCAR design},\\
        \dnt^2 \phi^{2h}\sigma_\varepsilon^2\left[\frac{\Lambda_i^2}{N \sigma_\Lambda^2} + \frac{F_T^2}{T \sigma_F^2}\right]& \hspace{-50pt}\text{ for staggered adoption design},
    \end{cases}\label{eq:variance_est1}\\
    \sfor = & \frac{\dnt^2}{T}~h^2 \phi^{2h-2}(1-\phi^2)\Lambda_i^2 F_T^2 \label{eq:variance_for1}.
\end{align}

Hence the proof is complete. \pfend

\begin{remark}
When the AR(1) factors are fully observed, $\sfor$ in \eqref{eq:variance_for1} captures the uncertainty of forecasting with information up to time~$T$ \citep[Sec.~3.4]{shumway2000time}
\end{remark}

\section{General assumptions}\label{sec:general_asn}

We now state the general set of assumptions on the factor model \ref{eq:factor_model_1}. The general assumptions impose general structure on the temporal and cross sectional dependence of the factors, loadings and the idiosyncratic errors. Under these assumptions, we prove Thm. \ref{thm:forecast_error_bound}. In all the following, $M, M'$ denote universal positive constants that is allowed to change values from one line to another.

\begin{assumption}[\textbf{Factor model with general structure}]\label{gasn:factor_model}
For every $i\in [N],t \in [T]$, $\EE[\|F_t\|_2^4] \le M$ and $\EE[\|\Lambda_i\|_2^4] \le M$. For fixed $0 \le h < T$, there exists a positive definite matrix $\Sigma_F^{(h)}$ such that 
\begin{equation}\label{eq:factor_second_moment}
   \EE\left[\left\|\sqrt{T-h}\left(\frac{1}{T-h}\sum_{t=h+1}^T F_t F_{t-h}^\top - \Sigma_{F}^{(h)}\right)\right\|^2\right] \le M. 
\end{equation}

There exists a positive definite matrix $\Sigma_\Lambda$ such that 
\begin{equation}\label{eq:loading_second_moment0}
  \EE\left[\left\|\sqrt{N}\left(\frac{1}{N}\sum_{i = 1}^N \Lambda_i \Lambda_i^\top - \Sigma_\Lambda\right)\right\|^2\right] \le M,  
\end{equation}

and for every $\cQ_{s,t}\subset[N]$,
\begin{equation}\label{eq:loading_second_moment1}
    \EE\left[\left\|\sqrt{|\cQ_{s,t}|}\left(\frac{1}{|\cQ_{s,t}|}\sum_{i\in \cQ_{s,t}}^N \Lambda_i \Lambda_i^\top - \Sigma_\Lambda\right)\right\|^2\right] \le M.
\end{equation}

The eigenvalues of $\Sigma_F^{(0)} \Sigma_\Lambda$ are distinct and strictly positive.
\end{assumption}

\vspace{10pt}
\begin{assumption}[\textbf{Idiosyncratic errors with general strucuture}]\label{gasn:idio}
For $i \in [N], t \in [T]$, $\EE[\varepsilon_{i,t}] = 0$, $\EE|\varepsilon_{i,t}|^8 \le M$. The autocvariance function $\gamma(s,t) \defeq \EE\left[\frac{1}{|\cQ_{s,t}|}\varepsilon_t^\top \varepsilon_s \right] = \EE\left[\frac{1}{|\cQ_{s,t}|}\sum_{i \in \cQ_{s,t}} {\varepsilon_{i,t}}{\varepsilon_{i,s}}\right]$, satisfies  $|\gamma(s, s)| \le M,\quad \sum_{s=1}^T |\gamma(s, t)| \le M$. For every $s,t \in [T]$, 
\begin{equation} \label{eq:idio_moment}
    \EE\left[\left|\frac{1}{\sqrt{|\cQ_{s,t}|}} \sum_{i \in \cQ_{s,t}}[\varepsilon_{i,s}\varepsilon_{i,t} - \EE(\varepsilon_{i,s}\varepsilon_{i,t})]\right|^4\right] \le M.
\end{equation}
\end{assumption}

\vspace{10pt}
\begin{assumption}[\textbf{Moment conditions}]\label{gasn:moment}
For every $s \in [T]$ and $0 \le h < T$,
\begin{align}
    & \EE\left[\left\| \frac{1}{\sqrt{|\cQ_{s,t}|}}\sum_{i\in \cQ_{s,t}} \Lambda_i \varepsilon_{i,t} \right\|_2^2\right] \le M, \label{eq:moment_0}\\
    & \EE\left[\left\|\frac{1}{\sqrt{T}} \sum_{t=1}^T F_{t-h} \frac{1}{\sqrt{|\cQ_{s,t}|}} \sum_{i \in \cQ_{s,t}} (\varepsilon_{i,t}\varepsilon_{i,s} - \EE[\varepsilon_{i,t}\varepsilon_{i,s}])\right\|^2\right] \le M, \label{eq:moment_1}\\
    & \EE\left[\left\|\frac{1}{\sqrt{T}} \sum_{t=1}^T \frac{1}{\sqrt{|\cQ_{s,t}|}} \sum_{i \in \cQ_{s,t}} \Lambda_i F_{t-h}^\top \varepsilon_{i,t} \right\|^2\right] \le M. \label{eq:moment_2}
\end{align}
\end{assumption}

\vspace{10pt}
\begin{assumption}[\textbf{Asymptotic distributions}]\label{gasn:limdist}
There exists a positive definite matrix $\Sigma_{F,i}$ such that \eqref{eq:lim_FFW} holds. In addition, there exists a positive definite matrix $\Gamma_T^\mathrm{obs}$ such that
\begin{equation}\label{eq:limdist1}
    \frac{\sqrt{N}}{T}\sum_{s=1}^T F_s F_s^\top\frac{1}{|\cQ_{s,T}|}\sum_{i \in \cQ_{s,T}} \Lambda_i \varepsilon_{i,T} \dconverge \normal{0}{\Gamma^\mathrm{obs}_T}.
\end{equation}

For every $i\in [N]$, there exists a positive definite matrix $\Phi_i^\mathrm{obs}$ such that
\begin{equation}\label{eq:limdist2}
    \frac{1}{\sqrt{T}}\sum_{t = 1}^T W_{i,t} F_t \varepsilon_{i,t} \dconverge \normal{0}{\Phi_i^\mathrm{obs}}.
\end{equation}

Denote the quantities $u_i \defeq \Sigma_F^{-1}\Sigma_\Lambda^{-1} \Lambda_i$, $J_t \defeq \frac{1}{T}\sum_{s=1}^T F_s F_s^\top \left( \frac{1}{|\cQ_{s,t}|}\sum_{i \in \cQ_{s,t}} \Lambda_i \Lambda_i^\top - \frac{1}{N}\sum_{i = 1}^N \Lambda_i \Lambda_i^\top \right)$, and $R_i \defeq \frac{1}{T}\sum_{t=1}^T W_{i,t} J_t F_t F_t^\top$. Then
\begin{equation}\label{eq:limdist3}
\sqrt{N}{\Sigma_{i,T}^{(J,R)}}^{-1/2}\begin{bmatrix}
J_T F_T\\ R_i u_i
\end{bmatrix} \xrightarrow{d} \normal{0} {I_r},
\end{equation}

where $\Sigma_{i,T}^{(J,R)} \defeq \begin{bmatrix}
    \Sigma_{J,T} & g_{i,T}^\mathrm{cov}(u_i)^\top\\
    g_{i,T}^\mathrm{cov}(u_i) & h_i(u_i)
\end{bmatrix}$ for positive definite matrix $\Sigma_{J,T}$, and functions $g_{i,T}^\mathrm{cov}$ and $h_i$.
\end{assumption}

Under the model assumptions in the paper, the general assumptions follow. We use the general assumptions to prove Thm. \ref{thm:forecast_error_bound}. The implication of the simplified assumptions by more general set of assumptions is stated as the following lemma:

\begin{lemma}[\textbf{Implication by simplified assumptions}] \label{lem:assumption_lemma}
    The general assumptions \ref{gasn:factor_model} to \ref{gasn:limdist} are satisfied by the simplified model assumptions \ref{asn:var1_factors} to \ref{asn:time_limit_observation}.
\end{lemma}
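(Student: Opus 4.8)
The plan is to verify the four general assumptions \asn\ref{gasn:factor_model}--\ref{gasn:limdist} one block at a time, exploiting the concrete i.i.d./Gaussian structure of the simplified model, and wherever the target condition coincides verbatim with one used in \citet{bai2003inferential} or \citet{xiong2023large} to invoke their verification with the roles of $N$ and $T$ interchanged.

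First I would treat \asn\ref{gasn:factor_model}. Since $\rho(A)<1$, \asn\ref{asn:var1_factors} gives the stable moving-average representation $F_t=\sum_{j\ge0}A^j\eta_{t-j}$, so under the Gaussian noise of \asn\ref{asn:factor} each $F_t$ is Gaussian with all moments finite, yielding $\EE[\|F_t\|_2^4]\le M$ (this is Lem.~\ref{lem:imply_1}). The autocovariances $\Gamma_F(h)=\sum_{k\ge0}A^{h+k}\Sigma_\eta(A^k)^\top$ decay geometrically, hence are square-summable; a Bartlett-type variance computation for a stationary Gaussian process then shows the sample autocovariance concentrates around $\Sigma_F^{(h)}=\Gamma_F(h)$ with variance $\bigO{(T-h)^{-1}}$, which is exactly the $\sqrt{T-h}$-scaled bound \eqref{eq:factor_second_moment}. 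The loading conditions \eqref{eq:loading_second_moment0}--\eqref{eq:loading_second_moment1} follow from the i.i.d. structure of $\Lambda_i$ with $\EE[\|\Lambda_i\|_2^4]<\infty$: since $W\indep\Lambda$ and $|\cQ_{s,t}|\ge N\underline q$, conditioning on $\cQ_{s,t}$ leaves the loadings i.i.d. and the sample-mean variance is $\bigO{|\cQ_{s,t}|^{-1}}$. Distinctness of the eigenvalues of $\Sigma_F^{(0)}\Sigma_\Lambda=\Sigma_F\Sigma_\Lambda$ is exactly \asn\ref{asn:factor}.

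Next, \asn\ref{gasn:idio} is nearly immediate from \asn\ref{asn:idiosyncratic} and \asn\ref{asn:independence}: the $\varepsilon_{i,t}$ being i.i.d. across both indices forces $\gamma(s,t)=\sigma_\varepsilon^2\indic{s=t}$, so $|\gamma(s,s)|$ and $\sum_s|\gamma(s,t)|$ are trivially bounded, and for \eqref{eq:idio_moment} the centered products $\varepsilon_{i,s}\varepsilon_{i,t}-\EE[\varepsilon_{i,s}\varepsilon_{i,t}]$ are i.i.d., mean zero, with finite fourth moment (since $\EE|\varepsilon_{i,t}|^8<\infty$), so Rosenthal's inequality bounds the fourth moment of the normalized sum. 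The three conditions in \asn\ref{gasn:moment} are second-moment computations: expanding each squared norm and using the mutual independence of $\Lambda$, $\eta$ (hence $F$) and $\varepsilon$, all cross terms vanish and the surviving diagonal terms are bounded by products of $\sigma_\varepsilon^2$, $\trace\Sigma_\Lambda$, and the summable factor autocovariances.

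Finally, for \asn\ref{gasn:limdist}, the limit \eqref{eq:lim_FFW} is assumed directly in \asn\ref{asn:time_limit_observation}. The marginal CLTs \eqref{eq:limdist1}--\eqref{eq:limdist2} I would obtain by conditioning on $(F,W)$---legitimate because $W\indep(F,\varepsilon)$---so each becomes a sum of independent mean-zero terms amenable to a Lindeberg--Lyapunov triangular-array CLT, with the limiting covariances $\Gamma_T^\mathrm{obs}$, $\Phi_i^\mathrm{obs}$ identified through $\sigma_\varepsilon^2$ and the proportionality limits of \asn\ref{asn:observation}--\ref{asn:time_limit_observation}. The joint statement \eqref{eq:limdist3} follows by a Cram\'er--Wold device, writing an arbitrary linear combination of $J_TF_T$ and $R_iu_i$ as a single triangular-array sum over $i$ and checking a Lyapunov condition via the eighth-moment bounds. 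The main obstacle lies precisely here: because $J_t$ and $R_i$ both contain averages of $\Lambda_i\Lambda_i^\top$ over the overlapping sets $\cQ_{s,t}$, the two blocks are genuinely correlated, and the off-diagonal $g_{i,T}^\mathrm{cov}(u_i)$ must be tracked exactly through the interaction of the loading fourth-moment object $\mathcal{V}_\Lambda$ with the limiting observation proportion $\omega_2$; matching this bookkeeping to the covariance structure of \citet[Thm.~2]{xiong2023large} (with $N$ and $T$ swapped) is what makes $\Sigma_{i,T}^{(J,R)}$ emerge in the claimed form.
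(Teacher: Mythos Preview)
Your proposal is correct and follows essentially the same block-by-block verification strategy as the paper, which packages the four implications as Lemmata~\ref{lem:imply_1}, \ref{lem:imply_3}, \ref{lem:imply_4}, \ref{lem:imply_5} and proves them by direct moment computations under the i.i.d./Gaussian structure, exactly as you outline.

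The only noteworthy deviation is in how \eqref{eq:limdist2} is handled: you condition on $(F,W)$ and apply a triangular-array Lindeberg--Lyapunov CLT over $t$, whereas the paper treats $U_{i,t,T}=T^{-1/2}W_{i,t}F_t\varepsilon_{i,t}$ as a martingale difference array with respect to $\mathcal F_{i,t}=\sigma(\{W_{i,s}\}_{s\le t+1},\{\varepsilon_{i,s}\}_{s\le t})$ and invokes the martingale CLT of \citet[Thm.~3.2]{hall2014martingale}. Both routes deliver $\Phi_i^{\mathrm{obs}}=\sigma_\varepsilon^2\Sigma_{F,i}$; your conditioning argument is slightly more elementary but implicitly relies on a stable-convergence step (the conditional variance $T^{-1}\sigma_\varepsilon^2\sum_t W_{i,t}F_tF_t^\top$ must be shown to converge in probability to a deterministic limit before the unconditional CLT follows), whereas the martingale CLT bundles this into its quadratic-variation hypothesis. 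For \eqref{eq:limdist3} your Cram\'er--Wold plan is equivalent to what the paper does: it decomposes $V_{s,T}$ explicitly into two independent Gaussian limits and then cites \citet[Prop.~3, Step~5]{xiong2023large} and stable convergence \citep[Thm.~6.1]{hausler2015stable} to obtain the joint law with the $\omega_2$-dependent cross block---which is precisely the bookkeeping you flagged as the delicate part.
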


\subsection{Proof of Lemma \ref{lem:assumption_lemma}: Implication by simplified assumptions}

We state some sub-lemmata to validate that each of the general assumptions are satisfied under the simplified assumptions.

\begin{lemma}[\textbf{Power norm bounds under spectral radius condition}]\label{lem:A_sum}
Let $A \in \mathbb{R}^{r \times r}$ be a square matrix with spectral radius $\rho(A) < 1$. Then there exist an integer $\underline{N} \ge 1$, both depending on $A$, such that
$$
\|A^n\| < \left( \frac{1+\rho(A)}{2} \right)^n \quad \text{for all } n \ge \underline{N}.
$$
In particular, the series $\sum_{n=0}^\infty \|A^n\|$ and $\sum_{n=0}^\infty n\|A^n\|$ converge.
\end{lemma}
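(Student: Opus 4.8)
The plan is to invoke Gelfand's spectral-radius formula, which states that for any submultiplicative matrix norm one has $\rho(A) = \lim_{n\to\infty}\|A^n\|^{1/n}$. Writing $\rho \defeq \rho(A)$ and introducing the intermediate rate $\gamma \defeq (1+\rho)/2$, the stability hypothesis $\rho < 1$ immediately gives the strict ordering $\rho < \gamma < 1$, so it suffices to compare the growth of $\|A^n\|$ against $\gamma^n$.

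First I would apply Gelfand's formula with tolerance $\gamma - \rho > 0$: since $\|A^n\|^{1/n}\to\rho$, there exists an integer $\underline{N}\ge 1$, depending on $A$, such that $\|A^n\|^{1/n} < \gamma$ for every $n\ge \underline{N}$. Raising both sides to the $n$-th power yields $\|A^n\| < \gamma^n = \left(\tfrac{1+\rho}{2}\right)^n$ for all $n\ge\underline{N}$, which is exactly the asserted bound. (Equivalently, one could route through the Jordan normal form and observe that the entries of $A^n$ are of the form $p(n)\,\lambda^n$ with $\deg p$ bounded by a Jordan-block size and $|\lambda|\le\rho$, whence $n^{k}\rho^n < \gamma^n$ eventually; the Gelfand route is cleaner and avoids tracking block sizes.)

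For the summability conclusions I would split each series at the cutoff $\underline{N}$. The finite heads $\sum_{n=0}^{\underline{N}-1}\|A^n\|$ and $\sum_{n=0}^{\underline{N}-1}n\|A^n\|$ are trivially finite, while the tails are dominated termwise by $\sum_{n\ge \underline{N}}\gamma^n$ and $\sum_{n\ge\underline{N}}n\gamma^n$ respectively. Since $0\le\gamma<1$, both dominating series converge--the first as a geometric series and the second because $\sum_{n\ge 1} n\gamma^n = \gamma/(1-\gamma)^2 < \infty$. Therefore $\sum_{n=0}^\infty\|A^n\|$ and $\sum_{n=0}^\infty n\|A^n\|$ both converge.

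I do not anticipate a genuine obstacle: once Gelfand's formula is in hand the argument reduces to a one-line geometric comparison. The only point requiring mild care is that the pointwise bound holds only for $n\ge\underline{N}$ rather than for all $n$, so any downstream quantity relying on the full series--such as the constants implicit in Lem.~\ref{lem:coeff} and in the error bound of Thm.~\ref{thm:forecast_error_bound}--must absorb the finitely many initial terms into a universal constant, which the $\bigOP{\cdot}$ bookkeeping handles automatically.
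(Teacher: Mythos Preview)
Your proposal is correct and matches the paper's own proof essentially line for line: the paper also invokes Gelfand's formula $\rho(A)=\lim_{n\to\infty}\|A^n\|^{1/n}$, chooses tolerance $\epsilon=(1-\rho(A))/2$ to obtain $\|A^n\|<\tilde\rho(A)^n$ for $n>\underline{N}$ with $\tilde\rho(A)=(1+\rho(A))/2$, and then splits each series at $\underline{N}$ to bound the tail by a convergent (poly-)geometric sum. Your additional remark about the Jordan-form route is not used in the paper, but the core argument is identical.
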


\begin{lemma}\label{lem:imply_1}
    \asn \ref{asn:var1_factors}, \ref{asn:factor} and \ref{asn:observation} $\implies$ \asn \ref{gasn:factor_model}.
\end{lemma}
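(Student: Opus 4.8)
The plan is to verify each of the four components of \asn \ref{gasn:factor_model} separately, exploiting the fact that under \asn \ref{asn:var1_factors} and \ref{asn:factor} the stationary solution of the VAR(1) recursion admits the moving-average representation $F_t = \sum_{j=0}^\infty A^j \eta_{t-j}$, which converges in $L^2$ because $\sum_{j\ge 0}\|A^j\| < \infty$ by Lem. \ref{lem:A_sum}. Since the $\eta_t$ are i.i.d. $\normal{0}{\Sigma_\eta}$, each $F_t$ is then a mean-zero Gaussian vector with stationary covariance $\Sigma_F = \sum_{j=0}^\infty A^j \Sigma_\eta (A^j)^\top$, and its lag-$h$ autocovariance is $\Sigma_F^{(h)} = \EE[F_t F_{t-h}^\top] = A^h \Sigma_F$ (the cross terms vanish since $\eta_{t-j}$ for $j<h$ is independent of $F_{t-h}$). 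Gaussianity immediately gives $\EE[\|F_t\|_2^4] \le M$ uniformly in $t$, while $\EE[\|\Lambda_i\|_2^4] \le M$ is assumed in \asn \ref{asn:factor}; this settles the moment bounds.

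For the autocovariance concentration \eqref{eq:factor_second_moment}, I would set $G_t \defeq \vecop{F_t F_{t-h}^\top}$ and bound $\EE\|\cdot\|^2 = (T-h)\,\var\big(\tfrac{1}{T-h}\sum_t G_t\big) = \tfrac{1}{T-h}\sum_{s,t}\cov(G_s,G_t)$. The key is that, by Isserlis' (Wick's) theorem for the jointly Gaussian vector $(F_s,F_{s-h},F_t,F_{t-h})$, each entry of $\cov(G_s,G_t)$ is a sum of products of two autocovariances of the form $\Sigma_F^{(k)}$. Because $\|\Sigma_F^{(k)}\| = \|A^k\Sigma_F\|\le \|A^k\|\,\|\Sigma_F\|$ decays geometrically in $|k|$ (again Lem. \ref{lem:A_sum}), the cross-covariances $\|\cov(G_t,G_{t+k})\|$ are absolutely summable uniformly in $t$, so $\tfrac{1}{T-h}\sum_{s,t}\|\cov(G_s,G_t)\| \le \sum_k \|\cov(G_t,G_{t+k})\| \le M$. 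This is the step I expect to be the main obstacle: organizing the Isserlis expansion of the fourth-order moments into absolutely summable series of autocovariance products and verifying that the resulting bound is uniform over $T$ and over the lag $h$.

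For the loading conditions I would use that the $\Lambda_i$ are i.i.d. with $\EE[\|\Lambda_i\|_2^4] < \infty$. For \eqref{eq:loading_second_moment0}, the scaled deviation $\sqrt{N}\big(\tfrac{1}{N}\sum_i \Lambda_i\Lambda_i^\top - \Sigma_\Lambda\big)$ has expected squared norm equal to the variance of a single summand $\Lambda_i\Lambda_i^\top$, which is at most $\EE[\|\Lambda_i\|_2^4]\le M$. For \eqref{eq:loading_second_moment1}, the bound is required for an arbitrary (fixed) index set $\cQ_{s,t}\subset[N]$; since the $\Lambda_i$ are i.i.d., the identical i.i.d. variance computation applies to the average over any such subset, giving the bound $\le M$ uniformly over all $\cQ_{s,t}$ (and, when $\cQ_{s,t}$ is treated as random, one recovers the same bound by conditioning on $W$, which \asn \ref{asn:observation} decouples from the factor-side randomness).

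Finally, the eigenvalue condition follows since $\Sigma_F^{(0)} = \Sigma_F$ is positive definite (as $\Sigma_\eta \succ 0$ forces $\Sigma_F \succ 0$) and $\Sigma_\Lambda \succ 0$ by \asn \ref{asn:factor}; hence $\Sigma_F\Sigma_\Lambda$ is similar to the positive definite matrix $\Sigma_F^{1/2}\Sigma_\Lambda\Sigma_F^{1/2}$ and therefore has strictly positive eigenvalues, while distinctness of these eigenvalues is exactly the last hypothesis of \asn \ref{asn:factor}. Collecting the four verified components completes the proof. \pfend
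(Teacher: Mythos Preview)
Your proposal is correct and follows the same overall architecture as the paper: MA representation via Lem.~\ref{lem:A_sum}, fourth-moment bounds, i.i.d.\ arguments for the loading conditions, and the distinct-eigenvalue clause from \asn~\ref{asn:factor}. The one substantive difference is in handling \eqref{eq:factor_second_moment}: the paper simply cites the sample-autocovariance CLT of Brockwell--Davis (Thm.~11.2.2) and reads off $\Sigma_F^{(h)}=\Gamma_F(h)$, whereas you compute the second moment directly via Isserlis' theorem and the geometric decay of $\|\Sigma_F^{(k)}\|$. Your route is more self-contained and actually delivers the required $L^2$ bound cleanly, while the paper's $o_P(T^{-1/2})$ statement technically needs an extra uniform-integrability remark to yield the moment inequality. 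For \eqref{eq:loading_second_moment1} you correctly observe that the i.i.d.\ structure of the $\Lambda_i$ gives the bound uniformly over any index set, which is slightly sharper than the paper's invocation of the lower bound \eqref{eq:q_lower_bound} from \asn~\ref{asn:observation}.
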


\begin{lemma}\label{lem:imply_3}
    \asn \ref{asn:idiosyncratic} $\implies$ \asn \ref{gasn:idio}
\end{lemma}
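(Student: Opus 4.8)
The plan is to verify each clause of \asn \ref{gasn:idio} directly from the i.i.d.\ structure in \asn \ref{asn:idiosyncratic}, using independence of $\varepsilon_{i,t}$ across \emph{both} $i$ and $t$ together with the independence of $W$ (hence of every $\cQ_{s,t}$) from $\varepsilon$. The first two clauses are immediate: $\EE[\varepsilon_{i,t}]=0$ is the mean-zero hypothesis, and $\EE|\varepsilon_{i,t}|^8 \le M$ holds with $M \defeq \EE|\varepsilon_{1,1}|^8 < \infty$.

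For the autocovariance function I would condition on $W$ and exploit independence across time. Conditionally on $\cQ_{s,t}$, each summand satisfies $\EE[\varepsilon_{i,s}\varepsilon_{i,t}] = \sigma_\varepsilon^2\,\indic{s=t}$, so that $\gamma(s,t)=\sigma_\varepsilon^2\,\indic{s=t}$ regardless of the (random) set $\cQ_{s,t}$. This yields $|\gamma(s,s)| = \sigma_\varepsilon^2 \le M$ and $\sum_{s=1}^T |\gamma(s,t)| = \sigma_\varepsilon^2 \le M$, after enlarging the universal constant $M$ if necessary.

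The substantive step is the fourth-moment bound \eqref{eq:idio_moment}. Writing $\xi_i \defeq \varepsilon_{i,s}\varepsilon_{i,t} - \EE[\varepsilon_{i,s}\varepsilon_{i,t}]$, the family $\{\xi_i\}_i$ is i.i.d.\ and mean-zero, so conditioning on $W$ and setting $n \defeq |\cQ_{s,t}|$, the standard fourth-moment expansion for sums of independent centered variables gives
\[
\EE\!\left[\Big|\tfrac{1}{\sqrt n}\textstyle\sum_{i\in\cQ_{s,t}}\xi_i\Big|^4 \,\Big|\, W\right]
= \tfrac{1}{n^2}\Big(n\,\EE[\xi_1^4] + 3n(n-1)(\EE[\xi_1^2])^2\Big)
\le \EE[\xi_1^4] + 3(\EE[\xi_1^2])^2,
\]
where all mixed terms carrying a lone index vanish by independence and mean-zero. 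The key feature is that this bound is \emph{uniform in} $n$, so it survives taking the outer expectation over $W$ (legitimate because $n\ge 1$ almost surely under \asn \ref{asn:observation}).

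It then remains to bound $\EE[\xi_1^2]$ and $\EE[\xi_1^4]$ by constants. Since $\xi_1$ is the centering of the product $\varepsilon_{1,s}\varepsilon_{1,t}$, its fourth moment is controlled up to a universal factor by $\EE[(\varepsilon_{1,s}\varepsilon_{1,t})^4]=\EE[\varepsilon_{1,s}^4\varepsilon_{1,t}^4]$, which factorizes to $(\EE[\varepsilon^4])^2 \le \EE[\varepsilon^8]$ when $s\ne t$ and equals $\EE[\varepsilon^8]$ when $s=t$; in both cases Cauchy--Schwarz and the eighth-moment hypothesis give the bound, and the second moment is handled identically (needing only fourth moments). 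This is the single place where the eighth-moment assumption in \asn \ref{asn:idiosyncratic} is genuinely used --- the only mild obstacle being the bookkeeping of the fourth moment of the product $\varepsilon_{1,s}\varepsilon_{1,t}$; everything else follows routinely from the i.i.d.\ structure and the condition $W \indep \varepsilon$.
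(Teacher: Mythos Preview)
Your proof is correct and follows essentially the same route as the paper: both expand the fourth moment of the normalized sum of the i.i.d.\ centered variables $\xi_i=\varepsilon_{i,s}\varepsilon_{i,t}-\EE[\varepsilon_{i,s}\varepsilon_{i,t}]$ (the paper writes $v_i^{(s,t)}$) to obtain the bound $\EE[\xi_1^4]+3(\EE[\xi_1^2])^2$, which is uniform in $|\cQ_{s,t}|$. Your version is in fact slightly more careful than the paper's---you make the conditioning on $W$ explicit and spell out why $\EE[\xi_1^4]$ is controlled by the eighth-moment hypothesis (handling $s=t$ and $s\neq t$ separately), whereas the paper simply asserts $\EE[(v_1^{(s,t)})^4]<\infty$ ``by \asn \ref{asn:idiosyncratic}'' and treats $|\cQ_{s,t}|$ as if deterministic.
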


\begin{lemma}\label{lem:imply_4}
    \asn \ref{asn:var1_factors} to \ref{asn:independence} $\implies$ \asn \ref{gasn:moment}
\end{lemma}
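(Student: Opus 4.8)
Throughout I condition on the observation matrix $W$, so that each index set $\cQ_{s,t}$ may be treated as a fixed nonempty subset of $[N]$; since the bounds I derive are uniform over all such realizations, the inequalities then also hold after integrating over $W$. The engine is \asn \ref{asn:independence}: the families $\{\Lambda_i\}$, $\{\eta_t\}$ (hence $\{F_t\}$) and $\{\varepsilon_{i,t}\}$ are mutually independent, which lets me factor every expectation into a loading/factor part and an idiosyncratic part. I will repeatedly use (i) the i.i.d., mean-zero structure of $\{\Lambda_i\}$ and $\{\varepsilon_{i,t}\}$ to annihilate cross terms in the unit index $i$; (ii) the temporal independence of $\{\varepsilon_{i,t}\}$ to annihilate cross terms in the time index; and (iii) the finite moments $\trace(\Sigma_\Lambda)=\EE[\|\Lambda_i\|_2^2]<\infty$, $\sigma_\varepsilon^2<\infty$, and $\EE[\varepsilon_{i,t}^4]<\infty$ (from \asn \ref{asn:idiosyncratic}), together with $\EE[\|F_t\|_2^2]=\trace(\Sigma_F)<\infty$, which follows from the stable VAR(1) law with Gaussian innovations (cf. Lem. \ref{lem:imply_1}). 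Since the operator norm is dominated by the Frobenius norm, for the matrix-valued quantity in \eqref{eq:moment_2} it suffices to bound the latter.

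\textbf{Conditions \eqref{eq:moment_0} and \eqref{eq:moment_2}.} For \eqref{eq:moment_0}, expanding the squared norm gives $\frac{1}{|\cQ_{s,t}|}\sum_{i,j\in\cQ_{s,t}}\EE[\varepsilon_{i,t}\varepsilon_{j,t}\,\Lambda_i^\top\Lambda_j]$; by independence across $i$ every $i\ne j$ term vanishes, leaving $\frac{1}{|\cQ_{s,t}|}\sum_{i}\sigma_\varepsilon^2\,\EE\|\Lambda_i\|_2^2 = \sigma_\varepsilon^2\,\trace(\Sigma_\Lambda)$, a constant. For \eqref{eq:moment_2}, write the summand as $G_t F_{t-h}^\top$ with $G_t \defeq |\cQ_{s,t}|^{-1/2}\sum_{i\in\cQ_{s,t}}\Lambda_i\varepsilon_{i,t}$, and expand the squared Frobenius norm of $\frac{1}{\sqrt{T}}\sum_t G_t F_{t-h}^\top$ into $\frac{1}{T}\sum_{t_1,t_2}\EE[(G_{t_1}^\top G_{t_2})(F_{t_2-h}^\top F_{t_1-h})]$. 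Conditioning on $(\Lambda,F)$ and using that $\varepsilon$ is independent across time, the factor $\EE[G_{t_1}^\top G_{t_2}\mid \Lambda,F]$ vanishes whenever $t_1\ne t_2$, so only the diagonal survives; each diagonal term equals $\EE[\|G_t\|_2^2]\,\EE[\|F_{t-h}\|_2^2]=\sigma_\varepsilon^2\,\trace(\Sigma_\Lambda)\,\trace(\Sigma_F)$ by the same factorization, and averaging over $t$ preserves the bound.

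\textbf{Condition \eqref{eq:moment_1}.} Set $\zeta_{i,t}\defeq \varepsilon_{i,t}\varepsilon_{i,s}-\EE[\varepsilon_{i,t}\varepsilon_{i,s}]$ and $\xi_t\defeq |\cQ_{s,t}|^{-1/2}\sum_{i\in\cQ_{s,t}}\zeta_{i,t}$, so the quantity equals $\frac{1}{T}\sum_{t_1,t_2}\EE[\xi_{t_1}\xi_{t_2}]\,\EE[F_{t_1-h}^\top F_{t_2-h}]$, the expectations factoring because $\xi\perp F$. Cross terms in $i\ne j$ vanish by independence across units, so $\EE[\xi_{t_1}\xi_{t_2}]=|\cQ_{s,t_1}|^{-1/2}|\cQ_{s,t_2}|^{-1/2}\sum_{i}\EE[\zeta_{i,t_1}\zeta_{i,t_2}]$ over shared units. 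For $t_1\ne t_2$ I check the three subcases $\{t_1,t_2,s\text{ distinct}\}$, $\{t_1=s\}$, $\{t_2=s\}$; in each, $\EE[\zeta_{i,t_1}\zeta_{i,t_2}]$ contains an isolated mean-zero factor ($\varepsilon_{i,t_1}$ or $\varepsilon_{i,t_2}$) independent of the remaining factors, and therefore equals $0$. Hence only the diagonal remains, with $\EE[\xi_t^2]=\var(\varepsilon_{i,t}\varepsilon_{i,s})\le\max\{\sigma_\varepsilon^4,\,\EE[\varepsilon_{i,t}^4]\}<\infty$ by \asn \ref{asn:idiosyncratic}; multiplying by $\EE[\|F_{t-h}\|_2^2]=\trace(\Sigma_F)$ and averaging over $t$ yields the bound.

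\textbf{Main obstacle.} The only nonroutine point is the vanishing of the temporal cross terms $\EE[\xi_{t_1}\xi_{t_2}]$ for $t_1\ne t_2$ in \eqref{eq:moment_1}: because each $\zeta_{i,t}$ couples the two time indices $t$ and $s$, one cannot invoke a single independence statement and must instead verify the case analysis on how $t_1,t_2,s$ coincide. Everything else is second-moment bookkeeping under mutual independence and the finiteness of the relevant moments.
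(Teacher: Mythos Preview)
Your proposal is correct and follows essentially the same approach as the paper's own proof: expand each squared norm into a double sum, use mutual independence (\asn~\ref{asn:independence}) together with the i.i.d.\ structure of $\{\Lambda_i\}$ and $\{\varepsilon_{i,t}\}$ to annihilate the cross terms in the unit and time indices, and bound the surviving diagonal contributions by the finite moments from \asn~\ref{asn:factor}, \ref{asn:idiosyncratic}, and Lem.~\ref{lem:imply_1}. Your case analysis for the off-diagonal terms in \eqref{eq:moment_1} (splitting on whether $t_1$ or $t_2$ equals $s$) mirrors the paper's Cases~3--4 exactly, and your conditioning on $(\Lambda,F)$ for \eqref{eq:moment_2} is a slightly cleaner way of executing the same reduction.
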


\begin{lemma}\label{lem:imply_5}
    \asn \ref{asn:var1_factors} to \ref{asn:time_limit_observation} $\implies$ \asn \ref{gasn:limdist}
\end{lemma}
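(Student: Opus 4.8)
The plan is to treat the four displays of \asn \ref{gasn:limdist} separately. The first, \eqref{eq:lim_FFW}, is \emph{immediate}: it is exactly the convergence $\hat\Sigma_{F,i}\pconverge\Sigma_{F,i}$ postulated in \asn \ref{asn:time_limit_observation}, with positive-definiteness of $\Sigma_{F,i}$ part of that hypothesis. The remaining three displays \eqref{eq:limdist1}, \eqref{eq:limdist2}, \eqref{eq:limdist3} are central limit theorems, and the unifying strategy is to condition on the pair $(F,W)$ — which by \asn \ref{asn:observation} is independent of the noise $\varepsilon$ and, via \asn \ref{asn:independence}, of the loading fluctuations — so that each normalized sum becomes a sum of conditionally independent, mean-zero arrays. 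One then invokes a Lindeberg--Feller (or martingale) CLT conditionally, checks the Lyapunov condition using the eighth-moment bound on $\varepsilon$ (\asn \ref{asn:idiosyncratic}) and the fourth-moment bound on $\Lambda_i$ (\asn \ref{asn:factor}), and identifies the limiting covariance through the proportionality limits $\alpha_{s,T}\to\nu_s$, $\beta_{s,T,t,T}\to\rho_{s,t}$ and the $\omega_j$ of \asn \ref{asn:time_limit_observation}. Throughout, the statements are the transposed ($N\leftrightarrow T$) analogues of \citet{bai2003inferential} and \citet{xiong2023large}, which supply the template.

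For \eqref{eq:limdist2} the CLT is in $T$ with $i$ fixed. Since $\varepsilon_{i,t}$ is i.i.d.\ across $t$ and independent of the stationary VAR(1) sequence $\{F_t\}$ (\asn \ref{asn:var1_factors}) and of the indicators $\{W_{i,t}\}$, the summands $W_{i,t}F_t\varepsilon_{i,t}$ form a stationary, ergodic martingale-difference sequence with respect to the filtration generated by $\{(F_s,W_{i,s},\varepsilon_{i,s}):s\le t\}$. I would apply the martingale CLT, whose conditional variance $\tfrac{1}{T}\sum_{t=1}^T W_{i,t}F_tF_t^\top\,\sigma_\varepsilon^2$ converges by \eqref{eq:lim_FFW} to the positive-definite $\Phi_i^\mathrm{obs}=\sigma_\varepsilon^2\Sigma_{F,i}$; finiteness of the required moments follows from \asn \ref{asn:idiosyncratic} and Lem.~\ref{lem:imply_1}.

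For \eqref{eq:limdist1} the CLT is in $N$. Exchanging summation order rewrites the left side as $\tfrac{1}{\sqrt N}\sum_{i=1}^N a_i\,\Lambda_i\varepsilon_{i,T}$, where $a_i\defeq\tfrac{1}{T}\sum_{s=1}^T \tfrac{N\,W_{i,s}W_{i,T}}{|\cQ_{s,T}|}F_sF_s^\top$ is a $(F,W)$-measurable weight matrix that is $\bigOP{1}$ since $|\cQ_{s,T}|\asymp N\alpha_{s,T}$. Conditional on $(F,W)$ the terms $a_i\Lambda_i\varepsilon_{i,T}$ are independent and mean-zero, with per-term covariance $\sigma_\varepsilon^2\,a_i\Sigma_\Lambda a_i^\top$; the conditional covariance $\tfrac1N\sum_i \sigma_\varepsilon^2 a_i\Sigma_\Lambda a_i^\top$ converges — using $\alpha_{s,T}\to\nu_s$ and the quadruple overlap limits $\beta_{s,T,s',T}$ entering $a_ia_i^\top$ — to a positive-definite $\Gamma_T^\mathrm{obs}$. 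The Lyapunov condition holds because $\EE\|\Lambda_i\varepsilon_{i,T}\|_2^4<\infty$ by \asn \ref{asn:factor}--\ref{asn:idiosyncratic}.

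For \eqref{eq:limdist3} I would establish the joint CLT by Cram\'er--Wold. Both $J_TF_T$ and $R_iu_i$ are $\sqrt N$-normalized linear functionals of the centered loading outer products $\Lambda_j\Lambda_j^\top-\Sigma_\Lambda$: writing $\sqrt N\big(\tfrac{1}{|\cQ_{s,t}|}\sum_{j\in\cQ_{s,t}}\Lambda_j\Lambda_j^\top-\tfrac1N\sum_{j}\Lambda_j\Lambda_j^\top\big)$ as a weighted sum of these centered terms, a multivariate CLT over $j$ (Lyapunov via $\EE\|\Lambda_j\|_2^4<\infty$) yields Gaussian limits whose covariance is governed by $\Vlam$ and the overlap proportions. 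Tracking how the \emph{same} loading fluctuations enter $J_T$ (through the columns $s$) and $R_i$ (through the additional time average weighted by $W_{i,t}$) produces the block covariance $\Sigma_{i,T}^{(J,R)}$ with diagonal blocks $\Sigma_{J,T}$, $h_i(u_i)$ and cross block $g_{i,T}^\mathrm{cov}(u_i)$; the limits $\omega_1,\omega_2,\omega_3$ of \asn \ref{asn:time_limit_observation} are precisely what force the time averages $\tfrac{1}{T^2}\sum\cdots$, $\tfrac{1}{T^3}\sum\cdots$, $\tfrac{1}{T^4}\sum\cdots$ in these covariances to converge. I expect \textbf{this joint covariance bookkeeping} — matching the shared loading-fluctuation terms across the two coordinates and verifying that the cross-covariance reduces to $g_{i,T}^\mathrm{cov}(u_i)$ — to be the main obstacle; the individual CLTs are routine given the transposed Bai/Xiong--Pelger templates, whereas the cross term requires the careful alignment later exploited in Lem.~\ref{lem:lim_cross_cov}.
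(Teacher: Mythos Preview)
Your proposal is correct and follows essentially the same route as the paper: martingale CLT for \eqref{eq:limdist2} with limiting variance $\sigma_\varepsilon^2\Sigma_{F,i}$, conditional Lindeberg CLT on $(F,W)$ for \eqref{eq:limdist1}, and the \citet{xiong2023large} decomposition of the centered loading outer products for \eqref{eq:limdist3}. One caveat on where the work actually lies: the paper largely defers the cross-covariance bookkeeping in \eqref{eq:limdist3} to \citet[Prop.~3, step~5]{xiong2023large}, whereas the step you treat as routine---showing that the $(F,W)$-conditional covariance in \eqref{eq:limdist1} converges in probability as $T\to\infty$---is where it spends the most effort, via a Chebyshev argument that bounds fourth-order factor covariances $\cov(F_{s,u}F_{s,a}F_{t,b}F_{t,v},\,F_{s',u}F_{s',a'}F_{t',b'}F_{t',v})$ using the Gaussian VAR structure and the summability in Lem.~\ref{lem:A_sum}.
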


Now we provide proofs of the stated lemmata.

\subsection{Proof of Lemma \ref{lem:A_sum}: Power norm bounds under spectral radius condition}
This lemma uses Gelfand's formula from matrix algebra \citep[Cor. 5.6.14]{horn2012matrix} that states the following: $\rho(A) = \lim_{n\rightarrow\infty} \|A^n\|^{1/n}$. Thus for any $\epsilon > 0$, there exists $N(\epsilon) \in \NN$ such that 
$$ \left| \|A^n\|^{1/n} - \rho(A) \right| < \epsilon~~ \text{ for } n > N(\epsilon) \quad \implies \quad \|A^n\| < (\rho(A) + \epsilon)^n ~~ \text{ for } n > N(\epsilon).$$
Choose $\epsilon = \epsilon(A) \triangleq (1 - \rho(A))/2$ and define $\tilde\rho(A) \triangleq (1 + \rho(A))/2 < 1$. We also denote $\underline{N}\triangleq N\left(\epsilon(A)\right)$. Therefore,
$$ \|A^n\| < \tilde\rho(A)^n~~~ \text{ for } n > \underline{N}. $$
Hence,
\begin{align*}
    \sum_{n=0}^\infty \|A^n\| = \sum_{n \le \underline{N}} \|A^n \| + \sum_{n \le \underline{N}} \|A^n \|
    < & \sum_{n \le \underline{N}} \|A^n \| + \sum_{n > \underline{N}} \tilde \rho(A)^n\\
    = & \sum_{n \le \underline{N}} \|A^n \| + \frac{\tilde\rho(A)^{\underline{N} + 1}}{1 - \tilde\rho(A)} < \infty,
\end{align*}

since the first term is a finite sum, and for the the second term $\tilde\rho(A) < 1$. Similarly,
$$ \sum_{n=\underline{N}}^\infty n \|A^n\| < \frac{\tilde\rho(A)^{\underline{N}}}{1 - \tilde \rho(A)} \left(\underline{N}(1 - \tilde \rho(A)) + \tilde\rho(A) \right) < \infty. $$

Hence the proof is done. \pfend

\subsection{\texorpdfstring{Proof of Lem. \ref{lem:imply_1}: \asn \ref{asn:var1_factors}, \ref{asn:factor} and \ref{asn:observation} $\implies$ \asn \ref{gasn:factor_model}}{Proof of Lem. \ref{lem:imply_1}: \asn \ref{asn:var1_factors}, \ref{asn:factor} and \ref{asn:observation} implies \asn \ref{gasn:factor_model}}} \label{pf:imply1}

From \asn \ref{asn:var1_factors}, $\rho(A) < 1$. Lem. \ref{lem:A_sum} implies that $F_t$ has the following moving average (MA) representation 
\begin{equation}\label{eq:ma_rep}
    F_t = \sum_{j=0}^\infty A^j \eta_{t-j}.
\end{equation}

Additionally we can bound the fourth moment by Minkowski's inequality as follows
$$ (\EE\|F_t\|_2^4)^{1/4} = \left( \EE \left[ \bigg \|\sum_{j=0}^\infty A^j \eta_{t-j}\bigg\|_2^4\right] \right)^{1/4} = \sum_{j=0}^\infty \left(\EE\left[\|A^j \eta_{t-j}\|_2^4\right]\right)^{1/4} \le \sum_{j=0}^\infty \left(\|A^j\| \EE\left[\|\eta_{t-j}\|_2^4\right]\right)^{1/4} \le M \sum_{j=0}^\infty \|A^j\|. $$

Lem. \ref{lem:A_sum} implies that the infinite sum converges. Hence there exists $M > 0$ such that $\EE[\|F_t\|_2^4] \le M$. \asn \ref{asn:factor} also poses $\EE[\|\Lambda_i\|_2^4] \le M$ for every $i\in [N]$ that is satisfied by \asn \ref{gasn:factor_model}.

Denote the autocovariance function of the factors $F_t$ as $\Gamma_F(h)\defeq \cov(F_t, F_{t-h})$. Then by \citet[Thm. 11.2.2 and Eq. 11.2.6]{brockwell1991time}, we have 
$$\frac{1}{T-h}\sum_{t=h+1}^T F_t F_{t-h}^\top = \Gamma_F(h) + \littleOP{T^{-1/2}},$$

Hence \eqref{eq:factor_second_moment} holds with  $\Sigma_F^{(h)} = \Gamma_F(h)$. As $\Lambda_i$ are $\iid$ in \asn \ref{asn:factor}, \eqref{eq:loading_second_moment0} holds. Additionally \eqref{eq:loading_second_moment1} holds under \eqref{eq:q_lower_bound} in \asn \ref{asn:observation} that completes the proof. \pfend

\subsection{\texorpdfstring{Proof of Lem. \ref{lem:imply_3}:  \asn \ref{asn:idiosyncratic} $\implies$ \asn \ref{gasn:idio}}{Proof of Lem. \ref{lem:imply_3}: \asn \ref{asn:idiosyncratic} implies \asn \ref{gasn:idio}}} \label{pf:F}

\asn \ref{asn:idiosyncratic} directly implies $\EE[\varepsilon_{it}] = 0$ and $\EE|\varepsilon_{it}|^8 \le M$. For $s\ne t$, $\gamma(s, t) = |\cQ_{s,t}|^{-1}\EE[\varepsilon_t^\top \varepsilon_s] = 0$. Furthermore,
$\gamma(s, s) = |\cQ_{s,s}|^{-1}\EE[\varepsilon_s^\top \varepsilon_s] = N^{-1}  \sum_{i=1}^N \EE[\varepsilon_{is}^2] = \sigma_\varepsilon^2 < \infty$ by \asn \ref{asn:idiosyncratic}. In addition,  $\sum_{s=1}^T \gamma(s, t) =\gamma(s,s) = \sigma_\varepsilon^2$ which is finite. For fixed $s, t \in [T]$, denote $v_i^{(s,t)} \defeq \left|\varepsilon_{i,s} \varepsilon_{i,t} - \EE(\varepsilon_{i,s}\varepsilon_{i,t})\right|$. Therefore $\EE[v_i^{(s,t)}] = 0$ and $\EE[({v_i^{(s,t)}})^4] < \infty$ by \asn \ref{asn:idiosyncratic}. We also have the following
\begin{align*}
    \EE\left[\left| \frac{1}{\sqrt{|\cQ_{s,t}|}}\sum_{i \in \cQ_{s,t}} [\varepsilon_{i,s} \varepsilon_{i,t} - \EE(\varepsilon_{i,s}\varepsilon_{i,t}) ] \right|^4\right]
    = &~~ \frac{1}{|\cQ_{s,t}|^2} \EE\left(\sum_{i \in \cQ_{s,t}} v_i^{(s,t)} \right)^4\\
    = &~~ \frac{1}{|\cQ_{s,t}|^2}\left( |\cQ_{s,t}| \EE\left[\left(v_1^{(s,t)}\right)^4 \right] + 3 |\cQ_{s,t}|(|\cQ_{s,t}| - 1)\EE\left[\left(v_1^{(s,t)}\right)^2\right]^2 \right)\\
    \le &~~ |\cQ_{s,t}|^{-1} M + 3(1 - |\cQ_{s,t}|^{-1}) M \le 3M.
\end{align*}
Hence the proof is done. \pfend

\subsection{\texorpdfstring{Proof of Lemma \ref{lem:imply_4}: \asn \ref{asn:var1_factors} to \ref{asn:independence} $\implies$ \asn \ref{gasn:moment}}{Proof of Lemma G: \asn \ref{asn:var1_factors} to \ref{asn:independence} implies \asn \ref{gasn:moment}}}\label{pf:imply_4}
Next we show that \eqref{eq:moment_0} holds. We have the following
\begin{align*}
    \EE\left[\left\| \frac{1}{\sqrt{|\cQ_{s,t}|}}\sum_{i\in \cQ_{s,t}} \Lambda_i \varepsilon_{i,t} \right\|_2^2\right] = &~~ \frac{1}{|\cQ_{s,t}|}\EE\left[\left(\sum_{i \in \cQ_{s,t}} \Lambda_i^\top \varepsilon_{i,t} \right) \left(\sum_{i \in \cQ_{s,t}} \Lambda_i \varepsilon_{i,t} \right) \right]\\
    = &~~ \frac{1}{|\cQ_{s,t}|}\EE\left[ \sum_{i,j\in \cQ_{s,t}} \Lambda_i^\top \Lambda_j \varepsilon_{i,t} \varepsilon_{j,t} \right]\\
    = &~~\frac{1}{|\cQ_{s,t}|}\cdot |\cQ_{s,t}|\ \EE[\|\Lambda_1\|_2^2] \EE[\varepsilon_{1,t}^2] ~ \le~ \sigma_\varepsilon^2 M,
\end{align*}

where we use \asn \ref{asn:factor} and \ref{asn:idiosyncratic} in the last step. Next we show that \eqref{eq:moment_1} holds. We additionally have
\begin{align}
    & \EE \left[\left\| \frac{1}{\sqrt{T}} \sum_{t=1}^T F_t \frac{1}{\sqrt{|\cQ_{s,t}|}} \sum_{i \in \cQ_{s,t}} (\varepsilon_{i,s}\varepsilon_{i,t} - \EE[\varepsilon_{i,s}\varepsilon_{i,t}]) \right\|_2^2\right] \nonumber \\
    = &\frac{1}{T} \EE\left[ \left( \sum_{t=1}^T F_t^\top \frac{1}{\sqrt{|\cQ_{s,t}|}} \sum_{i \in \cQ_{s,t}} v_i^{(s,t)} \right)\left( \sum_{t=1}^T F_t \frac{1}{\sqrt{|\cQ_{s,t}|}} \sum_{i \in \cQ_{s,t}} v_i^{(s,t)} \right) \right] \nonumber\\
    = & \frac{1}{T}\sum_{t,t'=1}^T \EE[F_t^\top F_{t'}]\cdot \frac{1}{\sqrt{|\cQ_{s,t}||\cQ_{s,t'}|}}~ \EE\left[\sum_{i \in \cQ_{s,t},~ j \in \cQ_{s,t'}} v_i^{(s,t)} v_i^{(s,t')} \right] \label{eq:cross_error_terms}
\end{align}

Under the general assumption \ref{gasn:factor_model}, there exists $M > 0$ such that $\EE[F_t^\top F_{t'}]\le \EE[\|F_t\|_2^2]\le M$. We denote 
$A_{s,t,t'} \defeq \EE[\sum_{i \in \cQ_{s,t}, j \in \cQ_{s,t'}} v_{i,s,t} v_{i,s,t'}]$. We consider the following cases:

\textbf{Case 1.} If $t = t' = s$,
\begin{align*}
 A_{s,t,t'} = & \EE\sum_{i,j \in \cQ_{ss}} (\varepsilon_{is}^2 - \sigma_\varepsilon^2)(\varepsilon_{js}^2 - \sigma_\varepsilon^2)\\
 = & \EE\left[ \sum_{i=1}^N (\varepsilon_{is}^2 - \sigma_\varepsilon^2) \right]^2 = \EE\left[ \sum_{i=1}^N (\varepsilon_{is}^2 - \sigma_\varepsilon^2)^2 \right] \le NM'
\end{align*}
for some $M' > 0$.

\textbf{Case 2.} If $t = t' \ne s$, 
$$A_{s,t,t'} = \sum_{i,j\in \cQ_{s,t}} \EE[\varepsilon_{i,s}\varepsilon_{j,s}] \EE[\varepsilon_{i,t}\varepsilon_{j,t}] = |\cQ_{s,t}| \sigma_\varepsilon^4. $$

\textbf{Case 3.} If $t=s,t'\ne s$, 
$$A_{s,t,t'} = \EE\left[ \sum_{1\le i \le N, j\in \cQ_{s,t'}} (\varepsilon_{i,s}^2 - \sigma_\varepsilon^2) \varepsilon_{j,s}\varepsilon_{j,t'} \right] = 0.$$
\textbf{Case 4.} If $t\ne s, t'\ne s$, 
$$A_{s,t,t'} = \EE\left[ \sum_{i\in \cQ_{s,t}, j \in \cQ_{s,t'}} \varepsilon_{is}\varepsilon_{i,t}\varepsilon_{j,s}\varepsilon_{j,t'} \right] = 0.$$

Combining all cases in \eqref{eq:cross_error_terms}, we obtain the following for some $M$ and $M' >0$:
\begin{align*}
 \EE \left[\left\| \frac{1}{\sqrt{T}} \sum_{t=1}^T F_t \frac{1}{\sqrt{|\cQ_{s,t}|}} \sum_{i \in \cQ_{s,t}} (\varepsilon_{i,s}\varepsilon_{i,t} - \EE[\varepsilon_{i,s}\varepsilon_{i,t}]) \right\|_2^2\right]
 \le &~~  \frac{1}{T} M\cdot \frac{1}{N}\cdot M'N + \frac{1}{T} M \cdot \sigma_\varepsilon^4 (T-1) \\
 \le &~~ \frac{1}{T} M [M' + \sigma_\varepsilon^4(T-1)]\\
 \le &~~ M [M' + \sigma_\varepsilon^4].
\end{align*}

Finally we show that \eqref{eq:moment_2} holds by the mutual independence \asn \ref{asn:independence} as follows:
\begin{align*}
    & \EE\left[\left\|\frac{1}{\sqrt{T}} \sum_{t=1}^T \frac{1}{\sqrt{|\cQ_{s,t}|}} \sum_{i \in \cQ_{s,t}} \Lambda_i F_t^\top \varepsilon_{i,t}\right\|^2\right] \\
    \le &~~ \EE\left[\left\| \frac{1}{\sqrt{T}} \sum_{t=1}^T \frac{1}{\sqrt{|\cQ_{s,t}|}} \sum_{i \in \cQ_{s,t}} \Lambda_i F_t^\top \varepsilon_{i,t} \right\|_F^2 \right] \\
    = &~~ \EE\left[ \text{Tr}\left\{ \left(\frac{1}{\sqrt{T}} \sum_{t=1}^T \frac{1}{\sqrt{|\cQ_{s,t}|}} \sum_{i \in \cQ_{s,t}} \Lambda_i F_t^\top \varepsilon_{i,t}\right) \left(\frac{1}{\sqrt{T}} \sum_{t=1}^T \frac{1}{\sqrt{|\cQ_{s,t}|}} \sum_{i \in \cQ_{s,t}} F_t \Lambda_i^\top \varepsilon_{i,t}\right) \right\} \right]\\
    = &~~ \text{Tr}\left\{ \frac{1}{T} \sum_{t, t'=1}^T \frac{1}{\sqrt{|\cQ_{s,t}||\cQ_{s,t'}|}} \sum_{i \in \cQ_{s,t}, j \in \cQ_{s,t'}} \EE[\|F_t\|_2^2] \Sigma_\Lambda \EE(\varepsilon_{i,t}\varepsilon_{j,t'}) \right\}\\
    \le &~~ \frac{1}{T}\times T \times \text{Tr}(\Sigma_\Lambda)\EE[\|F_t\|_2^2] \sigma_\varepsilon^2\ \le\ r^2\|\Sigma_\Lambda\|~ \EE[\|F_t\|_2^2]~ \sigma_\varepsilon^2.
\end{align*}

By \asn \ref{asn:factor} and \ref{asn:idiosyncratic}, there exists $M > 0$ such that $\|\Sigma_\Lambda\| \le M$, and $\sigma_\varepsilon^2 \le M$. Additionally, \asn \ref{gasn:factor_model} implies that $\EE[\|F_t\|_2^2] \le M$. Since $r$ is fixed, we are done with the proof. \pfend

\subsection{\texorpdfstring{Proof of Lemma \ref{lem:imply_5}:  \asn \ref{asn:var1_factors} to \ref{asn:time_limit_observation} $\implies$ \asn \ref{gasn:limdist}}{Proof of Lemma \ref{lem:imply_5}:  \asn \ref{asn:var1_factors} to \ref{asn:time_limit_observation} implies \asn \ref{gasn:limdist}}}\label{pf:imply_5}

\eqref{eq:lim_FFW} in \asn \ref{asn:time_limit_observation} is already implied in \asn \ref{gasn:limdist}. We now show the asymptotic distributions in \asn \ref{gasn:limdist} hold under the simple conditions.

\subsubsection{Derivation of \eqref{eq:limdist1}}

First we denote the sigma-algebra ${\cal F}_T \defeq \sigma\left(\{F_t\}_{t\in [T]}, \{\alpha_{t,T}\}_{t\in [T]}, \{\beta_{s,T,t,T}\}_{s,t\in [T]} \right)$  where $\alpha_{s,t}$ and $\beta_{s,t,s',t'}$ are defined for $s,t,s',t'\in [T]$ in \asn \ref{asn:observation}. For $s\in [T]$, 
$$Z_{s,T} \defeq \frac{1}{\sqrt{|\cQ_{s,T}|}}\sum_{i \in \cQ_{s,T}} \Lambda_i \varepsilon_{i,T} = \frac{1}{\sqrt{|\cQ_{s,T}|}} \sum_{i=1}^N W_{i,s} W_{i,T} \Lambda_i \varepsilon_{i,T}.$$ 
Using \textit{Lindeberg's central limit theorem}, we show asymptotic normality of $Z_{s,T}$ stably on ${\cal F}_T$ as $N\to \infty$. We verify that the conditions of Lindeberg's CLT are satisfied:

\begin{enumerate}
\item \textit{Zero mean.}~ Using independence of $\Lambda_i$'s and $\varepsilon_{i,T}$'s in \ref{asn:independence},
\begin{equation}\label{eq:Z_zero_exp}
    \EE[Z_{s,T}\mid {\cal F}_T] = \sum_{i\in \cQ_{s,T}} \EE\left[\frac{1}{\sqrt{|\cQ_{s,T}|}}\Lambda_i\varepsilon_{i,T} \bigg| {\cal F}_T\right] = \sum_{i\in \cQ_{s,T}} \frac{1}{\sqrt{|\cQ_{s,T}|}}\EE[\Lambda_i] \EE[\varepsilon_{i,T}] = 0.
\end{equation}
\item \textit{Bounded covariance matrix.}~ The covariance matrix is 
$$\cov(Z_{s,T}\mid {\cal F}_T) = \sum_{i \in \cQ_{s,T}}\cov\left(\frac{1}{\sqrt{|\cQ_{s,T}|}}\Lambda_i\varepsilon_{i,T}\bigg| {\cal F}_T\right) = \cov(\Lambda_i \varepsilon_{i,T}) = \sigma_\varepsilon^2 \Sigma_\Lambda, $$ 
which is bounded with $T$.
\item \textit{Lindeberg's condition.}~ Under \asn \ref{asn:observation} and for any $\delta > 0$, 
$$\indic{\frac{1}{\sqrt{|\cQ_{s,T}|}} W_{i,s} W_{i,T} \|\Lambda_i \varepsilon_{i,T}\|_2 > \delta } \le \indic{\|\Lambda_i \varepsilon_{i,T}\|_2 > \delta\sqrt{\underline{q} N}}.$$

In addition, there exists $M > 0$ such that
$$ \sum_{i=1}^N \pr\left(\|\Lambda_i \varepsilon_{i,T}\|_2 > \delta\sqrt{\underline{q} N}\right) \le \sum_{i=1}^N \frac{\EE[\|\Lambda_i \varepsilon_{i,T}\|_2^2]}{\delta^2 \underline{q} N} = \frac{\EE[\|\Lambda_i\|_2^2 \sigma_\varepsilon^2]}{\delta^2\underline{q}^2} < \infty, $$
where the first inequality follows from \textit{Markov inequality}. Therefore, \textit{Borel-Cantelli Lemma} implies 
$$\indic{\frac{1}{\sqrt{|\cQ_{s,T}|}}W_{i,s} W_{i,T}\|\Lambda_i \varepsilon_{i,T}\|_2 > \delta} \xrightarrow{\text{a.s}} 0.$$ 

Further using \textit{Dominated Convergence Theorem}, we obtain the following
$$ \sum_{i=1}^N \EE\left[ \frac{1}{|\cQ_{s,T}|} W_{i,s} W_{i,T} \|\Lambda_i \varepsilon_{i,T}\|^2 \indicator\left\{\frac{1}{\sqrt{|\cQ_{s,T}|}}W_{i,s} W_{i,T}\|\Lambda_i \varepsilon_{i,T}\|_2 > \delta \right\} \right] \rightarrow 0. $$
\end{enumerate}

Therefore, Lindeberg's CLT holds and using \asn \ref{asn:observation} and \ref{asn:time_limit_observation} we have the following for $N\rightarrow \infty$:
$$Z_{s,T} \xrightarrow{d} \normal{0}{\sigma_\varepsilon^2 \Sigma_\Lambda},\quad \sqrt{{\frac{N}{|\cQ_{s,T}|}}}Z_{s,T} \xrightarrow{d} \normal{0}{\frac{1}{\alpha_{s,T}}\sigma_\varepsilon^2\Sigma_{\Lambda}} \quad \text{stably on } {\cal F}_T,$$
and the vector $\left( \frac{\sqrt{N}}{|\cQ_{1,T}|}\sum_{i \in \cQ_{1,T}} \Lambda_i^\top \varepsilon_{i,T},\ldots, \frac{\sqrt{N}}{|\cQ_{T,T}|}\sum_{i \in \cQ_{T,T}} \Lambda_i^\top \varepsilon_{i,T} \right)^\top$
is jointly asymptotically normal stably on ${\cal F}_T$. For each $s, s'\in [T]$, the asymptotic covariance matrix of $\frac{1}{T}\sum_{s=1}^T F_s F_s^\top Z_{s,T}$ consists of the following terms

\begin{align*}
    \EE\left[ \sqrt{\frac{N}{|\cQ_{s,T}|}} Z_{s,T} \cdot \sqrt{\frac{N}{|\cQ_{s'T}|}} Z_{s'T}^\top \bigg| {\cal F}_T \right] = & \frac{N}{|\cQ_{s,T}||\cQ_{s',T}|}\sum_{i \in \cQ_{s',T}}\sum_{j \in \cQ_{s',T}} \EE\left[ \Lambda_i \varepsilon_{i,T} \Lambda_j^\top \varepsilon_{j,T} \right]\\
    = & \frac{N|\cQ_{s,T,s',T}|}{|\cQ_{s,T}| |\cQ_{s',T}|} \sum_{i \in \cQ_{s,T,s',T}} \EE[\Lambda_i \Lambda_i^\top] \EE[\varepsilon_{i,T}^2] \\
    = & U^{(s,s')} \sigma_\varepsilon^2 \Sigma_\Lambda,
\end{align*}
where $U^{(s,s')} \defeq \frac{N|\cQ_{s,T,s',T}|}{|\cQ_{s,T}| |\cQ_{s',T}|}$. From \asn \ref{asn:observation}, as $N\to \infty$
$$U^{(s,s')} \xrightarrow{\mathrm{a.s.}} \frac{\beta_{s,T, s',T}}{\alpha_{s,T} \alpha_{s',T}} \le \frac{1}{\alpha_{s,T}} \le \frac{1}{\underline{q}} < \infty,$$
and using \asn \ref{asn:time_limit_observation}
$$\frac{1}{T^2}\sum_{s = 1}^T\sum_{s'=1}^T U^{(s,s')}~~ \xrightarrow[N\to \infty]{\mathrm{a.s.}}~~ \frac{1}{T^2}\sum_{s = 1}^T\sum_{s'=1}^T \frac{\beta_{s,T, s',T}}{\alpha_{s,T} \alpha_{s',T}} ~~\xrightarrow[T\to \infty]{\pr}~~ \omega_1.  $$

Therefore,
\begin{align*}
    \cov\left( \frac{1}{T}\sum_{s=1}^T F_s F_s^\top \sqrt{\frac{N}{|\cQ_{s,T}|}} Z_{s,T} \bigg| {\cal F}_T \right) = \frac{1}{T^2}\sum_{s = 1}^T\sum_{s' = 1}^T \sigma_\varepsilon^2 U^{(s,s')} F_s F_s^\top \Sigma_\Lambda F_{s'} F_{s'}^\top
    \xrightarrow{\mathrm{a.s.}} {\cal D}_T
\end{align*}
as $N \to \infty$, where 
$$ {\cal D}_T \defeq \frac{\sigma_\varepsilon^2}{T^2}\sum_{s = 1}^T\sum_{s' = 1}^T \frac{\beta_{s,T, s',T}}{\alpha_{s,T} \alpha_{s',T}} F_s F_s^\top \Sigma_\Lambda F_{s'} F_{s'}^\top. $$

In order to show unconditional CLT of $\frac{1}{T}\sum_{s=1}^T F_s F_s^\top Z_{s,T}$, we invoke \textit{Chebychev's inequality} and show that ${\cal D}_T$ converges in probability. We denote
$V^{(s,s')} \defeq \text{vec}(F_s F_s^\top \Sigma_\Lambda F_{s'} F_{s'}^\top)$. Therefore for any $m \in [r^2]$,
\begin{equation}\label{eq:chebychev_factor}
\EE\left[\frac{1}{T^2}\sum_{s=1}^T \sum_{s'=1}^T U^{(s,s')}\left(V^{(s,s')}_{m} - \EE[V^{(s,s')}_{m}]\right) \right]^2
= \frac{1}{T^4} \sum_{s,s',t,t'} U^{(s,s')} U^{(t,t')} \cov\left(V^{(s,s')}_m, V^{(t,t')}_m\right),
\end{equation}

We handle the sum in \eqref{eq:chebychev_factor} case by case. When at least one of the two equalities $s = t$ and $s' = t'$ holds, the sum has at most $\bigO{T^3}$ terms with finite second moments. When $s \neq s', t\neq t'$, there exists $(u,v)\in [r]\times [r]$ depending on $m$ such that
\begin{align}
    \cov\left(V_m^{(s,s')}, V_m^{(t,t')}\right) = & \cov\left( F_{s,u} F_s^\top \Sigma_\Lambda F_t F_{t,v}, F_{s',u} F_{s'}^\top \Sigma_\Lambda F_{t'} F_{t',v}  \right)\nonumber \\
    = & \sum_{a,b,a',b'} (\Sigma_{\Lambda})_{a,b} (\Sigma_{\Lambda})_{a',b'} \cov\left( F_{s,u} F_{s,a} F_{t,b} F_{t,v}, F_{s',u} F_{s',a'} F_{t',b'} F_{t',v} \right). \label{eq:8moment}
\end{align}

By \asn \ref{asn:idiosyncratic} the errors $\eta_t$ are Gaussian, and by \asn \ref{asn:var1_factors} $F_t$ is a VAR(1) process. Therefore $\|\EE[F_s F_t^\top]\| = \bigO{\|A^{|s-t|}\|}$ and for $s\ne t$, $s'\ne t'$, we have $\cov\left(V_m^{(s,s')}, V_m^{(t,t')}\right) = \bigO{\|A^h\|}$, where $h \defeq \min\{|s-s'|, |t-t'|, |s-t'|, |s'-t|\}$. Fixing $s,t$ and summing \eqref{eq:8moment} we obtain
$$ \sum_{s',t': s\ne t, s'\ne t'} \cov\left(V_m^{(s,s')}, V_m^{(t,t')}\right) = \sum_{h=1}^T \bigO{Th} \bigO{\|A^h\|} < \bigO{T} \sum_{h = 1}^\infty h\|A^h\| = \bigO{T}, $$

where the last step follows from Lem. \ref{lem:A_sum}. Hence \eqref{eq:chebychev_factor} can be bounded by $T^{-4} \bigO{T^3}= \bigO{T^{-1}}$. Hence we apply Chebychev's inequality to obtain
${\cal D}_T \pconverge \omega_1 \sigma_\varepsilon^2 \Sigma_F \Sigma_\Lambda \Sigma_F$ as $T \to \infty$. Using tower property of conditional expectation and \eqref{eq:Z_zero_exp},
$$ \lim_{N, T\to \infty} \cov\left( \frac{1}{T}\sum_{s=1}^T F_s F_s^\top \sqrt{\frac{N}{|\cQ_{s,T}|}} Z_{s,T} \right) = \omega_1 \sigma_\varepsilon^2\Sigma_F \Sigma_{\Lambda} \Sigma_{F}. $$

Therefore,
$$ \frac{1}{T}\sum_{s=1}^T F_s F_s^\top \sqrt{\frac{N}{|\cQ_{sT}|}} Z_{sT} \xrightarrow{d} \normal{0}{\omega_1 \sigma_\varepsilon^2\Sigma_F \Sigma_{\Lambda} \Sigma_{F}}. $$

\subsubsection{Derivation of \eqref{eq:limdist2}}

Denote $Z_{i,t} \defeq W_{i,t} F_t$. By \asn \ref{asn:time_limit_observation}, $\frac{1}{T}\sum_{t=1}^T W_{i,t} F_t F_t^\top = \frac{1}{T}\sum_{t=1}^T Z_{i,t} Z_{i,t}^\top \pconverge \Sigma_{F,i}$. Additionally we denote, $U_{i,t,T} \defeq \frac{1}{\sqrt{T}} Z_{i,t} \varepsilon_{i,t}$, and the filtration, $\mathcal{F}_{i,t} \defeq \sigma(\{A_{i,s}\}_{s\in [t+1]}, \{\varepsilon_{i,s}\}_{s\in [t]})$. Therefore,
$$ \EE[U_{i,t,T} \mid \mathcal{F}_{i,t-1}] = \frac{1}{\sqrt{T}}Z_{i,t} \EE[\varepsilon_{i,t} \mid \mathcal{F}_{i,t-1}] = 0,  $$

since $\varepsilon_{i,t}$ is independent of the history until time $t-1$ and $Z_{i,t}$, and it has mean zero. Therefore $(U_{i,t,T}, \mathcal{F}_{i,t,T})_{t\in [T]}$ is a Martingale difference array in $\RR^r$. Additionally,

$$ \sum_{t=1}^T \EE[U_{i,t,T}~ U_{i,t,T}^\top \mid \mathcal{F}_{i,t-1}] = \sum_{t=1}^T \frac{1}{T} \sigma_{\varepsilon}^2 Z_{i,t} Z_{i,t}^\top \pconverge \sigma_{\varepsilon}^2 \Sigma_{F,i}. $$

Hence adapted to ${\cal F}_{i,t-1}$, the sum of $U_{i,t,T}$'s has finite second moment. Also, $\|U_{i,t,T}\|_2 = T^{-1/2} |\varepsilon_{i,t}|\cdot\|Z_{i,t}\|_2$. Therefore for any $\delta > 0$,
\begin{align*}
    & \sum_{t=1}^T \EE\left[\|U_{i,t,T}\|_2^2~ \indic{\|U_{i,t,T}\|_2 > \delta} \mid \mathcal{F}_{i,t-1}\right] \\ 
    \le~~ & \sum_{t=1}^T \frac{1}{T}\EE\left[|\varepsilon_{i,t}|^2 \|Z_{i,t}\|_2^2~ \indic{|\varepsilon_{i,t}|\cdot \|Z_{i,t}\|_2 \ge \sqrt{T}\delta }\right]\\
    \le~~ & \left( \EE[|\varepsilon_{i,t}|^4] \EE[\|Z_{i,t}\|_2^4] \right)^{1/2} ~~\left(\pr\left(|\varepsilon_{i,t}|\cdot \|Z_{i,t}\|_2 \ge \sqrt{T}\delta\right)\right)^{1/2},
\end{align*}

where the last step follows from Cauchy-Schwarz inequality. From \asn \ref{asn:var1_factors}, \ref{asn:factor} and \ref{asn:idiosyncratic}, $\varepsilon_{i,t}$ and $Z_{i,t}$ have finite fourth moments and are bounded in probability. Thus, 
$$\pr\left(|\varepsilon_{i,t}|\cdot \|Z_{i,t}\|_2 \ge \sqrt{T}\delta\right) \rightarrow 0, ~~~ \text{as }T\rightarrow \infty.$$ 

Therefore the \textit{Lindeberg condition} is satisfied. Using \textit{Martingale central limit theorem} \citep[Thm. 3.2]{hall2014martingale}, \ref{eq:limdist2} holds with $\Phi_i^\mathrm{obs} = \sigma_\varepsilon^2 \Sigma_{F,i}$.

\subsubsection{Derivation of \eqref{eq:limdist3}}

The proof proceeds analogously to step 5 in the proof of Proposition 3 in \citet{xiong2023large}. We denote
\begin{align*}
    V_{s,T} \defeq \frac{1}{|\cQ_{s,T}|}\sum_{i \in \cQ_{s,T}} \Lambda_i \Lambda_i^\top - \frac{1}{N}\sum_{i=1}^N \Lambda_i \Lambda_i^\top = & \frac{1}{\sqrt{N}}\Bigg[\left( \sqrt{\frac{N}{|\cQ_{s,T}|}} - \sqrt{\frac{|\cQ_{s,T}|}{N}} \right)\frac{1}{\sqrt{|\cQ_{s,T}|}}\sum_{i \in \cQ_{s,T}} \Lambda_i  \Lambda_i^\top\\
    & \hspace{30pt} - \sqrt{\frac{N - |\cQ_{s,T}|}{N}}\frac{1}{\sqrt{N - |\cQ_{s,T}|}}\sum_{i \in \cQ_{s,T}^c} \Lambda_i \Lambda_i^\top\Bigg].
\end{align*}

Therefore invoking \asn \ref{asn:factor} and \ref{asn:observation}, $\sqrt{N}\text{vec}(V_{s,T}) \xrightarrow{d} \left( \frac{1}{\sqrt{\gamma_s}} - \sqrt{\gamma_s} \right) Z_1 - \sqrt{1-\gamma_s}Z_2$, where $Z_1, Z_2\sim \normal{0}{\Theta_\Lambda}$ independent, with $\Theta_\Lambda = \EE[\text{vec}(\Lambda_i \Lambda_i^\top-\Sigma_\Lambda) (\text{vec}(\Lambda_i \Lambda_i^\top - \Sigma_\Lambda))^\top]$. Simplifying the variance expression,
$$ \sqrt{N}\text{vec}(V_{s,T}) \xrightarrow{d} \normal{0}{\left(\frac{1}{\gamma_s}-1\right)\Theta_{\Lambda}}. $$
Similar to \citet[Prop. 3]{xiong2023large}, the covariance term for $s,s'\in [T]$ is
$$ \lim_{T,N\rightarrow\infty}\cov(\sqrt{N}\text{vec}(V_{s,T}), \sqrt{N}\text{vec}(V_{s',T})) = \left( \frac{\gamma_{s,s'}}{\gamma_s \gamma_{s'}} - 1 \right)\Theta_{\Lambda}. $$

\eqref{eq:limdist3} consists of $J_TF_T$ where $J_T = \frac{1}{T}\sum_{s=1}^T F_s F_s^\top V_{s,T}$. Following the route of \citet[Prop. 3.1(b), step 5.2]{xiong2023large}, and invoking \asn \ref{asn:time_limit_observation} the sum has a limiting covariance term
$$\lim_{T, N \to \infty} \cov(\sqrt{N}\text{vec}(J_T)) = (\omega_1-1)(I_r \otimes \Sigma_F) \Theta_\Lambda (I_r \otimes \Sigma_F).$$

Using the stable convergence in law \citep[Thm. 6.1]{hausler2015stable}, $\sqrt{N} \Sigma_{J,T}^{-1/2} J_T F_T \xrightarrow{d} \normal{0}{I_r}$ where $\Sigma_{J,T} = (\omega_1-1)(F_T^\top \otimes \Sigma_F)\Theta_\Lambda (F_T\otimes \Sigma_F)$. Similarly the proof for asymptotic normality of $R_i$ is similar to \citet[Prop. 3.1(b), step 5.4]{xiong2023large} with the asymptotic covariance of $R_i u_i$ being $h_i(u_i)= (\omega_3 - 1) (\Sigma_{F,i} \otimes \Sigma_F) \Theta_{\Lambda}(\Sigma_{F,i} \otimes \Sigma_{F})$. Additionally the asymptotic covariance term between $J_T F_T$ and $R_i u_i$ being $g_{i,T}^\text{cov}(u_i)^\top = (\omega_2 - 1)(\Sigma_{F,i}\otimes \Sigma_F) \Theta_{\Lambda} (I_r\otimes \Sigma_F)$. Thus the proof is complete. \pfend

\section{Deferred details on the construction and identifiability of the forecast estimand}\label{sec:identifiability}

Under the VAR representation Assum. \ref{asn:var1_factors}, we express the factor at horizon $h$ as $$F_{T+h} = A^h F_T + \sum_{j=1}^h A^{h-j}\eta_{T+j}.$$
We denote $\mathcal{F}_T = \sigma(\{F_t : t \le T\})$ as the filtration encoded by the factor history. Conditioned on this filtration, $\EE[F_{T+h}\mid \mathcal{F}_T] = A^h F_T$ that is the best linear predictor of $F_{T+h}$ among all $\cal{F}_T$ measurable functions.

Now under \asn \ref{asn:idiosyncratic} and \ref{asn:independence},
\begin{align*}
    & \EE[Y_{i,T+h} \mid \sigma(\{\varepsilon_{i,t}: i \in [N], t\in [T]\} \cup \{\Lambda_i:i \in [N]\} \cup \cal{F}_T)]\\
    = &\EE[\Lambda_i^\top F_{T+h} + \varepsilon_{i,T+h} \mid  \sigma(\{\varepsilon_{i,t}: i \in [N], t\in [T]\} \cup \{\Lambda_i:i \in [N]\} \cup \cal{F}_T)]\\
    = &\EE[\Lambda_i^\top \EE[ F_{T+h}\mid \mathcal{F}_T ] \mid \{\Lambda_i: i \in [N]\}]\\
    = & \Lambda_i^\top A^h F_T = \thetaith,
\end{align*}
which matches the forecast estimand $\thetaith$ in \eqref{eq:forecast_target}, and is the best linear predictor of $Y_{i,T+h}$ conditioned on the latent dynamics until time $T$. In out algorithm \focus, we leverage the estimation procedure of the latent components to estimate $\thetaith$.

\paragraph{Identifiability under rotation} In \eqref{eq:factor_model_1}, factors and loadings are identifiable only up to a non-singular rotation. For any invertible $H$, the rotated representation yields the factors $G_t = H F_t$, loadings $(H^\top)^{-1}\Lambda_i$, and coefficient matrix $HAH^{-1}$. The forecast target remains invariant, since $\theta_{i,T:T+h}^{(H)} = \big((H^\top)^{-1}\Lambda_i\big)^\top (HAH^{-1})^h G_T = \thetaith$. Hence identification assumption determines $H$ but not affecting the forecast target.

\subsection{Construction of forecast estimand under moving average dynamics}
The forecast estimand can be calculated similarly for more complex factor dynamics. We illustrate with a zero mean VARMA(1, 1) process, which is a VAR(1) process augmented with a first order moving average (MA) component, as
$$ F_t = AF_{t-1} + \eta_t + B \eta_{t-1}, $$
where $B \in \RR^{r\times r}$ is the MA coefficient matrix with $\rho(B) < 1$. For $h \ge 1$,
$$ \EE[F_{T+1}\mid {\cal F}_T] = A^hF_T + A^{h-1}B \eta_T^*, $$
where $\eta_t^* \defeq \EE[\eta_t \mid {\cal F}_t]$ and can be recursively calculated as 
$$\eta_t^* = F_t - AF_{t-1} - \eta_{t-1}^*.$$ 
For practical purposes, we can estimate $\eta_0^*$ as $\tilde\eta_0 = 0$. Convexity of the conditional expectation and an algebra with the recursion yields a sequence of the factor innovations $\tilde \eta_t$ satisfying
\begin{equation}\label{eq:eta_diff}
    \EE[\|\eta_t^* - \tilde\eta_t\|_2] = \EE[\|(-B)^t \EE[\eta_0 \mid {\cal F}_T]\|_2] \le C \left(\frac{1+\rho(B)}{2}\right)^t \EE\|\eta_0\|_2,
\end{equation}
where we use Lem. \ref{lem:A_sum}. Hence for large enough $T$, $\EE[\|\eta_T^* - \tilde \eta_T\|_2] \to 0$ in \eqref{eq:eta_diff} and we can work with the forecast estimand
$$ \thetaith \defeq A^hF_T + A^{h-1}B \tilde \eta_T, ~~ \text{where } \tilde \eta_t = F_t - AF_{t-1} - \tilde \eta_{t-1}~ \text{with } \tilde \eta_0 = 0. $$
Similarly we can write the expressions of the forecast estimand for higher order VARMA processes that we also use to calculate the population target of DGP-3 in Sec. \ref{subsec:simulation}. The identifiability of the forecast estimand can also be ensured similar to the VAR(1) dynamics.

\section{Estimation of individual treatment effects using \focus}\label{sec:ite_estimation}
Factor model structure on $Y(1)$ and $Y(0)$, or additional structural assumption on the treatment effects enables forecasting appropriately defined individual future treatment effects, and average treatment effects. For example, the factor model 
\begin{equation}\label{eq:fixed_effects}
    Y_{i,t} = \tau_{i,t} W_{i,t} + \Lambda_i^\top F_t + \varepsilon_{i,t}.
\end{equation}
is a special case of an interactive fixed effects model \citep{bai2021matrix}, denoising the covarite effects. Two-way fixed effects model \citep{de2020two}, widely used in difference-in-differences estimator \citep{goodman2021difference, arkhangelsky2021synthetic} and similar econometric applications, is a special case of \eqref{eq:fixed_effects}. Forecasting requires structure on $\tau_{i,t}$, or equivalently on $Y(1)$ and $Y(0)$. Under fixed-effects model \eqref{eq:fixed_effects} or factor structure on $\tau_{i,t}$, 
$$Y_{i,t}(w) = \Lambda_i^\top(w) F_t(w) + \varepsilon_{i,t}$$
for $w = 0,1$ and the ranks of factor models satisfy $r(0) \le r(1)$. Conditioning on all latent variables up to time $T$, and with VAR(1) dynamics on factor process $F_t(w)$, the mean individual effect conditioned on information until time $T$ is
\begin{align*}
    & E[\tau_{i,T+h}\mid F(w), \Lambda(w), w = 0,1]\\
    = & E[Y_{i,T+h}(1) - Y_{i,T+h}(0) \mid F(w), \Lambda(w), w = 0,1]\\
    = & \Lambda_i(1)^\top A(1)^h F_T(1) - \Lambda_i(0)^\top A(0)^h F_T(0).
\end{align*}
We can now use \focus to separately estimate $\Lambda(w), F(w), A(w)$ for $w = 0,1$, and forecast the treatment effects. Under more complex factor dynamics, we can inherit the target treatment effect as the forecast estimand construction in Sec. \ref{sec:identifiability}.

\section{An almost sure guarantee of \asn \ref{asn:observation} under MCAR}\label{sec:mcar_as}

In this section we show that the condition \eqref{eq:q_lower_bound} in \asn \ref{asn:observation} regarding the observation $W$ holds with high probability when the observations are missing completely at random.

\begin{lemma}[MCAR]
    Suppose $W$ has MCAR observations. Then for $T = \bigO{e^N}$ and $N \to \infty$, \eqref{eq:q_lower_bound} holds almost surely.
\end{lemma}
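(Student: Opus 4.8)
The plan is to combine an exponential lower-tail bound for each $|\cQ_{s,t}|$ with a union bound over the $\bigO{T^2}$ column pairs and a Borel--Cantelli argument in $N$. First I would fix a probability space carrying the doubly-infinite array $(W_{i,t})_{i,t\ge 1}$ of i.i.d.\ $\Ber(p)$ variables and, for the $N\times T$ panel, write $|\cQ_{s,t}| = \sum_{i=1}^N W_{i,s}W_{i,t}$. For $s\ne t$ the summands are i.i.d.\ $\Ber(p^2)$, so $|\cQ_{s,t}| \sim \Bin(N,p^2)$ with mean $Np^2$; the diagonal terms $|\cQ_{s,s}|\sim\Bin(N,p)$ concentrate about the larger mean $Np>Np^2$ and are handled identically (indeed more easily). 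Fix any $\underline q \in (0,p^2)$. A Chernoff bound for the lower tail of a binomial then yields a constant $c = c(p,\underline q) > 0$, namely the Bernoulli relative entropy $c = D(\underline q \,\Vert\, p^2)$, such that $\pr(|\cQ_{s,t}| < N\underline q) \le e^{-cN}$ uniformly over all pairs.

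Next I would union-bound over the at most $T^2$ column pairs to get $\pr(\exists\, s,t:\ |\cQ_{s,t}| < N\underline q) \le T^2 e^{-cN} = e^{\,2\log T - cN}$. Under $T = \bigO{e^N}$ we have $2\log T - cN \le (2-c)N + \bigO{1}$, so choosing $\underline q$ small enough that $c > 2$ makes these failure probabilities summable in $N$. Borel--Cantelli then implies that, almost surely, only finitely many of the bad events occur, i.e.\ there is a (random, finite) $N_0$ beyond which $\min_{s,t}|\cQ_{s,t}| \ge N\underline q$ for all pairs simultaneously. This is precisely \eqref{eq:q_lower_bound}, which is what we want.

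The main obstacle is the rate balancing in the second step: the union bound inflates the failure probability by a factor $T^2$ that is itself exponential in $N$ when $T = \bigO{e^N}$, so the per-pair Chernoff exponent $c$ must be pushed above the growth exponent $2$ of $T^2$. This forces $\underline q$ to be taken near $0$, driving $c = D(\underline q \,\Vert\, p^2)$ toward its supremal value $\log\tfrac{1}{1-p^2}$, and thereby ties the admissible exponential growth rate of $T$ to the observation probability $p$. The remaining ingredients are routine: the binomial tail bound is a standard Chernoff estimate, and the Borel--Cantelli conclusion is immediate once summability is established. I would therefore present the argument with $\underline q$ taken small enough that $c>2$, noting explicitly that the allowable growth of $T$ is governed by the interplay between this Chernoff exponent and the number of pairs.
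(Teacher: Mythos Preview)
Your proposal is correct and follows the same three-step scheme as the paper: per-pair concentration for the binomial $|\cQ_{s,t}|$, a union bound over the $\bigO{T^2}$ pairs, and Borel--Cantelli in $N$. The only substantive differences are that the paper uses Hoeffding's inequality (exponent $2(p^2-\underline q)^2$) rather than your sharper Chernoff--KL bound, and that the paper simply writes the resulting geometric series as finite without addressing the rate-balancing condition you flag; your explicit discussion of how the admissible exponential rate of $T$ is tied to $p$ through the Chernoff exponent is more careful than what the paper provides.
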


\begin{proof}

We have $|\cQ_{s,t}|/N = \frac{1}{N}\sum_{i = 1}^N W_{i,s} W_{i,t}$ that is a sum of independent $\text{Bernoulli}(p^2)$ random variables. Hence we can apply Hoeffding's inequality \citep[Thm. 2.2.5]{vershynin2010introduction} for $\underline{q} \in (0, p^2) $ and $s, t \in [T]$ as follows
$$ \pr\left(|\cQ_{s,t}| < N\underline{q} \right) = \pr\left( \frac{|\cQ_{s, t}|}{N} < \underline{q} \right) \le \pr\left( \left|\frac{|\cQ_{s, t}|}{N} - p^2\right| > p^2 - \underline{q} \right) \le 2 e^{-2N(p^2 - \underline{q})^2}. $$

Union bound over $s,t \in [T]$ implies
\begin{equation}\label{eq:union_bound}
    \pr\left(|\cQ_{s,t}| < N\underline{q} \text{ for some } s,t \in [T] \right) \le 2T^2 e^{-2N(p^2 - \underline{q})^2} = 2 e^{-2[N(p^2 - \underline{q})^2 - \log T]}.
\end{equation}

Hence in the regime $T = \bigO{e^N}$, we have $\log T \le C N$ for some $C > 0$. Consequently,
$$ \sum_{N = 1}^\infty \pr\left(|\cQ_{s,t}| < N\underline{q} \text{ for some } s,t \in [T] \right) \le \sum_{N=1}^N e^{-2N[(p^2 - \underline{q})^2 - C]} < \infty. $$

Hence by first Borel-Cantelli lemma, $\pr(|\cQ_{s,t}| < N\underline{q} \text{ infinitely often for some } s, t ) = 0$. Equivalently, $|\cQ_{s,t}| \ge N\underline{q}$ almost surely as $N \to \infty$. Hence the proof is complete.
\end{proof}

\section{Asymptotic normality under unit root}\label{sec:unit_root}

We can extend the asymptotic guarantees of Thm. \ref{thm:asymptotic_normality} when $\rho(A)=1$, i.e. the factors have common stochastic trends \citep{phillips1988testing, bai2004estimating}. For clarity, we illustrate the argument for a random walk, i.e. $A=I_r$, and no missing entry. \asn \ref{gasn:factor_model} is then replaced by the law of iterted logarithm-type condition \citep[\asn A.2]{bai2004estimating} 
$$\liminf_{T\to\infty}\frac{\log\log T}{T^2}\sum_{t=1}^T F_tF_{t-h}^\top = D^{(h)}$$
for $h=0,1$ and positive definite $D^{(h)}$. Using this condition, we can show similar to Lem. \ref{lem:imply_2mp_autocov_rate} and \ref{lem:coeff},
$$\tilde\Gamma(h) = \bigOP{\frac{1}{\log\log T}} + \bigOP{\frac{T}{\sqrt{N}\log\log T}}.$$
Thus, if $N(\log\log T)^2/T\to 0$, the OLS estimator $\hat A=\hat\Gamma(1)\hat\Gamma(0)^{-1}$ is super-consistent \citep{phillips1988testing}:
$$T(\hat A - I_r) \overset{d}{\to} \Big(\int B_\eta B_\eta^\top\Big)^{-1}\Big(\int B_\eta dB_\eta^\top\Big),$$
where $B_\eta$ is an $r$-dimensional Brownian motion with covariance $\Omega_\eta \defeq \lim_{T\to\infty} T^{-1}\sum_{s,t}E[\eta_s\eta_t^\top]$. Using the same decomposition as Thm. \ref{thm:asymptotic_normality},
$$\hat\theta_{i,T:T+h}-\theta_{i,T:T+h}=\Lambda_i^\top(H^{-1}\hat A^h-I_r)F_T+\Lambda_i^\top(\hat F_T-HF_T)+(\hat\Lambda_i-(H^\top)^{-1}\Lambda_i)^\top F_T.$$
Next we can apply Slutsky’s theorem and a Taylor expansion of $\hat A^h$. The limits 
$F_T/\sqrt{T}\dconverge Z_\eta\sim N(0,\Omega_\eta)$ together with 
$T^{-1}\sum_{t\le T}F_t\varepsilon_{i,t}\dconverge \int B_\eta dB_\varepsilon^{(i)}$ \citep[\asn H]{bai2004estimating}, yields the following under $N/T\to\kappa$ and $N(\log\log T)^2/T\to 0$:
$$\sqrt{N}\,(\hat\theta_{i,T:T+h}-\theta_{i,T:T+h})\dconverge N(0,V_{i,t}) +
\sqrt{\kappa}\Big[h\Lambda_i^\top\Big(\int B_\eta B_\eta^\top\Big)^{-1} \int B_\eta dB_\eta^\top\, Z_\eta+Z_\eta^\top\Big(\int B_\eta B_\eta^\top\Big)^{-1}\int B_\eta dB_\varepsilon^{(i)}\Big],$$
where $V_{i,t}$ is the asymptotic variance as in \citep{bai2004estimating} and can be estimated by a HAC estimator. Hence inference for $\theta_{i,T:T+h}$ is feasible when $N/T\to 0$. We can extend this argument to the following generalization:
\begin{itemize}
    \item \textbf{Cointegrated factors}:~ Block-diagonalizing $A$ in canonical form, i.e. $A=S^{-1}\mathrm{diag}(I_{r_1},A_{2,2})S$ with invertible $S$ and $\rho(A_{2,2}) < 1$, yields super-consistency for the unit-root block and $\sqrt{T}$-consistency for the stable block.

    \item \textbf{Missing data}:~ \asn \ref{gasn:factor_model} and \ref{gasn:limdist} require modifications so that the Brownian-motion limits incorporate missing-entry contributions; the resulting adjustments are standard but omitted for brevity.
\end{itemize}

\section{Additional details of simulation studies in Sec. \ref{subsec:simulation}}\label{sec:simulation_plus}

In this section, we defer the experimental details of Sec. \ref{subsec:simulation}. The benchmarks mSSA and SyNBEATS implemented from the publicly available repositories \url{https://github.com/AbdullahO/mSSA} and \url{https://github.com/Crabtain959/SyNBEATS} respectively.

\textbf{Performance metric.}~ We evaluate $h$-step Mean Squared Forecast Error for first 32 rows
$$\mathrm{MSFE} = \frac{1}{32}\sum_{i=1}^{32}(\thetahatith - \thetaith)^2,$$
with $\thetahatith$ of different benchmarks. Due to the computational cost of SyNBEATS, we restrict the number of validation rows to 32 for ensuring computational feasibility across methods.

\textbf{Simultaneous adoption observation pattern}.~ We implement simultaneous adoption pattern in DGP-2 that is generated as follows. Unit-specific characteristics are generated as $X_i = \indic{\Lambda_i \ge 0}$. For the units with $X_i = 1$, 25\% randomly selected units have missing entries onward $\lceil 0.75 T \rceil$, and the remaining 75\% have all the entries observed. For the units with $X_i = 0$, 62.5\% randomly selected units have missing entries onward $\lceil 0.375 T \rceil$, and the remaining 37.5\% have all the entries observed.

\subsection{Data generative models} 
The outcomes are generated from the following factor model 
$$Y_{i,t} = \Lambda_i F_t + \varepsilon_{i,t},\quad \varepsilon_{i,t} \overset{\iid}{\sim}\normal{0}{0.1^2}.$$ 
We describe the four DGPs as follows--
\begin{enumerate}
    \item \textbf{DGP-1 (Purely autoregressive process).} 
    In this simulation setting, we set $r = 1$ and $\Lambda_i \overset{\iid}{\sim} \normal{0}{0.5^2}$. The factors are generated from an AR(1) process as $F_t = 0.5 F_{t-1} + \eta_t$ with $\eta_t \overset{\iid}{\sim} \normal{0}{(0.5)^2}$. The observations are MCAR with observation probability $0.7$.
    
    \item \textbf{DGP-2 (Autoregressive process with quadratic component).} The outcomes are generated from the one-factor model of DGP-1. The factors are generated as a sum of quadratic function $F_t^{(1)} = 2t^2/T^2$ and the AR(1) component $F_{t}^{(2)}$ as DGP-1, i.e. $F_t = F_t^{(1)} + F_t^{(2)}$. The observation mechanism is simultaneous adoption.
    
    \item \textbf{DGP-3 (ARMA process and quadratic component).} The outcomes are generated from the one-factor model of DGP-1. The quadratic component $F_t^{(1)}$ of the factors are generated similar to DGP-2. The autpregressive moving average component i.e. ARMA(1,1) has the following generative model:
    $$ F_t^{(2)} = 0.5 F_{t-1}^{(2)} - 0.4 F_{t-2}^{(2)} + 0.2 F_{t-2}^{(2)} + \eta_t + 0.5 \eta_{t-1}, $$
    with $\eta_t \overset{\iid}{\sim}\normal{0}{(0.7)^2}$. The observation mechanism is MCAR with observation probability $0.7$.

    \item \textbf{DGP-4 (VARMA process and periodic component).} In this simulation setting, we choose $r = 2$ for the outcome model and $\Lambda_{i,1}, \Lambda_{i,2} \overset{\iid}{\sim} \normal{0}{0.5^2}$. Each coordinate of the periodic component $F_t^{(1)}$ is generated in a similar spirit of \citet[App. B.1]{agarwal2020multivariate}, as mixture of two periodic components as follows:
    $$ p_t = 3\alpha \cos(10t/T) + 6\alpha \cos(20t/T), ~~ \text{and}~~ F_t^{(1)} = (p_t, p_t, \ldots, (r\text{ times}))^\top, $$
    where $\alpha > 0$ is a scalar constant. The stochastic part $F_t^{(2)}$ is generated from a VARMA(1,1) model as:
    \begin{equation}\label{eq:varma_dgp4}
        F_t^{(2)} = A F_{t-1}^{(2)} + \eta_t + B \eta_{t-1},~~  \eta_t \overset{\iid}{\sim}\normal{0}{(0.5)^2 I_r}.
    \end{equation}
    with the following choices of $A$ and $B$:
    $$ A = \begin{bmatrix}
        0.5 & 0.3\\
        -0.2 & 0.5
    \end{bmatrix},~~ B = \begin{bmatrix}
        0.3 & 0\\
        0 & 0.3
    \end{bmatrix}. $$
    
    \textbf{Choice of $\alpha$.} We tune $\alpha$ to control the signal-to-noise ratio (SNR) between $F_t^{(1)}$ and $F_t^{(2)}$. In \eqref{eq:varma_dgp4}, $\Sigma_F \defeq \var(F_t^{(2)}) = \sigma_\varepsilon^2 I_r + S$, with
    $$ C \defeq (A+B)^\top \Sigma (A+B)^\top,~~ \text{and}~~ S \defeq \sum_{\ell=0}^\infty A^\ell C (A^\ell)^\top $$
    satisfying the discrete Lyapunov equation
    $$ S = C + A S A^\top, ~~ \text{i.e.}~~ \vecop{S} = (I_{r^2} - A \otimes A)^{-1} \vecop{C}. $$
    Finally we set
    $$ \alpha = \left[ \frac{\frac{1}{r}\mathrm{trace}(\Sigma_F)}{\frac{1}{T-1} \sum_{t=1}^T \left(p_t - \overline{p}_t\right)^2 } \right]^{1/2},~~ \overline{p}_t \defeq \frac{1}{T} \sum_{t=1}^T p_t. $$
    
\end{enumerate}

\subsection{Implementation of \focus for DGP-2 to 4}

We implement \focus as described in Section~\ref{sec:method}, with model specifications adapted to the data-generating mechanisms of DGP-2 through DGP-4. For DGP-1 to DGP-3, the latent factor dynamics are modeled using autoregressive (AR) processes, with the lag order selected by AIC. For DGP-4, where the estimated noise component exhibits multivariate dependence, we instead fit a VAR(1) model, as detailed in Section~\ref{subsubsec:dgp4_details}. In addition, for DGP-2 to DGP-4 the estimated latent factors exhibit periodic behavior. Hence we employ an extended implementation of \focus that models deterministic periodic components.

\subsubsection{Tuning the penalized smoothing splines for DGP-2 and DGP-3} 
The penalized smoothing splines is a method for estimating functions with nonparametric regression \citep{green1993nonparametric, hastie2017generalized}. We implement the penalized spline with penalty tuned with 10-fold block-cross validation \citep{racine2000consistent,roberts2017cross} and block size $\lceil \sqrt{T} \rceil$ for selecting the spline penalty, with previous blocks training the temporally next blocks on a rolling basis.

\subsubsection{Extraction of periodicity in DGP-4}\label{subsubsec:dgp4_details}
We decompose the estimated factor process $\hat F_t \in \mathbb{R}^2$, $t = 1,\dots,T$ as follows,
$$\hat F_t = D_t + U_t,$$
where $D_t$ is a low-dimensional deterministic component and $U_t$ is a stochastic component. We model $D_t$ as
$$D_t = B^\top z_t,
\quad
z_t = \bigl(1,\ t,\ \{\cos(\omega t/T),\ \sin(\omega t/T)\}_{\omega \in \Omega}\bigr),$$
where $\Omega$ is a small set of candidate frequencies. Given $\Omega$, the coefficient matrix $B$ is estimated by ridge regression,
$$
\hat B(\Omega) = \arg\min_{B \in \RR^{(|\Omega|+2)\times r}} \sum_{t=1}^T \|\hat F_t - B^\top z_t\|_2^2 + \lambda \|B\|_F^2,$$
where $\|B\|_F = \sqrt{\sum_{i,j} B_{i,j}^2}$ denotes the Fr\"{o}benius norm of $B$. Candidate frequencies are obtained by first fitting a VAR(1) model to $\hat F_t$ and extracting residuals, followed by selecting the dominant peaks of their periodograms (mapped to the frequency range $\omega \in [\omega_{\min}, \omega_{\max}]$). Models with no frequency, one frequency, or two frequencies are considered, and the final frequency set $\hat\Omega$ is selected by minimizing a Gaussian BIC based on the residual sum of squares. Additionally, the ridge parameter $\lambda$ is fixed at a small value (e.g., $\lambda = 10^{-6}$) and serves only as a numerical regularization to avoid ill-conditioning of the deterministic design matrix; it is not tuned, since the complexity of the deterministic component is controlled via BIC-based selection of $\Omega$.

Given $\hat D_t = \hat B(\hat\Omega)^\top z_t$, we define $\hat U_t = \hat F_t - \hat D_t$ and fit a VAR(1) model and generate $h$-step-ahead forecasts $\hat U_{T+1:T+h}$. The final forecast is obtained by extrapolating the deterministic component and recombining,
$$\hat X_{T+1:T+h} = \hat D_{T+1:T+h} + \hat U_{T+1:T+h}.$$
This procedure yields a parsimonious decomposition that captures smooth periodic structure while preserving short-run dependence through the VAR component.

\subsection{Runtime comparison details}
We compare the runtime of \focus against mSSA and SyNBEATS. To establish a fair comparison, all methods are executed using CPU resources only, i.e. in particular, SyNBEATS is run without GPU acceleration. All experiments are conducted on the BioHPC cluster using up to 30 CPU cores, with identical resource constraints and replicates of input data across methods. Runtime is measured in seconds and includes model fitting and forecasting, excluding data generation and disk input and output. For robustness, we report median runtime across repeated runs, and quantify their dispersion using the median absolute deviation (MAD).

\section{Additional details of the HeartSteps case study in Sec. \ref{subsec:data}} \label{sec:heartsteps_add}

The HeartSteps data has 37 users and the maximum number of decision points across users is 315. Users were considered unavailable (i.e., not nudged) when driving, offline and in similar circumstances. The readers are referred to \url{https://github.com/klasnja/HeartStepsV1} for the data source, and \citet{heartstepsv1, liao2020personalized} for details.

\subsection{Preprocessing the data} 

We consider three variables in the data-- the binary variable \texttt{available} that indicates user availability at the underlying decision slot, the binary variable \texttt{send} that tracks whether a user was nudged with a notification, and \texttt{jbsteps30} that measures the number of steps accomplished by the user 30 minutes after sending the activity prompt. The outcome variable is $\log(1 + \texttt{jbsteps30})$, and the treatment variable is \texttt{available*nudge}.

For $i = 1,\ldots, 37$, the $i$\textsuperscript{th} row of the outcome matrix $Y$ and the observation matrix $W$ are constructed by stacking $\log(1 + \texttt{jbsteps30})$ and \texttt{available*sent} indicators for user $i$ at each quintuplets for a single decision day. Next, we discard user 31 from the analysis due to consistently low nudges across time points, resulting to $N = 36$. Out of the 315 total time points, the horizon $T$ for model training is considered in the range $[100, 200]$ in multiples of 10 for obtaining enough time points to extract the factor dynamics, as well as to rule out the last few intervention periods when many users discontinued the experiment.

\subsection{\focus on slot pair (4,5)} The factors are estimated with the PCA method of \citet{xiong2023large}. Since $T > N$ in most cases and interchanging the roles of $\hat\Lambda$ and $\hat F$ in Step \ref{item:step1} of \focus lowers the chance of $(|\cQ_{i,j}|)_{i,j\in [N]}$ being small or zero. Upon observing the scree plots of the PCA for most slices of the data, we choose the dimension $r=7$ to explain at east 80\% of the data for most choices of $T$.

\subsection{Expression of \focus estimator for HeartSteps}

For forecast horizon at $T = 5K,~ K \ge 2$, the estimated factors are $\{\hat F_t, t = 1,\ldots, 5K\}$. We select the estimated factors at slot $s \in \{4,5\}$ as $\hat F^{(s)}$, where 
$$\big(\hat F^{(s)}\big)^\top := \left\{\hat F_{s+5j}: j = 0, \ldots, K-1\right\}.$$ 

Next, we regress the estimated factors at slot $5$ on that of slot $4$ to obtain the estimated coefficient matrix as $$\hat A_{4\to 5} = \big( \hat F^{(5)}\big)^\top \hat F^{(4)} \left[ \big( \hat F^{(4)}\big)^\top \hat F^{(4)} \right]^{-1}.$$ 

This approach is similar to the VAR(1) coefficient matrix estimation in \eqref{eq:A_hat}. The 5-step forecast estimator for a user $i$ for $T = 5K$ is $\hat \theta_{i, T + 5} = \hat \Lambda_i^\top \hat A_{4\to 5} \hat F_{5K}$.

\subsection{Performance metric}

We compare \focus and mSSA using the Mean Squared Relative Prediction Error (MSRPE) for forecasting future step counts under intervention. Specifically, for each individual $i$, we predict the step count at horizon $T+5$ using observations up to time $T$, and evaluate performance only for individuals with non-zero realized steps. The MSRPE is defined as
$$ \mathrm{MSRPE}
= \frac{\sum_{i : Y_{i,T+5} > 0}
\bigl(\hat{\theta}_{i,T:T+5} - Y_{i,T+5}\bigr)^2 / Y_{i,T+5}^2}
{\#\{i : Y_{i,T+5} > 0\}}.$$
The steps exhibit substantial heterogeneity across individuals; scaling the squared error by $Y_{i,T+5}^2$ prevents the metric from being dominated by individuals with unusually large activity levels and yields a more robust comparison across methods.

\newpage
\section{Additional figures for experiments}

\begin{figure*}[!h]
    \centering
    \subfloat[]{\includegraphics[width=0.25\textwidth]{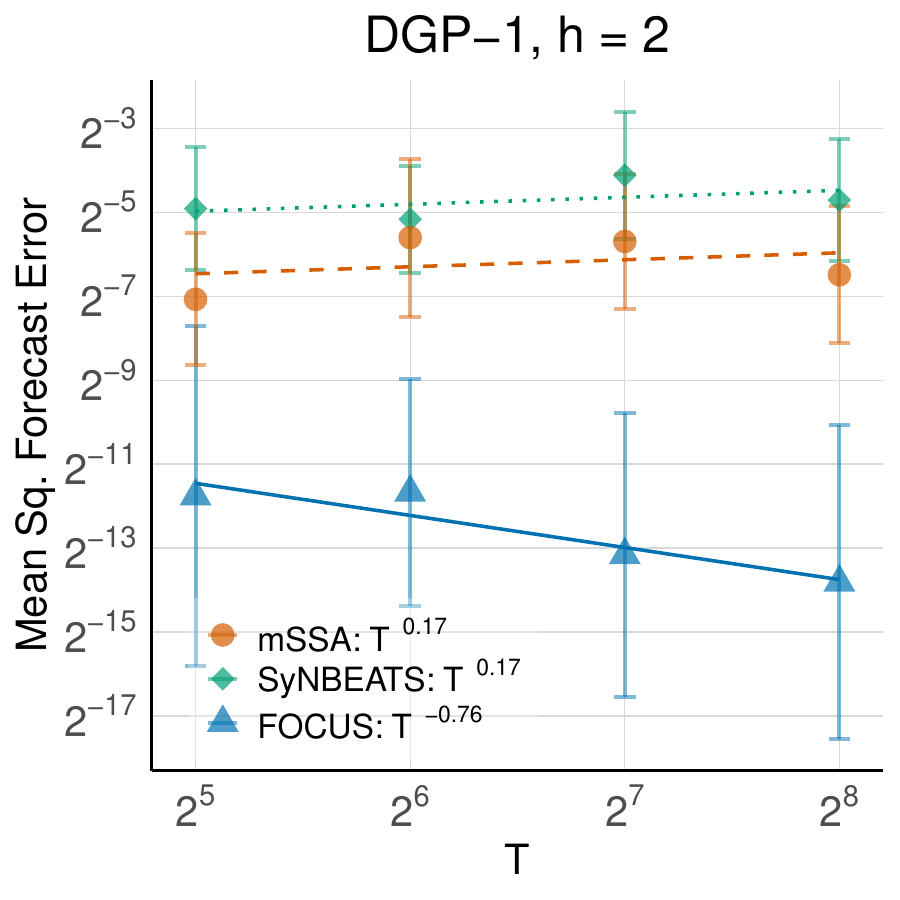}}
    \subfloat[]{\includegraphics[width=0.25\textwidth]{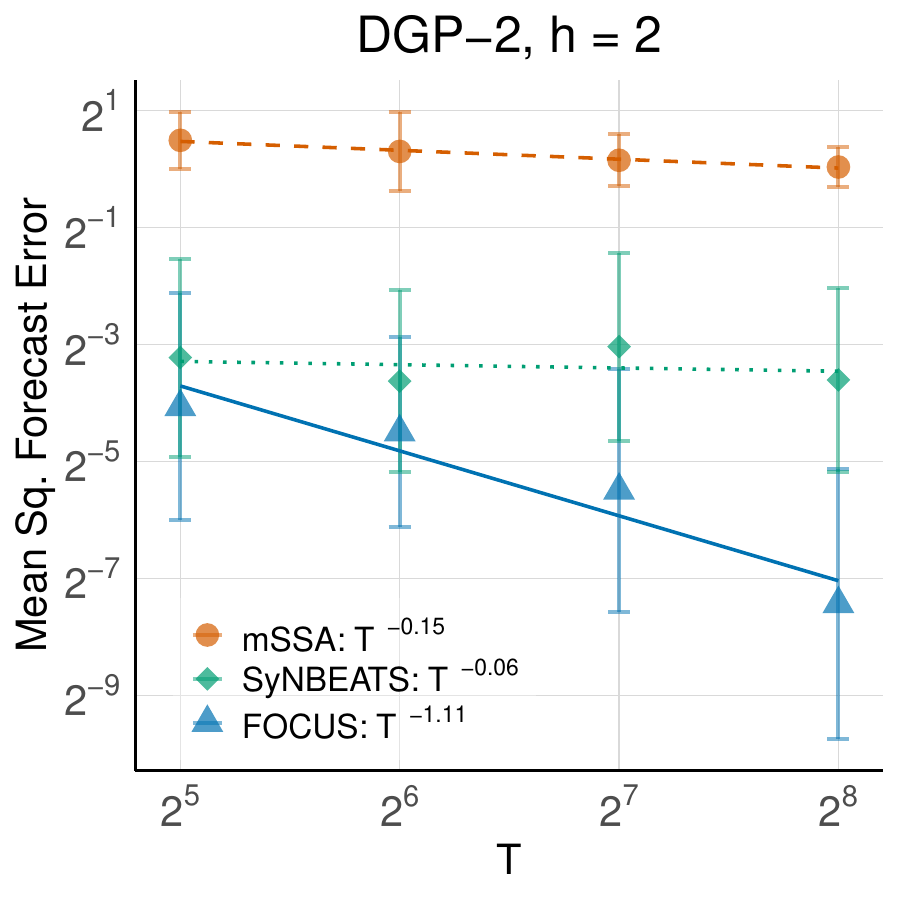}}
    \subfloat[]{\includegraphics[width=0.25\textwidth]{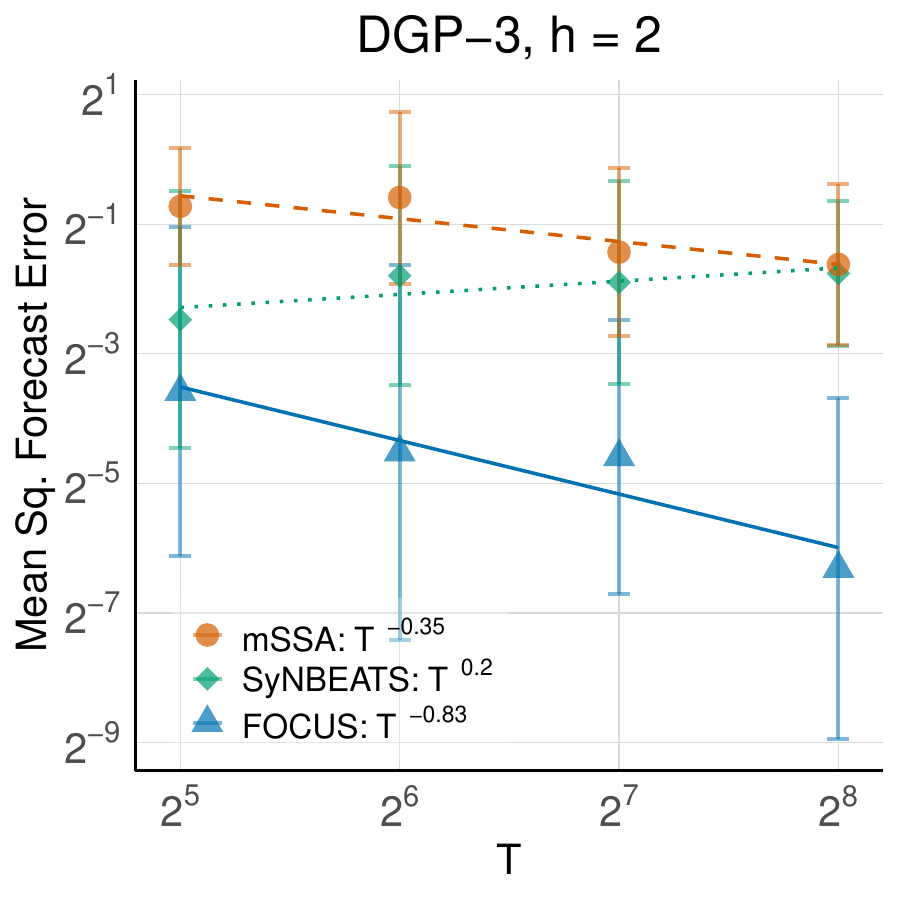}}
    \subfloat[]{\includegraphics[width=0.25\textwidth]{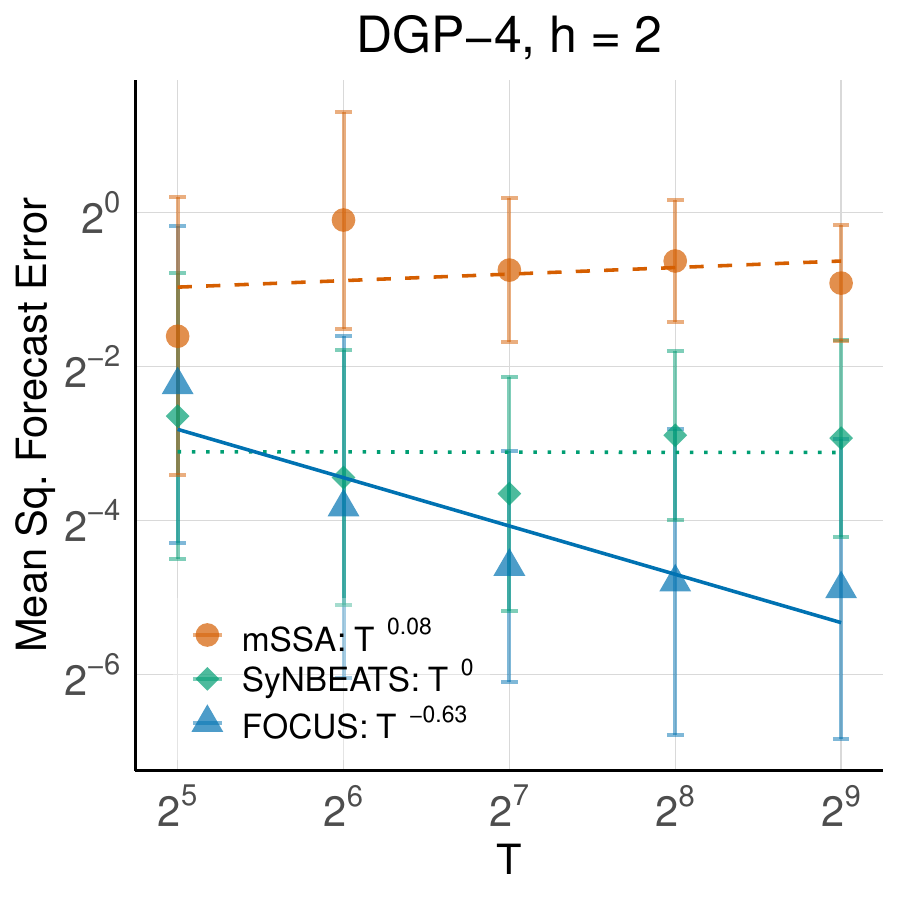}}\\
    \subfloat[]{\includegraphics[width=0.25\textwidth]{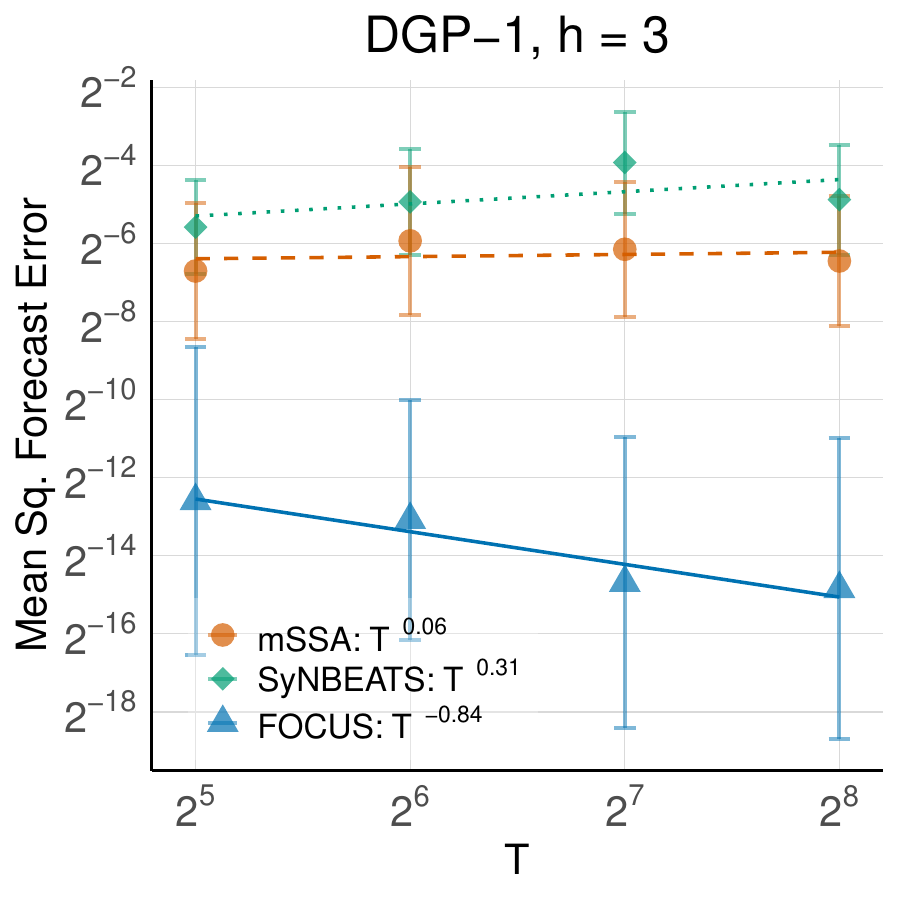}}
    \subfloat[]{\includegraphics[width=0.25\textwidth]{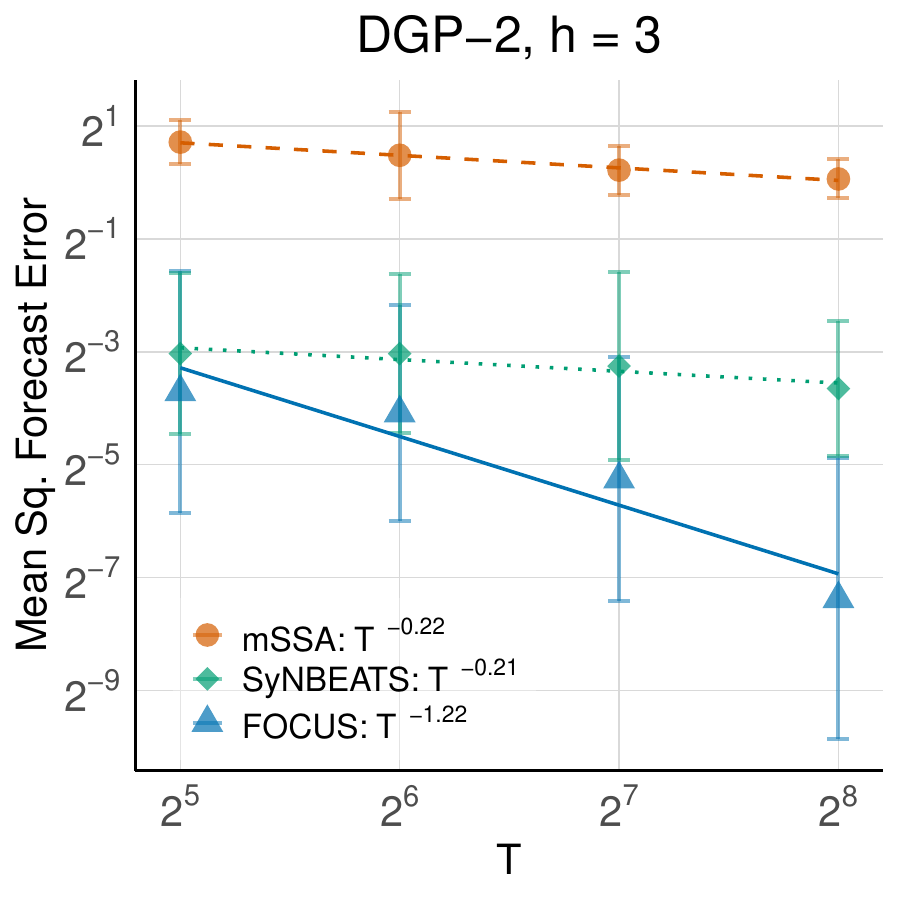}}
    \subfloat[]{\includegraphics[width=0.25\textwidth]{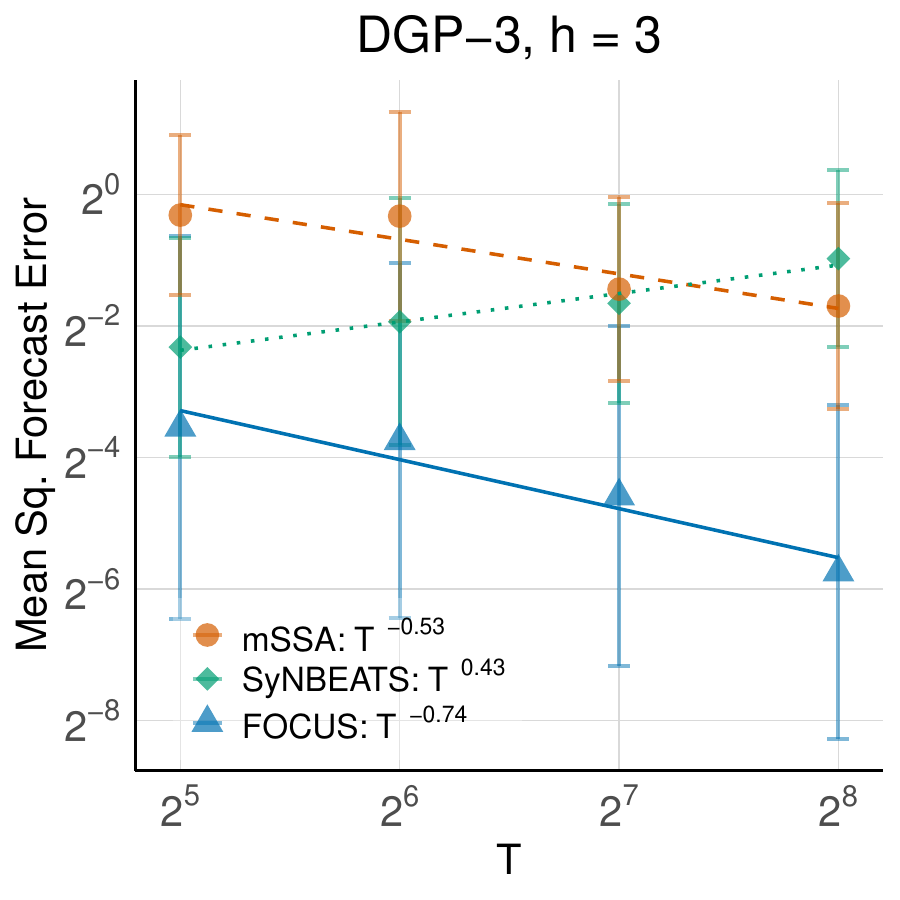}}
    \subfloat[]{\includegraphics[width=0.25\textwidth]{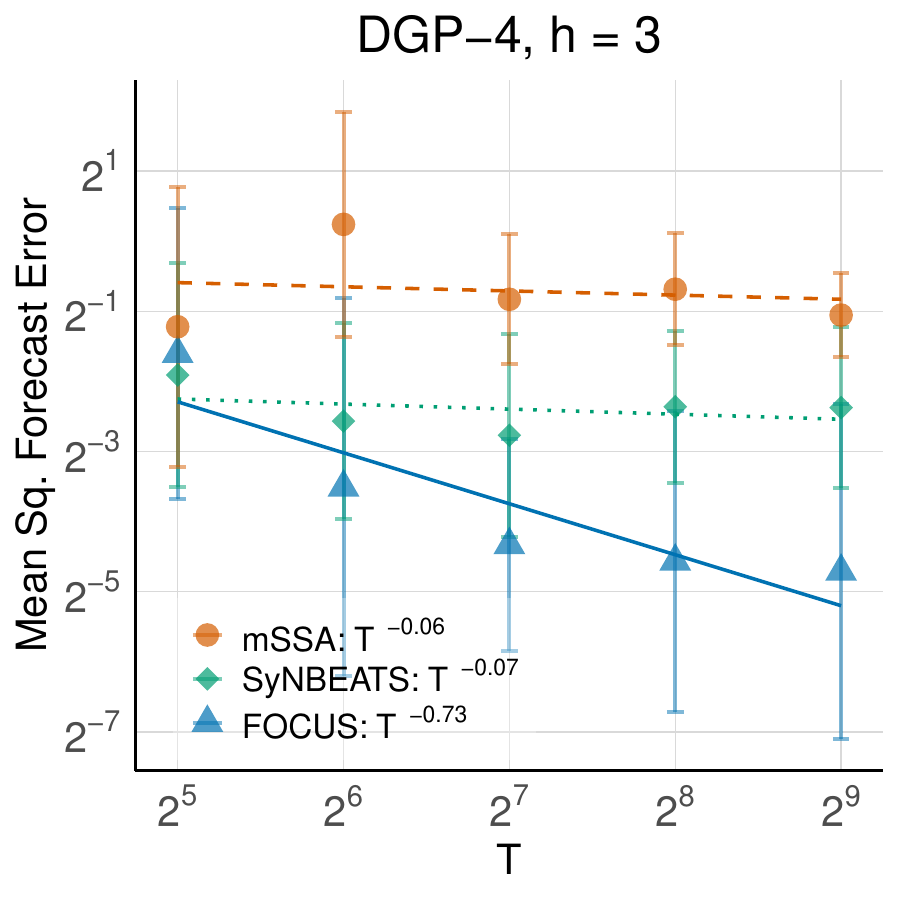}}
    \caption{\textbf{Mean Squared Forecast Error (MSFE, averaged over 30 trials) at $h = 2, 3$ across the benchmarks for $N = 64$ and 4 generative models.} Panels (a)-(h) present the average MSFE of \focus (blue triangle), mSSA (orange circle) and SyNBEATS (green diamond) across $T \in \{2^5,\ldots, 2^8\}$, and the vertical lines mark the one standard deviation error bars. As comapared to SyNBEATS and mSSA, \focus has lower average MSFE that decreases faster with $T$ (empirical rates in the legends).}
    \label{fig:err_vs_T_extra}
\end{figure*}

\begin{figure}[!h]
    \centering
    \includegraphics[width=0.55\linewidth]{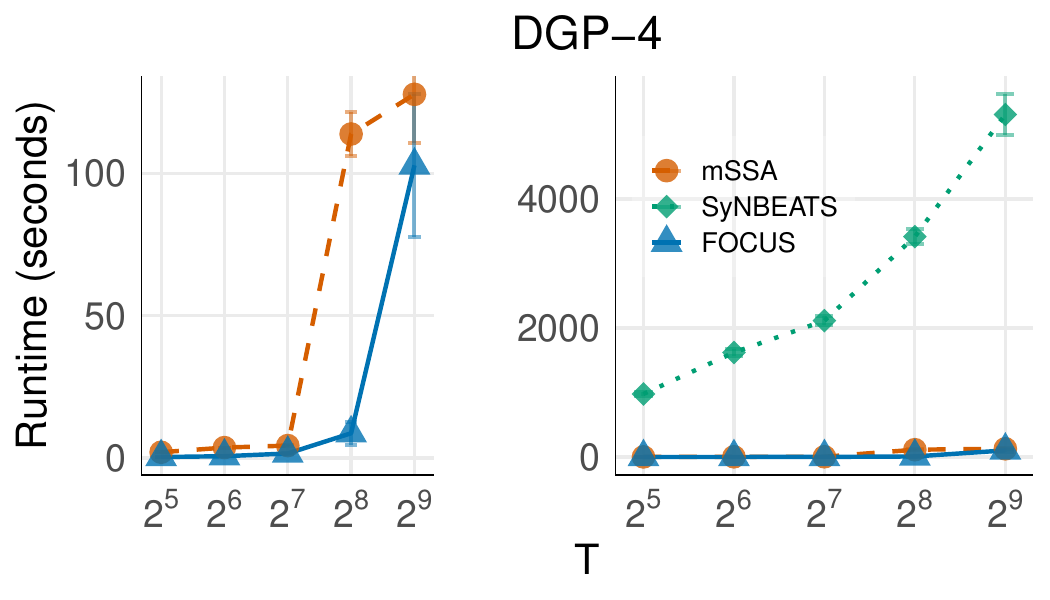}
    \caption{\textbf{Median runtime in DGP-4 across 30 trials for $N=64$ and $T\in\{2^5,\ldots,2^9\}$}. The vertical bars represent the median absolute deviation of the runtime. The right plot shows all methods, and the left plot zooms in on \focus and mSSA. Across all $T$, \focus (blue triangles) is faster than mSSA (orange circles) and substantially faster than SyNBEATS (green diamonds).}
    \label{fig:runtime_dgp4}
\end{figure}

\begin{figure*}[!t]
    \centering
    \subfloat[]{\includegraphics[width=0.25\textwidth]{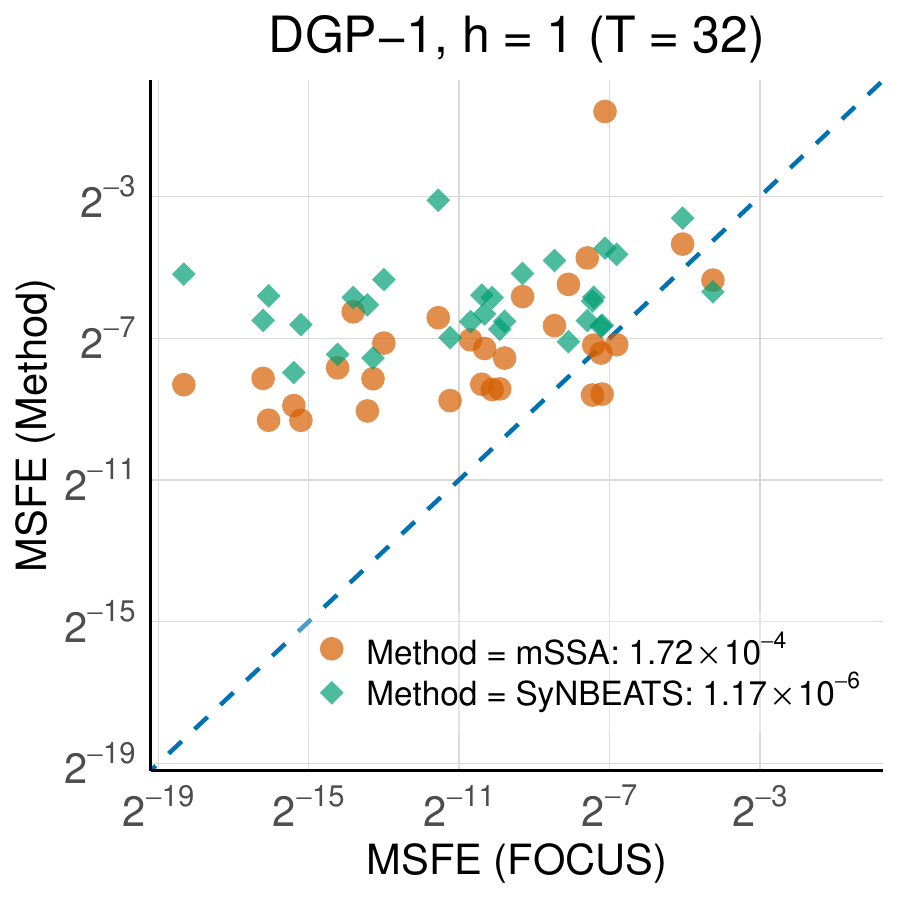}}
    \subfloat[]{\includegraphics[width=0.25\textwidth]{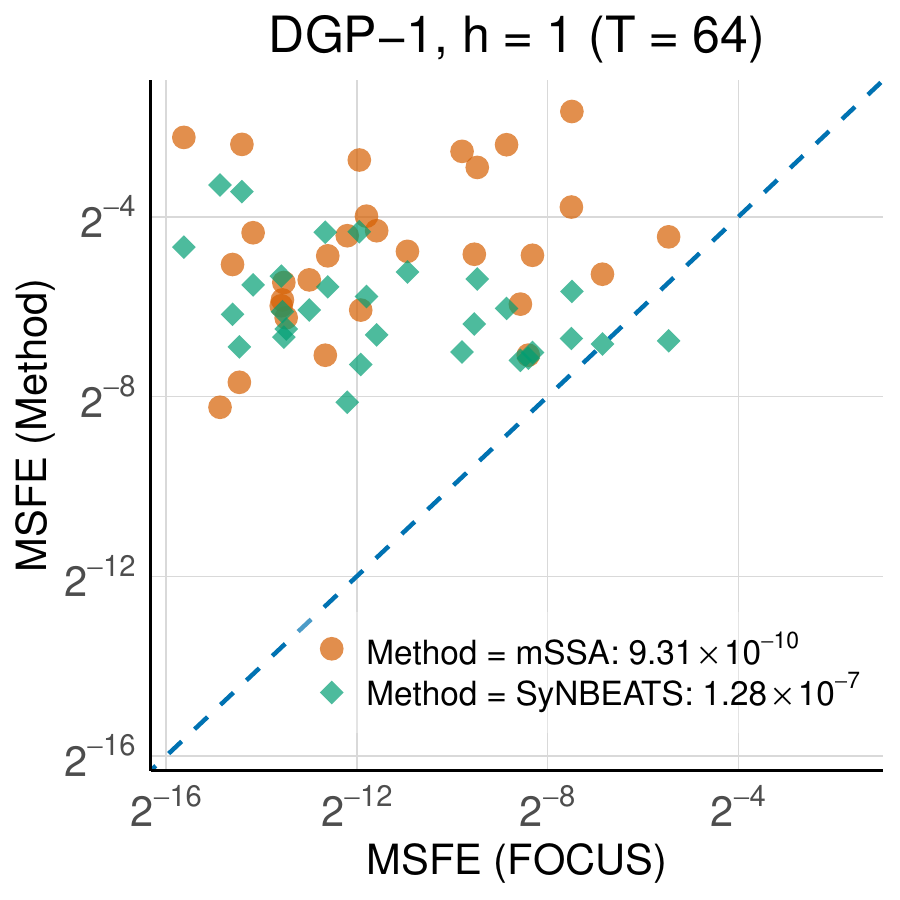}}
    \subfloat[]{\includegraphics[width=0.25\textwidth]{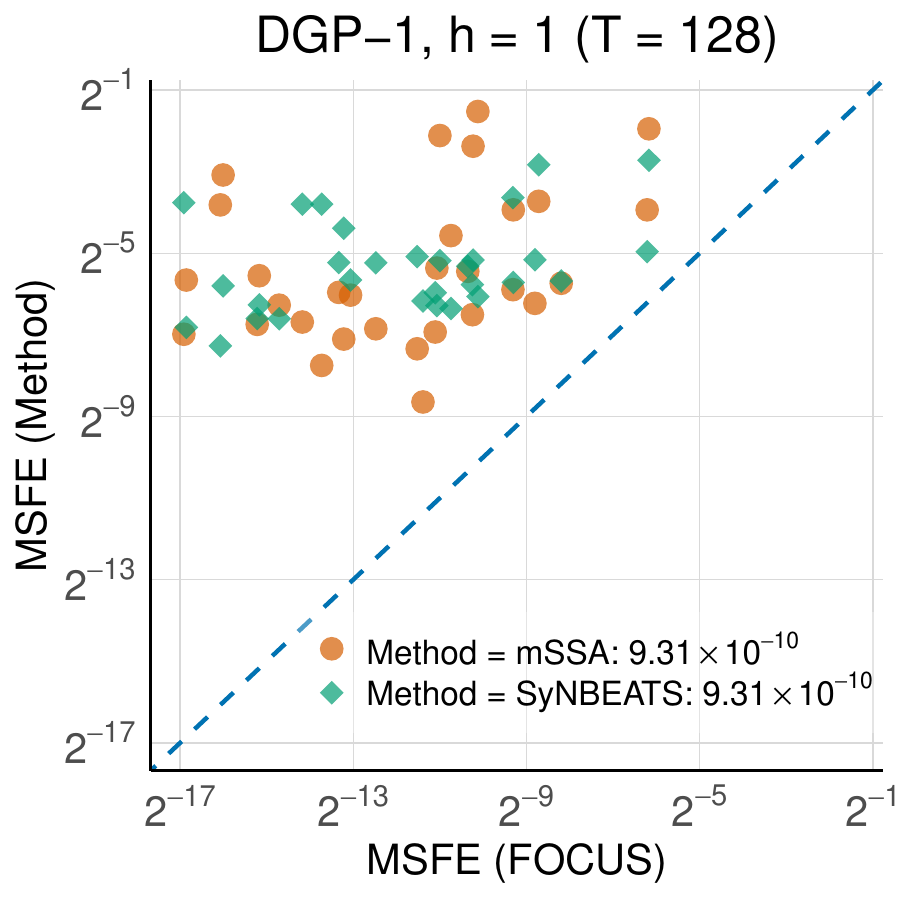}}
    \subfloat[]{\includegraphics[width=0.25\textwidth]{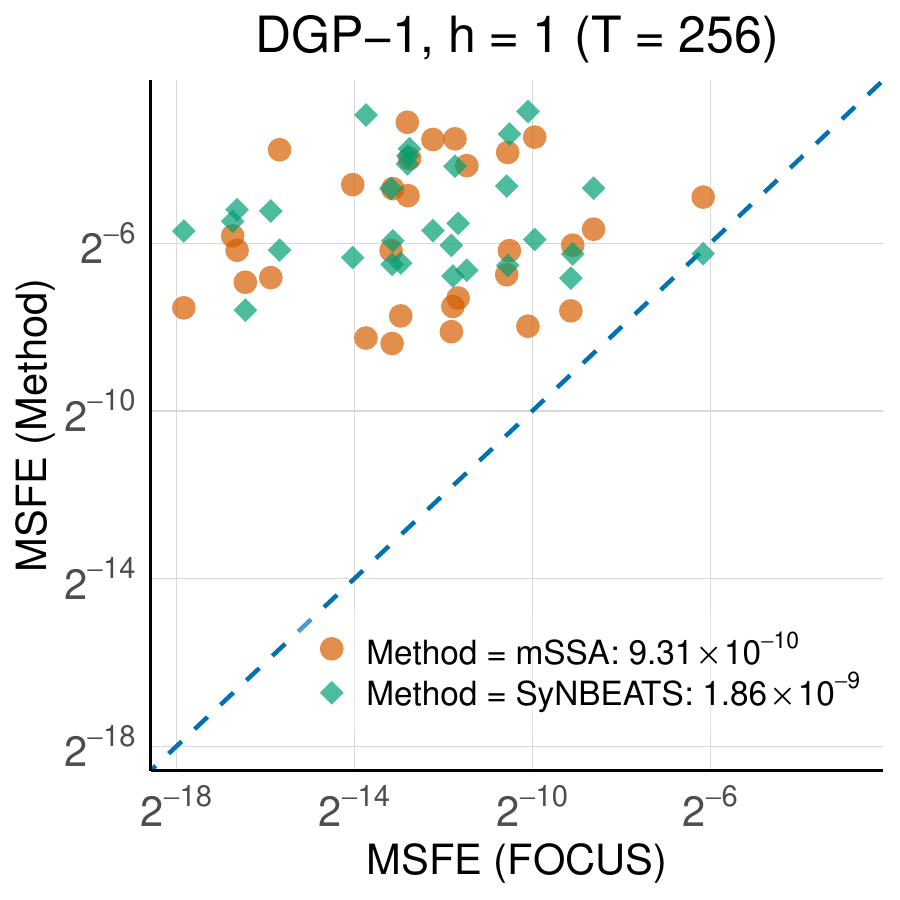}}\\
    \subfloat[]{\includegraphics[width=0.25\textwidth]{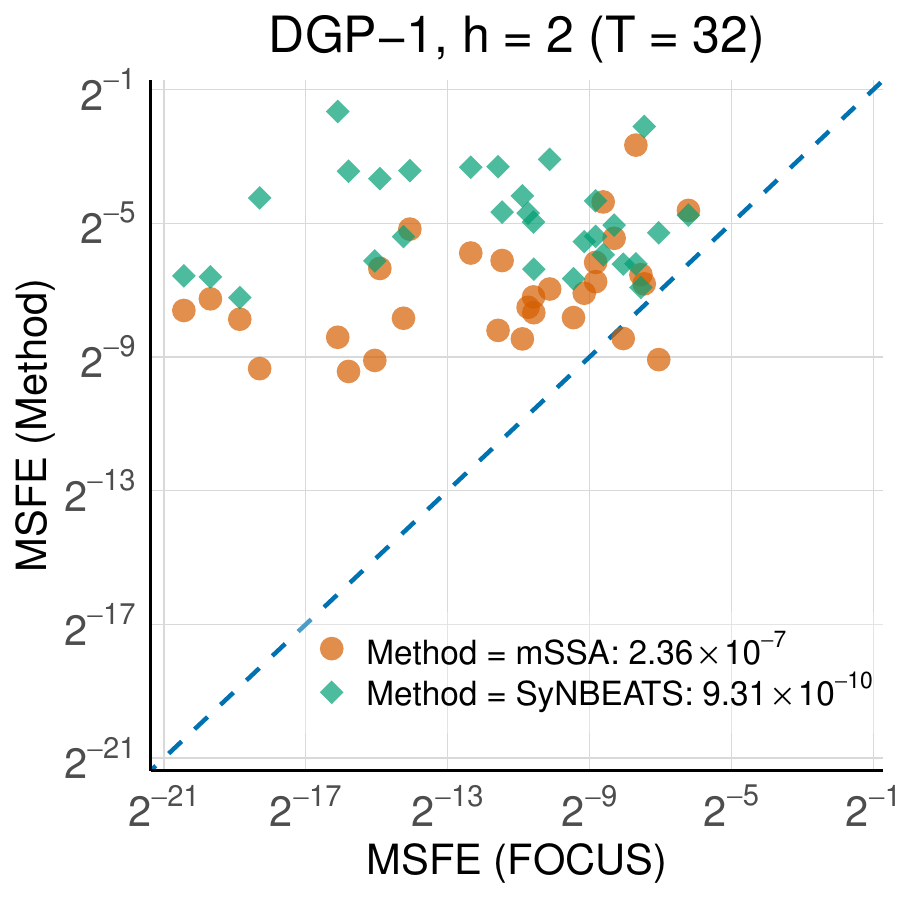}}
    \subfloat[]{\includegraphics[width=0.25\textwidth]{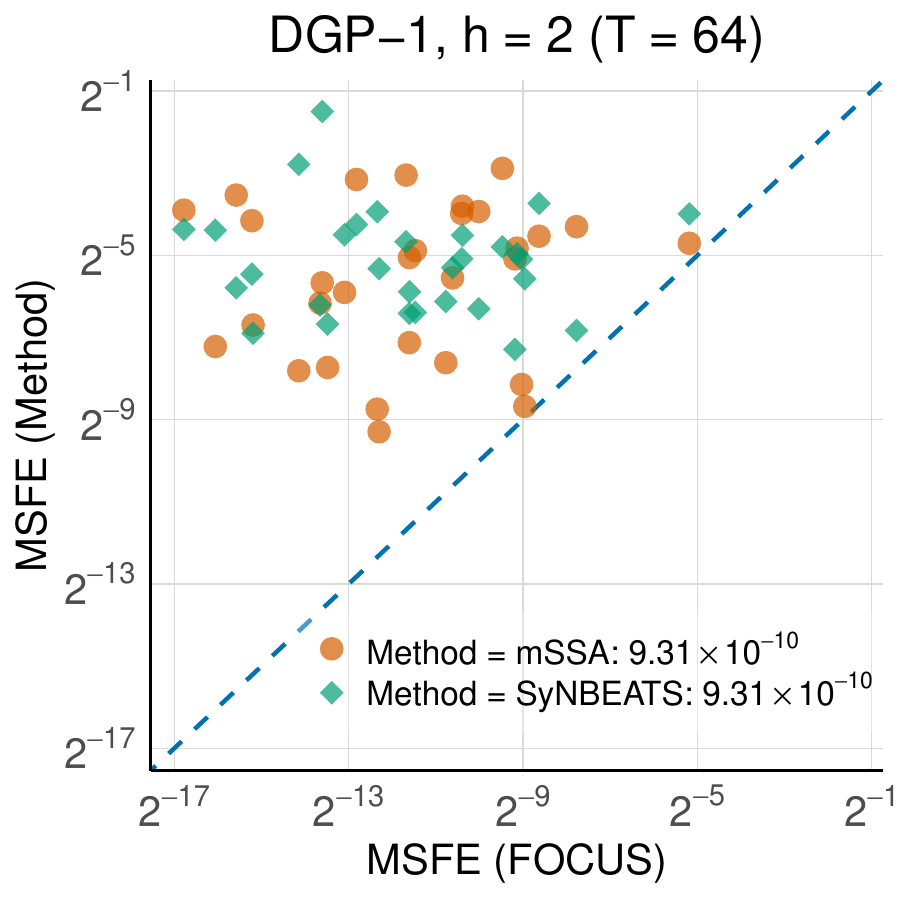}}
    \subfloat[]{\includegraphics[width=0.25\textwidth]{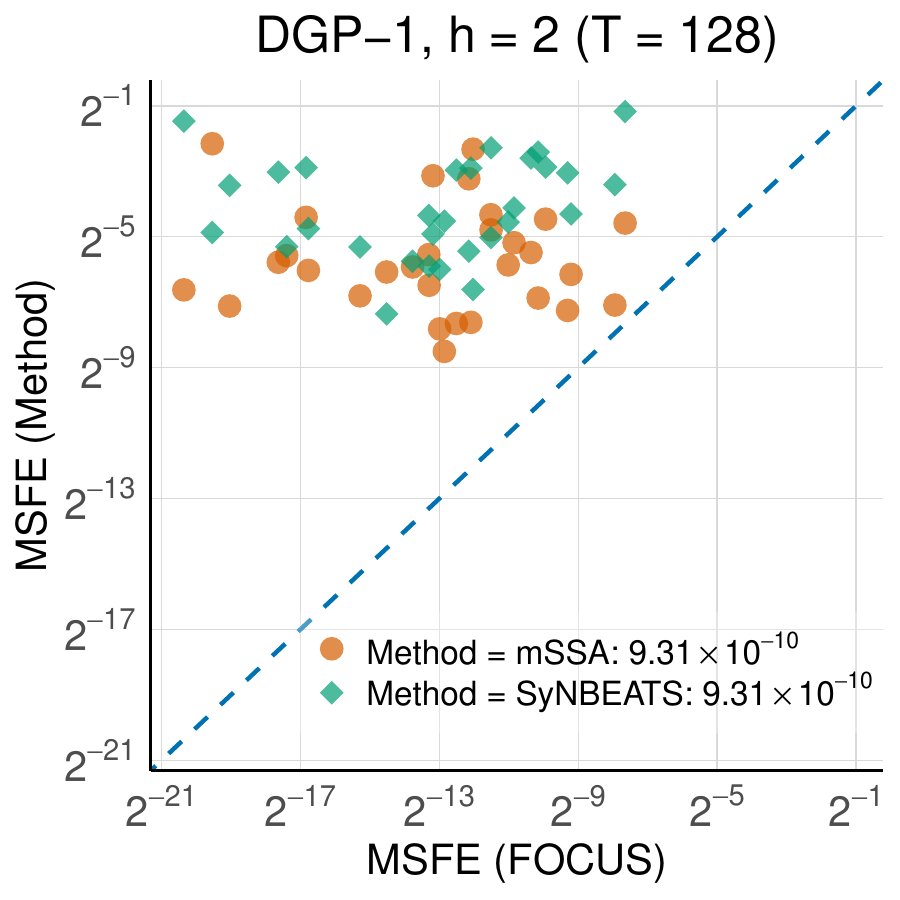}}\subfloat[]{\includegraphics[width=0.25\textwidth]{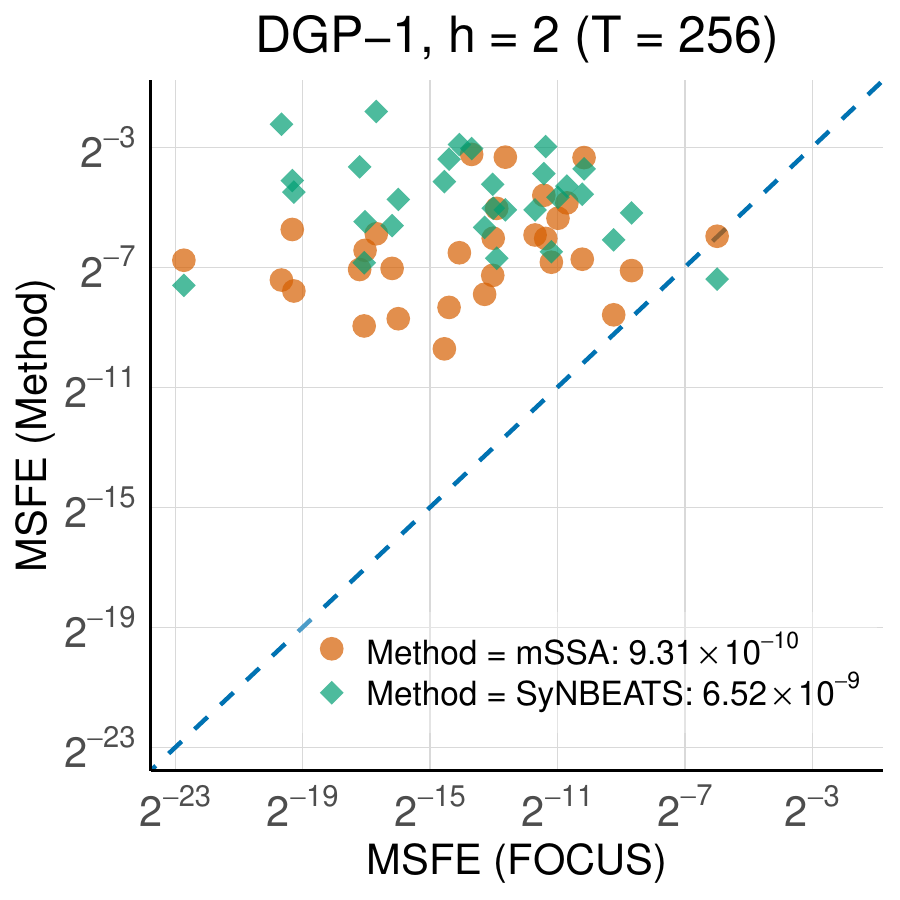}}\\
    \subfloat[]{\includegraphics[width=0.25\textwidth]{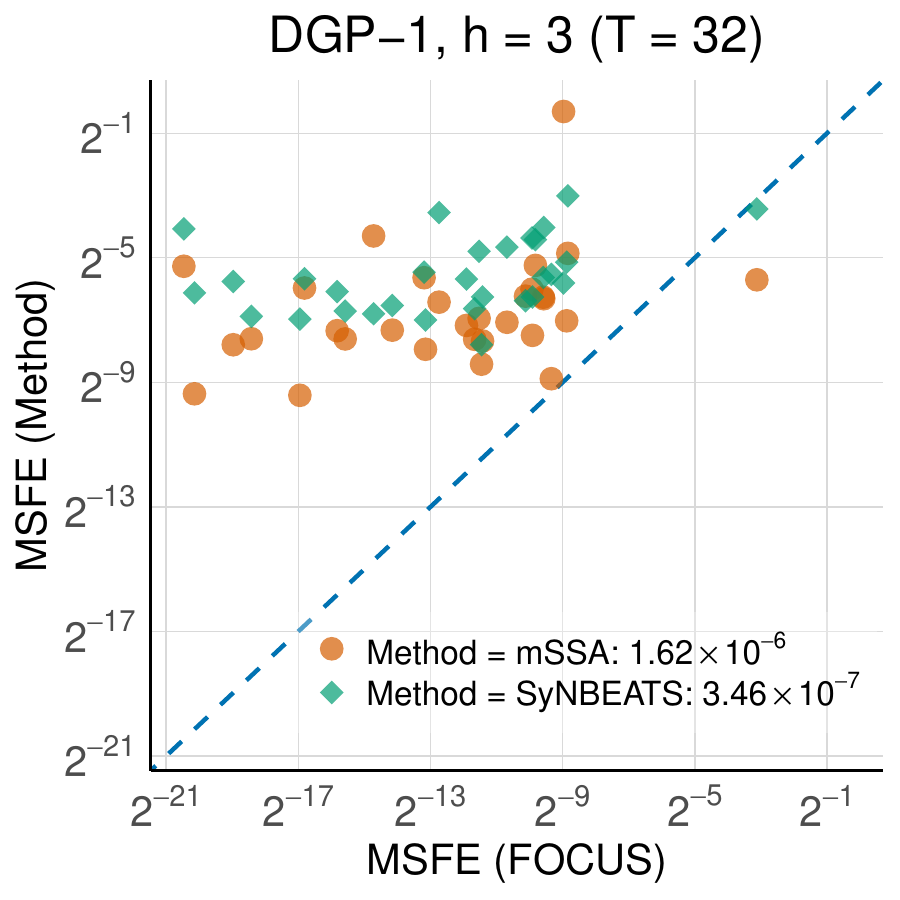}}
    \subfloat[]{\includegraphics[width=0.25\textwidth]{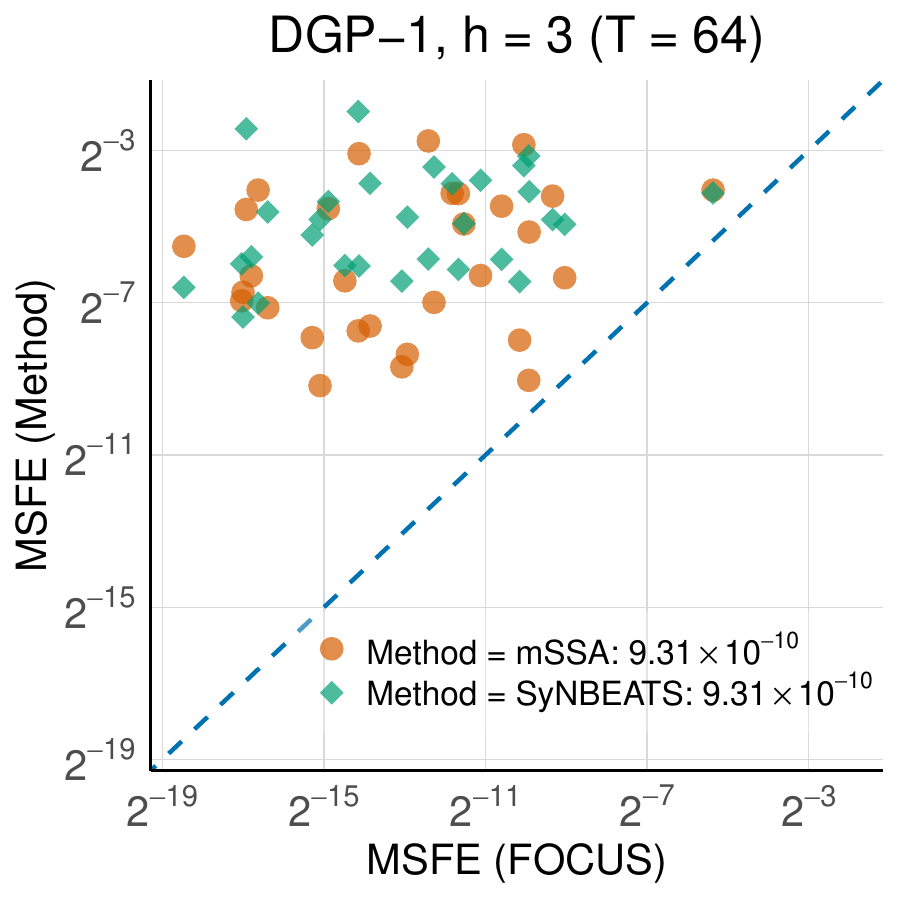}}
    \subfloat[]{\includegraphics[width=0.25\textwidth]{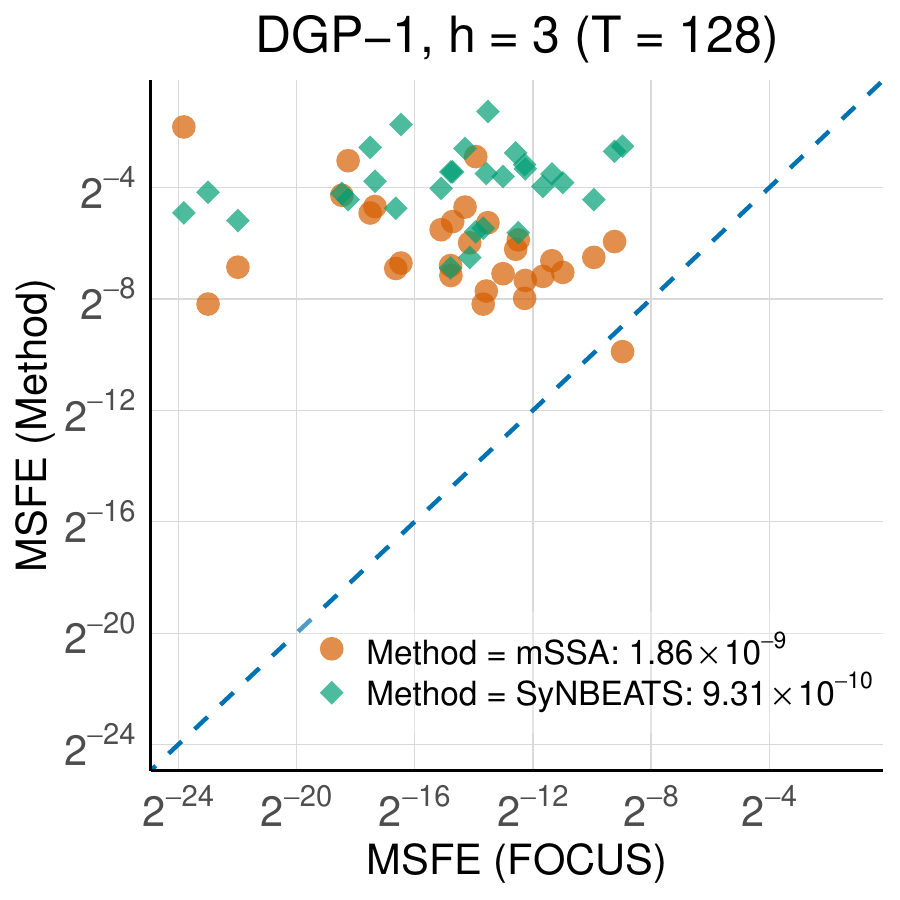}}
    \subfloat[]{\includegraphics[width=0.25\textwidth]{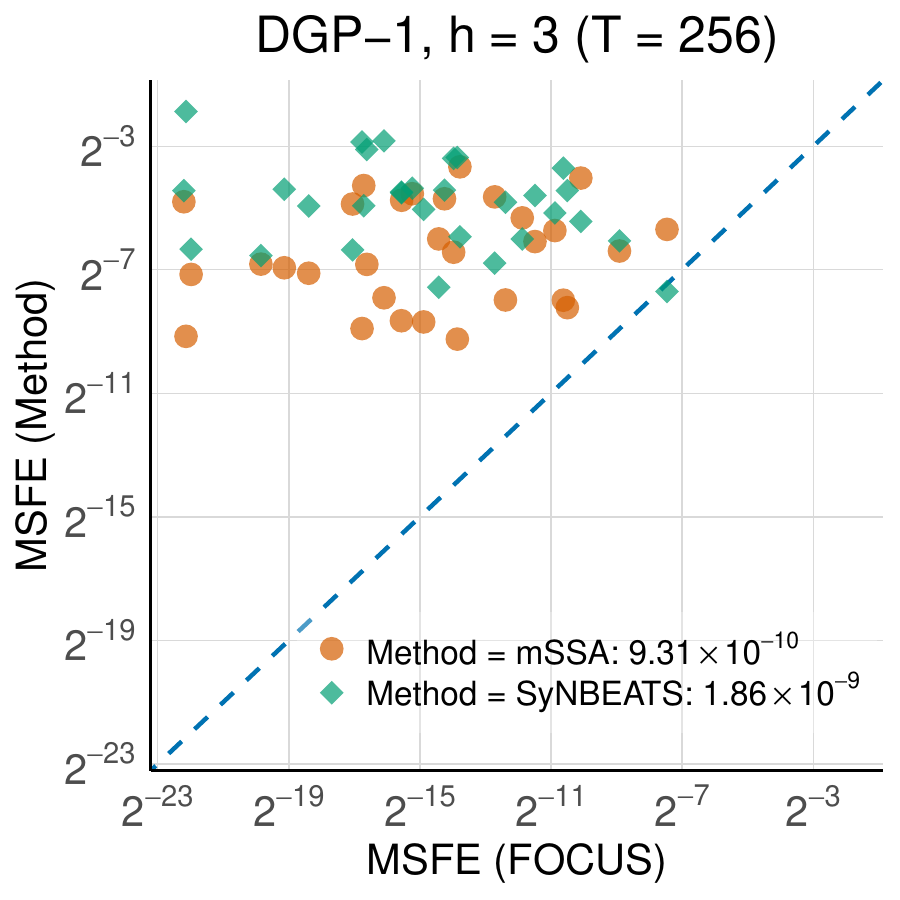}}
    \caption{\textbf{Scatter plots of Mean Squared Forecast Error (MSFE) over 30 trials across benchmarks for $N=64$ under DGP-1.} Across all $T$ and $h$, \focus attains significantly lower errors (one-sided Wilcoxon signed-rank test, $p<0.01$; see legends), with scatter points concentrated in the $y>x$ line.}
\end{figure*}

\begin{figure*}[!t]
    \centering
    \subfloat[]{\includegraphics[width=0.25\textwidth]{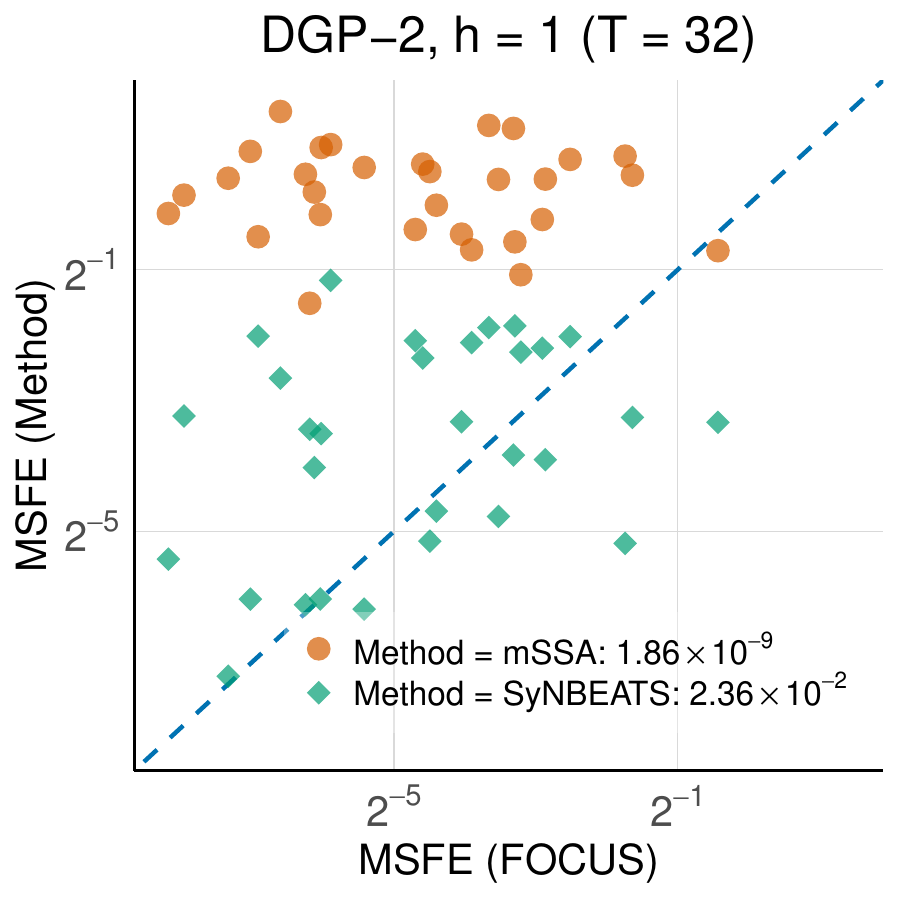}}
    \subfloat[]{\includegraphics[width=0.25\textwidth]{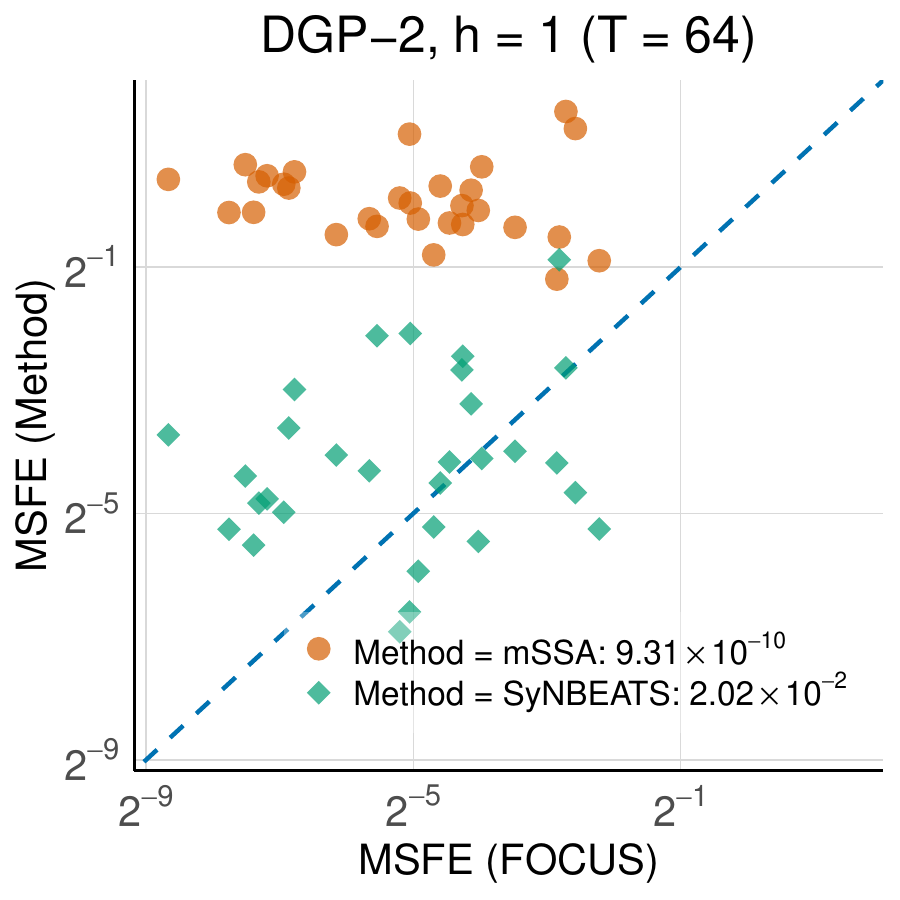}}
    \subfloat[]{\includegraphics[width=0.25\textwidth]{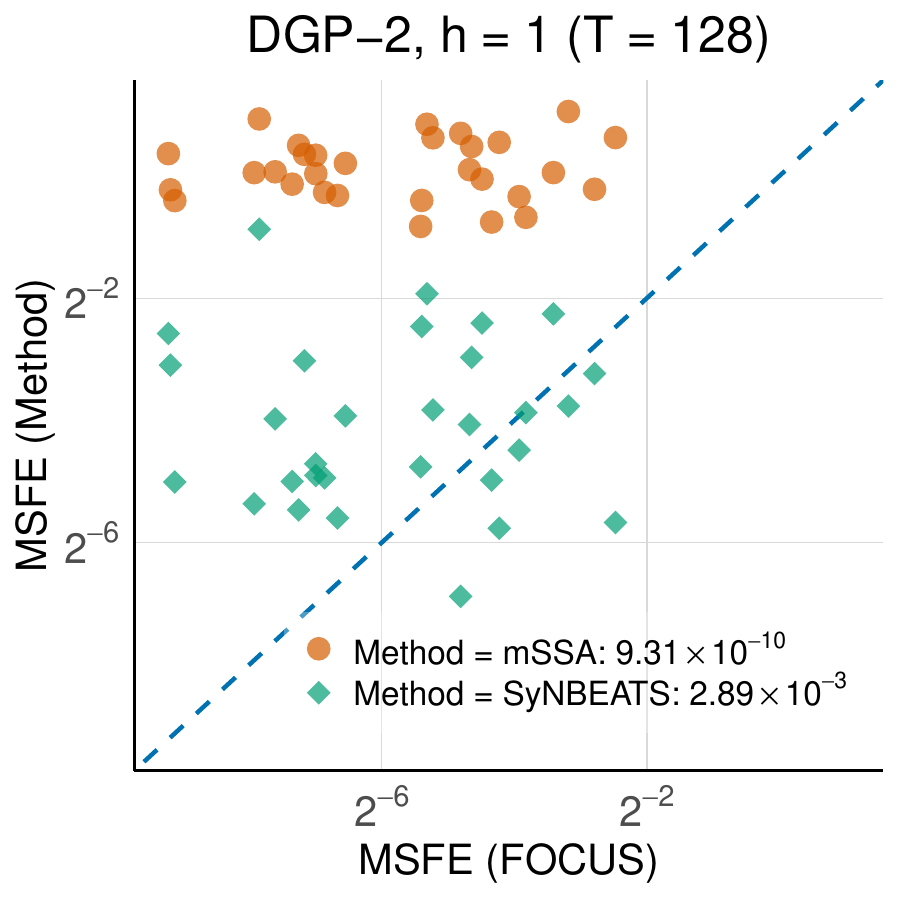}}
    \subfloat[]{\includegraphics[width=0.25\textwidth]{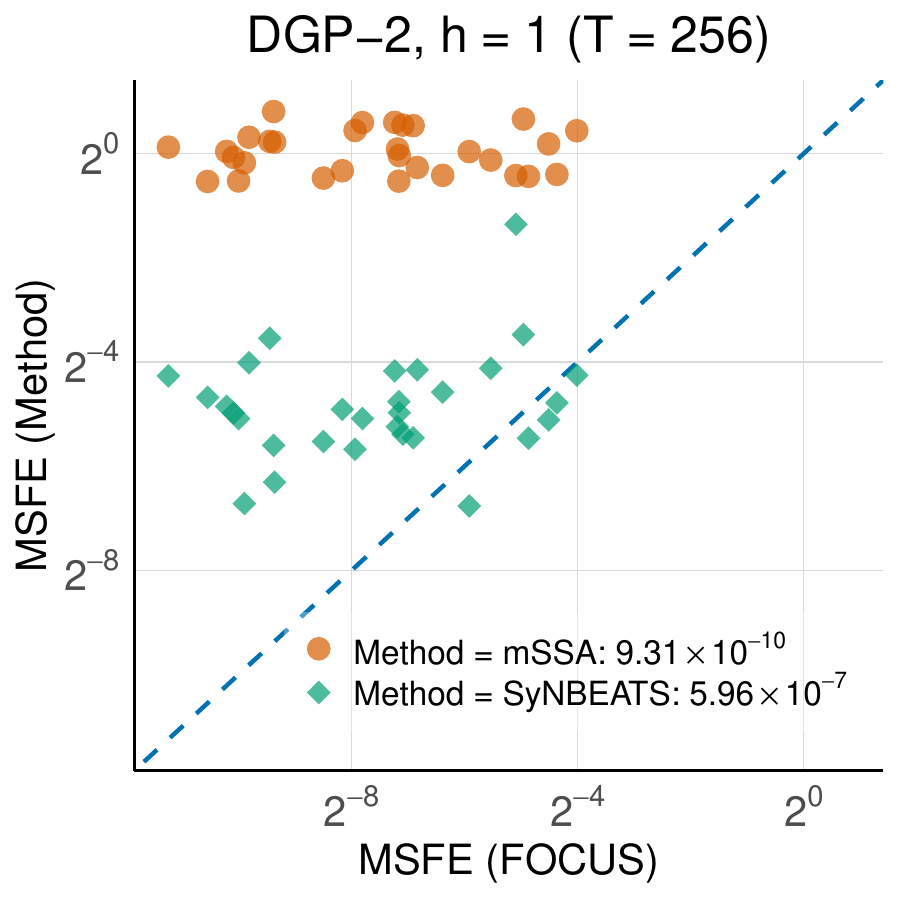}}\\
    \subfloat[]{\includegraphics[width=0.25\textwidth]{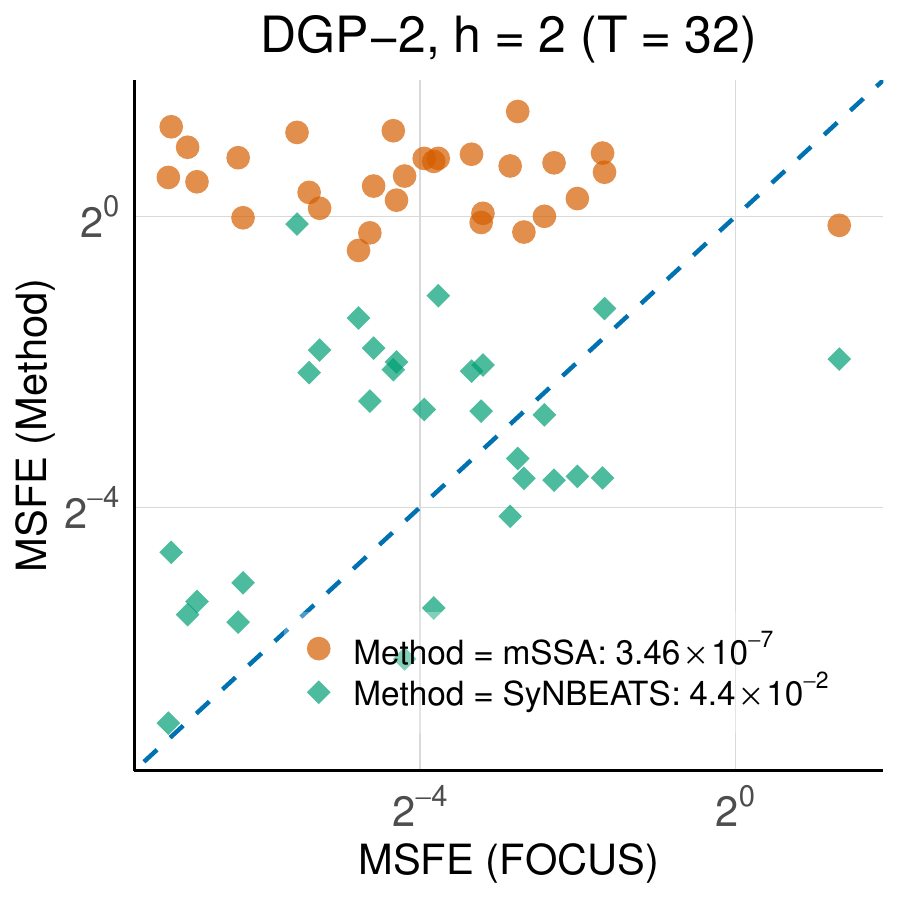}}
    \subfloat[]{\includegraphics[width=0.25\textwidth]{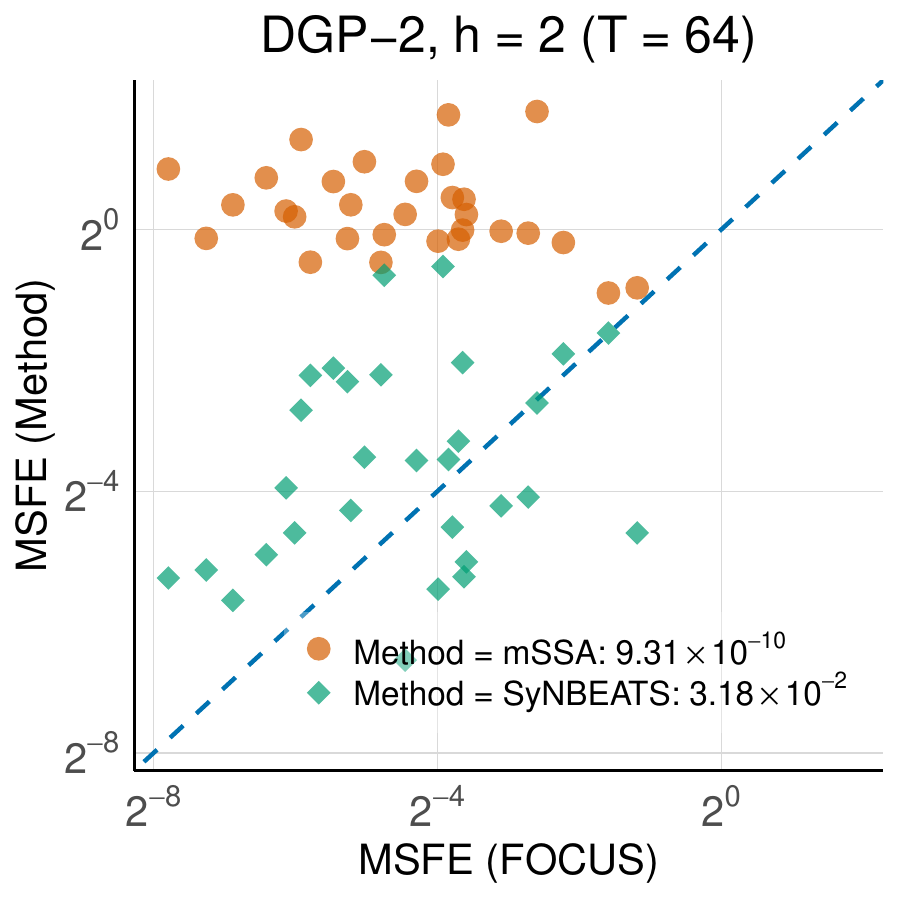}}
    \subfloat[]{\includegraphics[width=0.25\textwidth]{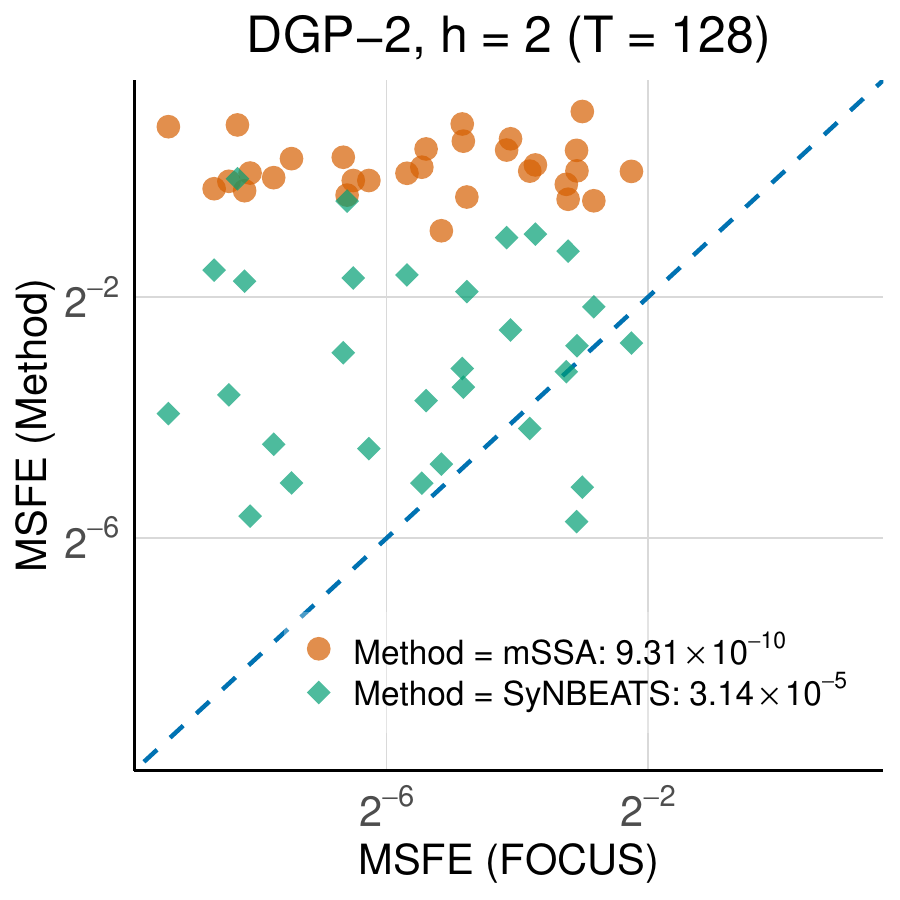}}\subfloat[]{\includegraphics[width=0.25\textwidth]{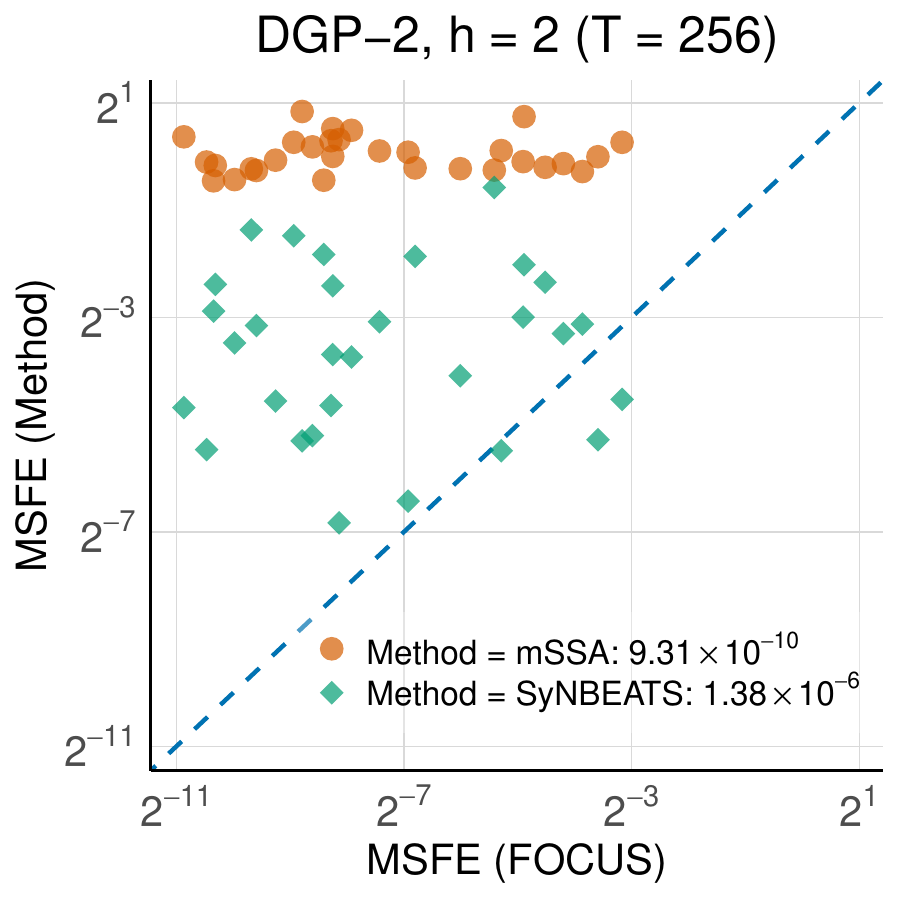}}\\
    \subfloat[]{\includegraphics[width=0.25\textwidth]{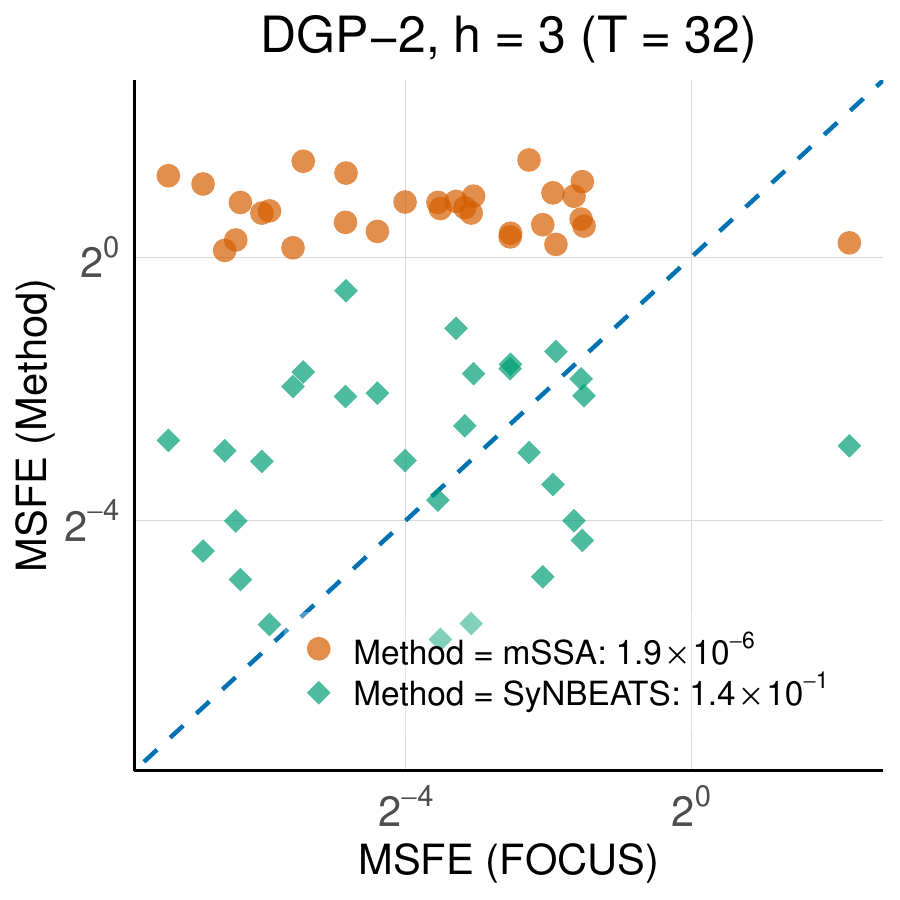}}
    \subfloat[]{\includegraphics[width=0.25\textwidth]{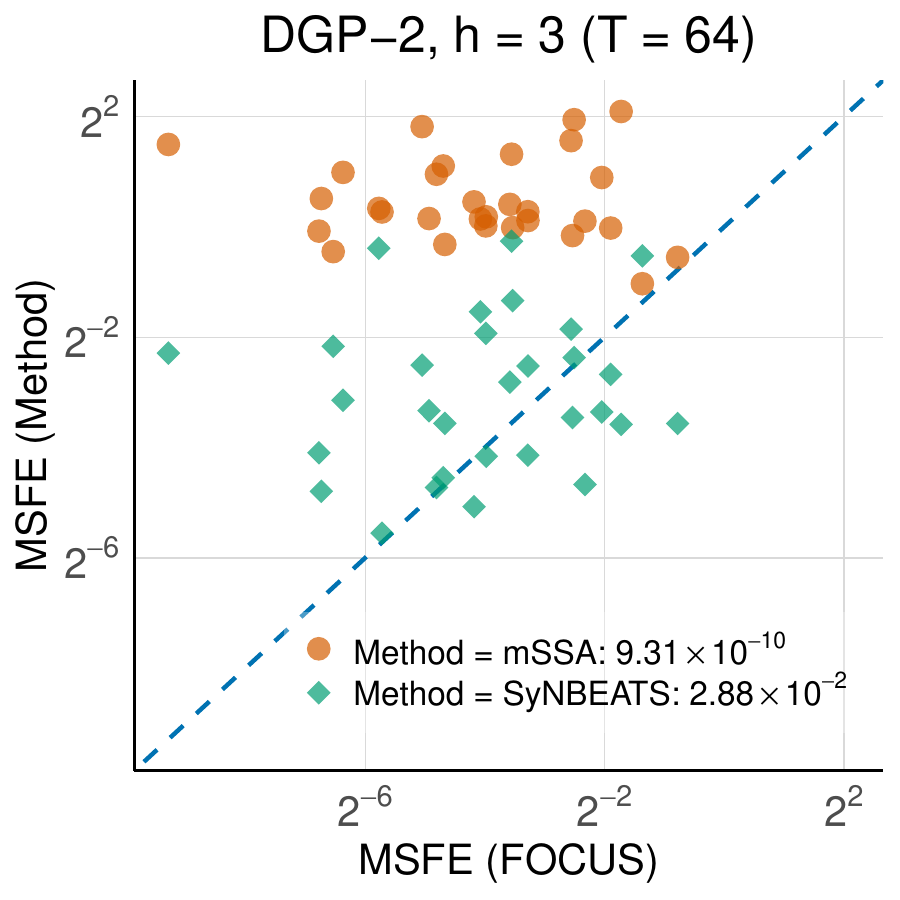}}
    \subfloat[]{\includegraphics[width=0.25\textwidth]{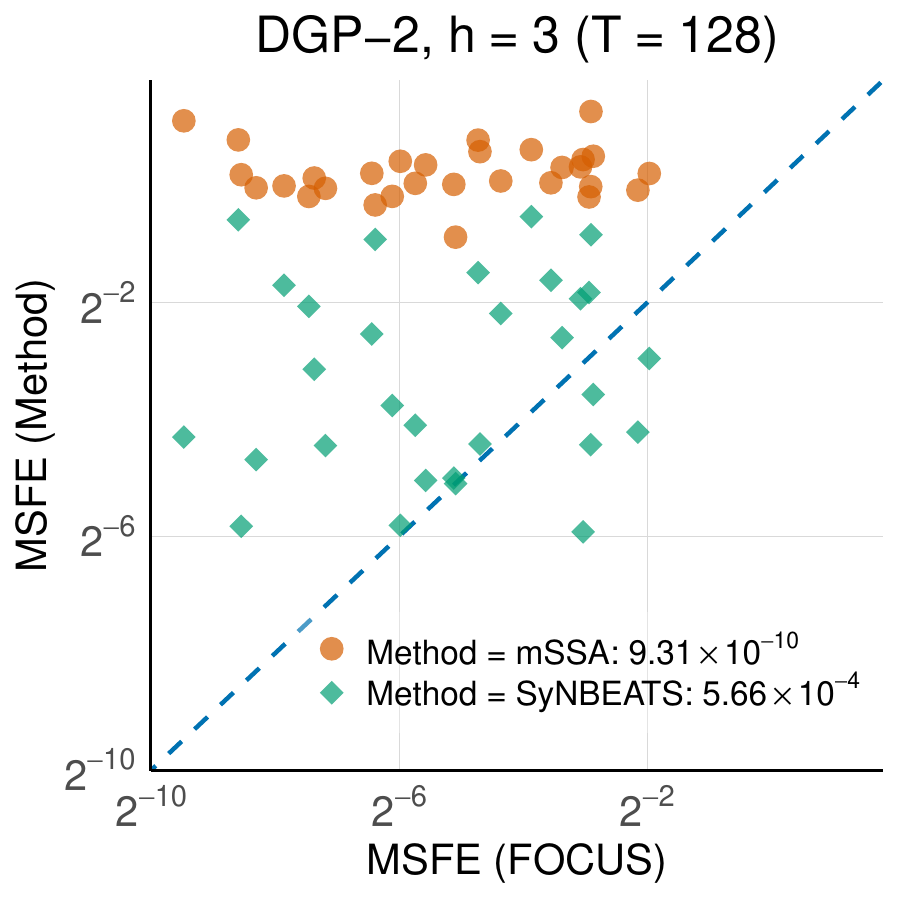}}
    \subfloat[]{\includegraphics[width=0.25\textwidth]{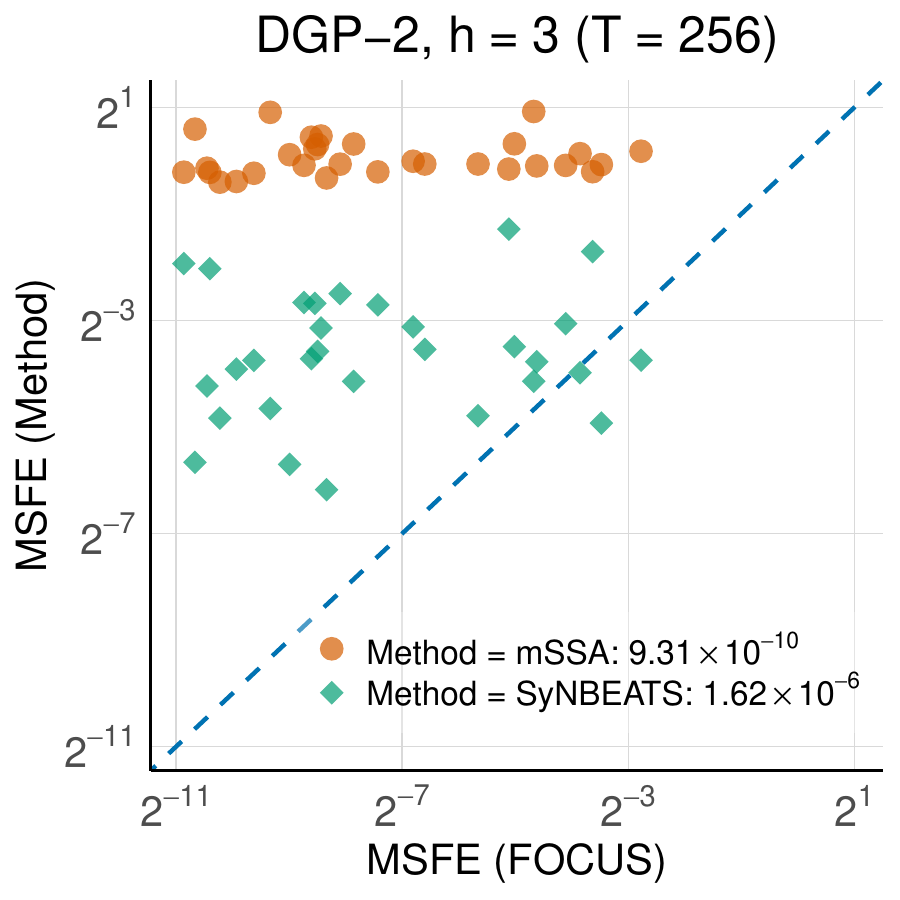}}
    \caption{\textbf{Scatter plots of Mean Squared Forecast Error (MSFE) over 30 trials across benchmarks for $N=64$ under DGP-2.} Across all $T$ and $h$, \focus attains significantly lower errors (one-sided Wilcoxon signed-rank test, $p<0.01$; see legends), with scatter points concentrated in the $y>x$ line.}
\end{figure*}

\begin{figure*}[!t]
    \centering
    \subfloat[]{\includegraphics[width=0.25\textwidth]{fig/Simulation_results/scatter_plots/scatter_plot_DGP2_T32_h1.pdf}}
    \subfloat[]{\includegraphics[width=0.25\textwidth]{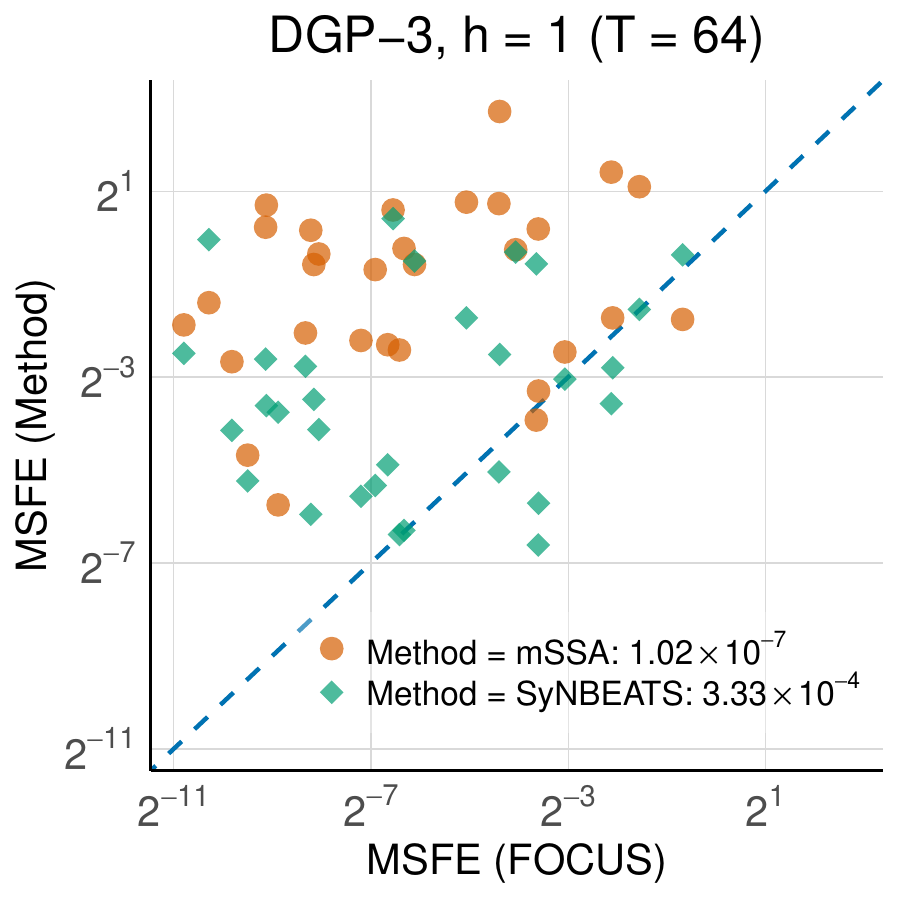}}
    \subfloat[]{\includegraphics[width=0.25\textwidth]{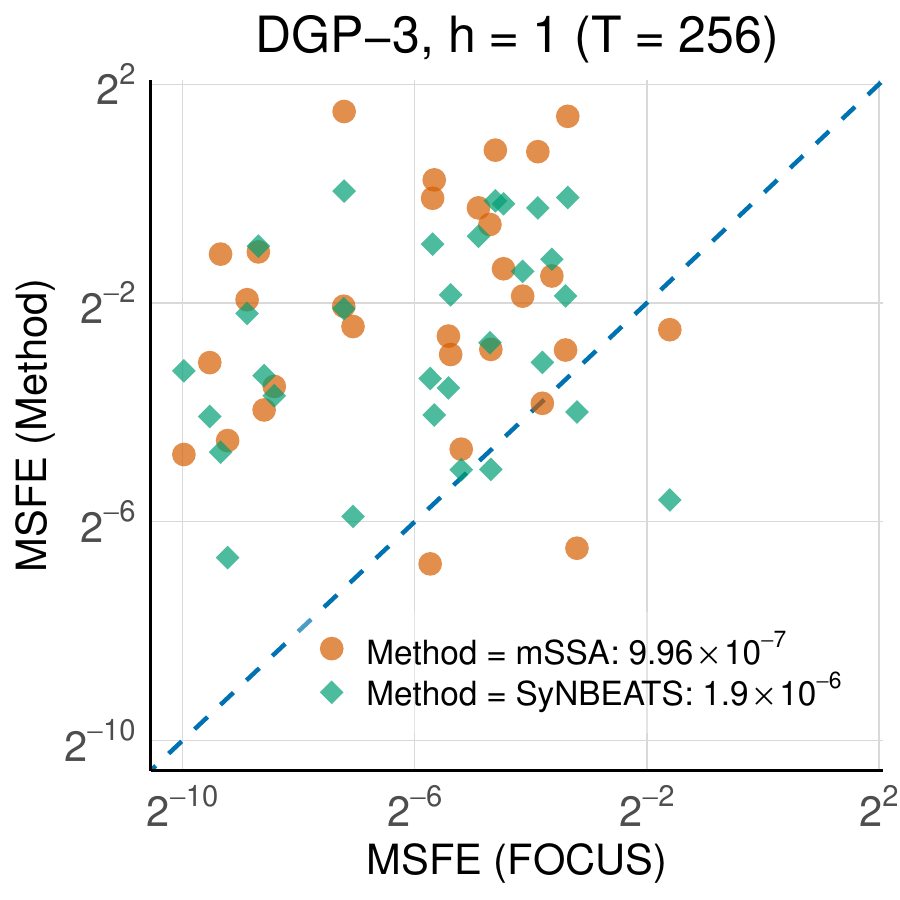}}\\
    \subfloat[]{\includegraphics[width=0.25\textwidth]{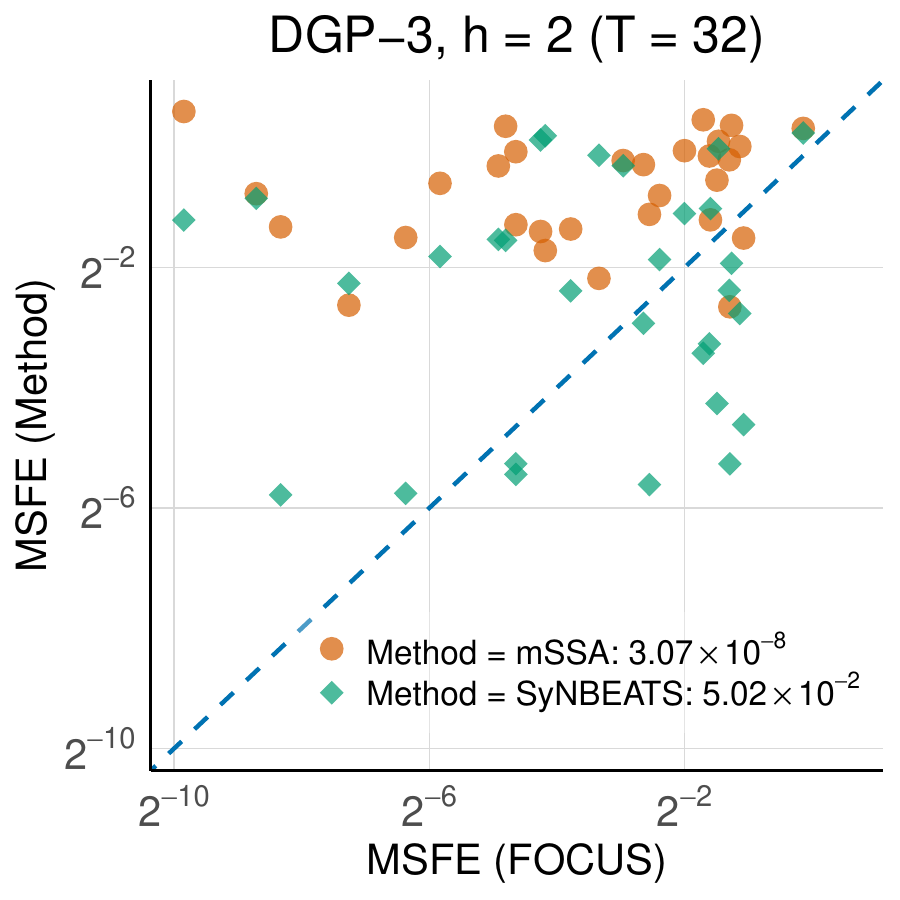}}
    \subfloat[]{\includegraphics[width=0.25\textwidth]{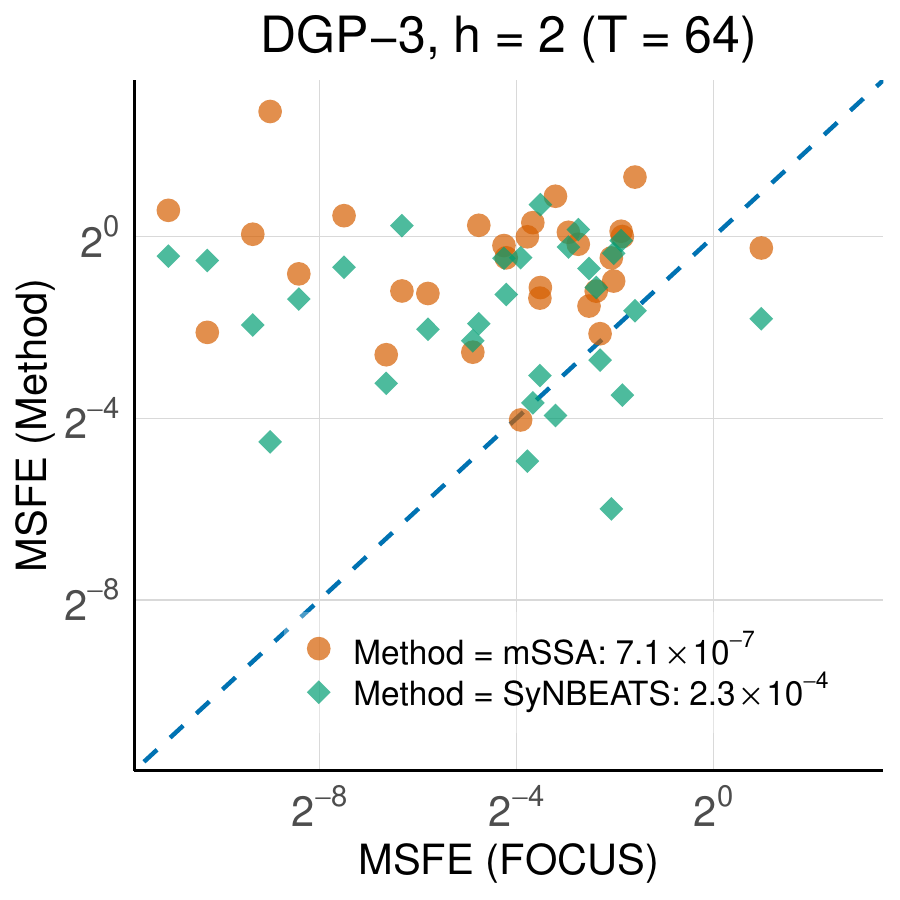}}
    \subfloat[]{\includegraphics[width=0.25\textwidth]{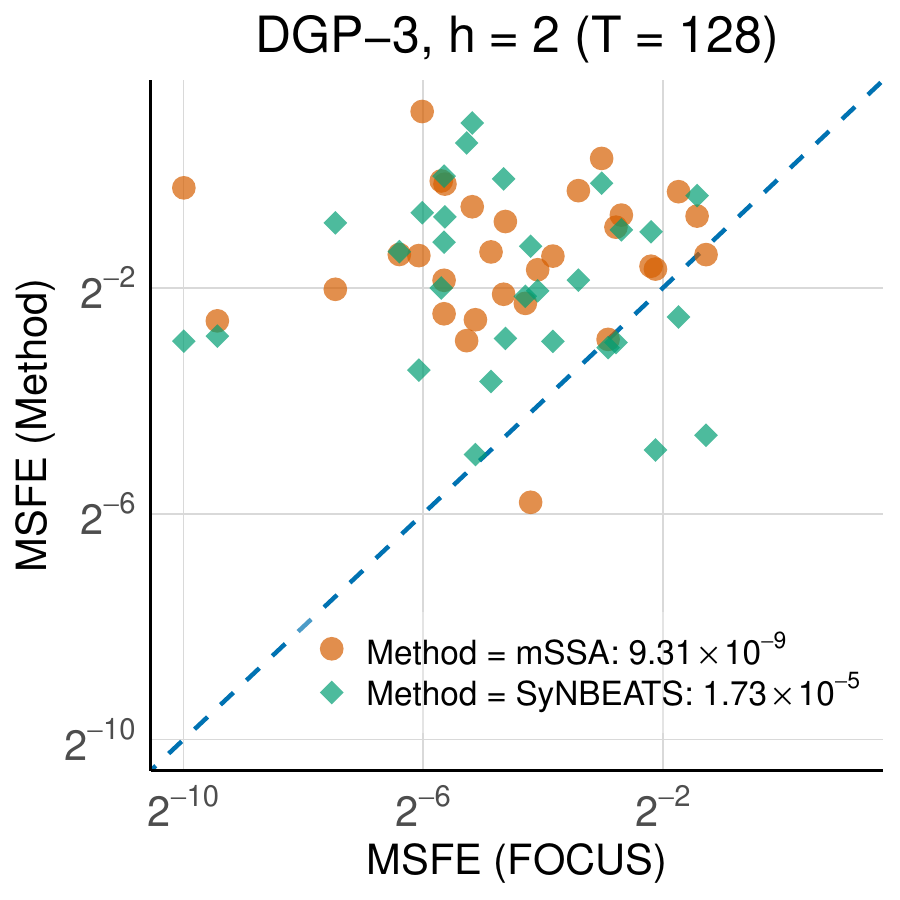}}\subfloat[]{\includegraphics[width=0.25\textwidth]{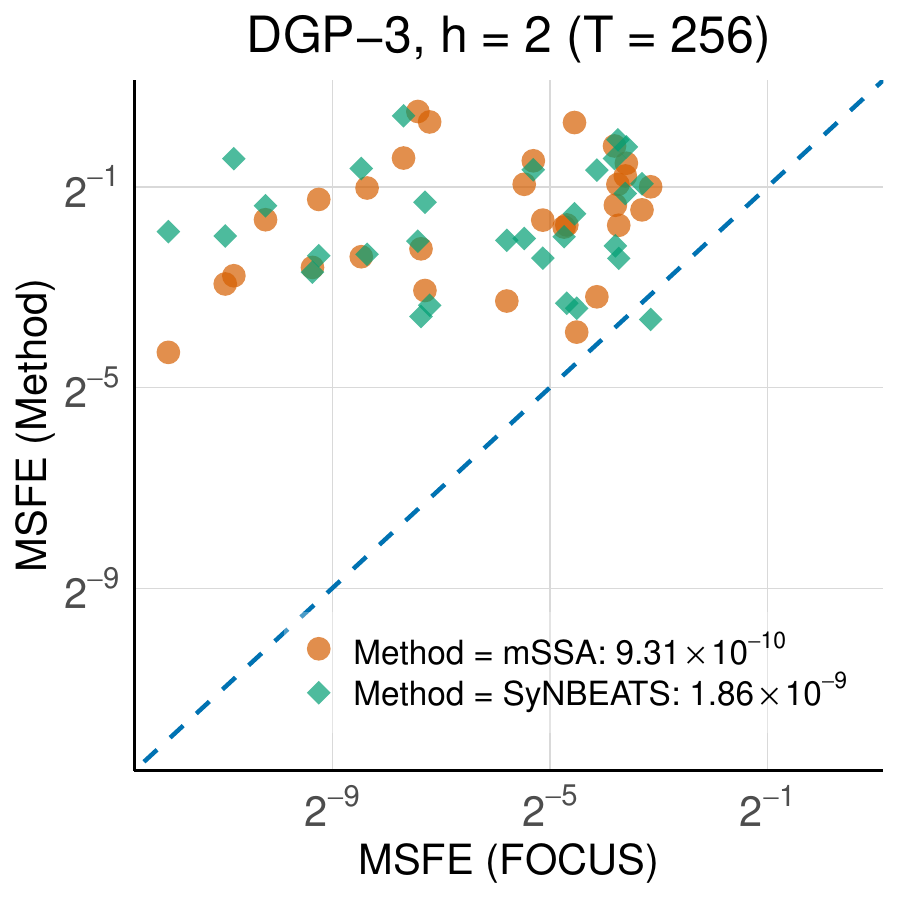}}\\
    \subfloat[]{\includegraphics[width=0.25\textwidth]{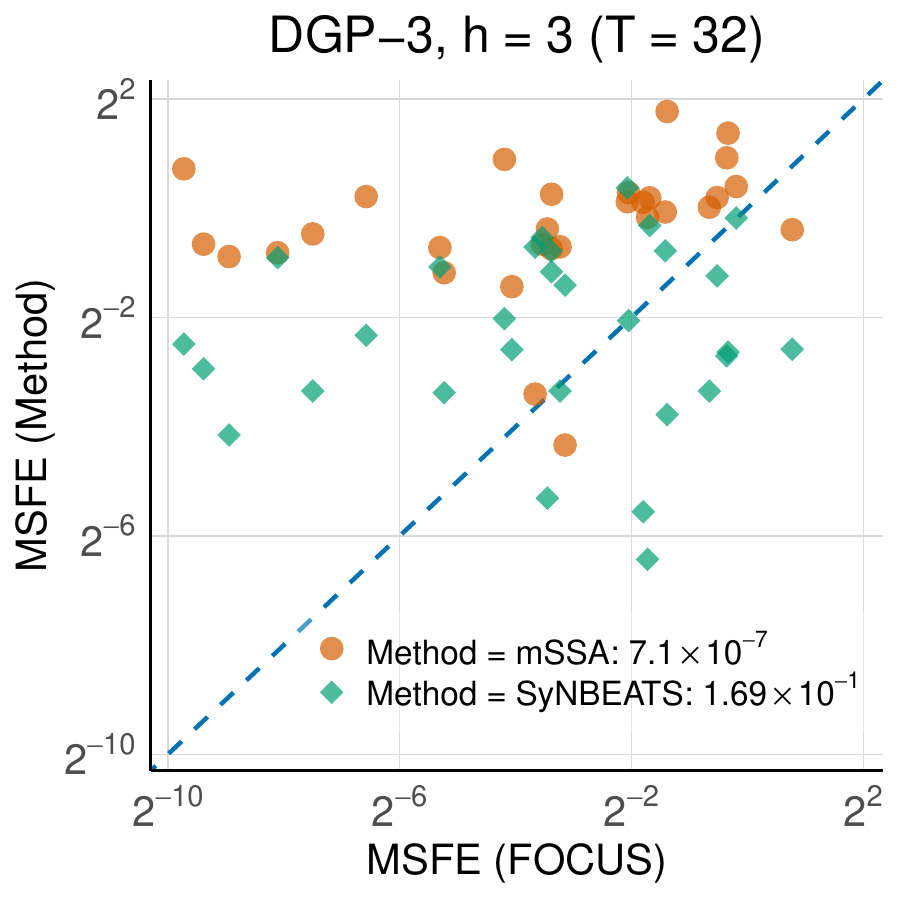}}
    \subfloat[]{\includegraphics[width=0.25\textwidth]{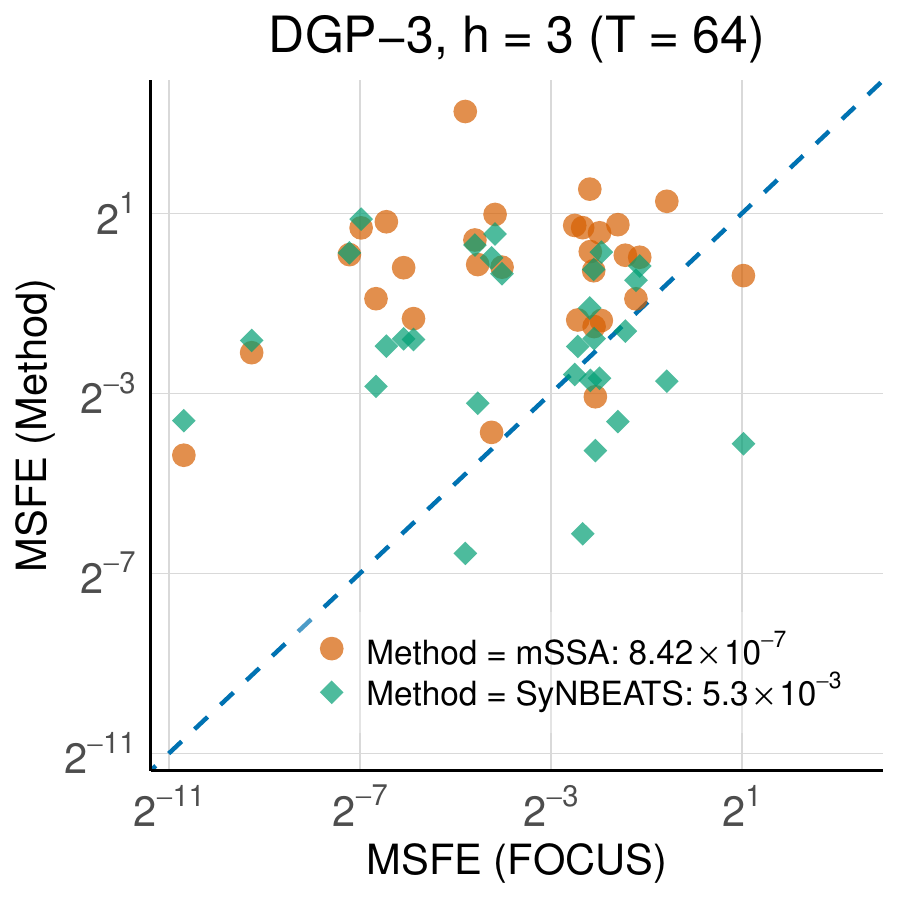}}
    \subfloat[]{\includegraphics[width=0.25\textwidth]{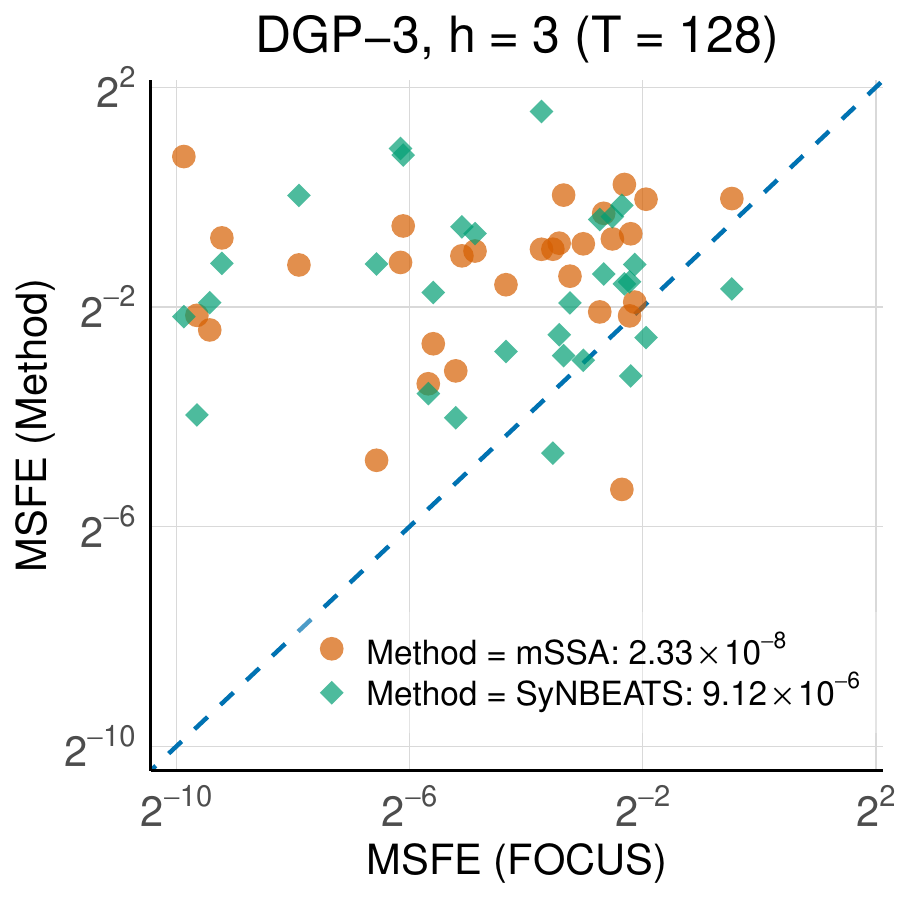}}
    \subfloat[]{\includegraphics[width=0.25\textwidth]{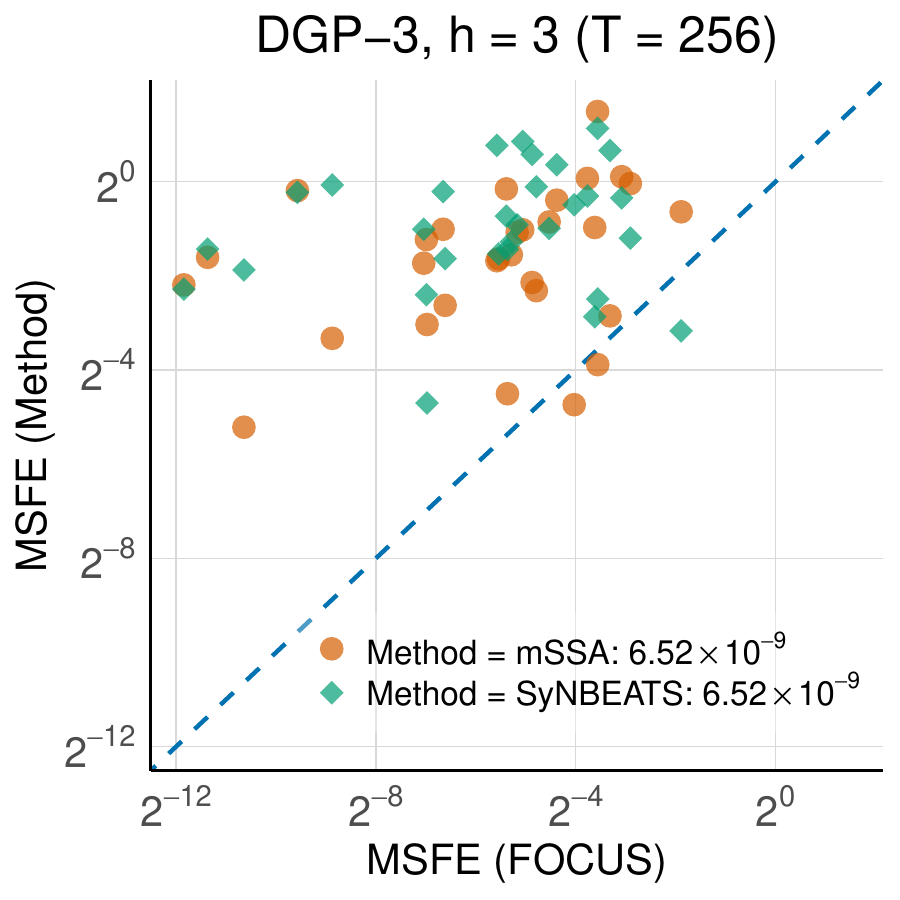}}
    \caption{\textbf{Scatter plots of Mean Squared Forecast error (MSFE) over 30 trials across benchmarks for $N=64$ under DGP-3.} Across all $T$ and $h$, \focus attains significantly lower errors (one-sided Wilcoxon signed-rank test, $p<0.01$; see legends), with scatter points concentrated in the $y>x$ line.}
\end{figure*}

\begin{figure*}[!t]
    \centering
    \subfloat[]{\includegraphics[width=0.25\textwidth]{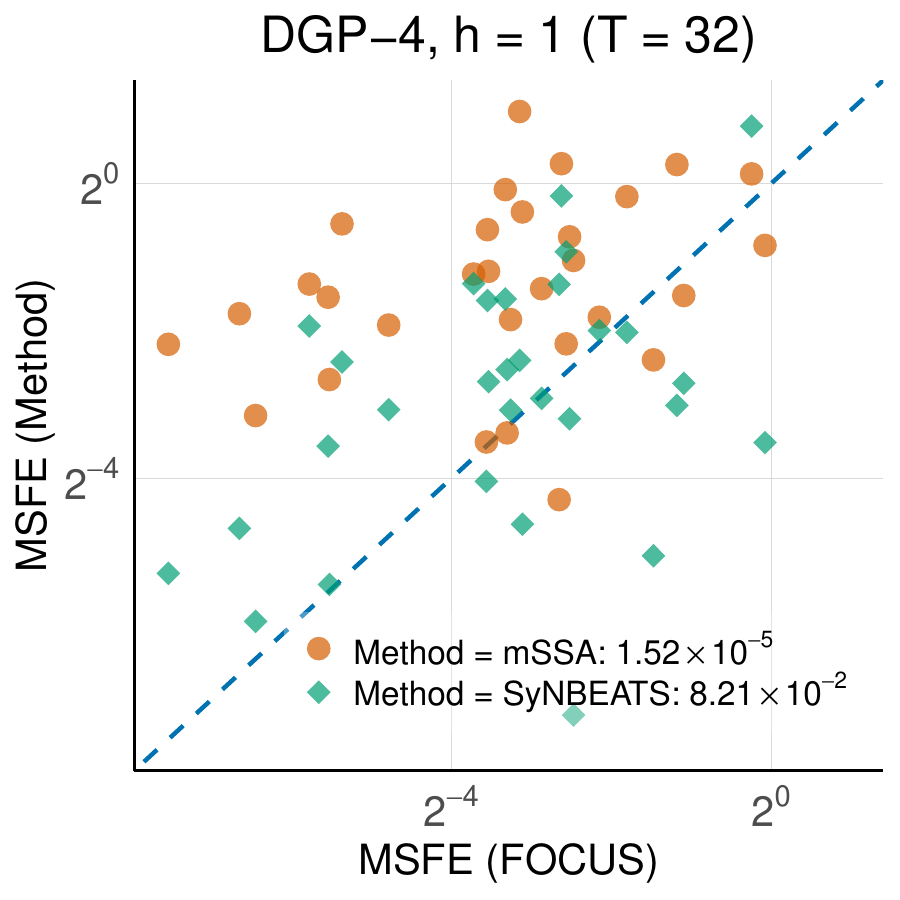}}
    \subfloat[]{\includegraphics[width=0.25\textwidth]{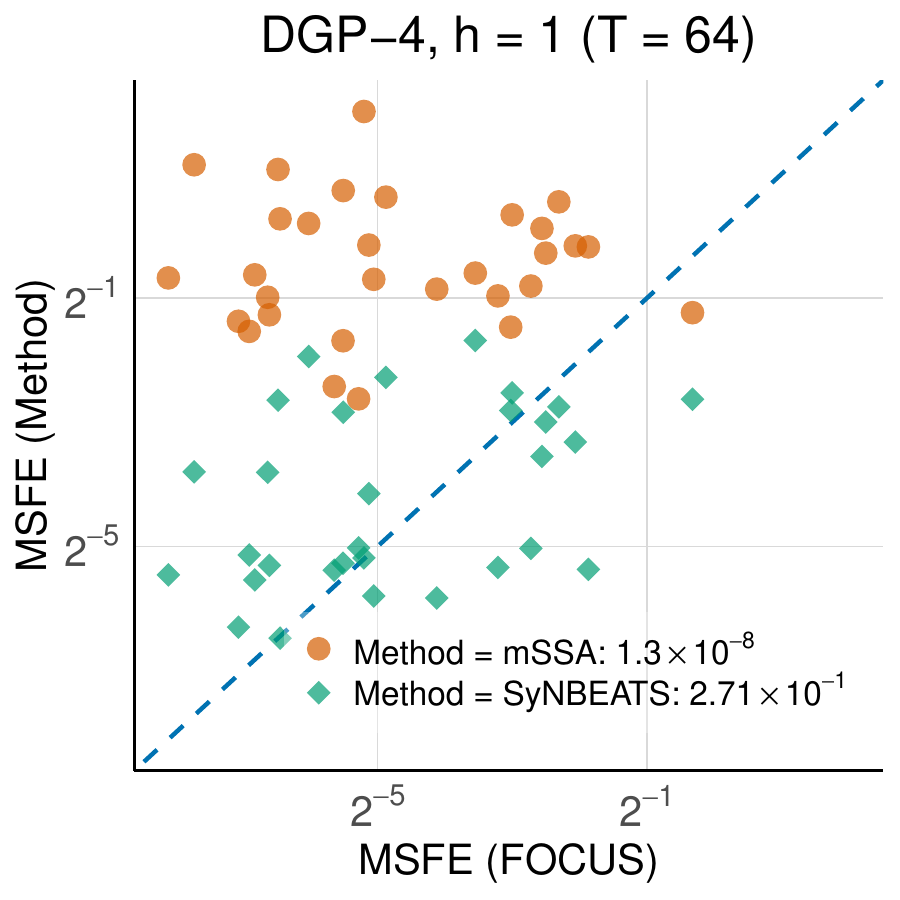}}
    \subfloat[]{\includegraphics[width=0.25\textwidth]{fig/Simulation_results/scatter_plots/scatter_plot_DGP4_T256_h1.pdf}}\\
    \subfloat[]{\includegraphics[width=0.25\textwidth]{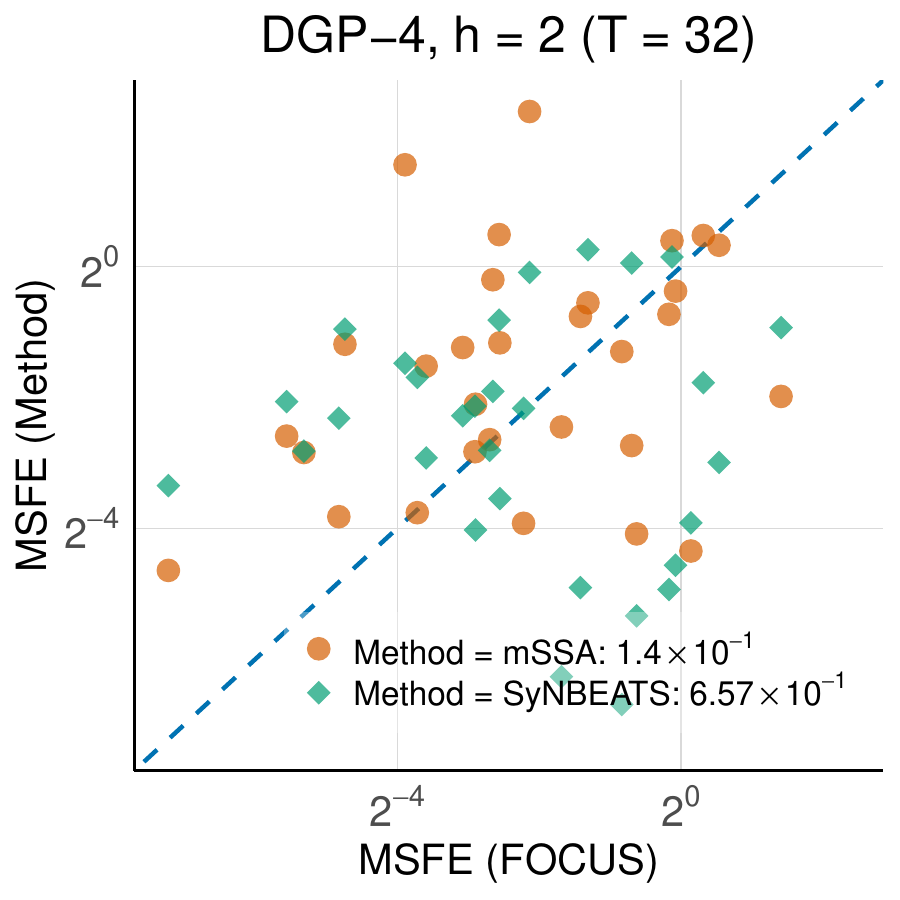}}
    \subfloat[]{\includegraphics[width=0.25\textwidth]{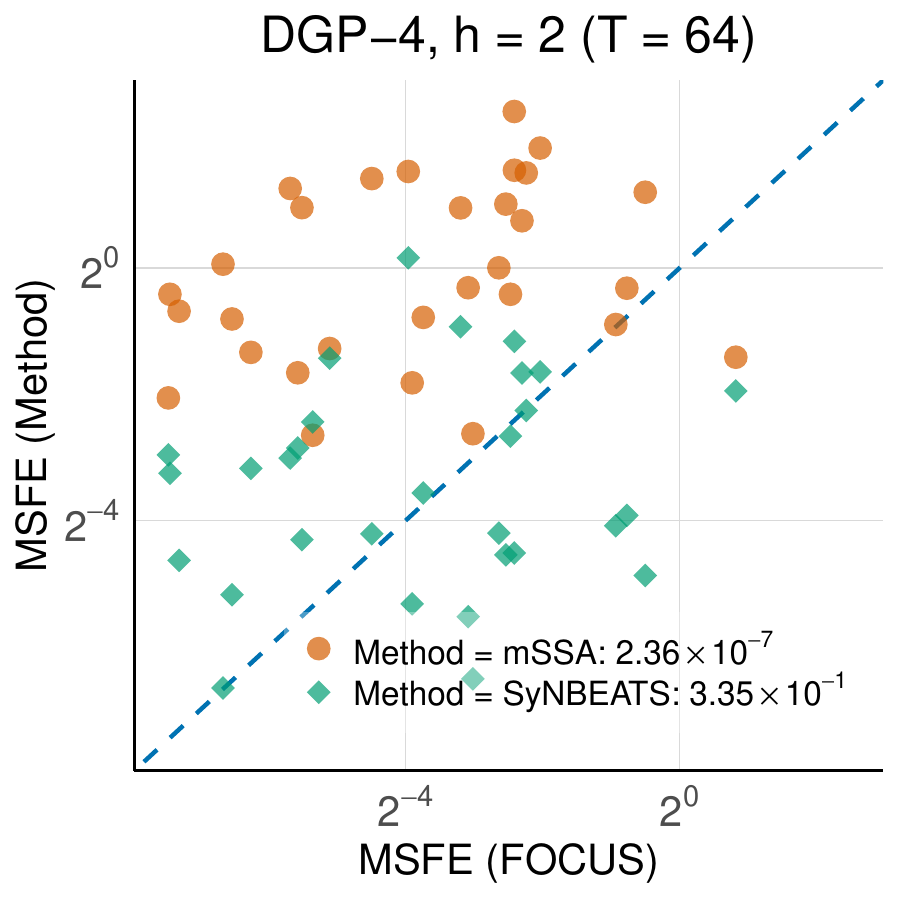}}
    \subfloat[]{\includegraphics[width=0.25\textwidth]{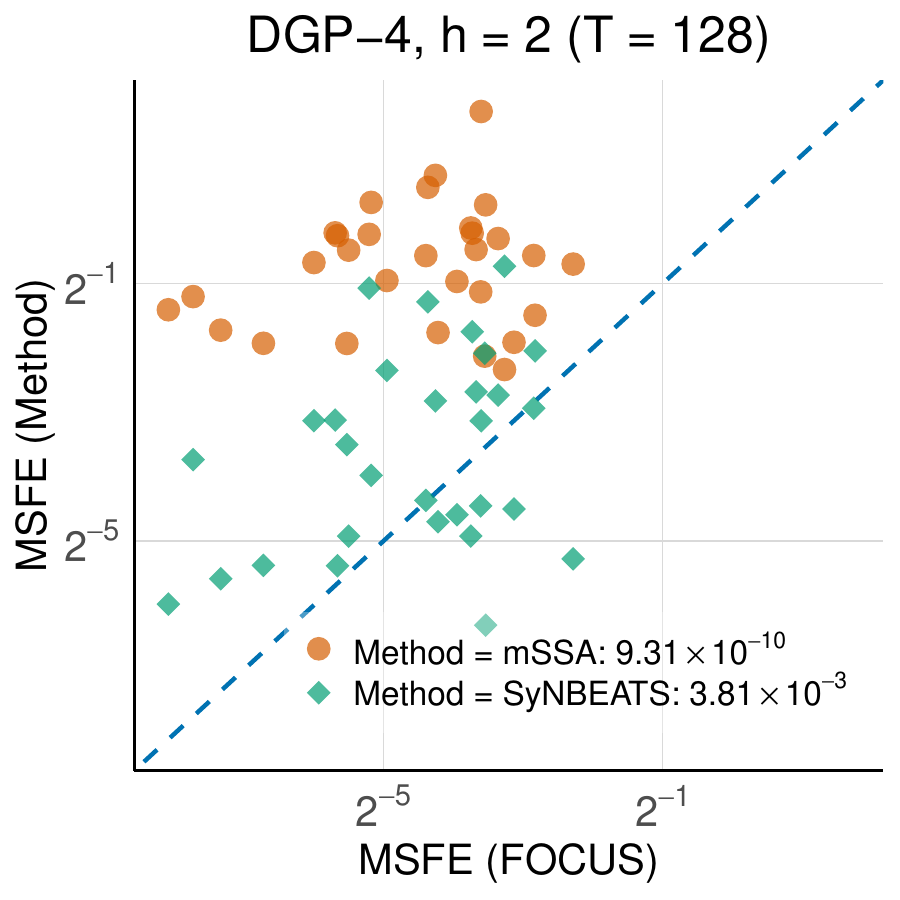}}\subfloat[]{\includegraphics[width=0.25\textwidth]{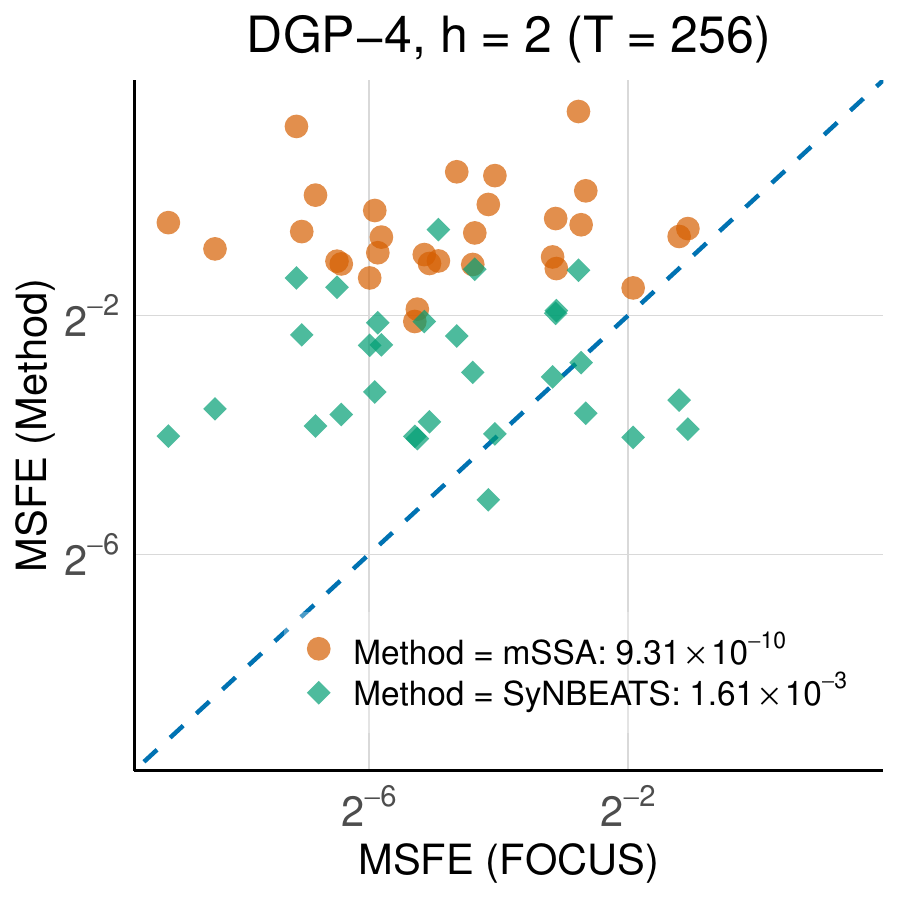}}\\
    \subfloat[]{\includegraphics[width=0.25\textwidth]{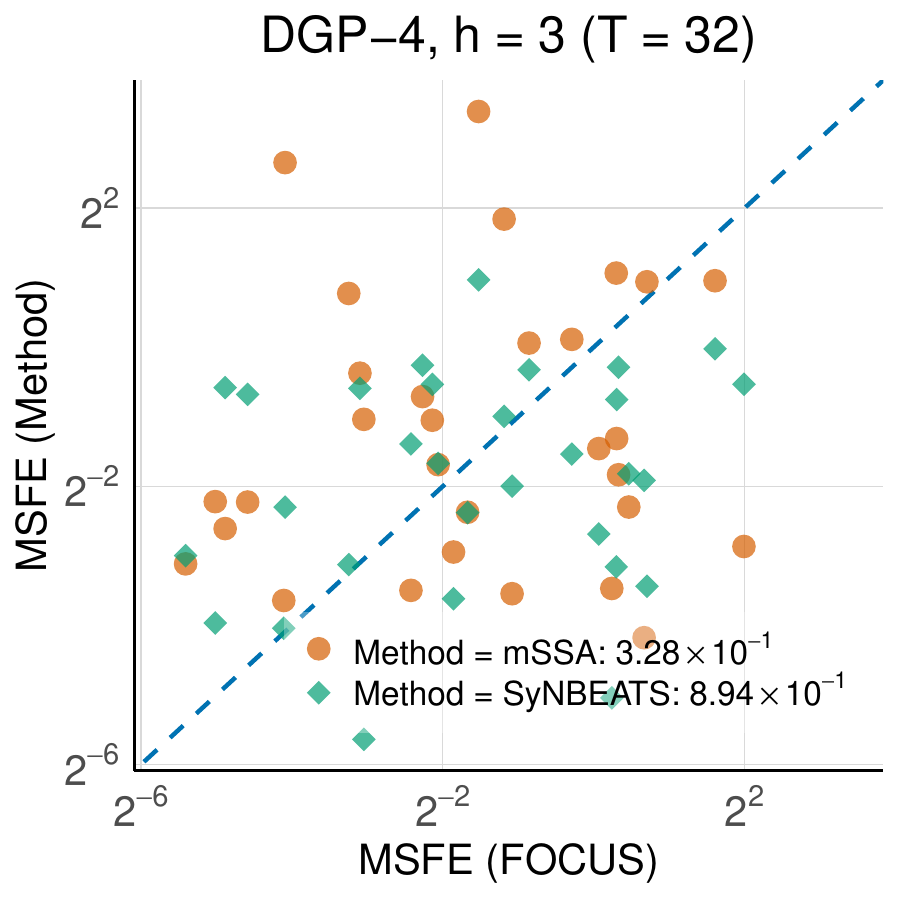}}
    \subfloat[]{\includegraphics[width=0.25\textwidth]{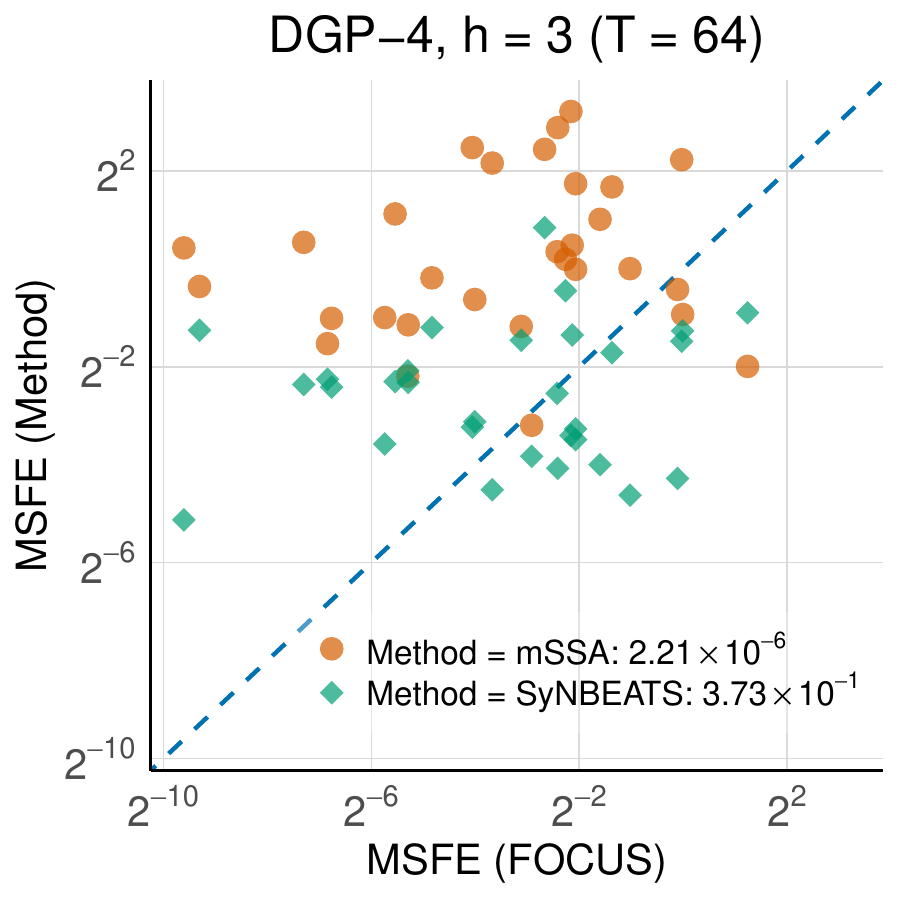}}
    \subfloat[]{\includegraphics[width=0.25\textwidth]{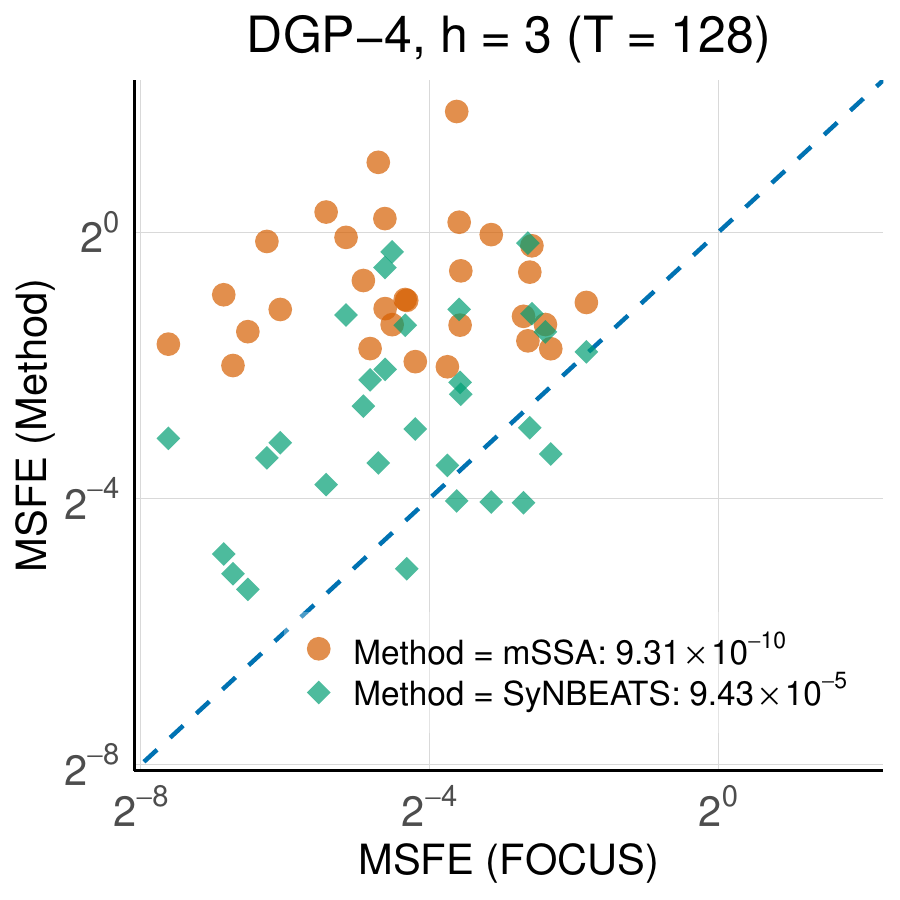}}
    \subfloat[]{\includegraphics[width=0.25\textwidth]{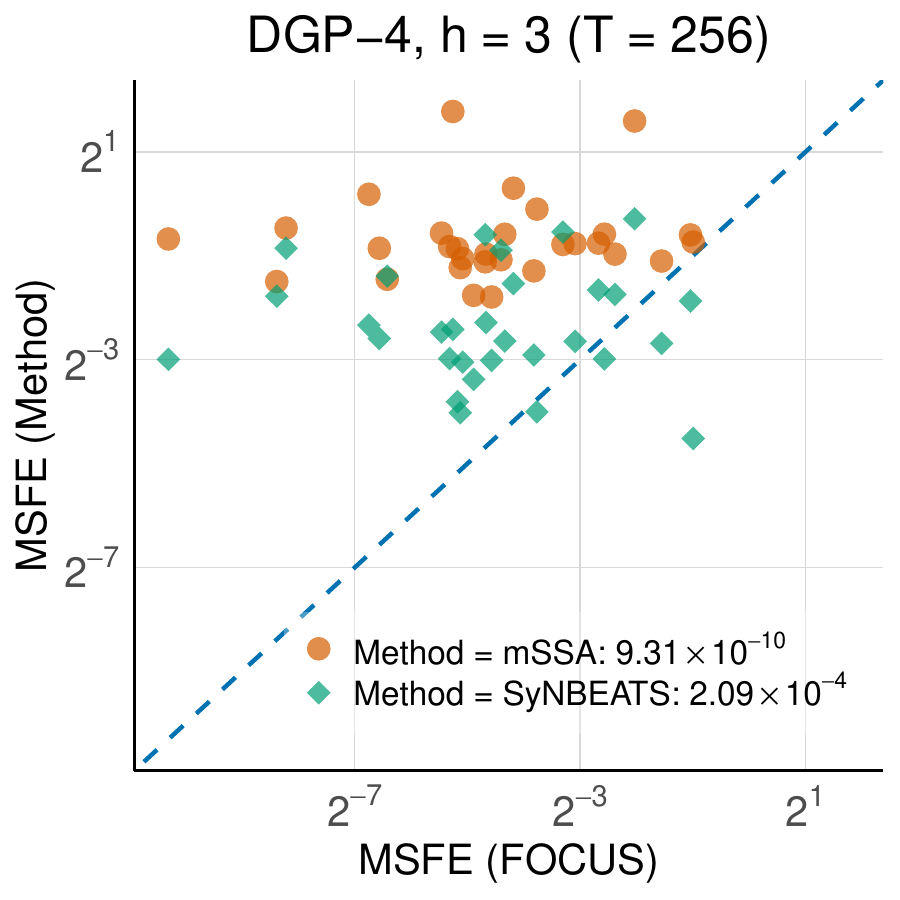}}
    \caption{\textbf{Scatter plots of Mean Squared Forecast Error (MSFE) over 30 trials across benchmarks for $N=64$ under DGP-4.}  Across all $T$ and $h$, \focus attains lower MSFE on average (one-sided Wilcoxon signed-rank test $p$-values in the legends). The comparison with mSSA (orange circles) shows scatter points concentrated in the $y>x$ region, while the comparison with SyNBEATS (green diamonds) exhibits a general shift toward the $y>x$ region with noticeable overlap around the diagonal $y = x$.}
    \label{fig:scatterplot_dgp4}
\end{figure*}

\begin{figure}[!t]
    \centering
    \subfloat{\includegraphics[trim=0 25 0 25, clip, width=0.6\columnwidth]{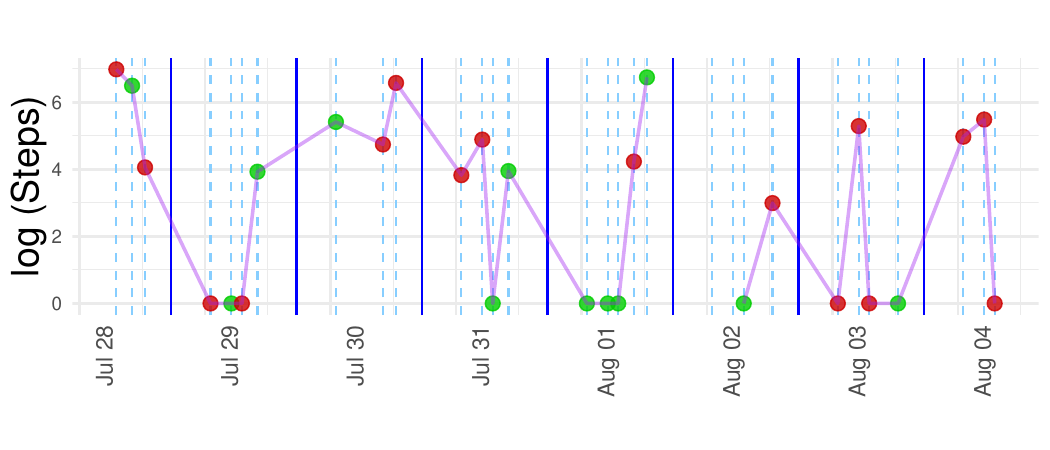}}\\
    \subfloat{\includegraphics[trim=0 25 0 25, clip, width=0.6\columnwidth]{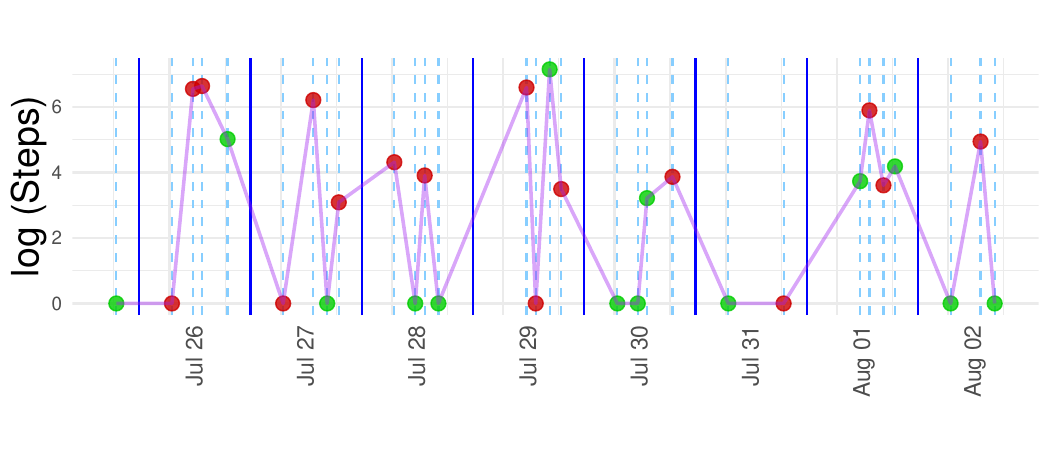}}\\
    \subfloat{\includegraphics[trim=0 25 0 25, clip, width=0.6\columnwidth]{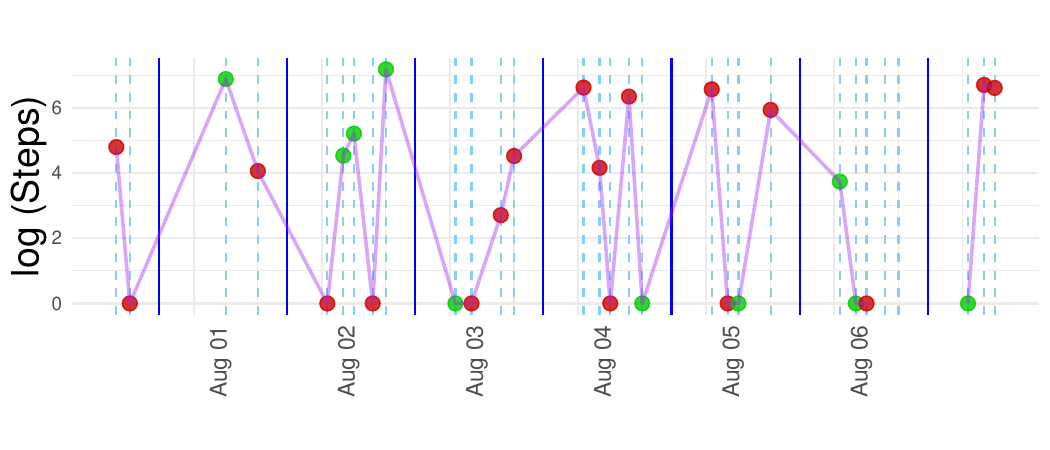}}
    \caption{\textbf{$\log(1 + \texttt{jbsteps30})$ vs time for three users in HeartSteps data.} The 5 decision slots each day are marked by the dashed blue vertical lines. The green (and red) dots represent that the user was nudged (not nudged). Between consecutive slots, the steps exhibit a negative correlation shared across users. }
    \label{fig:jb30_time_plots}
\end{figure}

\begin{figure*}[!h]
    \centering
    \subfloat[]{\includegraphics[width=0.35\textwidth]{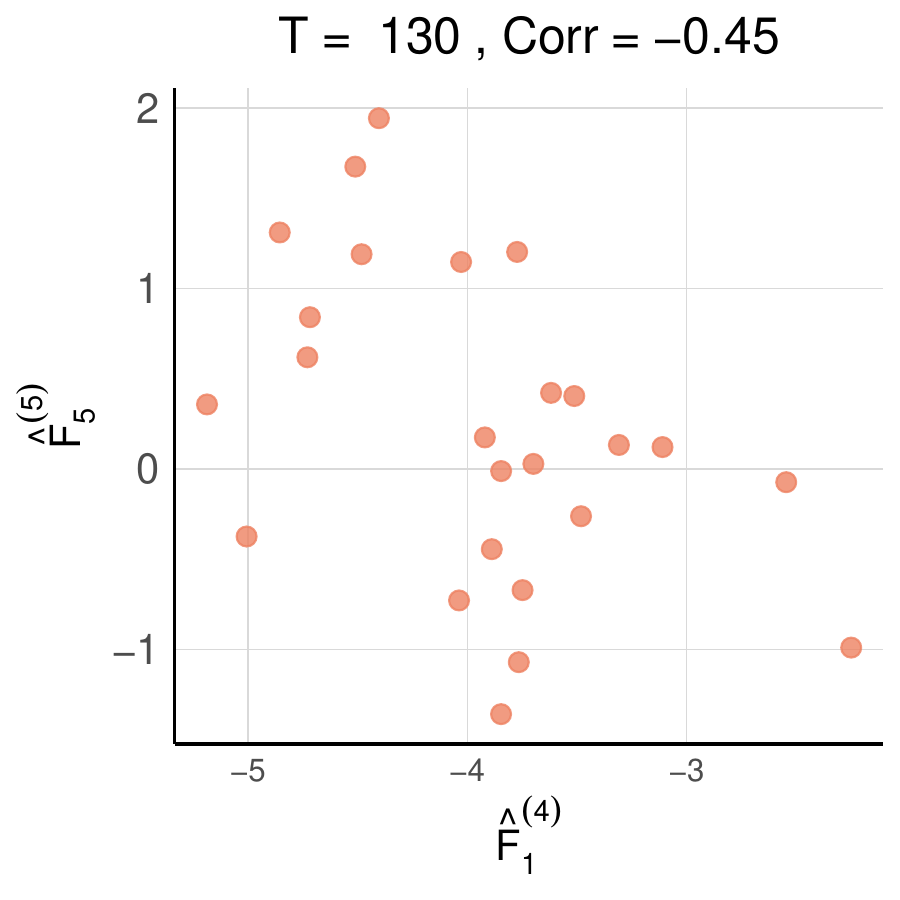}}
    \subfloat[]{\includegraphics[width=0.35\textwidth]{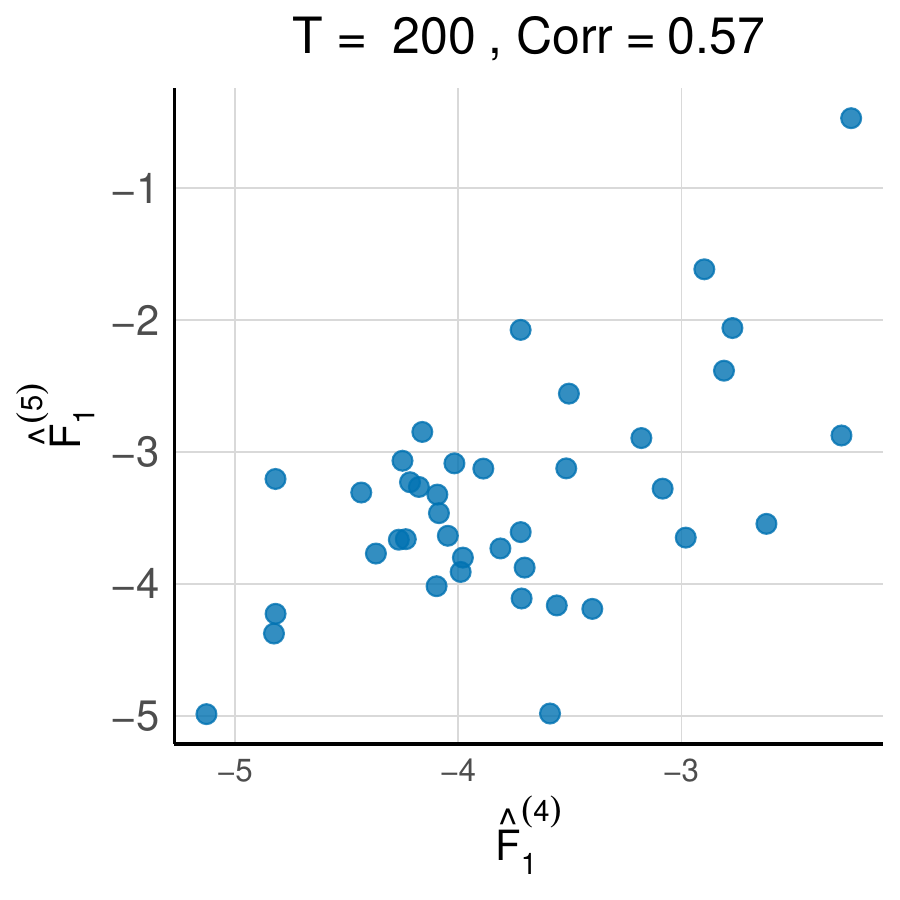}}\\
    \subfloat[]{\includegraphics[width=0.35\textwidth]{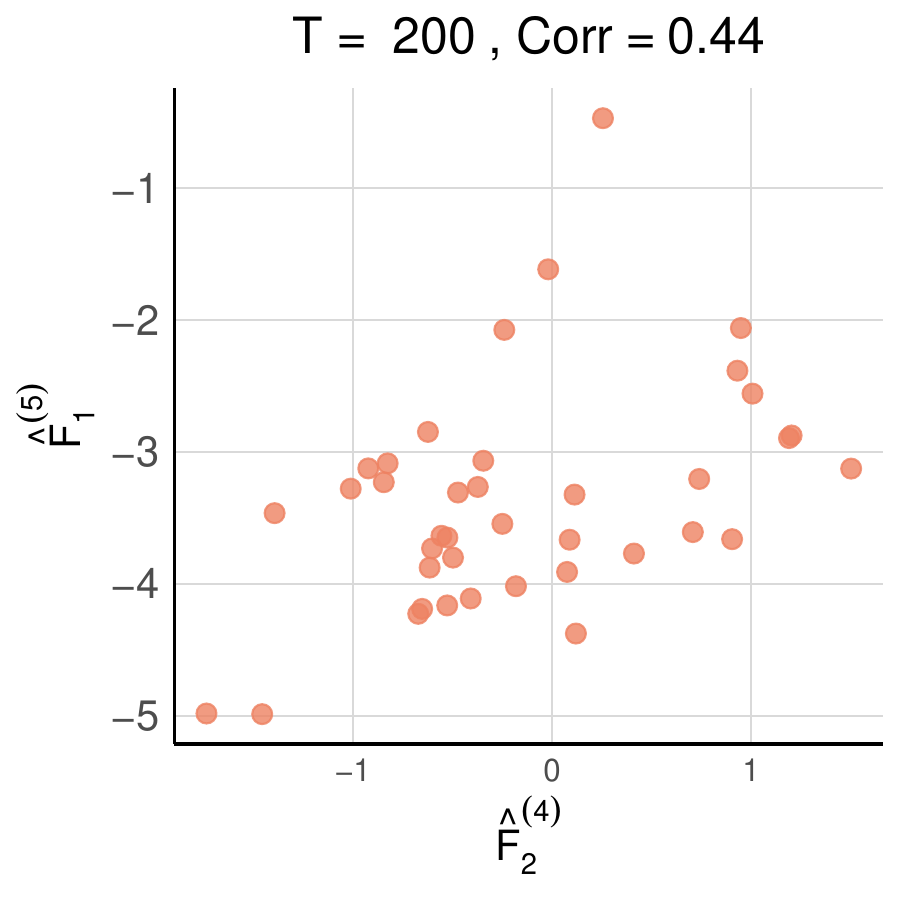}}
    \subfloat[]{\includegraphics[width=0.35\textwidth]{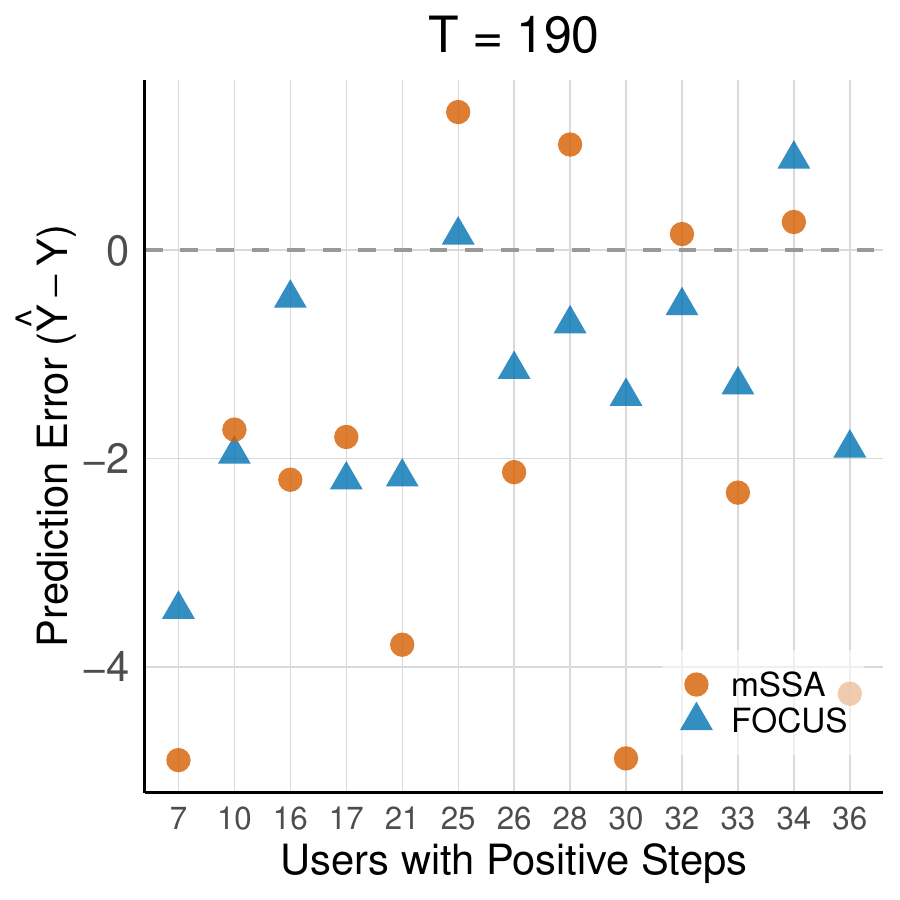}}
    \caption{\textbf{Additional results for \focus and mSSA on HeartSteps data.} Panels (a), (b) and (c) present scatter plots of the estimated factors at slot 5 i.e. $\hat F^{(5)}$ vs the same at slot 4 denoted by $\hat F^{(4)}$. In all three panels, $\hat F_i^{(4)}$ and $\hat F_j^{(5)}$ are strongly correlated for several pairs of $(i,j)$. Panel (d) shows a comparison of the prediction errors between \focus and mSSA at $T = 190$ and $h = 5$. The prediction error is measured with Mean squared prediction error for the users with positive steps at $T+h$. The blue triangles (\focus) are closer to the origin axis than the orange circles (mSSA)-- indicating that \focus exhibits better forecasting performance than mSSA at $T = 190$ and $h = 5$.}
    \label{fig:hs_extra_plot}
\end{figure*}
\end{document}

